\documentclass[12pt]{article}
\usepackage[letterpaper, margin=1in]{geometry}

\usepackage{setspace}
\usepackage{fancyhdr}
\usepackage{authblk}

\usepackage{amsmath}
\usepackage{amsthm}
\usepackage{amssymb}
\usepackage{mathrsfs}

\usepackage{hyperref}[colorlinks, citecolor=blue, urlcolor=blue]
\usepackage{enumitem}

\usepackage{longtable}
\usepackage{adjustbox}
\usepackage{threeparttable}
\usepackage{booktabs}

\DeclareMathOperator{\sgn}{sgn}
\DeclareMathOperator{\cossimilarity}{cossimilarity}
\DeclareMathOperator{\var}{var}
\DeclareMathOperator{\cutsin}{cutsin}
\DeclareMathOperator{\Imb}{Imb}

\numberwithin{equation}{section}

\theoremstyle{plain}
\newtheorem{theorem}{Theorem}[section]
\newtheorem{lemma}{Lemma}[section]
\newtheorem*{lemma*}{Lemma}
\newtheorem{corollary}{Corollary}[section]
\newtheorem{assumption}{Assumption}[section]
\newtheorem{condition}{Condition}[section]

\theoremstyle{definition}

\newtheorem{remark}{Remark}[section]
\newtheorem{example}{Example}[section]

\begin{document}

\title{A General (Non-Markovian) Framework for Covariate Adaptive Randomization: Achieving Balance While Eliminating the Shift}
\author[1]{Hengjia Fang\thanks{hengjiafang@ruc.edu.cn}}
\author[1]{Wei Ma\thanks{mawei@ruc.edu.cn}}
\affil[1]{Institute of Statistics and Big Data, Renmin University of China, Beijing, China}

\date{} 

\maketitle

\begin{abstract}
    Emerging applications increasingly demand flexible covariate adaptive randomization (CAR) methods that support unequal targeted allocation ratios.
    While existing procedures can achieve covariate balance, they often suffer from the shift problem.
    This occurs when the allocation ratios of some additional covariates deviate from the target.
    We show that this problem is equivalent to a mismatch between the conditional average allocation ratio and the target among units sharing specific covariate values, revealing a failure of existing procedures in the long run.
    To address it, we derive a new form of allocation function by requiring that balancing covariates ensures the ratio matches the target.
    Based on this form, we design a class of parameterized allocation functions.
    When the parameter roughly matches certain characteristics of the covariate distribution, the resulting procedure can balance covariates.
    Thus, we propose a feasible randomization procedure that updates the parameter based on collected covariate information, rendering the procedure non-Markovian.
    To accommodate this, we introduce a CAR framework that allows non-Markovian procedure.
    We then establish its key theoretical properties, including the boundedness of covariate imbalance in probability and the asymptotic distribution of the imbalance for additional covariates.
    Ultimately, we conclude that the feasible randomization procedure can achieve covariate balance and eliminate the shift.
\end{abstract}
\textbf{Keywords:} Covariate Adaptive Randomization, Unequal Allocation Ratio, Covariate Balance, Shift Problem, Non-Markovian

\section{Introduction}

In comparative studies, such as clinical trials and economic field experiments, covariates play an important role.
In the design stage, a major concern for experimenters is the balance between treatment groups with respect to key covariates.
The imbalance of covariates may lead to confounding, which undermines the credibility of treatment comparisons and reduces statistical efficiency (\cite{bruhnPursuitBalanceRandomization2009,zhaoStatisticsPrecisionHealth2024}).

Despite its simplicity, complete randomization often fails to balance key baseline covariates across treatments.
To achieve covariate balance, covariate adaptive randomization (CAR) procedures are routinely employed to assign treatment status in randomized controlled studies.
In a survey of randomized clinical trials published in leading medical journals, 82\% were found to have used certain forms of CAR procedures (\cite{linPursuitBalanceOverview2015}).
In economics, CAR procedures are widely used and accepted in development economics, as evidenced by a growing body of empirical studies (\cite{bugniInferenceCovariateadaptiveRandomization2019,bruhnPursuitBalanceRandomization2009,liuTestingHeterogeneousTreatment2024,dufloUsingRandomizationDevelopment2007}).
The most commonly used CAR methods across disciplines are stratified randomization, in which units are first stratified by baseline covariates and then randomized within strata to ensure balanced treatment allocation.

Although stratified randomization is commonly used to achieve balance (\cite{maCaratPackageCovariateAdaptive2023,zelenRandomizationStratificationPatients1974}), minimization becomes especially advantageous when many covariates are involved, as it directly balances covariate margins and maintains good performance (\cite{zhaoConsistentCovariancesEstimation2024, kangIssuesOutcomesResearch2008,tavesMinimizationNewMethod1974}).
Early minimization procedures were limited to handling only discrete covariates (\cite{tavesMinimizationNewMethod1974,pocockSequentialTreatmentAssignment1975}).
More recently, the principle of minimizing imbalance measure have led to the development of new CAR procedures with available theoretical guarantees.
These procedures can directly balance continuous covariates marginally without requiring discretization (\cite{maNewUnifiedFamily2024,yangSequentialCovariateadjustedRandomization2024,qinAdaptiveRandomizationMahalanobis2024,liuPropertiesCovariateadaptiveRandomization2025}).
The underlying theory for covariate balance in these procedures is based on the Markov chain theory, similar to the theoretical foundations of early minimization procedures (\cite{huAsymptoticPropertiesCovariateadaptive2012,huTheoryCovariateAdaptiveDesigns2020}).
However, the principle of minimization encounters several issues under unequal allocation ratios.
Problems related to the rerandomization test were identified earlier (\cite{proschanMinimizeUseMinimization2011}), and when applied to balancing the margins of continuous covariates, it has recently been found to cause severe imbalances in additional covariates, i.e., variables that are not directly used in the randomization (\cite{liuPropertiesCovariateadaptiveRandomization2025}).
The focus of this paper is on the imbalance in additional covariates.
We address the severe imbalances in additional covariates through a theoretical analysis and introduce a special CAR procedure, which is distinct from minimization and operates within a non-Markovian framework.

In Subsection \ref{subsec_general_framework}, we establish a general framework for randomization procedures that is broad enough to cover most existing CAR procedures under arbitrary targeted allocation ratio.
As illustrated in Section \ref{sec_applications_existing_procedures}, existing Markovian CAR procedures can be viewed as special cases of this framework (\cite{maNewUnifiedFamily2024,zhangAsymptoticPropertiesMultitreatment2023,yangSequentialCovariateadjustedRandomization2024,huAsymptoticPropertiesCovariateadaptive2012,huTheoryCovariateAdaptiveDesigns2020}).
In addition, our framework also allows for a varying allocation function by updating the parameter of the function during the randomization procedure.
Some parameter update methods may produce a non-Markovian parameter sequence, thereby violating the Markov property of the randomization procedure.

Within the general framework, Subsection \ref{subsec_boundedness_in_probability} extends existing results on covariate balance.
Under an arbitrary targeted allocation ratio $\rho:(1-\rho)$, the marginal imbalance of covariates in this article can be measured by the imbalance vector $\Lambda_n = \sum_{i=1}^n (T_i-\rho)\phi(X_{\mathrm{origin},i})$, where $T_i$ is the assignment of the $i$th unit, $\phi$ is a feature map and $X_{\mathrm{origin},i}$ is the baseline covariate vector of the $i$th unit (\cite{liuPropertiesCovariateadaptiveRandomization2025}).
The best available convergence rate of it is $O_P(1)$, compared to $O_P(\sqrt{n})$ under complete randomization.
However, in the literature on CAR procedures, due to the irreducibility of the corresponding Markov chains, the sufficient conditions for achieving $O_P(1)$ convergence are usually derived separately for discrete and continuous covariates (\cite{huAsymptoticPropertiesCovariateadaptive2012,huTheoryCovariateAdaptiveDesigns2020,maNewUnifiedFamily2024,zhangAsymptoticPropertiesMultitreatment2023}).
In Subsection \ref{subsec_boundedness_in_probability}, we show that, without relying on the Markov chain theory, the best convergence rate of $O_P(1)$ can be achieved in the unified setting under a negative feedback condition on the randomization procedure.
This result unifies previous analyses that treated discrete and continuous covariates separately and provides a new theoretical approach to the analysis of covariate balance.

In the context of the unequal targeted allocation ratio, the shift problem in additional covariates is a novel problem which was discovered recently (\cite{liuPropertiesCovariateadaptiveRandomization2025}), for which we provide a theoretical explanation in Section \ref{sec_shift_problem}.
To be more specific, Liu, Hu and Ma found that in the context of the unequal targeted allocation ratio, the imbalance vectors associated with additional covariates no longer center around $0$ when their new randomization procedure of minimizing the imbalance measure is applied.
This phenomenon is similar to the finding of Proschan et al., who showed that applying minimization under unequal allocation can introduce substantial bias at the mean of the rerandomization distribution, thereby reducing the validity and statistical power of the rerandomization test for treatment effect (\cite{proschanMinimizeUseMinimization2011}).
In this article, we show that, when the parameter of allocation function is fixed, the solution to the shift problem requires a specific relationship between the allocation function and the invariant probability of the corresponding Markov chain. Such a requirement is difficult to meet without a carefully designed randomization procedure.
Moreover, we show that the shift problem is equivalent to a discrepancy between the nominal target $\rho:(1-\rho)$ and the actual targeted allocation ratio, which is the conditional average allocation ratio for units with the same covariate value in the long run, as shown in Subsection \ref{subsec_shift_properties_fixed}.
When the targeted allocation ratio is equal, the discrepancy is automatically eliminated by the symmetry of the allocation function and the corresponding stochastic process for the imbalance vector $\Lambda_n$.
The absence of symmetry precludes a natural mechanism for eliminating the shift, making the problem particularly challenging.
Furthermore, Subsection \ref{subsec_shift_properties_convergent} shows that if the parameter converges and the stepwise parameter updates are small, the imbalance of the additional covariate, $\sum_{i=1}^n (T_i - \rho) Y_i$, retains the same asymptotic properties as in the case where the parameter is fixed at its limit.
Thus, it suffices to consider the shift problem under the fixed parameters.

In Section \ref{sec_applications_existing_procedures}, under arbitrary targeted allocation ratio, we apply the results from previous sections to show that the imbalance vector $\Lambda_n = O_P(1)$ under the CAR procedures that minimize a pre-specified imbalance measure at each allocation step, regardless of whether the covariate is continuous or discrete.
This feasibility requires only verifying the negative feedback condition.
For a special case of procedures that balance discrete covariates, such as Pocock and Simon's minimization procedure, the result $\Lambda_n = O_P(1)$ implies that their generalizations under unequal allocation can still achieve balance for the discrete covariates.
Moreover, through a constructive argument, we show that when the covariate to be balanced is continuous and unbounded, the procedure in (\cite{liuPropertiesCovariateadaptiveRandomization2025}) necessarily leads to a discrepancy between the nominal target $\rho$ and the actual targeted allocation ratio $\tilde{\rho}_\theta(x)$, implying that the shift problem also exists in theory, even though the procedure guarantees that $\Lambda_n = O_P(1)$.

In Subsection \ref{subsec_new_allocation_function_form}, motivated by the results in Section \ref{sec_shift_problem}, we derive a new form of the allocation function by requiring that balancing covariates leads to an invariant probability measure under which the actual targeted allocation ratio equals $\rho:(1-\rho)$.
This form is markedly distinct from the allocation function induced by minimization.
Based on this form, we construct a class of parameterized allocation functions.
For each covariate distribution, we define an oracle parameter as a functional of the distribution.
When the parameter of the allocation function is fixed and sufficiently close to the oracle parameter, the resulting procedure can achieve marginal covariate balance, and the imbalance vector associated with any additional covariate centers around $0$ asymptotically.
Furthermore, within the general framework, we allow the parameter to be updated adaptively during the randomization process.
Specifically, the procedure is implemented via a sequence of estimates that continuously approximate the oracle parameter based on accumulating covariate information.
The properties of this procedure are presented in Subsection \ref{subsec_randomization_procedure_convergent_parameter}.

The rest of the article is organized as follows.
We first introduce the general framework and discuss the covariate balance properties in Section \ref{sec_general_framework}.
In Section \ref{sec_shift_problem}, we analyze the shift problem and demonstrate its presence in several representative randomization procedures.
Section \ref{sec_addressing_shift_problem} proposes a randomization procedure with convergent parameters to address the shift problem and establishes the asymptotic properties under convergent parameters.
Numerical studies are presented in Section \ref{sec_experiments}.
Finally, we conclude our article and provide directions for future work in Section \ref{sec_conclusion}.

\subsection*{Notation}

Denote the Lebesgue measure by $\mu_{\mathrm{leb}}$.
Let $\mathbb{E}_\theta$ denote the expectation under the randomization procedure with a fixed parameter sequence $\{\theta\}_{n \in \mathbb{N}}$.
Let $\mu f = \int f(\Lambda) \mu(d\Lambda)$ and $\mu \left[ g(\cdot, x) \right] = \int g(\Lambda,x) \mu(d\Lambda)$.
We denote the norm of the parameter by $\|\cdot\|_F$, which refers to the Frobenius norm when the parameter is a matrix.

For a function $V: \mathrm{X} \to [1,+\infty)$, define the $V$-norm of a function $f: \mathrm{X} \to \mathbb{R}$ by
\begin{equation*}
    |f|_V := \sup_{x \in \mathrm{X}} \frac{|f|(x)}{V(x)}
\end{equation*}
When $V=1$, the $V$-norm is the supremum norm denoted by $|f|_{\infty}$. For $\mu$ a finite signed measure on $(\mathrm{X}, \mathcal{X})$ and $V: \mathrm{X} \to [1, \infty)$ such that $|\mu|(V)<\infty$, where $|\mu|$ is the variation of $\mu$, we define $\|\mu\|_V$ the $V$-norm of $\mu$ as
\begin{equation*}
    \|\mu\|_V := \sup_{|f|_V \leq 1}|\mu(f)|
\end{equation*}
When $V \equiv 1$, the $V$-norm corresponds to the total variation norm.
Denote the state space by $\mathrm{X}$.
Let $P$ be a finite signed kernel on $(\mathrm{X}, \mathcal{X})$ such that $|P(x,\cdot)|(V) < \infty$ for any $x \in \mathrm{X}$.
If it exists, let $\pi_\theta$ denote the invariant probability measure corresponding to $P_\theta$.
For any measurable $f:\mathrm{X} \to \mathbb{R}$, we write $(Pf)(x) = P(x,f) := \int f(y) P(x,dy)$, which is the integral of $f$ with respect to the signed measure $P(x,\cdot)$.
Denote $(\mu P)(A) := \int \mu(\mathrm{d} x) P(x, A)$ and $P^n(x, A) := \int P(x, d y) P^{n-1}(y, A)$ for $n$-step transition kernel $P^n$.
The $V$-norm of $P$ is then defined by $\|P\|_V := \sup_{x \in \mathrm{X}} V^{-1}(x)\|P(x, \cdot)\|_V$.
Define $D_V$ that $D_V(\theta, \theta^\prime) := \| P_\theta - P_{\theta^\prime} \|_V$.

\section{General Framework of the Randomization Procedure}
\label{sec_general_framework}

\subsection{General Framework}

\label{subsec_general_framework}

For the $i$th unit, we denote its baseline covariate vector by $X_{\mathrm{origin},i}$, and define the transformed covariate used for balance as $X_i = \phi(X_{\mathrm{origin},i})$, where $\phi$ is a feature map.
Let $X_i$ be the $d$-dimensional covariate vector.
Denote some additional covariate by $Y_i$ (possibly unobserved by the experimenter), and we suppose that the vectors $\{(X_i, Y_i)\}_{i \in \mathbb{N}^*}$ are i.i.d. random vectors. Let $\Gamma$ be the distribution of $X_i$.

In this article, we only consider the two-treatment case. Let $T_i$ be the assignment of the $i$th unit, such that $T_i=1$ for the treatment and $T_i=0$ for the control.
Suppose the targeted allocation ratio for the treatment group is $\rho \in (0,1)$ and the ratio for the control group is $1-\rho$.
For brevity, we refer to ``the targeted allocation ratio for the treatment group'' as ``the targeted allocation ratio''.
The main targets of the CAR procedure in the context of an unequal targeted allocation ratio are as follows:
first, to achieve marginal balance of $X$, that is, to control the imbalance vector $\Lambda_n = \sum_{i=1}^n (T_i-\rho)X_i = o_P(\sqrt{n})$, ideally $\Lambda_n = O_P(1)$; and second, to ensure that the asymptotic distribution of $\sum_{i=1}^n (T_i-\rho)Y_i$ for any additional covariate $Y$ centers around zero.
If these goals have been achieved, we can conclude that the baseline covariates are well-balanced.

Let $\{\mathcal{F}_n\}_{n \in \mathbb{N}}$ be a filtration, where $\mathcal{F}_n$ represents all information collected by the experimenters after assigning the $n$th unit.
We define $\theta_n \in \mathcal{F}_n$ as a summary statistic of the population covariate distribution, based on information collected up to the $n$th step.
We next formally state an assumption that, through the filtration $\{\mathcal{F}_n\}_{n \in \mathbb{N}}$, restricts the treatment assignment and parameter update at each step to depend only on the information collected up to that point, excluding any influence from future covariates.
\begin{assumption}
    \label{assumption_filtration}
    
    The filtration $\{\mathcal{F}_n\}_{n \in \mathbb{N}}$ satisfies that $\theta_0 \in \mathcal{F}_0$ and for any $n \in \mathbb{N}^*$, $\{(X_n, Y_n, T_n, \theta_n)\}$ is $\mathcal{F}_n$-adapted.
    In addition, the information of the subsequent units, $\left\{ (X_{n+k},Y_{n+k}) \mid k \in \mathbb{N}^* \right\}$, is independent of $\mathcal{F}_n$.
\end{assumption}

In the framework of randomization procedures in our article, we consider a data-adaptive allocation mechanism such that the allocation of $(n+1)$th unit depends on the history $\mathcal{F}_n$ and the information of the unit through the parameter $\theta_n$, the imbalance vector $\Lambda_n$ and the covariate vector $X_{n+1}$.
Accordingly, for $(n+1)$th unit, the conditional probability of treatment assignment $T_{n+1}$ is
\begin{align}
    P(T_{n+1} = 1 \mid \mathcal{F}_n, X_{n+1}, Y_{n+1}) &= P(T_{n+1} = 1 \mid \theta_n, \Lambda_n, X_{n+1}) = g_{\theta_n}(\Lambda_n, X_{n+1}) \label{eq_T_conditional_1} \\
    P(T_{n+1} = 0 \mid \mathcal{F}_n, X_{n+1}, Y_{n+1}) &= 1-P(T_{n+1} = 1 \mid \mathcal{F}_n, X_{n+1}) . \label{eq_T_conditional_0}
\end{align}
In (\ref{eq_T_conditional_1}) and (\ref{eq_T_conditional_0}), the function $g_\theta(\Lambda, X)$ is referred to as the allocation function.
Combining with the constraint on the codomain of the allocation function $g_\theta$, we can now formally state the assumption on the treatment assignment:
\begin{assumption}
    \label{assumption_sampling}
    The conditional probability of the treatment assignment $T_n$ satisfies (\ref{eq_T_conditional_1}) and (\ref{eq_T_conditional_0}).
    In addition, there exists some $\iota > 0$ such that the allocation function $g_\theta(\Lambda, X) \in [\iota, 1 - \iota]$.
\end{assumption}

Throughout this article, we always assume that these two assumptions hold.

Based on the setting of the allocation, we can define the transition probability kernel $P_\theta$ of the imbalance vector $\Lambda$ by a form of mapping between functions that
\begin{equation}
    P_\theta(\Lambda,h) = \int \left[ g_\theta(\Lambda, X)h(\Lambda+(1-\rho)X) + [1-g_\theta(\Lambda, X)]h(\Lambda-\rho X) \right] \Gamma(dX) \label{eq_transition_kernel} ,
\end{equation}
for any bounded measurable function $h$. If the function $h$ is taken as an indicator function $\mathbb{I}_A(\Lambda)$, the transition probability kernel can be written as
\begin{equation*}
    P_\theta(\Lambda,A) = \int_{(A-\Lambda)/(1-\rho)} g_\theta(\Lambda, X) \Gamma(dX) + \int_{(\Lambda-A)/\rho} [1-g_\theta(\Lambda, X)] \Gamma(dX) ,
\end{equation*}
which represents the conditional probability that the next state belongs to the set $A$ conditional on the last state $\Lambda$ and the parameter $\theta$. It is the traditional form of transition probability kernel.

With the definition of the kernel, we can establish the framework of the stochastic process for $\{(\Lambda_n,\theta_n)\}_{n \in \mathbb{N}}$ on the filtered probability space $(\Omega, \{\mathcal{F}_n, n \in \mathbb{N}\}, \mathbb{P})$.
\begin{align*}
    \mathbb{E} \left[ h(\Lambda_{n+1}) \mid \mathcal{F}_n \right]
    &= \mathbb{E} \left[ h(\Lambda_n + (T_{n+1}-\rho)X_{n+1}) \mid \mathcal{F}_n \right] \\
    &= \mathbb{E} \left[ \mathbb{E} \left[ h(\Lambda_n + (T_{n+1}-\rho)X_{n+1}) \mid \mathcal{F}_n, X_{n+1} \right] \mid \mathcal{F}_n \right] \\
    &= \mathbb{E} \left[ \left[ h(\Lambda_n + (1-\rho)X_{n+1})g_{\theta_n}(\Lambda_n, X_{n+1}) \right. \right. \\
    & \quad \left. \left. + h(\Lambda_n - \rho X_{n+1})(1-g_{\theta_n}(\Lambda_n, X_{n+1})) \right]
    \mid \mathcal{F}_n \right] \\
    &= \mathbb{E}_{X \sim \Gamma} \left[ h(\Lambda_n + (1-\rho)X)g_{\theta_n}(\Lambda_n, X)
    + h(\Lambda_n - \rho X)(1-g_{\theta_n}(\Lambda_n, X)) \right] \\
    &= P_{\theta_n}(\Lambda_n,h) ,
\end{align*}
where the fourth equality is from the independence between $X_{n+1}$ and $\mathcal{F}_n$.
The model behind this equality coincides with a common model of adaptive MCMC; See \cite{fortCentralLimitTheorem2014,laitinenInvitationAdaptiveMarkov2024} for the theory and examples.
When $h$ is taken to be an indicator function, the conditional probability of the next state given the past information $\mathcal{F}_n$ can be obtained.

This model resembles a Markov chain due to the similar form of the transition kernel in (\ref{eq_transition_kernel}).
However, since the conditional distribution of $\Lambda_{n+1}$ given the history $\mathcal{F}_n$ depends not only on $\Lambda_n$ but also on $\theta_n$, which may be stochastic and non-Markovian, there is no guarantee that the process $\{\Lambda_n\}$ or $\{(\Lambda_n, \theta_n)\}$ is a Markov chain.
Even when the parameter sequence $\{\theta_n\}_{n \in \mathbb{N}}$ is given, the process $\{\Lambda_n\}$ is not necessarily Markovian.
This is because each parameter $\theta_n$ may encode information up to the $n$th unit, including the past transitions $\Lambda_m \rightarrow \Lambda_{m+1}$ for $m<n$.
Consequently, conditional on the entire parameter sequence $\{\theta_n\}_{n \in \mathbb{N}}$, the transition probability of $\Lambda_{n+1}$ given $\Lambda_n$ may still implicitly depend on future parameters $\{\theta_m\}_{m=n+1:\infty}$.
Thus, the process does not follow the Markov transition kernel $P_{\theta_n}$ at time $n$.

Finally, we summarize a concise procedure of the general framework as follows:
\begin{enumerate}
    \item Initialize the parameter $\theta_0$ and the imbalance vector $\Lambda_0 = 0$.
    \item Suppose the first $n$ units have been allocated and the $(n+1)$th unit with the covariate vector $X_{n+1}$ is to be allocated ($n \geq 0$).
    \item Allocate the $(n+1)$th unit to the treatment group with the probability $g_{\theta_n}(\Lambda_n,X_{n+1})$, otherwise to the control group.
    \item Calculate the parameter $\theta_{n+1}$ and $\Lambda_{n+1} = \Lambda_n + (T_{n+1}-\rho)X_{n+1}$.
    \item Repeat the last three steps until all units are allocated.
\end{enumerate}

\subsection{Boundedness of the Imbalance Vector}

\label{subsec_boundedness_in_probability}

In this subsection, we explore the conditions under which the imbalance vector $\Lambda_n$ is bounded in probability, that is, $\Lambda_n = O_P(1)$.
We use both terms interchangeably, referring to them as ``boundedness'' when clear.

We begin by imposing the following assumption to control the covariates:
\begin{assumption}
    \label{assumption_x_sub_exponential_bound}
    For some $\lambda > 0$, $\mathbb{E} \left[ \exp(\lambda \|X\|) \right]  = C< \infty$. It is equivalent to that $X$ is sub-exponential.
\end{assumption}

Next, because the transition from $\Lambda_n$ to $\Lambda_{n+1}$, conditional on the past information $\mathcal{F}_n$, is governed by the transition kernel $P_{\theta_n}$, it suffices to impose a suitable assumption on $P_{\theta_n}$ to control the behavior of the sequence $\{\Lambda_n\}$.

Denote the $(M,\Delta)$-negative feedback set
\begin{equation*}
    K_{M,\Delta}
    = \left\{
        \theta \middle|
        \mathbb{E}_\theta \left[ (\Lambda_1 - \Lambda_0)^T \frac{\Lambda_0}{\|\Lambda_0\|} \mid \Lambda_0 = \Lambda \right]
        \leq - \Delta ,
        \forall \Lambda \in W_\Gamma, \|\Lambda\| \geq M ,
    \right\} ,
\end{equation*}
where $W_\Gamma$ be the linear subspace spanned by the support of $\Gamma$.
When $\theta \in K_{M,\Delta}$, $P_\theta$ exhibits negative feedback in that, for large initial value $\|\Lambda_0\|$, the component of $\Lambda$ along $\Lambda_0$ tends to point in the opposite direction, thereby reducing $\|\Lambda\|$.

\begin{theorem}
    \label{theorem_boundedness}
    Suppose Assumption \ref{assumption_x_sub_exponential_bound} holds. If for some positive numbers $M$ and $\Delta$, $\theta_n \in K_{M,\Delta}$ for sufficiently large $n$ almost surely, then the stochastic process $\{\Lambda_n\}$ is bounded in probability.
\end{theorem}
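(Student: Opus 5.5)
Our plan is a Lyapunov drift argument with an exponential test function, in the spirit of Hajek's drift theorem, which avoids any Markov chain theory. Throughout we work on the subspace $W_\Gamma$: since $\Lambda_0=0$ and every increment $(T_{n+1}-\rho)X_{n+1}$ lies in the support of $\Gamma$ almost surely, $\Lambda_n\in W_\Gamma$ for all $n$. Set $V(\Lambda)=\exp(\eta\|\Lambda\|)$ for a small $\eta\in(0,\lambda)$ to be chosen. The goal is a one-step inequality of the form
\begin{equation*}
\mathbb{E}\left[ V(\Lambda_{n+1}) \mid \mathcal{F}_n \right] \leq (1-\eta\Delta/2)\, V(\Lambda_n) + b
\end{equation*}
on the event $\{\theta_n\in K_{M,\Delta}\}$, with constants $b<\infty$ and $\eta$ depending only on $M,\Delta,\lambda,C$. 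Iterating this gives $\sup_n \mathbb{E} V(\Lambda_n)<\infty$, and Markov's inequality then yields boundedness in probability.

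\textbf{Step 1 (drift for large $\|\Lambda\|$).} Write $\xi=\Lambda_{n+1}-\Lambda_n=(T_{n+1}-\rho)X_{n+1}$, so $\|\xi\|\le\|X_{n+1}\|$. For $\|\Lambda_n\|\ge M$, bound $\|\Lambda_n+\xi\|-\|\Lambda_n\|\le \xi^T\Lambda_n/\|\Lambda_n\| + \|\xi\|^2/(2\|\Lambda_n\|)$. This is the Taylor bound for the norm; we can also simply use $\|\xi\|$ when $\|\xi\|$ is large. Using $e^{u}\le 1+u+u^2e^{|u|}/2$ together with the conditional expectation identity $\mathbb{E}[h(\Lambda_{n+1})\mid\mathcal{F}_n]=P_{\theta_n}(\Lambda_n,h)$ derived above, the first-order term contributes at most $-\eta\Delta$ by membership in $K_{M,\Delta}$. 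The second-order terms are bounded by $\eta^2\,\mathbb{E}[\|X\|^2 e^{\eta\|X\|}]+\eta\,\mathbb{E}\|X\|^2/(2M)$. These moments are finite by Assumption \ref{assumption_x_sub_exponential_bound} since $\eta<\lambda$. Enlarging $M$ if necessary (note $K_{M,\Delta}\subset K_{M',\Delta}$ for $M'\ge M$) and taking $\eta$ small, the total is at most $1-\eta\Delta/2$ times $V(\Lambda_n)$.

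\textbf{Step 2 (small $\|\Lambda\|$).} For $\|\Lambda_n\|<M$ we have directly $\mathbb{E}[V(\Lambda_{n+1})\mid\mathcal{F}_n]\le e^{\eta M}\mathbb{E}e^{\eta\|X\|}=:b$, with no condition on $\theta_n$ needed. Moreover, for every $n$ and any parameter, $\mathbb{E}[V(\Lambda_{n+1})\mid\mathcal{F}_n]\le V(\Lambda_n)\,\mathbb{E}e^{\eta\|X\|}$. This crude bound handles the early steps.

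\textbf{Step 3 (random entrance time).} The hypothesis only says that $\theta_n\in K_{M,\Delta}$ for $n\ge N$ with $N$ an a.s.\ finite random time, so the inequality cannot be iterated unconditionally. This is where I expect the main difficulty. The first attempt will be as follows. Fix $\varepsilon>0$ and choose a deterministic $n_0$ with $\mathbb{P}(N>n_0)<\varepsilon$, where $N$ need not be a stopping time. Define $A_k=\{\theta_m\in K_{M,\Delta}\ \forall m\in[n_0,k]\}$, which is $\mathcal{F}_k$-measurable, and the stopped process $U_k=V(\Lambda_k)\mathbb{I}_{A_{k-1}}$. Since $A_k\subset A_{k-1}$, Steps 1 and 2 give $\mathbb{E}[U_{k+1}\mid\mathcal{F}_k]\le(1-\eta\Delta/2)U_k+b$ for $k\ge n_0$. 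Step 2's crude bound gives $\mathbb{E}V(\Lambda_{n_0})\le(\mathbb{E}e^{\eta\|X\|})^{n_0}<\infty$. Iterating yields $\sup_{k\ge n_0}\mathbb{E}U_k\le \mathbb{E}V(\Lambda_{n_0})+2b/(\eta\Delta)=:B_\varepsilon$. Finally, since $\{N\le n_0\}\subset A_{k-1}$,
\begin{equation*}
\mathbb{P}(\|\Lambda_k\|>R)\le \varepsilon+ B_\varepsilon e^{-\eta R}
\end{equation*}
for $k>n_0$. The finitely many $k\le n_0$ are trivially tight. Choosing $R$ large gives tightness, i.e.\ $\Lambda_n=O_P(1)$.
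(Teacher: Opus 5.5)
Your proposal is correct and follows essentially the paper's route. You prove a drift inequality $P_\theta V\le\beta V+b$ with $V=e^{\eta\|\Lambda\|}$, uniform over $\theta\in K_{M,\Delta}$, as in the paper's Theorem \ref{theorem_drift_condition_inequality}. You then iterate it for the indicator-stopped process $V(\Lambda_n)\mathbb{I}(\theta_m,\dots,\theta_{n-1}\in K_{M,\Delta})$ and use that $\theta_n\notin K_{M,\Delta}$ only finitely often, exactly as in Lemma \ref{lemma_convergence_V_as}. The one difference is in how the drift constants are obtained: your bound $e^u\le 1+u+u^2e^{|u|}/2$, together with $\|\Lambda+\xi\|-\|\Lambda\|\le \xi^T\Lambda/\|\Lambda\|+\|\xi\|^2/(2\|\Lambda\|)$, is a cleaner way to get them than the paper's truncation at $M_2$ with $\lambda_1=M_2^{-3}$.
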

\begin{remark}

    While existing analyses depend on the Markov property and ergodicity of $\{\Lambda_n\}$, our proof of this corollary does not rely on the Markov chain theory.
    Our analysis directly demonstrates the boundedness of $\{\Lambda_n\}$, thus offering a more foundational explanation for why the CAR procedure induces covariate balance.
    Compared to the condition in previous works (\cite{zhangAsymptoticPropertiesMultitreatment2023,maNewUnifiedFamily2024}), the condition in Theorem \ref{theorem_boundedness} is more general and places fewer demands on extra assumptions.
    We do not need the assumption on the form of the distribution of covariates, such as spread-out condition for continuous covariates or lattice condition for discrete covariates, to obtain irreducibility of Markov chain (\cite{zhangAsymptoticPropertiesMultitreatment2023,maNewUnifiedFamily2024}).
Moreover, this corollary shows that besides minimizing the imbalance measure at each step, there exist other possible ways to control the measure in the randomization procedure.
\end{remark}

Theorem \ref{theorem_boundedness} is applicable in scenarios beyond the fixed allocation function.
It reveals that the control of the imbalance vector $\Lambda$ only requires the negative feedback condition for $P_{\theta_n}$ for sufficiently large $n$.

\section{The Shift Problem}
\label{sec_shift_problem}

\subsection{Shift Problem under Fixed Parameters}

\label{subsec_shift_properties_fixed}

In this subsection, we establish the properties of $\sum_{i=1}^n (T_i - \rho) Y_i$ under fixed parameters. 
To simplify the discussion on additional covariates, we call the expression $\mathbb{E} \left[ \sum_{i=1}^n (T_i-\rho)Y_i \right]$ by the shift on $Y$, the shift divided by $n$ by the average shift on $Y$, and the phenomenon $\frac{1}{n}\mathbb{E} \left[ \sum_{i=1}^n (T_i-\rho)Y_i \right] \rightarrow 0$ by the asymptotic nonexistence of the average shift on $Y$.
Under the fixed parameters, the stochastic process is a Markov chain, and thus we can apply the Markov chain theory.

When balancing covariates under unequal allocation, Liu, Hu and Ma discovered that the imbalance of the additional covariate $Y$, $\frac{1}{n}\sum_{i=1}^n (T_i-\rho)Y_i$, may no longer center around $0$ (\cite{liuPropertiesCovariateadaptiveRandomization2025}).
This novel phenomenon is referred to as the shift problem.
They encountered the shift problem in simulations of balancing continuous covariates by using their methods.
This phenomenon is analogous to a known issue in rerandomization tests.
When the minimization procedure is repeatedly applied to the fixed observed data to balance discrete covariates, the resulting rerandomization distribution of the test statistic is no longer centered at zero (\cite{proschanMinimizeUseMinimization2011}).

\begin{assumption}
    \label{assumption_x_spread_out}

    The distribution $\Gamma_\rho = \rho((1-\rho)\Gamma)+(1-\rho)(-\rho\Gamma)$, which is the mixture of the scaled distributions $(1-\rho)\Gamma$ and $-\rho\Gamma$, is spread out, i.e., there exists a positive integer $d_s$ and a nonnegative measurable function $f_s(\Lambda)$ with $\int f_s(\Lambda) \mu_{\mathrm{leb}}(d\Lambda) > 0$ such that
    \begin{equation*}
        \Gamma_\rho^{d_s*}(A) \geq \int_A f_s(\Lambda) \mu_{\mathrm{leb}}(d\Lambda) ,
    \end{equation*}
    for any set $A$, where $\mu_{\mathrm{leb}}$ is the Lebesgue measure and $\Gamma_\rho^{d_s*}$ is the $d_s$th convolution of $\Gamma_\rho$.
\end{assumption}

\begin{remark}

    Assumption \ref{assumption_x_spread_out} is a mild condition that naturally holds when the covariate distribution $\Gamma$ is continuous.
Based on Example 3.1 of Ma et al. (\cite{maNewUnifiedFamily2024}), one possible choice of the feature map $\phi$ is $\phi(X_{\mathrm{origin}}) = (X_{\mathrm{origin}}^i)_{i=1:p}$ when $X_{\mathrm{origin}}$ has a nonzero absolutely continuous part with respect to $\mu_{\mathrm{leb}}$.
    When $\phi$ is chosen as above, the assumption is also satisfied and the target of the randomization procedure is to balance the first $p$ moments of $X_{\mathrm{origin}}$.
\end{remark}

Let the function $f(x)$ denote the conditional expectation $\mathbb{E} \left[ Y \mid X=x \right]$. It follows that the conditional expectation of $(T_n - \rho)Y_n$ given $\mathcal{F}_{n-1}$ equals $h_{\theta_{n-1}}(\Lambda_{n-1})$, where the conditional mean function $h_\theta(\Lambda)$ is defined as
\begin{equation}
    h_\theta(\Lambda) = \mathbb{E}_{X \sim \Gamma} \left[ [g_\theta(\Lambda, X) - \rho] f(X) \right] \label{eq_h_definition} .
\end{equation}
\begin{theorem}
    \label{theorem_boundedness_shift}

Suppose that for some positive numbers $M$ and $\Delta$, $\theta \in K_{M,\Delta}$.
    If Assumptions \ref{assumption_x_sub_exponential_bound} and \ref{assumption_x_spread_out} hold, then $P_\theta$ is positive recurrent with the unique invariant probability $\pi_\theta$.
    In addition, if $Y$ has a finite expectation, then under fixed parameters ${\theta_n = \theta}{n \in \mathbb{N}}$ and any initial imbalance vector $\Lambda$,
\begin{equation*}
        \left| \mathbb{E}_\theta \left[ \sum_{i=1}^n (T_i-\rho)Y_i \mid \Lambda_0 = \Lambda \right] - n\pi_\theta h_\theta \right| \leq L^2 \left[\mathbb{E}|Y|\right] V(\Lambda) < \infty ,
    \end{equation*}
where $L$ is a positive constant uniform in $\theta \in K_{M,\Delta}$.
\end{theorem}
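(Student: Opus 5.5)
The plan is to prove the statement in two stages. First we obtain $V$-uniform geometric ergodicity of $P_\theta$ from a drift condition plus a small-set condition. Then we control the partial sums through the solution of the Poisson equation.

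\textbf{Stage 1: drift.} We take $V(\Lambda)=\exp(\lambda_0\|\Lambda\|)$ with $\lambda_0>0$ small, well below the $\lambda$ of Assumption~\ref{assumption_x_sub_exponential_bound}. For $\|\Lambda\|\ge M$, write $\Lambda_1=\Lambda+\xi$ with $\|\xi\|\le\|X\|$. Then
\[
\|\Lambda+\xi\|\le \|\Lambda\|+\xi^T\Lambda/\|\Lambda\|+\|\xi\|^2/(2\|\Lambda\|).
\]
Expanding $e^{\lambda_0(\cdot)}$ to second order and using $\theta\in K_{M,\Delta}$ to bound the first-order term by $-\lambda_0\Delta$, the sub-exponential moment bounds the remainder by $C'\lambda_0^2$. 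This gives
\[
P_\theta V\le (1-\lambda_0\Delta/2)V \quad \text{for } \|\Lambda\|\ge M' ,
\]
and $P_\theta V\le bV$ on the compact set $\{\|\Lambda\|\le M'\}$. The constants depend only on $M$, $\Delta$, $\lambda_0$ and the moment bound $C$, so they are uniform over $\theta\in K_{M,\Delta}$. It suffices to work in $W_\Gamma$, since $\Lambda_n$ stays in $W_\Gamma$ when started there.

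\textbf{Stage 2: minorization.} We show compact sets are small, uniformly in $\theta$. Since $g_\theta\in[\iota,1-\iota]$, each step dominates $\min(\iota/\rho,\iota/(1-\rho))$ times the move under $\Gamma_\rho$. Hence the $d_s$-step kernel dominates $c^{d_s}\Gamma_\rho^{d_s*}(\cdot-\Lambda)$, which by Assumption~\ref{assumption_x_spread_out} has a Lebesgue density $f_s(\cdot-\Lambda)$. Obtaining a uniform lower bound over $\Lambda$ in a compact set is the step I expect to be the main obstacle. The translates of $f_s$ need not overlap, so I would first replace $f_s$ by a density bounded below on a small ball (standard for spread-out measures, possibly after convolving more times). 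I would then use drift-driven excursions toward a neighbourhood of $0$, so that enough steps from any compact start reach a common ball with probability bounded below. This requires that the spread-out increments generate $W_\Gamma$ and give aperiodicity, which follows from the density part.

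\textbf{Conclusion.} With drift and a uniform small set, the Meyn--Tweedie $V$-uniform ergodic theorem gives positive recurrence, a unique invariant $\pi_\theta$, and
\[
\|P_\theta^n(\Lambda,\cdot)-\pi_\theta\|_V\le L\,r^n V(\Lambda)
\]
with $L$ and $r<1$ uniform on $K_{M,\Delta}$; this uniformity follows from quantitative bounds such as those of Baxendale or Douc--Moulines.

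For the shift, the tower property gives
\[
\mathbb{E}_\theta\Big[\sum_{i=1}^n (T_i-\rho)Y_i\mid\Lambda_0=\Lambda\Big]=\sum_{k=0}^{n-1}P_\theta^k h_\theta(\Lambda).
\]
Since $|h_\theta|\le \mathbb{E}|f(X)|\le\mathbb{E}|Y|$, we have $|h_\theta|_V\le\mathbb{E}|Y|$. Therefore
\[
\Big|\sum_{k}\big(P_\theta^k h_\theta-\pi_\theta h_\theta\big)\Big|\le \mathbb{E}|Y|\,L V(\Lambda)\sum_k r^k .
\]
Absorbing $(1-r)^{-1}$ into a redefined constant gives the bound $L^2\,\mathbb{E}|Y|\,V(\Lambda)$, which is finite.
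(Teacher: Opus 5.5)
Your architecture matches the paper's. It consists of a drift inequality for $V=e^{\lambda_0\|\Lambda\|}$ with constants depending only on $(M,\Delta,\lambda,C)$, a minorization uniform in $\theta$, a quantitative $V$-geometric ergodicity bound, and finally $\sum_{k<n}(P_\theta^k h_\theta-\pi_\theta h_\theta)$ with $|h_\theta|\le\mathbb{E}|Y|$. Stage 1 and the final summation are sound, with one small repair in Stage 1. Keep $|\,\|\Lambda_1\|-\|\Lambda\|\,|\le\|X\|$ in the second-order remainder, and use your concavity bound only for the linear term. If $\|\xi\|^2/(2\|\Lambda\|)$ sits inside the exponent, the remainder needs moments of $e^{c\|X\|^2}$, which a sub-exponential $X$ need not have.

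The genuine gap is Stage 2. You flag it, but the mechanism you propose cannot close it. Membership $\theta\in K_{M,\Delta}$ only constrains the radial component of the mean increment when $\|\Lambda\|\ge M$. Drift therefore returns the chain to $\{\|\Lambda\|\le M'\}$ with $M'\ge M$ fixed, not to a small neighbourhood of $0$. It says nothing about where the chain goes inside that set, so it cannot produce one ball reached with probability bounded below from every point of the compact set. That common-ball property is exactly the small-set condition. It is also your only source of $\psi$-irreducibility, without which neither the Meyn--Tweedie theorem nor uniqueness of $\pi_\theta$ is available. Nor does it follow from the density part alone. If a convolution power of $\Gamma_\rho$ has density $\ge\delta$ only on some $B(x_0,r)$ with $\|x_0\|\ge r$, the $n$-fold convolution of that part lives on $B(nx_0,nr)$, which never contains $0$. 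Something else must pull the mass back.

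The missing idea, supplied by Lemmas \ref{lemma_spread_out} and \ref{lemma_spread_out_on_sphere}, is a controllability argument for the complete-randomization walk $K_{\mathrm{rw}}$ alone. It uses no drift and is automatically uniform in $\theta$, because $P_\theta\ge\iota K_{\mathrm{rw}}$. The argument runs as follows.

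Spread-out prevents $\Gamma$ from concentrating on a proper subspace, so there are linearly independent support points $x_1,\dots,x_d$. Since both $(1-\rho)x_i$ and $-\rho x_i$ are increments of $\Gamma_\rho$, the offsets $\sum_i(z_i'(1-\rho)-z_i\rho)x_i$ with $z_i,z_i'\in\mathbb{N}$ form a $2\sum_i\|x_i\|$-net of $\mathbb{R}^d$. Each offset is approximately attained with positive probability.

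Pair a suitable offset with enough convolutions of a continuous density (truncate $f_s$ and self-convolve). This yields $K_{\mathrm{rw}}^{d_{\mathrm{rw}}}(\Lambda,\cdot)\ge\delta\,\mu_{\mathrm{leb}}(\cdot\cap B(\Lambda,R))$, a ball centred at the current state. Iterating makes the radius grow linearly: $P_\theta^{m}(\Lambda,\cdot)\ge\delta_m\,\mu_{\mathrm{leb}}(\cdot\cap B(\Lambda,R_m))$ with $R_m$ linear in $m$.

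Choosing $m$ with $R_m\ge 2c_\Lambda$ gives $P_\theta^{m}(\Lambda,\cdot)\ge\delta_m\,\mu_{\mathrm{leb}}(\cdot\cap B(0,c_\Lambda))$ for every $\|\Lambda\|\le c_\Lambda$ and every $\theta$. This delivers the uniform small set, irreducibility and aperiodicity together.

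This minorizes $P_\theta^m$ rather than $P_\theta$. So apply the quantitative bound to $P_\theta^m$, whose drift constants are $\beta^m$ and $b(1-\beta^m)/(1-\beta)$; the level set $\{V\le 2b/(1-\beta)\}$ is unchanged. Then pass to general $n$ via $P_\theta^l V\le (1+b/(1-\beta))V$, as the paper does.
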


This theorem states that the difference between the shift and the expectation $n\pi_\theta h_\theta$ is $O(1)$. Thus, we can conclude that the average shift, $\frac{1}{n} \mathbb{E}_\theta \left[ \sum_{i=1}^n (T_i-\rho)Y_i \right]$, converges to $\pi_\theta h_\theta$ at a fast rate $O(\frac{1}{n})$.

\begin{remark}
\label{remark_equivalent_form}
    Suppose the fixed covariate distribution $\Gamma$ satisfies Assumptions \ref{assumption_x_sub_exponential_bound} and \ref{assumption_x_spread_out}, and that the associated kernel $P_\theta$ corresponds to a parameter $\theta \in K_{M,\Delta}$.
    A consequence of Theorem \ref{theorem_boundedness_shift} is that the asymptotic nonexistence of the average shift on any $Y$ under the covariate distribution $\Gamma$ is equivalent to
    \begin{equation*}
        0 = \pi_\theta h_\theta
        = \mathbb{E}_{\Lambda \sim \pi_\theta, X \sim \Gamma} \left[ [g_\theta(\Lambda, X) - \rho] f(X) \right]
        = \mathbb{E}_{X \sim \Gamma} \left[ [\pi_\theta \left[ g_\theta(\cdot,X) \right] - \rho] f(X) \right] ,
    \end{equation*}
    where $\pi_\theta \left[ g_\theta(\cdot,X) \right]$ denotes the expectation of $g_\theta(\Lambda, X)$ under the invariant probability $\pi_\theta$ for a fixed covariate $X$.
    By the arbitrariness of the conditional expectation function $f$, it is equivalent to $\pi_\theta \left[ g_\theta(\cdot, x) \right] = \rho$ for $\Gamma$-a.e. $x$.
\end{remark}

\begin{remark}
\label{remark_stronger_equivalent_form}
    A slightly stronger equivalent form is that the equality $\pi_\theta \left[ g_\theta(\cdot, x) \right] = \rho$ holds for any $x$.
    This form is equivalent to the asymptotic nonexistence of the average shift on any $Y$ under any covariate distribution $\Gamma$ that satisfies Assumptions \ref{assumption_x_sub_exponential_bound} and \ref{assumption_x_spread_out} and leads to $\theta \in K_{M,\Delta}$.
    The sufficiency is immediate.
    For necessity, if the adjustment is sufficiently small, the invariant distribution under the adjusted measure can approximate that under $\Gamma$ arbitrarily well.
    Adding an infinitesimal Dirac mass at any point $x$ to $\Gamma$ thus enforces $\pi_\theta \left[ g_\theta(\cdot, x) \right] = \rho$ and traversing all $x$ completes the argument for necessity.
    We discuss necessity for illustrative purposes, and the necessity itself does not play a role in the formal proofs.
\end{remark}

Let $\tilde{\rho}_\theta(x) = \pi_\theta \left[ g_\theta(\cdot,x) \right]$ denote the actual targeted allocation ratio for a unit with covariate $x$ under the invariant probability $\pi_\theta$.
This quantity also corresponds to the conditional average allocation ratio in the long run, as justified below.
Consider an additional covariate defined by $Y = \mathbb{I}(X \in A)$, where $A$ is a measurable set.
By Theorem \ref{theorem_boundedness_shift}, the limit of $\frac{1}{N} \sum_{n=1}^{N} \left[ T_n \mathbb{I}(X_n \in A) \right]$ is $\mathbb{E}_{X \sim \Gamma} \left[ \tilde{\rho}_\theta(X)\mathbb{I}(X \in A) \right]$.
Thus, by choosing $A = \{x\}$ or by taking measurable sets that concentrate around $x$, it follows that $\tilde{\rho}_\theta(x)$ represents the conditional average allocation ratio for units with covariate value $x$ in the long run.

As indicated by Remarks \ref{remark_equivalent_form} and \ref{remark_stronger_equivalent_form}, the condition $\tilde{\rho}_\theta(x) \equiv \rho$ implies the asymptotic nonexistence of the average shift.
It motivates the construction of a new allocation function form in Subsection \ref{subsec_new_allocation_function_form} to address the shift problem.

\subsection{Shift Problem under Convergent Parameters}

\label{subsec_shift_properties_convergent}

In this subsection, we characterize the asymptotic distribution of $\sum_{i=1}^n (T_i-\rho)Y_i$ under the randomization procedure under convergent parameters.
To ensure the stability of the allocation function against continuously updated parameters, we impose two key assumptions.
\begin{assumption}
    \label{assumption_g_lipschitz}
    The allocation function families $g_\theta$ is Lipschitz continuous on $\overline{B(\theta^*, r_g)}$ in the sense of $|g_\theta(\Lambda, X) - g_{\theta^\prime}(\Lambda, X)| \leq L_g \|\theta-\theta^\prime\|$ for any $\Lambda$, $X$ and some constant $L_g>0$.
\end{assumption}
This assumption uniformly restricts changes in the allocation function with respect to the parameter $\theta$ over all possible covariate and imbalance vector values, while the next assumption limits the variation in the parameter sequence.
Denote the oracle value of the parameter by $\theta^*$.
\begin{assumption}
    \label{assumption_theta_convergence}
    The parameter sequence $\{\theta_n\}$ converges to $\theta^*$ almost surely. Moreover, for some $\epsilon>1/2$, the difference $\|\theta_n-\theta_{n+1}\|$ satisfies that $\sup_n n^\epsilon\|\theta_n-\theta_{n+1}\| < \infty$ almost surely.
\end{assumption}

Under these assumptions, we can establish the asymptotic normality of $\sum_{i=1}^n (T_i-\rho)Y_i$ for the additional covariate $Y$ can be established, with center $\frac{1}{N}\sum_{n=0}^{N-1} \pi_{\theta_n}h_{\theta_n}$, in analogy with Theorem 2.2 of Fort et al. (\cite{fortCentralLimitTheorem2014}).
Denote
\begin{equation*}
    \mu_n =
    \begin{cases}
        \pi_{\theta_n}(h_{\theta_n}), & \text{if }
        \|\theta_n-\theta^*\| \leq \min\{r_d, r_g\} , \\
        \pi_{\theta^*}(h_{\theta^*}), & \text{otherwise} .
    \end{cases}
\end{equation*}
We refer to ``the variance/covariance matrix of the asymptotic distribution'' as ``the asymptotic variance/covariance matrix'' for brevity.

\begin{theorem}
\label{theorem_CLT_complex_center}
    
    Suppose that there exist positive constants $r_d$, $M$, and $\Delta$ such that $\overline{B(\theta^*, r_d)} \subset K_{M,\Delta}$.
Denote some other additional covariate distinct from $Y_n$ by $Z_n$.
    Suppose the distribution of the covariate $X$ satisfies Assumptions \ref{assumption_x_sub_exponential_bound} and \ref{assumption_x_spread_out}, the allocation function satisfies Assumption \ref{assumption_g_lipschitz} and the parameter sequence $\{\theta_n\}$ satisfies Assumption \ref{assumption_theta_convergence}.
    If $Y$ and $Z$ have a finite second moment, then there is a $\sigma^*_{Y,Z} \geq 0$ such that
    \begin{equation}
        \frac{1}{\sqrt{N}} \sum_{n=1}^{N} { \left[ (T_n-\rho) Y_n + Z_n
        -\mu_{n-1}
        -\mathbb{E}Z_n \right] } \xrightarrow{d} \mathcal{N}(0, {\sigma^*_{Y,Z}}^2) \label{eq_CLT} ,
    \end{equation}
    where ${\sigma^*_{Y,Z}}^2$ is equal to the asymptotic variance under the fixed parameter $\theta^*$.

\end{theorem}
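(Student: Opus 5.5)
The plan is to follow the Poisson-equation and martingale-decomposition route of Fort et al.\ (Theorem 2.2 of \cite{fortCentralLimitTheorem2014}) for adaptive MCMC. The summand is $\xi_n = (T_n-\rho)Y_n + Z_n - \mathbb{E}Z_n$, and we have $\mathbb{E}[\xi_n \mid \mathcal{F}_{n-1}] = h_{\theta_{n-1}}(\Lambda_{n-1})$. So we write
\begin{equation*}
\xi_n - \mu_{n-1} = \underbrace{\xi_n - h_{\theta_{n-1}}(\Lambda_{n-1})}_{D^{(1)}_n} + \underbrace{h_{\theta_{n-1}}(\Lambda_{n-1}) - \mu_{n-1}}_{R_n}.
\end{equation*}
The term $D^{(1)}_n$ is a martingale difference. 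For the second term we use the Poisson equation. By Theorem \ref{theorem_boundedness_shift} and its proof, for each $\theta \in \overline{B(\theta^*, r_d)}$ the kernel $P_\theta$ is $V$-geometrically ergodic with a drift function $V$ (exponential in $\|\Lambda\|$, using Assumption \ref{assumption_x_sub_exponential_bound}) and constants uniform over $K_{M,\Delta}$. Hence $\hat h_\theta = \sum_{k\ge0} P_\theta^k (h_\theta - \pi_\theta h_\theta)$ is well defined with $|\hat h_\theta|_V \le L\,\mathbb{E}|Y|$, and it solves $\hat h_\theta - P_\theta \hat h_\theta = h_\theta - \pi_\theta h_\theta$. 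On the event $\|\theta_{n-1}-\theta^*\|\le\min\{r_d,r_g\}$, which holds eventually almost surely by Assumption \ref{assumption_theta_convergence}, we decompose
\begin{equation*}
R_n = \bigl[\hat h_{\theta_{n-1}}(\Lambda_n) - P_{\theta_{n-1}}\hat h_{\theta_{n-1}}(\Lambda_{n-1})\bigr] + \bigl[\hat h_{\theta_{n-1}}(\Lambda_{n-1}) - \hat h_{\theta_{n}}(\Lambda_{n})\bigr] + \bigl[\hat h_{\theta_{n}}(\Lambda_{n}) - \hat h_{\theta_{n-1}}(\Lambda_{n})\bigr].
\end{equation*}
The first bracket is a martingale difference $D^{(2)}_n$. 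The second telescopes, so its total contribution is $O_P(V(\Lambda_0)+V(\Lambda_N)) = O_P(1)$ by the drift bound, and this vanishes after dividing by $\sqrt N$. The finitely many early steps where $\theta_n$ lies outside the ball contribute $O_P(1)$ as well.

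The third bracket requires a regularity estimate $|\hat h_\theta - \hat h_{\theta'}|_V \le C\|\theta-\theta'\|$ on the ball. To get it we use the standard perturbation identity
\begin{equation*}
\hat h_\theta-\hat h_{\theta'} = \sum_k \bigl[P_\theta^k - P_{\theta'}^k\bigr](\cdots),
\end{equation*}
combined with uniform geometric ergodicity. This gives a bound in terms of $D_V(\theta,\theta')$ and $|h_\theta-h_{\theta'}|_V$. Assumption \ref{assumption_g_lipschitz} gives $|g_\theta-g_{\theta'}|\le L_g\|\theta-\theta'\|$. Plugging this into (\ref{eq_transition_kernel}) and (\ref{eq_h_definition}) yields $D_V(\theta,\theta')\le C\|\theta-\theta'\|$, using that $\sup_\Lambda P(\Lambda,V)/V(\Lambda)$ is bounded thanks to the sub-exponential moments, and it yields $|h_\theta-h_{\theta'}|_\infty \le L_g\mathbb{E}|Y|\,\|\theta-\theta'\|$. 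The third bracket is then bounded by $C\|\theta_n-\theta_{n-1}\|V(\Lambda_n)$. Assumption \ref{assumption_theta_convergence} gives $\|\theta_n-\theta_{n-1}\| \le C n^{-\epsilon}$ with $\epsilon>1/2$, and $\sup_n\mathbb{E}V(\Lambda_n)<\infty$ by the drift argument behind Theorem \ref{theorem_boundedness}. Hence the sum of these terms divided by $\sqrt N$ is $O_P(N^{1/2-\epsilon})\to0$. A small technical point is that $\sup_n n^\epsilon\|\theta_n-\theta_{n+1}\|$ is only almost surely finite, not deterministically bounded; we handle this by localizing on events where it is at most $K$ and letting $K\to\infty$.

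It remains to prove a martingale CLT for $\sum_n (D^{(1)}_n + D^{(2)}_n)/\sqrt N$. The conditional Lindeberg condition follows from the finite second moments of $Y$ and $Z$ together with $\sup_n \mathbb{E}V^2(\Lambda_n)<\infty$. For this we choose $V$ with a small enough exponent that $V^2$ still satisfies a drift condition, which is where the sub-exponential moments are needed. The main obstacle is the conditional variance: we must show $\frac1N\sum_n \mathbb{E}[(D^{(1)}_n+D^{(2)}_n)^2\mid\mathcal{F}_{n-1}] \to {\sigma^*_{Y,Z}}^2$ in probability. This conditional variance equals $F_{\theta_{n-1}}(\Lambda_{n-1})$ for an explicit function $F_\theta$, which is built from $\mathbb{E}[Y^2,YZ,Z^2\mid X]$, $g_\theta$ and $\hat h_\theta$, and satisfies $|F_\theta|_{V^2}<\infty$. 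We will apply the same Poisson-equation trick to $F_\theta$. This gives $\frac1N\sum F_{\theta_{n-1}}(\Lambda_{n-1}) - \frac1N\sum \pi_{\theta_{n-1}}F_{\theta_{n-1}} \to 0$, which is a weak law of large numbers for adaptive chains with the same three-term decomposition. Lipschitz continuity of $\theta\mapsto\pi_\theta F_\theta$, combined with $\theta_n\to\theta^*$, then gives convergence to $\pi_{\theta^*}F_{\theta^*}$. This limit is exactly the asymptotic variance of the fixed-$\theta^*$ chain, i.e.\ the usual Poisson-equation expression for ${\sigma^*_{Y,Z}}^2$, which proves the claim.
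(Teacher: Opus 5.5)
Your proposal is correct and follows essentially the same route as the paper. Both arguments use the same Poisson-equation decomposition into a martingale part, a telescoping part and a parameter-increment part, with the last controlled by Lipschitz bounds on $D_V$, $\pi_\theta$ and $\hat h_\theta$. Both then apply a martingale CLT whose conditional variance is handled by an adaptive weak law of large numbers and Lipschitz continuity of $\theta\mapsto\pi_\theta F_\theta$; the paper merely splits your $F_\theta$ into $G_\theta+F_\theta+2H_\theta$.

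Two points need tightening. First, $\sup_n\mathbb{E}V(\Lambda_n)<\infty$ is not directly available, because the drift only holds once $\theta_n$ is in the ball. Localize as you already do for the step sizes, or equivalently use $V(\Lambda_n)=O_P(1)$ and $\sum_n n^{-p}V(\Lambda_n)<\infty$ a.s., as the paper does. Second, the Lindeberg step needs slightly more than a bounded second moment of $V(\Lambda_n)$, for example a drift for $V^3$, which your small-exponent choice of $V$ provides.
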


\section{Existing Covariate Adaptive Randomization Procedures under Unequal Allocation}
\label{sec_applications_existing_procedures}

\subsection{Balance Continuous Covariates}

\label{subsec_main_liu2024}

The randomization procedure proposed by Liu, Hu and Ma (\cite{liuPropertiesCovariateadaptiveRandomization2025}) tends to minimize the imbalance measure at each step.
When $X$ is continuous, we will formally prove the asymptotic existence of the average shift on some additional covariate.
We begin by presenting the randomization procedure in detail. Let the imbalance measure $\mathrm{Imb}_n$ be the squared Euclidean norm of the imbalance vector $\Lambda_n$,
\begin{equation*}
    \mathrm{Imb}_n=\left\|\sum_{i=1}^n\left(T_i-\rho\right) X_i\right\|^2 .
\end{equation*}

The procedure sequentially allocates units to minimize $\mathrm{Imb}_n$ in each step using a more biased allocation probability. Specifically, the randomization procedure is defined as follows:
\begin{enumerate}
    \item Assign the first unit to treatment with probability $\rho$ and to control with probability $1-\rho$. Set the initial imbalance vector $\Lambda_1=(T_1-\rho)X_1$.
    \item Suppose the first $n$ units have been allocated and the $(n+1)$th unit with the covariate vector $X_{n+1}$ is to be allocated ($n > 0$).
    If the $(n+1)$th unit is allocated to the treatment, the potential imbalance measures will be $\mathrm{Imb}_{n+1}^{(1)}=\left\|\Lambda_n+(1-\rho) X_{n+1}\right\|^2$. Similarly, $\mathrm{Imb}_{n+1}^{(0)}=$ $\left\|\Lambda_n-\rho X_{n+1}\right\|^2$ if the $(n+1)$th unit is allocated to the control.
    \item Assign the $(n+1)$th unit to the treatment with probability
    \begin{align*}
        P \left(T_{n+1}=1 \mid T_1, \ldots, T_n, X_1, \ldots, X_{n+1}\right)=
        \begin{cases}
            \rho_1 & \text { if } \Imb_{n+1}^{(1)}<\Imb_{n+1}^{(0)}, \\
            1-\rho_1 & \text { if } \Imb_{n+1}^{(1)}>\Imb_{n+1}^{(0)}, \\
            \rho & \text { if } \Imb_{n+1}^{(1)}=\Imb_{n+1}^{(0)},
        \end{cases}
    \end{align*}
    where $\max \{\rho, 1-\rho\} < \rho_1 < 1$. After calculation, the expression $\mathrm{Imb}_{n+1}^{(1)}-\mathrm{Imb}_{n+1}^{(0)}=2 X_{n+1}^T \Lambda_n + (1-2 \rho) X_{n+1}^T X_{n+1}$.
    \item Calculate $\Lambda_{n+1} = \Lambda_n + (T_{n+1}-\rho)X_{n+1}$.
    \item Repeat the last three steps until all units are allocated.
\end{enumerate}

Under the general framework proposed in Subsection \ref{subsec_general_framework}, this procedure is equivalent to the procedure with the fixed allocation function
\begin{align}
    & g_\theta(\Lambda, X) = \rho_1 \mathbb{I}(2X^T\Lambda+(1-2\rho)X^TX < 0) \label{eq_allocation_function_minimization} \\
    & \quad + (1-\rho_1) \mathbb{I}(2X^T\Lambda+(1-2\rho)X^TX > 0)
    + \rho \mathbb{I}(2X^T\Lambda+(1-2\rho)X^TX = 0) \notag
\end{align}
for the $n$th unit ($n > 1$). According to Theorem \ref{theorem_boundedness_shift}, the specific allocation function yields the theorem below:
\begin{theorem}
    \label{theorem_shift_liu_xiao}
    If the covariate $X$ has a finite second moment, then there exist constants $M>0$ and $\Delta>0$ such that the parameter $\theta \in K_{M,\Delta}$, and so, $\Lambda_n = O_P(1)$.
    Furthermore, suppose $\rho \neq \frac{1}{2}$ and $X$ is not bounded. If Assumptions \ref{assumption_x_sub_exponential_bound} and \ref{assumption_x_spread_out} hold, then there exists a measurable function $f(x)$ such that for the additional covariate $Y_i = f(X_i) + \epsilon_i$, the average shift on it, $\frac{1}{n}\mathbb{E} \left[ \sum_{i=1}^n (T_i-\rho)Y_i \right]$, does not converge to zero, where $\epsilon_i$ can be any i.i.d. error with zero expectation.
\end{theorem}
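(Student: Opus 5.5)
The plan has two parts. Part one verifies the negative feedback condition $\theta\in K_{M,\Delta}$ and then invokes Theorem \ref{theorem_boundedness}. Part two combines Theorem \ref{theorem_boundedness_shift} and Remark \ref{remark_equivalent_form} with a direct estimate of the actual targeted allocation ratio $\tilde\rho_\theta(x)$ for large $\|x\|$.

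\emph{Negative feedback.} Fix $\Lambda\in W_\Gamma$ and write $u=\Lambda/\|\Lambda\|$. Since $\Lambda_1-\Lambda_0=(T_1-\rho)X_1$, the drift in the definition of $K_{M,\Delta}$ equals
\begin{equation*}
D(\Lambda)=\mathbb{E}_{X\sim\Gamma}\left[(g_\theta(\Lambda,X)-\rho)X^Tu\right].
\end{equation*}
Let $S(\Lambda,X)=2X^T\Lambda+(1-2\rho)X^TX$. On the event $G_\Lambda=\{2\|\Lambda\||X^Tu|>|1-2\rho|\|X\|^2\}$, the sign of $S$ equals the sign of $X^Tu$.
\begin{itemize}
\item If $X^Tu>0$, then $g_\theta-\rho=1-\rho_1-\rho<0$.
\item If $X^Tu<0$, then $g_\theta-\rho=\rho_1-\rho>0$.
\end{itemize}
Hence on $G_\Lambda$ the integrand is at most $-c|X^Tu|$, where $c=\min\{\rho_1+\rho-1,\rho_1-\rho\}>0$. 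Off $G_\Lambda$ the integrand is at most $|X^Tu|$ in absolute value. This gives
\begin{equation*}
D(\Lambda)\le -c\,\mathbb{E}|X^Tu|+(1+c)\,\mathbb{E}\left[|X^Tu|\,\mathbb{I}(G_\Lambda^c)\right].
\end{equation*}
\begin{itemize}
\item The map $u\mapsto\mathbb{E}|X^Tu|$ is continuous and strictly positive on the unit sphere of $W_\Gamma$. This holds because a unit $u\in W_\Gamma$ cannot be orthogonal to the whole support of $\Gamma$. By compactness it is therefore bounded below by some $m>0$.
\item On $G_\Lambda^c$ we have $|X^Tu|\le |1-2\rho|\|X\|^2/(2\|\Lambda\|)$. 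Hence $\mathbb{E}[|X^Tu|\,\mathbb{I}(G_\Lambda^c)]\le |1-2\rho|\,\mathbb{E}\|X\|^2/(2\|\Lambda\|)$. This tends to $0$ uniformly in $u$, and this is exactly where the finite second moment is used.
\end{itemize}
Choosing $M$ large and $\Delta=cm/2$ gives $\theta\in K_{M,\Delta}$. Theorem \ref{theorem_boundedness} then yields $\Lambda_n=O_P(1)$. If Theorem \ref{theorem_boundedness} is read as requiring Assumption \ref{assumption_x_sub_exponential_bound}, I would note that its proof only needs moments of $X$ controlled well enough for the drift argument.

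\emph{Existence of the shift.} Under Assumptions \ref{assumption_x_sub_exponential_bound} and \ref{assumption_x_spread_out}, Theorem \ref{theorem_boundedness_shift} gives a unique invariant $\pi_\theta$ and
\begin{equation*}
\frac1n\mathbb{E}\left[\sum_{i=1}^n(T_i-\rho)Y_i\right]\to \mathbb{E}_{X\sim\Gamma}\left[(\tilde\rho_\theta(X)-\rho)f(X)\right].
\end{equation*}
Because $\epsilon_i$ is independent of $X_i$ and of the assignment mechanism and has mean zero, it contributes nothing. I would choose the bounded measurable function $f=\tilde\rho_\theta-\rho$. The limit is then $\mathbb{E}[(\tilde\rho_\theta(X)-\rho)^2]$, so it suffices to show that $\tilde\rho_\theta\neq\rho$ on a set of positive $\Gamma$-measure.

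Take $\rho<1/2$; the case $\rho>1/2$ is symmetric.
\begin{itemize}
\item Pick $R$ with $\pi_\theta(\|\Lambda\|>R)\le\varepsilon$.
\item If $\|x\|>2R/(1-2\rho)$ and $\|\Lambda\|\le R$, then $S>0$, so $g_\theta(\Lambda,x)=1-\rho_1$.
\item Therefore $\tilde\rho_\theta(x)\le 1-\rho_1+\varepsilon\rho_1$.
\item Since $1-\rho_1<\min\{\rho,1-\rho\}=\rho$, a small enough $\varepsilon$ gives $\tilde\rho_\theta(x)<\rho$ for all such $x$.
\end{itemize}
Unboundedness of $X$ means $\Gamma(\|x\|>K)>0$ for every $K$, so this set has positive measure. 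For $\rho>1/2$ the same argument gives $g_\theta=\rho_1$ and $\tilde\rho_\theta(x)\ge\rho_1(1-\varepsilon)>\rho$.

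The main obstacle is the uniform lower bound $m$ on $\mathbb{E}|X^Tu|$ over directions in $W_\Gamma$. The remaining steps, including measurability of $\tilde\rho_\theta$ via Fubini, are routine.
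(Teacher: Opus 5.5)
Your proposal is correct and follows essentially the paper's route. For the drift you use the same comparison of $g_\theta$ with its sign-based version, with the discrepancy controlled by $|1-2\rho|\,\mathbb{E}\|X\|^2/(2\|\Lambda\|)$. For the shift you use the same observation: for large $\|x\|$ the term $(1-2\rho)\|x\|^2$ dominates on a $\pi_\theta$-likely ball, which pushes $\tilde{\rho}_\theta(x)$ strictly to one side of $\rho$. Three of your choices are minor streamlinings rather than a different strategy:
\begin{itemize}
\item you bound $\inf_u\mathbb{E}|X^Tu|$ by compactness instead of the paper's cone covering by linearly independent support points;
\item your direct bound on $\{\|\Lambda\|\le R\}$ avoids the paper's appeal to $\pi_\theta\sim\mu_{\mathrm{leb}}$ to discard the tie set $\{2x^T\Lambda+(1-2\rho)x^Tx=0\}$;
\item you take the witness $f=\tilde{\rho}_\theta-\rho$ rather than a tail indicator.
\end{itemize}
One caveat concerns your remark that Theorem~\ref{theorem_boundedness} only needs moments ``controlled well enough.'' Its proof does not support this: it runs through the exponential Lyapunov function $e^{\lambda_1\|\Lambda\|}$ and genuinely uses Assumption~\ref{assumption_x_sub_exponential_bound}. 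So under a bare second moment you have established $\theta\in K_{M,\Delta}$ but not yet $\Lambda_n=O_P(1)$; the paper leaves the same looseness.
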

\begin{remark}

    This unboundedness condition on $X$ is not a necessary condition for the asymptotic existence of the average shift. If we instead assume that $X$ is bounded, or modify the distribution of the error term in $Y_i$, then the asymptotic existence of the average shift is still possible.
    Remarks \ref{remark_equivalent_form} and \ref{remark_stronger_equivalent_form} show that the condition $\pi_\theta \left[ g_\theta(\cdot,x) \right] = \tilde{\rho}_\theta(x) \equiv \rho$ is equivalent to the asymptotic nonexistence of the average shift.
    However, it is difficult to know the explicit expression of $\pi_\theta$. Based on the equation $\pi_\theta=\pi_\theta P_\theta$ alone, we cannot determine whether $\pi_\theta \left[ g_\theta(\cdot, x) \right]$ deviates from $\rho$. Furthermore, we cannot tell if the allocation ratio $\frac{1}{n}\sum_{i=1}^n \mathbb{E}T_i$ equals the targeted allocation ratio $\rho$. In Liu, Hu and Ma's article (\cite{liuPropertiesCovariateadaptiveRandomization2025}), they only show the phenomenon numerically that $\frac{1}{n}\sum_{i=1}^n \mathbb{E}T_i$ deviates from $\rho$.
    Due to these difficulties, we do not attempt to prove the asymptotic existence of the average shift for any naturally occurring or pre-specified additional covariate.
    Rather, we construct a deliberately designed example of the additional covariate $Y$ in the proof of Theorem \ref{theorem_shift_liu_xiao}.
\end{remark}

\subsection{Balance Discrete Covariates}

Consider the case where the baseline covariate $X_{\mathrm{origin}} = (X_{1,\mathrm{origin}}, \dots, X_{p,\mathrm{origin}})$ is discrete.
Suppose that each component $X_{t,\mathrm{origin}}$ takes finitely many values $x_t^{(l_t)}$, $l_t = 1, \dots, L_t$, $t = 1, \dots, p$.

Some classical randomization procedures for balancing discrete covariates can be readily extended to accommodate any targeted allocation ratio.
Their randomization procedures can be viewed as special cases of the randomization procedure described in Subsection \ref{subsec_main_liu2024}, and the corresponding allocation functions can all be expressed in the form of \eqref{eq_allocation_function_minimization}.
Thus, we will talk about these procedures as examples.

\begin{example}[Stratified Randomization Procedure]
    Define the covariate transformation
    \begin{equation*}
        \phi_{\mathrm{S}}(X_{\mathrm{origin}}) = (
        \underbrace{ \dots, \mathbb{I}(X_{\mathrm{origin}} = (x_1^{(l_1)}, \dots, x_p^{(l_p)})), \dots }_{\prod L_t \text { within-stratum terms }}
        )^T ,
    \end{equation*}
    and let $X = \phi_{\mathrm{S}}(X_{\mathrm{origin}})$.  
    Under the general CAR framework, this corresponds to the classical stratified randomization procedure, which achieves exact balance within each stratum defined by $X_{\mathrm{origin}}$.  
    By Theorem \ref{theorem_shift_liu_xiao}, the within-stratum imbalance vector $\sum_{i=1}^n (T_i - \rho) \phi_{\mathrm{S}}(X_{\mathrm{origin},i})$ is $O_P(1)$.
\end{example}

\begin{example}[Pocock and Simon's Minimization Procedure]
    Define
    \begin{equation*}
        \phi_{\mathrm{PS}}(X_{\mathrm{origin}}) = (
        \underbrace{ \dots, \sqrt{w_{m, t}} \mathbb{I}(X_{t,\mathrm{origin}} = x_t^{(l_t)}), \dots}_{\sum L_t \text { marginal terms }}
        )^T ,
    \end{equation*}
    where $w_{m,t} > 0$ are pre-specified weights corresponding to each covariate $X_{t,\mathrm{origin}}$, and let $X = \phi_{\mathrm{PS}}(X_{\mathrm{origin}})$.  
    When the targeted allocation ratio $\rho = \frac{1}{2}$, This specification corresponds to Pocock and Simon's minimization procedure in (\cite{pocockSequentialTreatmentAssignment1975}).
    Theorem \ref{theorem_shift_liu_xiao} implies that $\sum_{i=1}^n (T_i - \rho) \phi_{\mathrm{PS}}(X_{\mathrm{origin},i}) = O_P(1)$.
    Consequently, the marginal imbalances are bounded in probability.
\end{example}

\begin{example}[Hu and Hu's Generalized Procedure]
    Define
    \begin{align*}
        \phi_{\mathrm{HH}}(X_{\mathrm{origin}})
        &= ( \sqrt{w_o},
        \underbrace{ \dots, \sqrt{w_{m, t}} \mathbb{I}(X_{t,\mathrm{origin}} = x_t^{(l_t)}), \dots}_{\sum L_t \text { marginal terms }}, \\
        & \quad \underbrace{ \dots, \sqrt{w_s} \mathbb{I}(X_{\mathrm{origin}} = (x_1^{(l_1)}, \dots, x_p^{(l_p)})), \dots }_{\prod L_t \text { within-stratum terms }}
        )^T ,
    \end{align*}
    and let $X = \phi_{\mathrm{HH}}(X_{\mathrm{origin}})$.  
    This corresponds to the generalized covariate adaptive randomization procedure proposed by (\cite{huAsymptoticPropertiesCovariateadaptive2012}), which combines overall, marginal, and stratified imbalance measures through weights $w_o, w_{m,t}, w_s \geq 0$.  
    By Theorem \ref{theorem_shift_liu_xiao}, $\sum_{i=1}^n (T_i - \rho)X_i = O_P(1)$, implying that the overall, marginal, and within-stratum imbalances are all bounded in probability.
\end{example}

\section{New Randomization Procedures Addressing Shift Problems}
\label{sec_addressing_shift_problem}

\subsection{New Allocation Function Form}

\label{subsec_new_allocation_function_form}

Suppose the parameter sequence of the randomization procedure is $\{\theta_n = \theta\}_{n \in \mathbb{N}}$.
In this case, $\{ \Lambda_n \}$ forms a Markov chain.
Let $\pi_\theta$ be the corresponding invariant probability.
Recall that $\tilde{\rho}_\theta(x) = \pi_\theta \left[ g_\theta(\cdot, x) \right]$ as defined in Subsection \ref{subsec_shift_properties_fixed}.
The relationship $\pi_\theta \left[ \mathbb{E}_\theta \left[ \Lambda_1 \mid \Lambda_0 = \cdot \right] \right] = \pi_\theta \left[ P_\theta(\cdot, \Lambda) \right] = \pi_\theta (\Lambda)$ implies that
\begin{equation*}
    0
    = \pi_\theta \left[ P_\theta(\cdot, \Lambda) - \Lambda  \right]
    = \mathbb{E}_{X \sim \Gamma} \left[ (\tilde{\rho}_\theta(X) - \rho) X \right] .
\end{equation*}

Let $\mathcal{C}$ denote the function set
\begin{equation*}
    \mathcal{C} = \{ \mu \left[ g_\theta(\cdot, x) \right] \mid \mu \text{ is a probability measure on } \mathbb{R}^d \} ,
\end{equation*}
which corresponds to the convex hull of the allocation functions $\{ g_{\theta,\Lambda}(x) \mid \Lambda \in \mathbb{R}^d \}$, where $g_{\theta,\Lambda}(x) := g_\theta(\Lambda, x)$.
Define the mapping $\Phi: \mathcal{C} \to \mathbb{R}^d$ by
\begin{equation*}
    \Phi\left(\tilde{\rho}(\cdot)\right)
    := \mathbb{E}_{X \sim \Gamma} \left[ (\tilde{\rho}(X) - \rho) X \right] ,
\end{equation*}
where $\tilde{\rho} \in \mathcal{C}$.
We now introduce the core argument for establishing $\tilde{\rho}(x) \equiv \rho$.
\begin{lemma}
    \label{lemma_Phi}
    Suppose the Markov chain $\{\Lambda_n\}$ is positive recurrent with the invariant probability $\pi_\theta$, the expectation $\pi_\theta(\Lambda) < \infty$ and the covariate $X$ has a finite expectation.
    If the following two conditions hold:
    \begin{enumerate}[label=(\Alph*)]
        \item $\Phi$ is injective,
        \item the function $\tilde{\rho}$, defined by $\tilde{\rho}(x) \equiv \rho$, belongs to the set $\mathcal{C}$,
    \end{enumerate}
    then $\tilde{\rho}_\theta(x) \equiv \rho$ for any $x$.
\end{lemma}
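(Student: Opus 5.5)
The plan is to show two facts: that $\tilde{\rho}_\theta$ lies in $\mathcal{C}$ and satisfies $\Phi(\tilde{\rho}_\theta)=0$, and that the constant function $\rho$ also lies in $\mathcal{C}$ and satisfies $\Phi(\rho)=0$. Injectivity of $\Phi$ then forces the two to coincide.

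\emph{Step 1: $\tilde{\rho}_\theta\in\mathcal{C}$.} By definition $\tilde{\rho}_\theta(x)=\pi_\theta[g_\theta(\cdot,x)]$. Since $\pi_\theta$ is a probability measure on $\mathbb{R}^d$ (the state space of $\Lambda$), $\tilde{\rho}_\theta$ is exactly of the form $\mu[g_\theta(\cdot,x)]$ with $\mu=\pi_\theta$. Hence $\tilde{\rho}_\theta\in\mathcal{C}$, and $\Phi(\tilde{\rho}_\theta)$ is defined. The integrals are finite because $g_\theta\in[\iota,1-\iota]$ and $\mathbb{E}\|X\|<\infty$.

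\emph{Step 2: $\Phi(\tilde{\rho}_\theta)=0$.} Use the identity sketched just before the lemma, this time with the integrability needed to justify it. Invariance gives $\pi_\theta P_\theta=\pi_\theta$. Apply this to the coordinate function $\Lambda\mapsto\Lambda$, which is $\pi_\theta$-integrable by the hypothesis $\pi_\theta(\Lambda)<\infty$ (read as $\pi_\theta\|\Lambda\|<\infty$). Since $P_\theta$ was defined only on bounded $h$, I would extend it by truncation: take $h_k=\Lambda\mathbb{I}(\|\Lambda\|\le k)$ componentwise and pass to the limit by dominated convergence. The domination is $\|P_\theta(\Lambda,\cdot)\|$-integrand $\le\|\Lambda\|+\|X\|$, and its $\pi_\theta\otimes\Gamma$-integral is finite.

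From \eqref{eq_transition_kernel},
\begin{equation*}
P_\theta(\Lambda,\mathrm{id})-\Lambda=\mathbb{E}_{X\sim\Gamma}\left[(g_\theta(\Lambda,X)-\rho)X\right].
\end{equation*}
Integrating in $\pi_\theta$ and applying Fubini, which is justified by $|g_\theta-\rho|\,\|X\|\le\|X\|$, gives
\begin{equation*}
0=\pi_\theta P_\theta(\mathrm{id})-\pi_\theta(\mathrm{id})=\mathbb{E}_{X\sim\Gamma}\left[(\tilde{\rho}_\theta(X)-\rho)X\right]=\Phi(\tilde{\rho}_\theta).
\end{equation*}

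\emph{Step 3: conclude.} By condition (B), the constant function $\rho$ lies in $\mathcal{C}$, and trivially $\Phi(\rho)=0$. Condition (A) then gives $\tilde{\rho}_\theta=\rho$ as elements of $\mathcal{C}$.

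The main obstacle is how to read ``injective.'' Elements of $\mathcal{C}$ are functions of $x$, but $\Phi$ only sees them through $\Gamma$. Injectivity on $\mathcal{C}$ as a set of functions defined everywhere therefore already yields $\tilde{\rho}_\theta(x)=\rho$ for every $x$, which is how the stated ``for any $x$'' conclusion arises. I would make this explicit: if one identifies functions only $\Gamma$-a.e., the conclusion weakens to $\Gamma$-a.e., consistent with Remark \ref{remark_equivalent_form}. The other delicate point is the unbounded test function in Step 2, which the truncation argument handles.
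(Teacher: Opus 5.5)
Your proposal is correct and follows the same route as the paper: invariance $\pi_\theta P_\theta=\pi_\theta$ applied to the identity function gives $\Phi(\tilde{\rho}_\theta)=0$, and injectivity of $\Phi$ together with condition (B) then forces $\tilde{\rho}_\theta\equiv\rho$. Your additions make explicit what the paper's short proof leaves implicit: you note that $\tilde{\rho}_\theta\in\mathcal{C}$ by taking $\mu=\pi_\theta$, and you justify the unbounded test function by truncation, dominated convergence and Fubini, using $\pi_\theta\|\Lambda\|<\infty$ and $\mathbb{E}\|X\|<\infty$.
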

\begin{proof}
    The positive recurrence implies the existence of a probability measure $\pi_\theta$ satisfying $\pi_\theta P_\theta = \pi_\theta$.
    It implies that $\pi_\theta \left[ P_\theta(\cdot, \Lambda) \right] = \pi_\theta (\Lambda)$, which is equivalent to $\Phi\left(\tilde{\rho}_\theta(\cdot)\right) = 0$.
    Because $\Phi$ is injective, and the function $\tilde{\rho}$, defined by $\tilde{\rho}(x) \equiv \rho$, belongs to $\tilde{\rho} \in \mathcal{C}$ with $\Phi\left(\tilde{\rho}(\cdot)\right) = 0$, it follows that $\tilde{\rho}_\theta = \tilde{\rho}$, namely $\tilde{\rho}_\theta(x) \equiv \rho$.
\end{proof}

Based on the stronger equivalent form in Remark \ref{remark_stronger_equivalent_form}, the asymptotic nonexistence of the average shift follows from the conclusion in Lemma \ref{lemma_Phi}, namely $\tilde{\rho}_\theta(x) = \pi_\theta \left[ g_\theta(\cdot, x) \right] = \rho$ for all $x$.
In addition, the condition in Lemma \ref{lemma_Phi} does not involve the explicit form of $\pi_\theta$, making it easier to verify than directly checking whether $\pi_\theta h_\theta = 0$.

The most restrictive condition in Lemma \ref{lemma_Phi} is the injectivity of $\Phi$.
This condition is not satisfied by the randomization procedure in Subsection \ref{subsec_main_liu2024}. To address this issue, we aim to develop a new form of allocation function that satisfies this condition.

Define the mapping $\Phi_c: \mathcal{C} - \rho \to \mathbb{R}^d$ by
\begin{equation*}
    \Phi_c(\rho_c(\cdot)) := \Phi(\rho_c(\cdot)+\rho) .
\end{equation*}
Since $\Phi_c$ is linear and maps into $\mathbb{R}^d$, its injectivity implies that the dimension of the space $\mathcal{C}-\rho$ is at most $d$.

For simplicity, we assume that $\mathcal{C}-\rho$ is contained in the linear span of a set of linearly independent functions $\{\alpha_i(x) \mid i \in \{1, \dots, d^\prime\}\}$, where $d^\prime \leq d$.
Recall that $\mathcal{C} = \{ \mu \left[ g_\theta(\cdot, x) \right] \mid \mu \text{ is a probability measure on } \mathbb{R}^d \}$.
By taking $\mu$ to be the Dirac measure at $\Lambda$, we obtain a basic element $g_\theta(\Lambda,x) \in \mathcal{C}$.
Thus, it can be shown that $g_\theta(\Lambda, x) - \rho = \sum_{i=1}^{d^\prime} \beta_i(\Lambda) \alpha_i(x)$, which is equivalent to
\begin{equation*}
    g_\theta(\Lambda, x) = \rho + \sum_{i=1}^{d^\prime} \beta_i(\Lambda) \alpha_i(x) .
\end{equation*}
Since $\mathcal{C} \subset \{\rho + \sum_{i=1}^{d^\prime} c_i \alpha_i(x) \mid c_i \in \mathbb{R}\}$, we consider a generic element $\tilde{\rho} \in \mathcal{C}$ of the form $\tilde{\rho}(x) = \rho + \sum_{i=1}^{d^\prime} c_i \alpha_i(x)$. Under this setting,
\begin{equation*}
    \Phi\left(\tilde{\rho}(\cdot)\right)
    = \mathbb{E}_{X \sim \Gamma} \left[ (\tilde{\rho}(X) - \rho) X \right]
    = \sum_{i=1}^{d^\prime} c_i \mathbb{E}_{X \sim \Gamma} [\alpha_i(X)X] .
\end{equation*}
The linear independence of $\left\{ \mathbb{E}_{X \sim \Gamma} [\alpha_i(X)X] \mid 1 \leq i \leq d^\prime \right\}$ implies the injectivity of the mapping $\Phi$.

To establish properties in Subsection \ref{subsec_boundedness_in_probability}--\ref{subsec_shift_properties_convergent}, we need the condition $\theta \in K_{M,\Delta}$.
The condition $\theta \in K_{M,\Delta}$ is equivalent to
\begin{align}
    & \quad \mathbb{E}_\theta \left[ (\Lambda_1 - \Lambda_0)^T \frac{\Lambda_0}{\|\Lambda_0\|} \mid \Lambda_0 = \Lambda \right]
    = \mathbb{E}_{X \sim \Gamma} [(g_\theta(\Lambda, X) - \rho)X]^T \frac{\Lambda}{\|\Lambda\|} \label{eq_negative_condition_new_form} \\
    &= \sum_{i=1}^{d^\prime} \beta_i(\Lambda) \mathbb{E}_{X \sim \Gamma} [\alpha_i(X)X]^T \frac{\Lambda}{\|\Lambda\|}
    \leq - \Delta < 0 \notag ,
\end{align}
for any $\Lambda \in W_\Gamma$ with $\|\Lambda\| \geq M$.
Under Assumption \ref{assumption_x_spread_out}, $W_\Gamma = \mathbb{R}^d$ is guaranteed by Lemma A.1 in the Supplementary Material.
Thus, it is shown that for any $\Lambda \neq 0$, there exists $i \in \{1, \dots, d^\prime\}$ such that $\mathbb{E}_{X \sim \Gamma} [\alpha_i(X)X]^T \Lambda \neq 0$.
It further implies that $\{\mathbb{E}_{X \sim \Gamma} [\alpha_i(X)X] \mid i \in \{1, \dots, d^\prime\}\}$ can span the whole space $\mathbb{R}^d$.
This shows that $d^\prime \geq d$. In summary, we conclude that $d^\prime = d$.

To ensure that the allocation function satisfies (\ref{eq_negative_condition_new_form}) for all $\Lambda$ with sufficiently large $\|\Lambda\|$, we fix the function $\alpha_i$ and seek to make $\beta_i(\Lambda) \mathbb{E}_{X \sim \Gamma} [\alpha_i(X)X]^T \frac{\Lambda}{\|\Lambda\|}$ as small as possible for each $i$.
Assuming $\alpha_i, \beta_i \in [-1,1]$, the optimal choice is $\beta_i(\Lambda) = -\sgn(\mathbb{E}_{X \sim \Gamma} [\alpha_i(X)X]^T \Lambda)$.
However, the quantity $\mathbb{E}_{X \sim \Gamma} [\alpha_i(X)X]$ is unknown and must be estimated during the randomization procedure.
To mitigate the effect of estimation error, we adopt a smooth surrogate for $\beta_i$, as illustrated in (\ref{eq_oracle_allocation}) and Assumption \ref{assumption_theta_basic} in the next subsection.

\subsection{Oracle Randomization Procedure}

\label{subsec_main_oracle}

In this subsection, we explore a special class of randomization procedure with fixed parameters.
Following the discussion in Subsection \ref{subsec_new_allocation_function_form}, we set the allocation function $g_\theta$ to
\begin{equation}
    g_\theta(\Lambda, X) = \rho + \frac{p}{d} \sum_{i=1}^d { \alpha_i(X) \beta_{\xi_i}^{\epsilon(\theta)}(\Lambda) }
    \label{eq_oracle_allocation} ,
\end{equation}
where $0 < p < \min(\rho,1-\rho)$.
Here, $\alpha_i(X)$, $\beta_{\xi}^{\epsilon}(\Lambda)$, and $\epsilon(\theta)$ are functions taking values in $[-1, 1]$, $[-1, 1]$, and $[0, 1)$, respectively. The parameter $\theta = (\xi_1, \dots, \xi_d)$ is a matrix, and its norm is measured using the Frobenius norm.
Furthermore, the component $\xi_i$ in the parameter is a $d$-dimensional random vector endowed with $\ell_2$-norm. Here, we remind that $d$ is the dimension of the covariate $X$.

Define the oracle value of the parameter as $\theta^* = (\xi_1^*, \dots, \xi_d^*)$, where $\xi_i^* = \mathbb{E}_{X \sim \Gamma} \left[ \alpha_i(X) X \right]$. When the parameter $\theta$ is fixed near $\theta^*$, the randomization procedure under the fixed allocation function $g_\theta$ is also referred to as the oracle randomization procedure, because its properties are similar to the procedure under $g_{\theta^*}$.

Fixing the imbalance vector $\Lambda$, the allocation probability $g_\theta(\Lambda, x)$ can be viewed as an adjustment of the baseline probability $\rho$ via a weighted sum of $\{\alpha_i(x)\}$.
The function $\beta_{\xi_i}^{\epsilon(\theta)}(\Lambda)$ serves as a smooth surrogate for $-\sgn(\xi_i^T \Lambda)$, where the parameter $\epsilon = \epsilon(\theta)$ controls the degree of closeness between the surrogate and the original function.
For these three functions, additional assumptions are needed to ensure that the negative feedback condition $\theta \in K_{M,\Delta}$ holds as long as $\theta$ is sufficiently close to $\theta^*$.
\begin{assumption}
    \label{assumption_theta_basic}
    For $\alpha_i(X)$, $\beta_{\xi}^{\epsilon}(\Lambda)$ and $\epsilon(\theta)$,
    \begin{enumerate}
        \item \label{item_assumption_theta_basic_1} The vectors $\left\{ \mathbb{E}_{X \sim \Gamma} [\alpha_i(X)X] \mid 1 \leq i \leq d \right\}$ are linearly independent.
        \item \label{item_assumption_theta_basic_2} $\beta_{\xi}^{\epsilon}(\Lambda) \leq 0$ when $\xi^T \Lambda \geq 0$.
        \item \label{item_assumption_theta_basic_3} $\beta_{\xi}^{\epsilon} = \beta_{c\xi}^{\epsilon}$ for any $c > 0$ and $\beta_{\xi}^{\epsilon}(\Lambda) = -\beta_{-\xi}^{\epsilon}(\Lambda) = -\beta_{\xi}^{\epsilon}(-\Lambda)$.
        \item \label{item_assumption_theta_basic_4} For any $d_{\mathrm{uni}} > 0$, there exists $M > 0$ such that for any $\epsilon \geq d_{\mathrm{uni}}$, $\epsilon^\prime \in (0,\epsilon]$ and $\Lambda \in R_\xi^\epsilon$ with $\|\Lambda\| \geq M$, $\beta_{\xi}^{\epsilon^\prime}(\Lambda) = -1$, where $R_\xi^\epsilon = \left\{ \Lambda \neq 0 \mid \cossimilarity(\Lambda,\xi) > \epsilon \right\}$ is an open convex cone and $\cossimilarity(\Lambda,\xi) = \frac{\Lambda \xi}{\|\Lambda\| \|\xi\|}$.
        \item \label{item_assumption_theta_basic_5} $\epsilon(\theta = (\xi_1, \dots, \xi_d)) = \epsilon(\theta = (c_1 \xi_1, \dots, c_d \xi_d))$ for any $c_i > 0$.
        \item \label{item_assumption_theta_basic_6} If $\theta$ is nonsingular, then $\epsilon(\theta) > 0$ and $\bigcup_{i=1}^{d} \left[ R_{\xi_i}^{\epsilon(\theta)} \cup -R_{\xi_i}^{\epsilon(\theta)} \right] = \mathbb{R}^d \backslash \{0\}$. If $\theta$ is singular, $\epsilon(\theta) = 0$.
    \end{enumerate}
\end{assumption}

\begin{remark}

    We set the parameter $\xi_i$ to an estimate of $\mathbb{E}_{X \sim \Gamma} [\alpha_i(X)X]$.
The sign of $\beta$ indicates whether the allocation probability should be adjusted according to the scheme $\alpha_i$ or in the opposite direction.
    Specifically, we want $\beta$ to be positive when the angle between $\Lambda$ and $\mathbb{E}_{X \sim \Gamma} [\alpha_i(X)X]$ is obtuse, and to be negative when the angle is acute. Because the expectation $\mathbb{E}_{X \sim \Gamma} [\alpha_i(X)X]$ is unknown, we use the angle between $\Lambda$ and the estimate $\xi$ instead.
    Thus, we only care about the direction of $\xi$. A positive multiple of $\xi$ does not affect the value of $\beta$ and $\epsilon$.
    Since $\xi$ estimates $\mathbb{E}_{X \sim \Gamma} [\alpha_i(X)X]$, the function $\beta$ takes extreme values $\pm 1$ when $\|\Lambda\|$ is sufficiently large and the angle between $\Lambda$ and $\xi$ deviates from orthogonality.
    The parameter $\epsilon = \epsilon(\theta)$ represents the threshold angle at which the function $\beta$ switches to extreme values.
Moreover, the condition $\bigcup_{i=1}^{d} \left[ R_{\xi_i}^{\epsilon(\theta)} \cup -R_{\xi_i}^{\epsilon(\theta)} \right] = \mathbb{R}^d \backslash \{0\}$ imposes an upper bound on $\epsilon = \epsilon(\theta)$, ensuring that at least one of functions $\{\beta_i\}$ takes extreme values.
    This restriction prevents the functions $\{\beta_i\}$ from being so tolerant to errors that they take values close to zero.
\end{remark}

The assumption allows some error in the estimation of $\theta^*$. This consideration makes the assumption seem slightly more complicated.
We illustrate how Assumption \ref{assumption_theta_basic} operates by considering the case where the parameter is fixed at $\theta = \theta^*$.
\begin{example}[On the occasion of the oracle value]
    \label{example_oracle_value}

    When the parameter $\theta$ is exactly the oracle value $\theta^*$, We can concisely explain why the oracle randomization procedure can control the imbalance vector.
    In this case, the expectation in the definition of $K_{M,\Delta}$ is
    \begin{align}
        & \quad \mathbb{E}_{\theta^*} \left[ (\Lambda_1 - \Lambda_0)^T \frac{\Lambda_0}{\|\Lambda_0\|} \mid \Lambda_0 = \Lambda \right]
        = \frac{p}{d}\sum_{i=1}^d \beta_{\xi_i^*}^{\epsilon(\theta^*)}(\Lambda) \frac{\mathbb{E}_{X \sim \Gamma} [\alpha_i(X)X]^T\Lambda}{\|\Lambda\|} \label{eq_negative_feedback_condition_oracle_randomization} \\
        &= \frac{p}{d}\sum_{i=1}^d \beta_{\xi_i^*}^{\epsilon(\theta^*)}(\Lambda) \frac{{\xi_i^*}^T\Lambda}{\|\Lambda\|}
        \leq -\frac{p}{d}\epsilon(\theta^*)\min|\xi_i^*|
        < 0 , \notag
    \end{align}
    where $\|\Lambda\| \geq M$, $M$ is the constant in Assumption \ref{assumption_theta_basic}.
    The inequality (\ref{eq_negative_feedback_condition_oracle_randomization}) implies that $\theta^* \in K_{M,\Delta}$ with the constant $\epsilon=\frac{p}{d}\epsilon(\theta^*)\min|\xi_i^*|$.
\end{example}

Furthermore, when the parameter $\theta$ is sufficiently close to $\theta^*$, the randomization procedure under $\theta$ still shares the same properties as the randomization procedure under the fixed parameter sequence $\{\theta_n = \theta^*\}_{n \in \mathbb{N}}$.
To ensure this, it suffices to verify that $\overline{B(\theta^*, r_d)} \subset K_{M,\Delta}$ for some $r_d > 0$.
As a consequence of $\theta \in \overline{B(\theta^*,r_d)} \subset K_{M,\Delta}$, we have the following theorem:
\begin{theorem}
    \label{theorem_shift_simultaneous_geometric_finite_adjusted}

    Suppose Assumption \ref{assumption_theta_basic} for components of the allocation function holds. If Assumption \ref{assumption_x_sub_exponential_bound} holds, then there exists $r_d>0$ such that $\overline{B(\theta^*,r_d)} \subset K_{M,\Delta}$ for some $M>0$ and $\Delta>0$.
In addition, if Assumption \ref{assumption_x_spread_out} holds, then $P_\theta$ is positive recurrent with the invariant probability $\pi_\theta$, and the actual targeted allocation ratio $\tilde{\rho}_\theta(x) = \pi_\theta \left[ g_\theta(\cdot,x) \right] = \rho$.
Under the additional assumption that $Y$ has a finite expectation, we further have $\pi_\theta h_\theta = 0$ and
    \begin{equation*}
        \left| \mathbb{E}_\theta \left[ \sum_{i=1}^n (T_i-\rho)Y_i \mid \Lambda_0 = \Lambda \right] \right| \leq L^2 \left[\mathbb{E}|Y|\right] V(\Lambda) < \infty .
    \end{equation*}
\end{theorem}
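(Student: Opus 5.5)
The proof has three parts: first obtain the negative feedback condition uniformly on a ball around $\theta^*$; then invoke Theorem \ref{theorem_boundedness_shift} for positive recurrence; and finally use Lemma \ref{lemma_Phi} to show that the actual targeted allocation ratio is identically $\rho$.

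\emph{Step 1: negative feedback on a ball around $\theta^*$.} By (\ref{eq_negative_condition_new_form}), for every $\theta$,
\begin{equation*}
\mathbb{E}_\theta\left[(\Lambda_1-\Lambda_0)^T\tfrac{\Lambda_0}{\|\Lambda_0\|}\mid\Lambda_0=\Lambda\right]=\frac{p}{d}\sum_{i=1}^d\beta_{\xi_i}^{\epsilon(\theta)}(\Lambda)\,\frac{{\xi_i^*}^T\Lambda}{\|\Lambda\|}.
\end{equation*}
Here the expectation uses the true $\xi_i^*$, while $\beta$ uses the parameter's $\xi_i$. By item \ref{item_assumption_theta_basic_1}, $\theta^*$ is nonsingular, so by item \ref{item_assumption_theta_basic_6} we have $\epsilon(\theta^*)>0$.

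Choose $r_d$ small enough that every $\theta\in\overline{B(\theta^*,r_d)}$ satisfies three conditions:
\begin{enumerate}
\item $\theta$ is nonsingular.
\item $\epsilon(\theta)\ge d_{\mathrm{uni}}:=\epsilon(\theta^*)/2$. This needs some continuity or uniform lower bound for $\epsilon(\cdot)$ near $\theta^*$, which I would extract from item \ref{item_assumption_theta_basic_6}. This is the delicate point. If $\epsilon$ is not continuous, I would instead work with the covering property at a smaller threshold and use item \ref{item_assumption_theta_basic_4} with $\epsilon'\le\epsilon$.
\item The angle between $\xi_i$ and $\xi_i^*$ is small enough that $\cossimilarity(\Lambda,\xi_i)>\epsilon(\theta)$ forces $\cossimilarity(\Lambda,\xi_i^*)\ge c>0$.
\end{enumerate}

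Now bound the sum term by term. By items \ref{item_assumption_theta_basic_2}--\ref{item_assumption_theta_basic_3}, each $\beta_{\xi_i}$ has sign opposite to $\xi_i^T\Lambda$. When $\Lambda$ is nearly orthogonal to $\xi_i$, the term is at most $\|\xi_i^*\|\cdot O(r_d)$ in absolute value, since $|{\xi_i^*}^T\Lambda|/\|\Lambda\|$ is small whenever $\Lambda$ is nearly orthogonal to $\xi_i$ and $\xi_i\approx\xi_i^*$. By the covering property in item \ref{item_assumption_theta_basic_6}, some index $i$ has $\Lambda\in\pm R_{\xi_i}^{\epsilon(\theta)}$. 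For that index, item \ref{item_assumption_theta_basic_4} (with a uniform $M$) gives $\beta=\mp1$, which contributes at most $-\frac{p}{d}c\min_i\|\xi_i^*\|$. The main obstacle is controlling the non-extreme terms, which may have the wrong sign relative to $\xi_i^*$ when $\xi_i\neq\xi_i^*$. I would bound their total by $\frac{p}{d}\,d\,\max_i\|\xi_i^*\|\,\sin(\text{angle error})$ and shrink $r_d$ until this is below half the negative contribution. That yields $\Delta>0$ and hence $\overline{B(\theta^*,r_d)}\subset K_{M,\Delta}$. The range condition $g_\theta\in[\iota,1-\iota]$ holds with $\iota=\min(\rho,1-\rho)-p$.

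\emph{Step 2: positive recurrence.} Under Assumptions \ref{assumption_x_sub_exponential_bound} and \ref{assumption_x_spread_out}, Theorem \ref{theorem_boundedness_shift} gives positive recurrence and a unique invariant probability $\pi_\theta$. Its proof yields a drift function $V$, of exponential type in $\|\Lambda\|$, with $\pi_\theta V<\infty$. Hence $\pi_\theta(\|\Lambda\|)<\infty$, as Lemma \ref{lemma_Phi} requires.

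\emph{Step 3: Lemma \ref{lemma_Phi} and the shift bound.} Here $\mathcal{C}-\rho\subset\mathrm{span}\{\alpha_i\}$, with $\tilde\rho=\rho+\frac{p}{d}\sum_i c_i\alpha_i$, and
\begin{equation*}
\Phi(\tilde\rho)=\frac{p}{d}\sum_i c_i\,\xi_i^*.
\end{equation*}
So linear independence of the $\xi_i^*$ (item \ref{item_assumption_theta_basic_1}) makes $\Phi$ injective on $\mathcal{C}$; injectivity concerns coefficients, so I would argue directly with the $c_i$ if the $\alpha_i$ are dependent. For condition (B), the constant function $\rho$ lies in $\mathcal{C}$: take $\mu=\tfrac12(\delta_\Lambda+\delta_{-\Lambda})$, and the oddness of $\beta$ in $\Lambda$ (item \ref{item_assumption_theta_basic_3}) cancels the adjustment. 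Lemma \ref{lemma_Phi} then gives $\tilde\rho_\theta(x)\equiv\rho$. Consequently
\begin{equation*}
\pi_\theta h_\theta=\mathbb{E}_{X\sim\Gamma}\left[(\tilde\rho_\theta(X)-\rho)f(X)\right]=0,
\end{equation*}
where Fubini applies because $|g_\theta-\rho|\le1$ and $\mathbb{E}|Y|<\infty$. Substituting $\pi_\theta h_\theta=0$ into the bound of Theorem \ref{theorem_boundedness_shift} gives the stated inequality.
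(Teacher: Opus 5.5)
Your Steps 2 and 3 follow the paper. For condition (B) the paper uses $\mu=\delta_0$ together with $\beta_\xi^\epsilon(0)=0$, which is equivalent to your symmetrized measure.

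The gap is in Step 1, at your condition 2. The theorem assumes only Assumption \ref{assumption_theta_basic}, and there $\epsilon(\cdot)$ carries no continuity. Continuity of $\epsilon$ is item \ref{item_assumption_theta_lipschitz_1} of Assumption \ref{assumption_theta_lipschitz}, which is not assumed here. Item \ref{item_assumption_theta_basic_6} gives only $\epsilon(\theta)>0$ plus covering at level $\epsilon(\theta)$. Since covering at one level implies covering at every smaller level, nothing prevents $\epsilon(\theta)\to0$ along sequences $\theta\to\theta^*$. So $\epsilon(\theta)\ge\epsilon(\theta^*)/2$ cannot be extracted from item \ref{item_assumption_theta_basic_6}.

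Two later steps rest on that lower bound:
\begin{itemize}
\item your condition 3, the uniform lower bound on $\cossimilarity(\Lambda,\xi_i^*)$;
\item your application of item \ref{item_assumption_theta_basic_4} with a uniform $M$, which needs a cone level $\ge d_{\mathrm{uni}}$.
\end{itemize}
Your fallback does not close this. Item \ref{item_assumption_theta_basic_6} supplies covering only at levels $\le\epsilon(\theta)$. But item \ref{item_assumption_theta_basic_4} with $\epsilon'=\epsilon(\theta)$ requires a cone level $\ge\max\{d_{\mathrm{uni}},\epsilon(\theta)\}$. When $\epsilon(\theta)$ is small you therefore need covering at a fixed positive level, uniform over $\theta$ near $\theta^*$, and no item of the assumption provides that.

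The missing idea is a covering level that does not depend on the given $\epsilon(\cdot)$. The paper uses the continuous function $\epsilon^*(\theta)=\sigma_{\min}(A(\theta))/\sqrt{d+1}$ from \eqref{eq_allocation_epsilon}:
\begin{itemize}
\item Condition \ref{condition_2_epsilon} gives covering by the cones $\pm R_{\xi_i}^{\epsilon^*(\theta)}$ for every nonsingular $\theta$.
\item Continuity gives $\epsilon^*(\theta)\ge\epsilon^*(\theta^*)/2=:d_{\mathrm{uni}}$ on a ball around $\theta^*$.
\item Covering holds at both $\epsilon(\theta)$ and $\epsilon^*(\theta)$, hence at their maximum. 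So some index $i_0$ has $\Lambda\in\pm R_{\xi_{i_0}}^{\max\{\epsilon(\theta),\epsilon^*(\theta)\}}$.
\item Item \ref{item_assumption_theta_basic_4}, applied with that cone level and $\epsilon'=\epsilon(\theta)$, gives $|\beta|=1$ at $i_0$, together with $|\xi_{i_0}^T\Lambda|\ge\epsilon^*(\theta)\|\xi_{i_0}\|\|\Lambda\|$.
\end{itemize}

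The paper also avoids your angle bookkeeping. Your claim that terms with $\Lambda$ nearly orthogonal to $\xi_i$ are $O(r_d)$ in absolute value is false when $|\cossimilarity(\Lambda,\xi_i)|$ is as large as $\epsilon(\theta)$, although those terms do have the right sign. The paper instead writes $\xi_i^*=\xi_i+(\xi_i^*-\xi_i)$. Then every $\beta_{\xi_i}^{\epsilon(\theta)}(\Lambda)\,\xi_i^T\Lambda\le0$, and the remainder is at most $p\|\theta-\theta^*\|$ uniformly in $\Lambda$. This yields the drift bound $-\frac{p}{d}\epsilon^*(\theta)\min_i\|\xi_i\|+p\|\theta-\theta^*\|$, which is negative for $\theta$ near $\theta^*$.
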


A notable consequence of the corollary is that under the oracle randomization procedure, the average shift on the additional covariate converges to zero. Even if the parameter does not exactly equal the oracle value, the average shift may still converge to zero.

\subsection{Feasible Randomization Procedure}

\label{subsec_main_feasible}

In Subsection \ref{subsec_main_oracle}, we set the allocation function $g_\theta$ to
\begin{equation*}
    g_\theta(\Lambda, X) = \rho + \frac{p}{d} \sum_{i=1}^d { \alpha_i(X) \beta_{\xi_i}^{\epsilon(\theta)}(\Lambda) } ,
\end{equation*}
where $0 < p < \min(\rho,1-\rho)$, $\alpha_i(X) \in [-1,1]$, $\beta_{\xi}^{\epsilon}(\Lambda) \in [-1,1]$ and $\epsilon(\theta) \in (0,1)$.
These component functions should satisfy Assumption \ref{assumption_theta_basic} for the considerations of the asymptotic nonexistence of the average shift under the oracle randomization procedure.

In Subsection \ref{subsec_shift_properties_convergent}, we have mentioned that the allocation function $g_\theta$ and the parameter sequence $\{\theta_n\}$ should satisfy Assumptions \ref{assumption_g_lipschitz} and \ref{assumption_theta_convergence}, respectively. They guarantee parameter updates will not affect the center of the asymptotic distribution.
A randomization procedure is called feasible if Assumptions \ref{assumption_g_lipschitz} and \ref{assumption_theta_convergence} hold, and its allocation function takes the form \eqref{eq_oracle_allocation}.

In this subsection, we will give some examples of the components of the allocation function and the parameter sequence which satisfy the assumptions above. First, we give a sufficient condition for Assumption \ref{assumption_g_lipschitz}:
\begin{assumption}
    \label{assumption_theta_lipschitz}
    For $\alpha_i(X)$, $\beta_{\xi}^{\epsilon}(\Lambda)$ and $\epsilon(\theta)$,
    \begin{enumerate}
        \item \label{item_assumption_theta_lipschitz_1} $\epsilon(\theta)$ is Lipschitz continuous on $\left\{ \theta \mid \forall i, \|\xi_i\| \geq d_{\mathrm{uni}} \right\}$ for any $d_{\mathrm{uni}}>0$.
        \item \label{item_assumption_theta_lipschitz_2} $\beta_{\xi}^{\epsilon}(\Lambda)$ is Lipschitz continuous on $\left\{ (\xi,\epsilon,\Lambda) \mid \|\xi\| \geq d_{\mathrm{uni}}, \epsilon \geq d_{\mathrm{uni}} \right\}$ for any $d_{\mathrm{uni}}>0$.
    \end{enumerate}
\end{assumption}

Under the assumptions on $\alpha$, $\beta$ and $\epsilon$ functions, we can establish the following lemma to justify Assumption \ref{assumption_g_lipschitz}.
\begin{lemma}
    \label{lemma_g_lipschitz}
    If Assumptions \ref{assumption_theta_basic} and \ref{assumption_theta_lipschitz} hold, then Assumption \ref{assumption_g_lipschitz} holds.
\end{lemma}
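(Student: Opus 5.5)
We need to show that there is a radius $r_g>0$ and a constant $L_g$ such that $|g_\theta(\Lambda,X)-g_{\theta'}(\Lambda,X)|\le L_g\|\theta-\theta'\|_F$ for all $\theta,\theta'\in\overline{B(\theta^*,r_g)}$, uniformly in $(\Lambda,X)$. From \eqref{eq_oracle_allocation} and $|\alpha_i|\le 1$,
\begin{equation*}
|g_\theta(\Lambda,X)-g_{\theta'}(\Lambda,X)|\le \frac{p}{d}\sum_{i=1}^d \bigl|\beta_{\xi_i}^{\epsilon(\theta)}(\Lambda)-\beta_{\xi_i'}^{\epsilon(\theta')}(\Lambda)\bigr| .
\end{equation*}
So it suffices to bound each $\beta$-difference by a constant times $\|\xi_i-\xi_i'\|+|\epsilon(\theta)-\epsilon(\theta')|$. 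The constant must not depend on $\Lambda$. We then bound $|\epsilon(\theta)-\epsilon(\theta')|$ by a constant times $\|\theta-\theta'\|_F$.

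First, I choose $r_g$ so that the parameters stay inside the domains where Assumption \ref{assumption_theta_lipschitz} applies. By item \ref{item_assumption_theta_basic_1} of Assumption \ref{assumption_theta_basic}, the columns $\xi_i^*$ are linearly independent. Hence $\theta^*$ is nonsingular, each $\|\xi_i^*\|>0$, and item \ref{item_assumption_theta_basic_6} gives $\epsilon(\theta^*)>0$.

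Take $d_{\mathrm{uni}}'=\tfrac12\min_i\|\xi_i^*\|$ and $r_g\le d_{\mathrm{uni}}'$. Then every $\theta\in\overline{B(\theta^*,r_g)}$ has $\|\xi_i\|\ge d_{\mathrm{uni}}'$ for all $i$. On this set, item \ref{item_assumption_theta_lipschitz_1} of Assumption \ref{assumption_theta_lipschitz} says $\epsilon$ is Lipschitz, with some constant $L_\epsilon$.

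Continuity of $\epsilon$ at $\theta^*$ lets us shrink $r_g$ further so that $\epsilon(\theta)\ge \tfrac12\epsilon(\theta^*)$ on the ball. Set $d_{\mathrm{uni}}=\min\{d_{\mathrm{uni}}',\tfrac12\epsilon(\theta^*)\}$. Then for every $\theta$ in the ball and every $i$, the triple $(\xi_i,\epsilon(\theta),\Lambda)$ lies in the domain $\{\|\xi\|\ge d_{\mathrm{uni}},\ \epsilon\ge d_{\mathrm{uni}}\}$ of item \ref{item_assumption_theta_lipschitz_2}, whatever $\Lambda$ is.

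Second, apply the Lipschitz property of $\beta$ jointly in $(\xi,\epsilon,\Lambda)$, with constant $L_\beta$, at the two points $(\xi_i,\epsilon(\theta),\Lambda)$ and $(\xi_i',\epsilon(\theta'),\Lambda)$. These share the same $\Lambda$, so
\begin{equation*}
|\beta_{\xi_i}^{\epsilon(\theta)}(\Lambda)-\beta_{\xi_i'}^{\epsilon(\theta')}(\Lambda)|\le L_\beta\bigl(\|\xi_i-\xi_i'\|+L_\epsilon\|\theta-\theta'\|_F\bigr)\le L_\beta(1+L_\epsilon)\|\theta-\theta'\|_F .
\end{equation*}
Summing over $i$ gives $L_g=pL_\beta(1+L_\epsilon)$. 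This constant is uniform in $\Lambda$ and $X$, which is exactly Assumption \ref{assumption_g_lipschitz}.

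The main obstacle is the domain issue for item \ref{item_assumption_theta_lipschitz_2}. The Lipschitz property of $\beta$ is only available on a region where $\epsilon$ is bounded away from $0$, so I must know that $\epsilon(\theta)$ stays bounded below near $\theta^*$. This is where nonsingularity of $\theta^*$ from Assumption \ref{assumption_theta_basic} and continuity of $\epsilon$ from item \ref{item_assumption_theta_lipschitz_1} are used together.

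A second point to check is that the Lipschitz constant of $\beta$ has to be uniform over all $\Lambda$, which is unbounded. Item \ref{item_assumption_theta_lipschitz_2} provides this because the domain places no restriction on $\Lambda$. The norm convention also matters: I read the Lipschitz property as being with respect to, say, the sum of the component norms, and any choice of equivalent norm only changes the constant.
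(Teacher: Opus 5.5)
Your proof is correct and follows the paper's argument. Both bound $|g_\theta-g_{\theta'}|$ by $\frac{p}{d}\sum_i|\beta_{\xi_i}^{\epsilon(\theta)}(\Lambda)-\beta_{\xi_i'}^{\epsilon(\theta')}(\Lambda)|$, chain the Lipschitz properties of $\beta$ and $\epsilon$, and get a constant of the form $p(L_f+L_f^2)$, while you also justify the choice of $r_g$ (keeping $\|\xi_i\|$ and $\epsilon(\theta)$ bounded away from zero via Assumption \ref{assumption_theta_basic}), which the paper only asserts with ``for some $r_g>0$''.
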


Thus, for Lemma \ref{lemma_g_lipschitz}, we need to give an example of $\alpha_i(X)$, $\epsilon(\theta)$ and $\beta_{\xi}^{\epsilon}(\Lambda)$ satisfying Assumptions \ref{assumption_theta_basic} and \ref{assumption_theta_lipschitz}.

First, we can set
\begin{equation}
    \alpha_i(X) = \mathbb{I}(X_i > 0) - \mathbb{I}(X_i < 0) \label{eq_allocation_alpha} ,
\end{equation}
where $X_i$ is the $i$th component of $X$.
A conservative alternative choice of $\alpha_i$ is
\begin{equation*}
    \alpha_i(X) = X_i / \|X\|_\infty ,
\end{equation*}
which guarantees the linear independence of $\left\{ \mathbb{E}_{X \sim \Gamma} [\alpha_i(X)X] \mid 1 \leq i \leq d \right\}$ under Assumption \ref{assumption_x_spread_out}.

Second, for the choice of the function $\beta$, we set the function $\beta_{\xi}^{\epsilon}(\Lambda)$ by a composition of several functions
\begin{equation*}
    \cutsin(x) =
    \begin{cases}
        -\sin(x), & \text{if } x \in [-\frac{\pi}{2},\frac{\pi}{2}] \text{,} \\
        -\sgn(x), & \text{otherwise,}
    \end{cases}
    \quad \text{and} \quad
    \tau_{\xi}^{\epsilon}(\Lambda) = \frac{\sqrt{1+\epsilon^2}\frac{\xi^T}{\|\xi\|}\Lambda} {\sqrt{1+\epsilon^2\|\Lambda\|^2}} ,
\end{equation*}
such that
\begin{equation}
    \beta_{\xi}^{\epsilon}(\Lambda) = \cutsin\left( \frac{\pi}{2} \tau_{\xi}^{\epsilon}(\Lambda) \right) \label{eq_allocation_beta} .
\end{equation}

Third, we define the matrix $A(\theta) = \left( \frac{\xi_1}{\|\xi_1\|}, \dots, \frac{\xi_d}{\|\xi_d\|} \right)$ and set
\begin{equation}
    \epsilon(\theta) =
    \begin{cases}
        \frac{1}{\sqrt{d+1} \|A(\theta)^{-1}\|_2}, & \text{if the vectors } \{\xi_1, \dots, \xi_d\} \text{ are linearly independent,} \\
        0, & \text{otherwise,}
    \end{cases}
    \label{eq_allocation_epsilon}
\end{equation}
where $\|A(\theta)^{-1}\|_2$ is the operator norm of the matrix $A(\theta)^{-1}$.

For these choices of functions, we have the following conclusions:
\begin{lemma}
    \label{lemma_instance_established}
    When the components of the allocation function are (\ref{eq_allocation_alpha}), (\ref{eq_allocation_beta}) and (\ref{eq_allocation_epsilon}), if the distribution $\Gamma$ satisfies that $\left\{ \mathbb{E}_{X \sim \Gamma} [\alpha_i(X)X] \mid 1 \leq i \leq d \right\}$ are linearly independent, then Assumptions \ref{assumption_theta_basic} and \ref{assumption_theta_lipschitz} hold.
\end{lemma}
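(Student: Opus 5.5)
The plan is to verify the six items of Assumption \ref{assumption_theta_basic} and the two items of Assumption \ref{assumption_theta_lipschitz} one at a time for the choices (\ref{eq_allocation_alpha}), (\ref{eq_allocation_beta}) and (\ref{eq_allocation_epsilon}). Two preliminary reformulations drive everything.

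First, since $\|A(\theta)^{-1}\|_2^{-1}=\sigma_{\min}(A(\theta))$ when $A(\theta)$ is invertible, and $\sigma_{\min}=0$ otherwise, we can write
\[
\epsilon(\theta)=\sigma_{\min}(A(\theta))/\sqrt{d+1}
\]
for every $\theta$. This is a single formula with no case split.

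Second, $\cutsin$ has the following properties:
\begin{itemize}
\item it is odd;
\item it is continuous at $\pm\pi/2$, where both branches equal $\mp 1$;
\item it is $1$-Lipschitz, since $|\cos|\le 1$ on the middle piece and it is constant outside;
\item it is nonpositive on $[0,\infty)$ and equals $-1$ on $[\pi/2,\infty)$.
\end{itemize}

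\textbf{Items 1, 2, 3 and 5.} Item 1 is the hypothesis of the lemma. For item 2, if $\xi^T\Lambda\ge0$ then $\tau_\xi^\epsilon(\Lambda)\ge0$, so $\beta_\xi^\epsilon(\Lambda)\le0$; note that $\tau$ may exceed $1$, but this is harmless because $\cutsin$ is defined on all of $\mathbb{R}$. For item 3, $\tau$ depends on $\xi$ only through $\xi/\|\xi\|$ and is odd in $\xi$ and in $\Lambda$, so oddness of $\cutsin$ gives the identities. For item 5, $A(\theta)$ depends only on the normalized columns $\xi_i/\|\xi_i\|$.

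\textbf{Item 4.} Fix $d_{\mathrm{uni}}$ and take $\epsilon\ge d_{\mathrm{uni}}$, $\epsilon'\le\epsilon$, and $\Lambda\in R^\epsilon_\xi$. Then $\xi^T\Lambda/\|\xi\|>\epsilon\|\Lambda\|$. It therefore suffices to show $\tau^{\epsilon'}_\xi(\Lambda)\ge1$, which reduces to
\[
(1+\epsilon'^2)\epsilon^2\|\Lambda\|^2\ge 1+\epsilon'^2\|\Lambda\|^2 .
\]
The left side minus the $\epsilon'^2\|\Lambda\|^2$ term equals $\|\Lambda\|^2\bigl[\epsilon^2+\epsilon'^2(\epsilon^2-1)\bigr]$. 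Using $\epsilon<1$ and $\epsilon'\le\epsilon$, we have $\epsilon'^2(\epsilon^2-1)\ge\epsilon^2(\epsilon^2-1)$. So the bracket is at least $\epsilon^4\|\Lambda\|^2\ge d_{\mathrm{uni}}^4\|\Lambda\|^2$, and $M=d_{\mathrm{uni}}^{-2}$ works.

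This step needs $\epsilon<1$, which we check separately. Each column of $A$ is a unit vector, so $\|A\|_2\le\sqrt d$ and hence $\sigma_{\min}(A)\le\|A\|_2\le\sqrt d$. Therefore $\epsilon(\theta)\le\sqrt{d/(d+1)}<1$. This also confirms the stated codomain $[0,1)$.

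\textbf{Item 6.} If $\theta$ is nonsingular, $\sigma_{\min}>0$ and so $\epsilon(\theta)>0$. For the covering property, take $\Lambda\ne0$ and write $u_i=\xi_i/\|\xi_i\|$. Then
\[
\max_i|u_i^T\Lambda|\ \ge\ \|A^T\Lambda\|/\sqrt d\ \ge\ \sigma_{\min}(A)\|\Lambda\|/\sqrt d\ >\ \epsilon(\theta)\|\Lambda\| .
\]
Hence some $i$ has $\Lambda\in R^{\epsilon(\theta)}_{\xi_i}$ or $\Lambda\in -R^{\epsilon(\theta)}_{\xi_i}$. If $\theta$ is singular, the vectors are dependent and $\epsilon(\theta)=0$ by definition.

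\textbf{Lipschitz properties (Assumption \ref{assumption_theta_lipschitz}).} For item 1, on $\{\|\xi_i\|\ge d_{\mathrm{uni}}\}$ the map $\xi\mapsto\xi/\|\xi\|$ is $(2/d_{\mathrm{uni}})$-Lipschitz. Weyl's inequality makes $\sigma_{\min}$ $1$-Lipschitz in $A$ under the operator norm, which is dominated by the Frobenius norm. Composing these gives Lipschitz continuity of $\epsilon(\theta)$, including across the singular set, because the formula is unified.

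For item 2, since $\cutsin$ is $1$-Lipschitz, it suffices to show that $\tau$ is Lipschitz on $\{\|\xi\|\ge d_{\mathrm{uni}},\ \epsilon\ge d_{\mathrm{uni}}\}$. This is the main technical step. I would bound the partial derivatives of
\[
\tau=\sqrt{1+\epsilon^2}\,u^T\Lambda\,(1+\epsilon^2\|\Lambda\|^2)^{-1/2},\qquad u=\xi/\|\xi\|,
\]
uniformly in $\Lambda$.
\begin{itemize}
\item The factor $\Lambda(1+\epsilon^2\|\Lambda\|^2)^{-1/2}$ has norm at most $1/\epsilon\le1/d_{\mathrm{uni}}$. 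Its Jacobian in $\Lambda$ is bounded by $2/\epsilon$.
\item The $\epsilon$-derivative involves $\epsilon\|\Lambda\|^3(1+\epsilon^2\|\Lambda\|^2)^{-3/2}\le\epsilon^{-2}$ together with bounded factors of $\sqrt{1+\epsilon^2}$.
\item The $\xi$-dependence enters through $u$, which is Lipschitz with constant $2/d_{\mathrm{uni}}$ and multiplies a quantity bounded by $\sqrt2/d_{\mathrm{uni}}$.
\end{itemize}
Because the domain is not convex in $\xi$, I would argue Lipschitz continuity in each argument separately: move $\Lambda$, then $\epsilon$, then $u$ along a great-circle path, and sum the three increments. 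This yields a uniform Lipschitz constant of order $d_{\mathrm{uni}}^{-2}$, which completes the proof.
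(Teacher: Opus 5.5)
Your proposal is correct and follows the paper's overall skeleton. For Item 1 of Assumption \ref{assumption_theta_lipschitz} you use the unified formula $\epsilon(\theta)=\sigma_{\min}(A(\theta))/\sqrt{d+1}$ together with Weyl's inequality; you make the singular case explicit, which the paper leaves tacit. For Item 4 you use essentially the same direct algebra on $\tau$. For Item 2 of Assumption \ref{assumption_theta_lipschitz} you combine bounded derivatives of $\tau$ with the $1$-Lipschitz property of $\cutsin$.

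The genuinely different step is the covering property in Item 6 (Condition \ref{condition_2_epsilon}). The paper argues by contradiction. It recasts a gap in the covering as column perturbations $\xi_i'$ with $\|\xi_i'\|\le\epsilon$ that make $A(\theta)+(\xi_1',\dots,\xi_d')$ singular. It then rules this out by showing that $I+A(\theta)^{-1}(\xi_1',\dots,\xi_d')$ is strictly diagonally dominant. Your chain $\max_i|u_i^T\Lambda|\ge\|A^T\Lambda\|/\sqrt d\ge\sigma_{\min}(A)\|\Lambda\|/\sqrt d>\epsilon(\theta)\|\Lambda\|$ is shorter. It also makes transparent that the factor $\sqrt{d+1}$, rather than $\sqrt d$, is there only to give the strict inequality that the open cones require. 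The paper phrases the same fact as robustness of nonsingularity under $\epsilon$-perturbations of the normalized columns.

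For the Lipschitz bound on $\tau$, the paper splits the domain into $\|\Lambda\|\ge d_{\mathrm{uni}}$ and $\|\Lambda\|\le 2d_{\mathrm{uni}}$, because $\cossimilarity(\Lambda,\xi)$ is singular at $0$. Your factorization $\tau=\sqrt{1+\epsilon^2}\,u^Tw(\Lambda)$ with $\|w\|\le 1/\epsilon$ gives derivative bounds uniform in $\Lambda$, so no split is needed. Your inequality $\|\xi/\|\xi\|-\eta/\|\eta\|\|\le 2\|\xi-\eta\|/d_{\mathrm{uni}}$ also handles the non-convexity of $\{\|\xi\|\ge d_{\mathrm{uni}}\}$, which the paper passes over. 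With that inequality in hand, the great-circle path is unnecessary.

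Two small points to tighten:
\begin{itemize}
\item Item 4 quantifies over all $\epsilon\ge d_{\mathrm{uni}}$, not only over values of $\epsilon(\theta)$. So justify $\epsilon<1$ by noting that $R_\xi^\epsilon=\emptyset$ when $\epsilon\ge 1$, rather than by appealing to $\epsilon(\theta)<1$.
\item The $\Lambda$-slope of $\beta_\xi^\epsilon$ at $\Lambda=0$ is $\tfrac{\pi}{2}\sqrt{1+\epsilon^2}$. Your Lipschitz constant, like the paper's, therefore holds only for bounded $\epsilon$. State explicitly that you work on $\epsilon\in[d_{\mathrm{uni}},1)$; this is all that Lemma \ref{lemma_g_lipschitz} needs.
\end{itemize}
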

Combining with Lemma \ref{lemma_g_lipschitz}, we have that Assumption \ref{assumption_g_lipschitz} holds.
In the end, for the parameter sequence, we select
\begin{equation}
    \theta_n = (\xi_{n,1},\dots,\xi_{n,d}) = \left( \frac{1}{n}\sum_{j=1}^n \alpha_1(X_j)X_j, \dots, \frac{1}{n}\sum_{j=1}^n \alpha_d(X_j)X_j \right) \label{eq_allocation_parameter} .
\end{equation}
Then we have the lemma below to establish Assumption \ref{assumption_theta_convergence}:
\begin{lemma}
    \label{lemma_average_theta_check}
    Suppose the parameter $\theta_n$ is defined as in \eqref{eq_allocation_parameter}. If Assumption \ref{assumption_x_sub_exponential_bound} holds, then Assumption \ref{assumption_theta_convergence} is satisfied with $\theta^* = (\xi_1^*,\dots,\xi_d^*)$, where $\xi_i^*=\mathbb{E}_{X \sim \Gamma} [\alpha_i(X)X]$.
\end{lemma}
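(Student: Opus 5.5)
The proof has two parts: almost sure convergence $\theta_n \to \theta^*$, and the rate $\sup_n n^\epsilon\|\theta_n-\theta_{n+1}\| < \infty$ almost surely for some $\epsilon>1/2$.

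\textbf{Convergence.} Each column of $\theta_n$ is an empirical mean of the i.i.d. vectors $\alpha_i(X_j)X_j$. Since $|\alpha_i|\le 1$, we have $\|\alpha_i(X)X\|\le\|X\|$. Assumption \ref{assumption_x_sub_exponential_bound} gives finite moments of every order, in particular $\mathbb{E}\|X\|<\infty$. The strong law of large numbers, applied to each column, then yields $\xi_{n,i}\to\xi_i^*=\mathbb{E}_{X\sim\Gamma}[\alpha_i(X)X]$ almost surely, hence $\theta_n\to\theta^*$ almost surely in Frobenius norm.

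\textbf{Increment rate.} Write the one-step update
\[
\xi_{n+1,i}-\xi_{n,i}=\frac{1}{n+1}\bigl(\alpha_i(X_{n+1})X_{n+1}-\xi_{n,i}\bigr),
\]
so that
\[
\|\theta_{n+1}-\theta_n\|_F\le \frac{\sqrt d}{n+1}\Bigl(\|X_{n+1}\|+\max_i\|\xi_{n,i}\|\Bigr).
\]
The term $\max_i\|\xi_{n,i}\|$ is almost surely bounded in $n$ because it converges. It therefore suffices to show that, for some $\epsilon\in(1/2,1)$, we have $\|X_{n+1}\|/n^{1-\epsilon}\to 0$ almost surely, or at least $\sup_n \|X_{n+1}\|/n^{1-\epsilon}<\infty$ almost surely. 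Take for instance $\epsilon=3/4$, so the threshold is $n^{1/4}$.

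\textbf{Main step: Borel–Cantelli.} By Markov's inequality and Assumption \ref{assumption_x_sub_exponential_bound},
\[
P(\|X_{n+1}\|>n^{1/4})\le C e^{-\lambda n^{1/4}}.
\]
This bound is summable over $n$, so Borel–Cantelli gives $\|X_{n+1}\|\le n^{1/4}$ for all but finitely many $n$ almost surely. (A $k$th-moment bound with $k>4$ would work equally well.) The finitely many exceptional terms are finite, so
\[
\sup_n n^{3/4}\|\theta_n-\theta_{n+1}\|\le \sup_n \sqrt d\,\frac{n^{3/4}}{n+1}\bigl(\|X_{n+1}\|+\max_i\|\xi_{n,i}\|\bigr)<\infty \quad\text{almost surely}.
\]

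The only care needed is the index convention for $\theta_0$ and small $n$. The definition \eqref{eq_allocation_parameter} is for $n\ge1$, and $\theta_0$ is arbitrary but finite, so the initial terms contribute a finite amount to the supremum. No real obstacle arises; the Borel–Cantelli step is the only nontrivial point, and sub-exponential tails make it immediate.
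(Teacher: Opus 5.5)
Your proof is correct and follows essentially the same route as the paper. Both arguments write the increment as $(\alpha_i(X_{n+1})X_{n+1}-\xi_{n,i})/(n+1)$ and control the new summand by a Markov-inequality/Borel--Cantelli argument. The differences are cosmetic: you bound the running-mean term by its almost sure convergence, whereas the paper runs a second Borel--Cantelli argument via Jensen's inequality; and you use the exponential moment to get the fixed exponent $\epsilon=3/4$, whereas the paper's polynomial-moment version yields every $\epsilon\in(0,1)$. Either suffices for Assumption \ref{assumption_theta_convergence}.
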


\begin{remark}
    Although Lemma \ref{lemma_average_theta_check} shows that the parameter sequence (\ref{eq_allocation_parameter}) satisfies Assumption \ref{assumption_theta_convergence}, the estimator $\theta_n$ may be a poor approximation to $\theta^*$ when $n$ is small.
    Thus, we suggest that in practice, when the number of allocated units is below a certain threshold, using $\rho$ instead of $g_{\theta_n}(\Lambda_n, X_{n+1})$ as the allocation probability is preferable.
    In the simulation (Sections A--B of the Supplementary Material), the threshold is set to 10.
\end{remark}

In summary, we can set the allocation function to (\ref{eq_oracle_allocation}) with the components (\ref{eq_allocation_alpha}), (\ref{eq_allocation_beta}) and (\ref{eq_allocation_epsilon}). Moreover, the parameter sequence can be chosen according to (\ref{eq_allocation_parameter}).

\subsection{Properties of the Feasible Randomization Procedure}

\label{subsec_randomization_procedure_convergent_parameter}

\begin{corollary}
    \label{corollary_boundedness_convergent}

    Under the allocation function (\ref{eq_oracle_allocation}) with the components (\ref{eq_allocation_alpha}), (\ref{eq_allocation_beta}) and (\ref{eq_allocation_epsilon}), suppose the parameter sequence is (\ref{eq_allocation_parameter}).

    Suppose Assumption \ref{assumption_x_sub_exponential_bound} holds and $\left\{ \mathbb{E}_{X \sim \Gamma} [\alpha_i(X)X] \mid 1 \leq i \leq d \right\}$ are linearly independent. Then the stochastic process $\{\Lambda_n\}$ is bounded in probability.
\end{corollary}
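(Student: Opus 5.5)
The plan is to reduce the statement to Theorem \ref{theorem_boundedness}. That theorem only needs Assumption \ref{assumption_x_sub_exponential_bound}, which is given, and the almost sure fact that $\theta_n \in K_{M,\Delta}$ for all sufficiently large $n$, for some fixed $M,\Delta>0$. So the whole argument amounts to producing a neighbourhood of the oracle parameter inside a negative feedback set and showing that the adaptive parameter eventually stays in it.

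First, I invoke Lemma \ref{lemma_instance_established}. Because $\{\mathbb{E}_{X\sim\Gamma}[\alpha_i(X)X]\}_{i\le d}$ are linearly independent, the choices (\ref{eq_allocation_alpha}), (\ref{eq_allocation_beta}) and (\ref{eq_allocation_epsilon}) satisfy Assumption \ref{assumption_theta_basic}, as well as Assumption \ref{assumption_theta_lipschitz}, which is not needed here. Then the first part of Theorem \ref{theorem_shift_simultaneous_geometric_finite_adjusted} applies: under Assumptions \ref{assumption_theta_basic} and \ref{assumption_x_sub_exponential_bound} there exist $r_d>0$, $M>0$, $\Delta>0$ with $\overline{B(\theta^*,r_d)}\subset K_{M,\Delta}$. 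Here $\theta^*=(\xi_1^*,\dots,\xi_d^*)$ with $\xi_i^*=\mathbb{E}_{X\sim\Gamma}[\alpha_i(X)X]$. This part uses neither the spread-out Assumption \ref{assumption_x_spread_out} nor the Markov structure, so it is available in the present setting.

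Second, I invoke Lemma \ref{lemma_average_theta_check}. For the empirical-mean parameter (\ref{eq_allocation_parameter}), Assumption \ref{assumption_x_sub_exponential_bound} gives $\theta_n\to\theta^*$ almost surely; this is simply the strong law of large numbers, since $\|\alpha_i(X)X\|\le\|X\|$ is integrable. Hence, for almost every $\omega$, there exists $N(\omega)$ such that $\|\theta_n-\theta^*\|_F\le r_d$ for all $n\ge N(\omega)$. For those $n$ we get $\theta_n\in\overline{B(\theta^*,r_d)}\subset K_{M,\Delta}$. Note that the set $K_{M,\Delta}$ is deterministic and defined through the fixed-parameter kernel $P_\theta$; only the random time $N(\omega)$ at which $\theta_n$ enters it depends on $\omega$, which is what Theorem \ref{theorem_boundedness} requires. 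That theorem then yields $\Lambda_n=O_P(1)$.

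The main point to check is whether the negative feedback condition in the ball is genuinely uniform in $\theta$, with the same $M$ and $\Delta$. This is where Theorem \ref{theorem_shift_simultaneous_geometric_finite_adjusted} does the work: $\epsilon(\theta)$ is continuous and positive near the nonsingular $\theta^*$, and item 4 of Assumption \ref{assumption_theta_basic} gives a uniform $M$. If one preferred not to rely on that theorem, I would argue directly. For $\|\Lambda\|\ge M$, at least one $\beta_{\xi_i}$ equals $-\sgn$ on a cone $R_{\xi_i}^{\epsilon(\theta)}$. The sign pattern of the remaining terms is controlled by $\xi_i$ being close to $\xi_i^*$. This gives the bound $-\tfrac{p}{d}\epsilon\min\|\xi_i^*\|/2$, uniformly for $r_d$ small.

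A minor additional check is the early steps where $\theta_n$ may be singular, for example $\epsilon(\theta_n)=0$. These affect only finitely many $n$ and are harmless, because Theorem \ref{theorem_boundedness} only requires membership in $K_{M,\Delta}$ eventually, and the increments before that time are bounded in probability by the moment assumption.
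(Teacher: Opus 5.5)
Your proposal is correct and follows essentially the same route as the paper. Lemma \ref{lemma_instance_established} gives Assumption \ref{assumption_theta_basic}; the negative-feedback result (the first part of Theorem \ref{theorem_shift_simultaneous_geometric_finite_adjusted}, proved as Lemma \ref{lemma_drift_condition}) gives $\overline{B(\theta^*,r_d)}\subset K_{M,\Delta}$; Lemma \ref{lemma_average_theta_check} gives $\theta_n\to\theta^*$ almost surely; and Theorem \ref{theorem_boundedness} then concludes. You are also right that only the almost sure convergence part of Assumption \ref{assumption_theta_convergence} is needed here.
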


The corollary implies that the imbalance vector is $O_P(1)$ under the feasible randomization procedure, regardless of the covariate type.
Furthermore, under Assumption \ref{assumption_x_spread_out} on the spread-out property of the covariate distribution, we can show that under the feasible randomization procedure, the center of the asymptotic distribution of $\sum_{i=1}^n (T_i-\rho)Y_i$ is zero for any additional covariate $Y$.
\begin{corollary}
    \label{corollary_asymptotic_normality}

    Under the allocation function (\ref{eq_oracle_allocation}) with the components (\ref{eq_allocation_alpha}), (\ref{eq_allocation_beta}) and (\ref{eq_allocation_epsilon}), suppose the parameter sequence is (\ref{eq_allocation_parameter}).
    If $Y$ has a finite second moment, $\left\{ \mathbb{E}_{X \sim \Gamma} [\alpha_i(X)X] \mid 1 \leq i \leq d \right\}$ are linearly independent, and Assumptions \ref{assumption_x_sub_exponential_bound} and \ref{assumption_x_spread_out} hold, then there is a $\sigma^*_Y \geq 0$ such that
    \begin{equation*}
        \frac{1}{\sqrt{N}} \sum_{n=1}^{N} { \left[ (T_n-\rho) Y_n \right] } \xrightarrow{d} \mathcal{N}(0, {\sigma^*_Y}^2) ,
    \end{equation*}
    where ${\sigma^*_Y}^2$ is equal to the asymptotic variance under the fixed parameter $\theta^*$.
\end{corollary}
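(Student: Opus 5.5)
The plan is to obtain this corollary from Theorem \ref{theorem_CLT_complex_center} with $Z_n\equiv 0$, and then to show that the centering term it produces vanishes, using Theorem \ref{theorem_shift_simultaneous_geometric_finite_adjusted}.

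First I verify the hypotheses of Theorem \ref{theorem_CLT_complex_center}. By Lemma \ref{lemma_instance_established}, the chosen components (\ref{eq_allocation_alpha}), (\ref{eq_allocation_beta}) and (\ref{eq_allocation_epsilon}) satisfy Assumptions \ref{assumption_theta_basic} and \ref{assumption_theta_lipschitz}, because the vectors $\{\mathbb{E}_{X\sim\Gamma}[\alpha_i(X)X]\}$ are linearly independent. Lemma \ref{lemma_g_lipschitz} then gives Assumption \ref{assumption_g_lipschitz} on a ball $\overline{B(\theta^*,r_g)}$. Lemma \ref{lemma_average_theta_check}, applied under Assumption \ref{assumption_x_sub_exponential_bound}, gives Assumption \ref{assumption_theta_convergence} for the empirical parameter (\ref{eq_allocation_parameter}), with $\theta^*=(\xi_1^*,\dots,\xi_d^*)$. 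The first part of Theorem \ref{theorem_shift_simultaneous_geometric_finite_adjusted} supplies $r_d>0$, $M$ and $\Delta$ with $\overline{B(\theta^*,r_d)}\subset K_{M,\Delta}$. Assumption \ref{assumption_x_spread_out} is assumed directly. Taking $Z_n\equiv 0$, which trivially has a finite second moment, Theorem \ref{theorem_CLT_complex_center} gives
\begin{equation*}
\frac{1}{\sqrt N}\sum_{n=1}^N\left[(T_n-\rho)Y_n-\mu_{n-1}\right]\xrightarrow{d}\mathcal N(0,{\sigma^*_Y}^2),
\end{equation*}
where ${\sigma^*_Y}^2$ is the asymptotic variance under the fixed parameter $\theta^*$.

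Second, I show that $\mu_n=0$ for every $n$. The case split is by the definition of $\mu_n$.
\begin{itemize}
\item If $\|\theta_n-\theta^*\|\le\min\{r_d,r_g\}$, then $\theta_n\in K_{M,\Delta}$. The second part of Theorem \ref{theorem_shift_simultaneous_geometric_finite_adjusted}, using Assumption \ref{assumption_x_spread_out} and $\mathbb{E}|Y|<\infty$ (implied by the finite second moment), gives $\pi_{\theta_n}h_{\theta_n}=0$.
\item Otherwise $\mu_n=\pi_{\theta^*}h_{\theta^*}$, and the same theorem applied at $\theta^*$ gives zero.
\end{itemize}
Hence the centering term vanishes identically and the displayed limit is exactly the claimed one.

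The only delicate point I expect is the second part of Theorem \ref{theorem_shift_simultaneous_geometric_finite_adjusted}, which must hold for every $\theta$ in the ball and not only at $\theta^*$. That result relies on Lemma \ref{lemma_Phi}, so its conditions need to be checked for every $\theta$ near $\theta^*$:
\begin{itemize}
\item Condition (B) holds because $\beta\equiv0$ is attainable, for instance at $\Lambda=0$, using the oddness in Assumption \ref{assumption_theta_basic}.
\item Condition (A) holds by the linear independence of the $\xi_i^*$.
\item The requirement $\pi_\theta(\|\Lambda\|)<\infty$ follows from the drift/$V$-bound with $V$ exponential.
\end{itemize}
I take all of these as established inside that theorem. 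It also matters that $\theta_n$ may be singular for small $n$, but this only affects finitely many $\mu_n$, which are handled by the second case or covered by the definition. No further work should be needed beyond assembling these references.
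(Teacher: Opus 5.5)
Your proposal is correct and follows the paper's proof. You verify the hypotheses of Theorem \ref{theorem_CLT_complex_center} (with $Z\equiv 0$) via Lemmas \ref{lemma_instance_established}, \ref{lemma_g_lipschitz}, \ref{lemma_average_theta_check} and the first part of Theorem \ref{theorem_shift_simultaneous_geometric_finite_adjusted}, then make the centering term vanish using $\pi_\theta h_\theta=0$ for every $\theta$ in the ball; your explicit check that both branches in the definition of $\mu_n$ are zero is just a more careful rendering of the paper's one-line final step.
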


By Corollaries \ref{corollary_boundedness_convergent} and \ref{corollary_asymptotic_normality}, we can conclude that under the feasible randomization procedure, the imbalance vector $\Lambda_n = \sum_{i=1}^n (T_i-\rho)X_i = O_P(1)$, and the shift problem is addressed for any additional covariate $Y$.
Therefore, in contrast to existing procedures that may deviate from the desired allocation ratio, the feasible randomization procedure achieves covariate balance while preserving the targeted allocation ratio $\rho$ in the long run.

\section{Experiments}
\label{sec_experiments}

In this section, we present the simulation results of several randomization procedures: the complete randomization (CR), the randomization procedure minimizing the imbalance measure (RMM, implemented as MIX(0,0,1,0) in Liu, Hu and Ma (\cite{liuPropertiesCovariateadaptiveRandomization2025})), the feasible randomization procedure (FR) and the oracle randomization procedure (OR).
In this simulation, we do not incorporate the feature map. The randomization procedure is intended to balance the first moment of the covariate vector $X$.
The simulation is to evaluate the convergence rate of the imbalance vector and the average shift on some additional covariates.

Consider the case of the unequal targeted allocation ratio $\rho=2/3$. The covariate vector $X$ is a $3$-dimensional random vector $(A+B,B,C)$, where $A$, $B$, $C$ are independent, the first two are standard normal random variables and the last is the standard exponential random variable.
The procedure minimizing the imbalance measure we implement is the procedure in Subsection \ref{subsec_main_liu2024} with the biased coin probability $\rho_1 = 0.9$.
The components of the allocation function in the last two randomization procedure are the same as the setting in Subsection \ref{subsec_main_feasible} and the adjusted probability $p$ in the allocation function is set to $0.2$.
For the feasible randomization procedure, we set the allocation probability to $\rho$ exactly when the number of allocated units $n$ is less than the threshold $10$.
For the oracle randomization procedure, we set $\epsilon \equiv 0$ since the oracle value has already been assigned to the parameter.

The additional covariate vector we test is set to $\left( \sqrt{\sum_{i=1}^3 |X_i|}, \sum_{i=1}^3 X_i^2 \right)$, where $X_i$ denotes the $i$th component of $X$ here.
The sample sizes are $n=200, 400, 800, 1600, 3200$. This setting aims to illustrate the relationship between the shift and the sample size $n$. For each case, we conduct the simulation for $N=10000$ times.
The means and standard deviations are used to evaluate the balancing performance of imbalance vectors and shifts. The simulation results are given in Table \ref{table_balance_shift}.
Note that the shift refers to the expectation of the imbalance vector associated with the additional covariate, and the average shift is the shift divided by the sample size $n$.

\begin{table}
\centering
\caption{Means (standard deviations) of covariate imbalances under various randomization procedures. Each case is based on 10000 replicates}
\label{table_balance_shift}
\begin{adjustbox}{scale=0.7}
\begin{threeparttable}
    \begin{tabular}{rrrrrrrr}
        \toprule
        Randomization & n & $\sum_k (T_k-\rho)X_{k,1}$ & $\sum_k (T_k-\rho)X_{k,2}$ & $\sum_k (T_k-\rho)X_{k,3}$ & $\sum_k (T_k-\rho)\sqrt{\sum_i|X_{k,i}|}$ & $\sum_k (T_k-\rho)\sum_i X_{k,i}^2$ \\
        \hline
        CR & 200 & 0.02(9.39) & 0.05(6.67) & -0.13(9.50) & -0.27(11.36) & -0.98(51.36) \\
        & 400 & -0.21(13.15) & -0.18(9.36) & -0.07(13.27) & -0.10(15.98) & -0.14(72.03) \\
        & 800 & -0.06(18.78) & -0.05(13.31) & -0.25(18.96) & -0.28(22.73) & -1.66(101.91) \\
        & 1600 & 0.38(26.57) & 0.28(18.74) & -0.45(26.55) & -0.53(32.29) & -2.53(143.75) \\
        & 3200 & 0.72(37.65) & 0.31(26.49) & -0.21(37.76) & -0.73(45.82) & -1.48(204.82) \\
        \hline
        RMM & 200 & 0.01(0.98) & 0.02(0.90) & -0.22(1.17) & -8.51(7.41) & 15.00(30.01) \\
        & 400 & -0.00(0.98) & -0.00(0.90) & -0.24(1.20) & -17.06(10.49) & 30.77(42.36) \\
        & 800 & 0.01(0.99) & 0.01(0.90) & -0.21(1.18) & -33.88(14.60) & 60.69(59.30) \\
        & 1600 & -0.00(1.00) & 0.01(0.89) & -0.25(1.21) & -67.18(21.07) & 122.26(82.97) \\
        & 3200 & 0.00(1.01) & 0.00(0.88) & -0.23(1.18) & -134.47(29.88) & 243.73(118.75) \\
        \hline
        FR & 200 & -0.01(3.22) & -0.00(3.28) & -0.07(4.53) & -0.10(8.06) & -0.46(34.44) \\
        & 400 & -0.01(3.67) & -0.04(4.02) & -0.11(4.66) & -0.01(11.98) & 0.11(45.41) \\
        & 800 & 0.01(4.02) & -0.03(4.80) & -0.02(4.80) & -0.11(17.81) & -0.93(61.40) \\
        & 1600 & -0.03(4.20) & 0.05(5.15) & -0.09(4.82) & -0.28(25.76) & -0.93(83.55) \\
        & 3200 & 0.08(4.34) & -0.03(5.25) & -0.01(4.76) & 0.14(37.43) & 0.64(116.10) \\
        \hline
        OR & 200  & -0.01(3.12) & 0.03(3.20)  & -0.06(4.46) & -0.11(8.13) & -0.25(33.91) \\
        & 400  & -0.02(3.63) & 0.04(4.00)  & -0.08(4.70) & -0.03(12.09) & -0.05(45.77) \\
        & 800  & 0.02(3.91)  & -0.07(4.70) & -0.09(4.70) & 0.17(17.85)  & -0.78(61.98) \\
        & 1600 & 0.03(4.21)  & -0.00(5.02) & -0.05(4.80) & -0.17(26.02) & 0.27(83.44)  \\
        & 3200 & 0.03(4.30)  & 0.00(5.16)  & 0.02(4.77)  & 0.01(37.51)  & 1.12(116.69) \\
        \hline
    \end{tabular}
    \begin{tablenotes}
        \item
        CR, complete randomization; RMM, the randomization procedure minimizing the imbalance measure; FR, the feasible randomization procedure; OR, the oracle randomization procedure.
    \end{tablenotes}
\end{threeparttable}
\end{adjustbox}
\end{table}

First, our simulation results demonstrate that all randomization procedures except complete randomization balance the covariates well, and beyond a certain sample size, further increasing the sample size has little effect on the standard deviation of the balanced covariates.
The result indicates the convergence rate $O_P(1)$, which outperforms $O_P(\sqrt{n})$ for the complete randomization. It agrees with the conclusion in Corollary \ref{corollary_boundedness_convergent}, Theorems \ref{theorem_shift_liu_xiao} and \ref{theorem_shift_simultaneous_geometric_finite_adjusted}.

Second, the results show that the average shifts on the additional covariates vanish when using CR, FR, and OR procedures.
Numerical results show that the shift remains close to zero under OR and FR procedures.
Furthermore, the variances of the imbalances associated with the additional covariates, $\sum_k (T_k-\rho)\sqrt{\sum_i|X_{k,i}|}$ and $\sum_k (T_k-\rho)\sum_i X_{k,i}^2$, are lower under the OR and FR procedures than under the CR procedure.
We therefore conclude that these procedures are effective in balancing additional covariates in practice.
Moreover, for all procedures considered, the standard deviation of the imbalance vector associated with the additional covariate scales with $\sqrt{n}$, supporting the asymptotic normality of the imbalance vector associated with the additional covariate.

Third, we evaluate the performance of the FR and OR procedures proposed in this article.
FR and OR procedures have better performances in eliminating the average shift and weaker balance effects compared to the first two procedures.
This phenomenon arises because the allocation function is designed to avoid deviating much from $\rho$. The lower bound of the allocation probability is $2/3-0.2$, not $0.1$ compared to the RM procedure.
We find that the FR and OR procedures perform similarly in terms of standard deviation of imbalances, even in the small sample case.
This finding confirms that parameter updates have negligible impact on performance and estimation, validating the reliability of substituting the oracle value with the parameter estimate sequence.

\section{Discussion}
\label{sec_conclusion}

In this article, we propose a new CAR procedure that marginally balances continuous covariates without inducing the shift problem in the context of the unequal targeted allocation ratio.
This procedure is incorporated into a new general framework, within which we establish a series of theoretical results.
These results include sufficient conditions for achieving covariate balance.
They also include necessary and sufficient conditions under which the asymptotic distribution of the imbalance for additional covariates is centered at zero, either when the parameters are fixed or when they converge.

Our findings, especially the equivalent form of the asymptotic existence of the average shift, also offer insights into Pocock and Simon's minimization procedure (\cite{pocockSequentialTreatmentAssignment1975}), which aims to marginally balance discrete covariates.
Thus, the procedure is susceptible to the shift problem and potentially even more so in multi-arm trials.
This potential issue suggests that the shift problem may extend beyond the settings previously considered.

Lastly, the challenge of conducting valid and more efficient inference under marginal covariate balance, particularly in the absence of model-based assumptions, remains open.
We leave this as a future research project.

\section{Supplementary Material}

Supplementary Material includes additional simulation results, additional theoretical results and proofs.

\appendix

\section{Redesign of a Depression Clinical Trial}
\label{sec_empirical_illustration}

In this section, we present a clinical trial example to compare several randomization procedures.
The clinical trial involving 681 patients, reported in (\cite{kellerComparisonNefazodoneCognitive2000}), evaluated the effectiveness of nefazodone, cognitive behavioral-analysis system of psychotherapy, and their combination in treating chronic depression.
The trial has three treatment groups, which we combine into two: single treatment versus combination, giving an unequal targeted allocation ratio of $\rho = 2/3$.
The key covariate vector $X$ used for balance consists of three baseline variables: age (AGE, continuous), Global Assessment of Functioning Score (GlobalAssess, continuous), and the 24-item Hamilton Rating Scale for Depression at baseline (HAMD24, continuous).
After scaling the covariates, patients are re-allocated based on these scaled covariates using different randomization procedures, and the balancing properties are examined.
The procedures under comparison are the complete randomization (CR), the randomization procedure minimizing the imbalance measure (RMM, implemented as MIX(0,0,1,0) in Liu, Hu and Ma (\cite{liuPropertiesCovariateadaptiveRandomization2025})), and the feasible randomization procedure (FR).
The procedure minimizing the imbalance measure and the feasible randomization procedure are implemented with the same settings as in Section \ref{sec_experiments}, including $\rho_1 = 0.9$ and $p = 0.2$, with allocation probability set to $\rho$ when $n < 10$.

We first examine the balance of the key covariate vector $X$.
In addition, we assess the balance on three further covariates: the 17-item Hamilton Rating Scale for Depression (HAMD17) and the vector $\left( \sum_{i=1}^3 \sgn(X_i)\sqrt{|X_i|}, \sum_{i=1}^3 X_i^2 \right)$, where $X_i$ denotes the $i$th component of $X$ here.
The sample sizes are $n=85, 170, 340, 681$. This setting aims to illustrate the balancing properties at different stages of enrollment. For each case, we conduct the simulation for $N=10000$ times.
The means and standard deviations are used to evaluate the balancing performance of imbalance vectors.
The results are given in Table \ref{table_real_data_balance}.

\begin{table}
\centering
\caption{Means (standard deviations) of covariate imbalances under various randomization procedures. Each case is based on 10000 replicates}
\label{table_real_data_balance}
\begin{adjustbox}{scale=0.6}
\begin{threeparttable}
    \begin{tabular}{rrrrrrrr}
        \toprule
        Randomization & n & $\sum_k (T_k-\rho)X_{k,1}$ & $\sum_k (T_k-\rho)X_{k,2}$ & $\sum_k (T_k-\rho)X_{k,3}$ & $\sum_k (T_k-\rho)\sum_i \sgn(X_{k,i})\sqrt{|X_{k,i}|}$ & $\sum_k (T_k-\rho)\sum_i X_{k,i}^2$ & $\sum_k (T_k-\rho) \times HAMD17^2$ \\
        \hline
        CR & 85 & 0.01(4.36) & 0.00(4.36) & -0.02(4.31) & 0.00(5.96) & 0.05(16.64) & 1.33(88.31) \\
        & 170 & 0.07(6.16) & -0.01(6.18) & -0.06(6.12) & -0.00(8.49) & -0.24(23.33) & -1.23(125.79) \\
        & 340 & 0.00(8.71) & -0.02(8.75) & 0.10(8.59) & 0.03(11.81) & 0.39(33.09) & 1.15(178.35) \\
        & 681 & 0.05(12.35) & -0.03(12.26) & -0.01(12.35) & 0.05(16.87) & -0.66(47.11) & -6.07(251.08) \\
        \hline
        RMM & 85 & 0.01(0.73) & -0.01(0.76) & 0.04(0.74) & 0.10(2.12) & -3.13(16.94) & -71.05(93.48) \\
        & 170 & 0.02(0.74) & -0.01(0.76) & 0.03(0.76) & 0.15(2.81) & -8.10(24.06) & -154.88(133.62) \\
        & 340 & 0.01(0.75) & -0.03(0.77) & 0.04(0.75) & 0.26(3.92) & -17.55(34.41) & -320.18(190.98) \\
        & 681 & 0.02(0.73) & -0.01(0.78) & 0.04(0.75) & 0.67(5.41) & -37.37(49.08) & -657.63(269.93) \\
        \hline
        FR & 85 & 0.03(2.52) & -0.01(2.53) & -0.04(2.51) & -0.01(4.08) & 0.19(16.52) & 1.84(88.55) \\
        & 170 & 0.05(2.76) & 0.02(2.88) & -0.01(2.87) & 0.05(4.82) & -0.33(23.16) & -1.56(126.04) \\
        & 340 & -0.06(2.91) & 0.06(3.07) & -0.02(3.03) & -0.06(5.69) & 0.24(32.50) & 0.24(177.29) \\
        & 681 & 0.01(2.92) & -0.01(3.13) & 0.02(3.13) & 0.02(7.03) & -0.46(46.19) & -4.84(247.81) \\
        \hline
    \end{tabular}
    \begin{tablenotes}
        \item
        CR, complete randomization; RMM, the randomization procedure minimizing the imbalance measure; FR, the feasible randomization procedure.
    \end{tablenotes}
\end{threeparttable}
\end{adjustbox}
\end{table}

The result confirms that both adaptive procedures, RMM and FR, significantly outperform complete randomization (CR) in balancing the key covariates. As shown in Table \ref{table_real_data_balance}, their standard deviations for the imbalance remain small and stable as the sample size grows, demonstrating a clear advantage over CR.

The crucial distinction, however, lies in their handling of additional covariates.
The RMM procedure induces a severe and systematic imbalance, with the mean shift on the squared HAMD17 term reaching $-657.63$ at the full sample size, highlighting a significant practical risk.
Notably, for these additional covariates, the standard deviations of the imbalance scale at the order of $\sqrt{n}$ across all procedures, which is consistent with the expected asymptotic normality.
In contrast, the FR procedure successfully avoids this risk, maintaining balance on both the key and additional covariates.
This balance property makes the FR procedure a far more reliable and robust method, as it provides the benefits of adaptive randomization without the dangerous side effect of inducing bias in important additional covariates.

\section{Shift Problem for Pocock and Simon's Minimization Procedures under Unequal Targeted Allocation Ratio}
\label{sec_shift_problem_discrete}

In this section, we present the simulation results of Pocock and Simon's minimization procedures with two different imbalance measures. Consider $I$ covariates and $m_i$ levels for the
$i$th covariate, the imbalance measures are defined as
\begin{equation*}
    \mathrm{Imb}_{n,\text{square}}
    =\sum_{i=1}^I \sum_{k_i=1}^{m_i} w_i\left[D_n\left(i ; k_i\right)\right]^2 ,
\end{equation*}
and
\begin{equation*}
    \mathrm{Imb}_{n,\text{abs}}
    =\sum_{i=1}^I \sum_{k_i=1}^{m_i} w_i\left| D_n\left(i ; k_i\right) \right| ,
\end{equation*}
where $D_n\left(i ; k_i\right)$ is the differences between the numbers of units in the two treatment groups on the margin formed
by units whose $i$th covariate is at level $k_i$ and $w_i$ are nonnegative weights placed within a covariate margin.
Moreover, we set the allocation probability to
\begin{align*}
    \operatorname{pr}\left(T_n=1 \mid T_1, \ldots, T_{n-1}, X_1, \ldots, X_n\right)=
    \begin{cases}
        \rho_1 & \text { if } \operatorname{Imb}_n^{(1)}<\operatorname{Imb}_n^{(0)}, \\
        1-\rho_1 & \text { if } \operatorname{Imb}_n^{(1)}>\operatorname{Imb}_n^{(0)}, \\
        \rho & \text { if } \operatorname{Imb}_n^{(1)}=\operatorname{Imb}_n^{(0)}.
    \end{cases}
\end{align*}
Thus, the procedure is the same as the one in Subsection \ref{subsec_main_liu2024}, except that the imbalance measure is modified to accommodate the discrete covariate case.
Furthermore, the corresponding potential imbalance measures are obtained by replacing $D_n(i; k_i^*)$ with $D_{n-1}(i; k_i^*) + 1$ or $D_{n-1}(i; k_i^*) - 1$, depending on whether the current unit is hypothetically allocated to the treatment or the control, where $k_i^*$ denotes the level of  the $i$th covariate for the $n$th unit.
For further details, we refer the reader to (\cite{huTheoryCovariateAdaptiveDesigns2020}).

In this simulation, we set $\rho_1 = 0.99$ and $\rho = 2/3$.
The number of covariates is $I = 2$, with $m_1 = 2$ and $m_2 = 3$ levels for the two covariates, respectively.
We set the weights $w_1 = 1$ and $w_2 = 2$.
In generating the covariate data, we set the sampling probabilities to be proportional to $(1, 4, 1, 3, 1, 3)$ for the covariate level combinations $(1,1)$, $(1,2)$, $(1,3)$, $(2,1)$, $(2,2)$, and $(2,3)$, where each pair corresponds to a possible combination of levels for the two covariates.
The unequal sampling probabilities are intended to induce asymmetry in the covariate distribution.

In this simulation, we do not specify an additional covariate of interest.
Instead, we aim to study the average shift within each stratum, namely, within each unique covariate level combination.
The sample sizes are $n=200, 400, 800, 1600, 3200$. This setting aims to illustrate the relationship between the shift and the sample size $n$. For each case, we conduct the simulation for $N=10000$ times.
The means and standard deviations are used to evaluate the balancing performance of imbalance vectors and shifts. The simulation results are given in Table \ref{table_shift_discrete}.

\begin{table}
\centering
\caption{Means (standard deviations) of covariate imbalances under Pocock and Simon's minimization procedures with various imbalance types. Each case is based on 10000 replicates}
\label{table_shift_discrete}
\begin{adjustbox}{scale = 0.9}
\begin{threeparttable}
    \begin{tabular}{rrrrrrrr}
        \toprule
        $\mathrm{Imb}$ Type & $n$ & (1,1) & (1,2) & (1,3) & (2,1) & (2,2) & (2,3)
        \\
        \hline
        $\mathrm{Imb}_{\text{square}}$ & 200 & 0.09(1.33) & -0.17(1.35) & 0.08(1.33) & -0.09(1.34) & 0.17(1.33) & -0.08(1.34) \\
        & 400 & 0.18(1.86) & -0.36(1.88) & 0.19(1.87) & -0.17(1.87) & 0.36(1.88) & -0.18(1.88) \\
        & 800 & 0.39(2.61) & -0.72(2.64) & 0.34(2.63) & -0.38(2.62) & 0.71(2.64) & -0.34(2.64) \\
        & 1600 & 0.73(3.67) & -1.41(3.74) & 0.70(3.69) & -0.72(3.68) & 1.42(3.73) & -0.70(3.70) \\
        & 3200 & 1.50(5.25) & -2.90(5.34) & 1.42(5.23) & -1.50(5.26) & 2.90(5.33) & -1.41(5.23) \\
        \hline
        $\mathrm{Imb}_{\text{abs}}$ & 200 & 0.14(1.31) & -0.28(1.33) & 0.15(1.31) & -0.14(1.33) & 0.28(1.32) & -0.14(1.33) \\
        & 400 & 0.26(1.85) & -0.56(1.86) & 0.31(1.84) & -0.25(1.86) & 0.57(1.85) & -0.30(1.86) \\
        & 800 & 0.56(2.56) & -1.11(2.61) & 0.57(2.57) & -0.55(2.57) & 1.11(2.60) & -0.56(2.58) \\
        & 1600 & 1.10(3.70) & -2.19(3.69) & 1.12(3.66) & -1.09(3.70) & 2.21(3.68) & -1.12(3.67) \\
        & 3200 & 2.14(5.18) & -4.29(5.23) & 2.17(5.14) & -2.13(5.19) & 4.30(5.23) & -2.17(5.14) \\
        \hline
    \end{tabular}
    \begin{tablenotes}
        \item
        $\mathrm{Imb}_{\text{square}}=\sum_{i=1}^I \sum_{k_i=1}^{m_i} w_i\left[D_n\left(i ; k_i\right)\right]^2$; $\mathrm{Imb}_{\text{abs}}=\sum_{i=1}^I \sum_{k_i=1}^{m_i} w_i\left| D_n\left(i ; k_i\right) \right|$.
    \end{tablenotes}
\end{threeparttable}
\end{adjustbox}
\end{table}

The results show that under two different imbalance measures, the shift in each stratum increases linearly with $n$, resulting in a non-vanishing average shift.
This finding suggests that the shift problem may arise in the discrete covariate case when using Pocock and Simon's minimization procedures to marginally balance covariates.

\section{Additional Theorems on the Randomization Procedure}

\begin{theorem}
    \label{theorem_drift_condition_inequality}
    Suppose $\theta \in K_{M,\Delta}$. If Assumption \ref{assumption_x_sub_exponential_bound} holds, then there exists positive constants $\beta<1$, $b$ and $\lambda_1$ that only depend on $M$, $\Delta$, $\lambda$ and $C$ such that for any $\Lambda \in W_\Gamma$,
    \begin{equation*}
        \mathbb{E}_\theta \left[ e^{\lambda_1 \|\Lambda_1\|} \mid \Lambda_0 = \Lambda \right] \leq \beta e^{\lambda_1 \|\Lambda\|} + b .
    \end{equation*}

    Denote the Lyapunov function $V(\Lambda) = e^{\lambda_1 \|\Lambda\|}$. For any $\alpha \in (0,1]$, there exists positive constants $\beta_\alpha<1$ and $b_\alpha=b$ such that the inequality
    \begin{equation*}
        (P_\theta V^\alpha)(\Lambda) \leq \beta_\alpha V^\alpha(\Lambda) + b_\alpha
    \end{equation*}
    holds. In addition, $\beta_\alpha$ can be chosen as a continuous function of $\alpha$ which converges to $1$ as $\alpha \rightarrow 0$.
\end{theorem}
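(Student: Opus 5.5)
We prove the geometric drift inequality by splitting into the regions $\|\Lambda\| \geq M'$ and $\|\Lambda\| < M'$, with $M' \geq M$ chosen later. Write $\Lambda_1 = \Lambda + D$ with $D = (T_1-\rho)X_1$, so $\|D\| \leq \|X_1\|$. Given $\Lambda_0 = \Lambda$, the increment satisfies $\mathbb{E}_\theta[D^T \Lambda/\|\Lambda\| \mid \Lambda_0=\Lambda] \leq -\Delta$ for $\|\Lambda\| \geq M$, $\Lambda \in W_\Gamma$. Since $X \in W_\Gamma$ almost surely, $\Lambda_1$ stays in $W_\Gamma$.

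\textbf{Large $\|\Lambda\|$.} Expand the norm:
\[
\|\Lambda+D\| - \|\Lambda\| = \frac{2D^T\Lambda + \|D\|^2}{\|\Lambda+D\| + \|\Lambda\|}.
\]
From this,
\[
\|\Lambda + D\| - \|\Lambda\| \leq \frac{D^T\Lambda}{\|\Lambda\|} + \frac{\|D\|^2}{\|\Lambda\|}.
\]
To justify this bound, write $\|\Lambda+D\| \leq \|\Lambda\| + u + \|D\|^2/(2\|\Lambda\|)$ with $u = D^T\Lambda/\|\Lambda\|$, which follows from $\sqrt{a^2+x} \leq a + x/(2a)$. Setting $\delta = \|\Lambda+D\| - \|\Lambda\|$, we have $|\delta| \leq \|X_1\|$. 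Then use
\[
e^{\lambda_1\delta} \leq 1 + \lambda_1\delta + \tfrac{1}{2}\lambda_1^2\delta^2 e^{\lambda_1|\delta|} \leq 1 + \lambda_1\delta + \tfrac12 \lambda_1^2\|X\|^2 e^{\lambda_1\|X\|}.
\]
Taking conditional expectations gives
\[
\mathbb{E}[e^{\lambda_1\delta}] \leq 1 - \lambda_1\Delta + \lambda_1\frac{\mathbb{E}\|X\|^2}{\|\Lambda\|} + \frac{\lambda_1^2}{2}\mathbb{E}\bigl[\|X\|^2e^{\lambda_1\|X\|}\bigr].
\]
Choose $\lambda_1 \leq \lambda/2$. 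Then $\mathbb{E}[\|X\|^2 e^{\lambda_1\|X\|}] \leq c(\lambda)\, C$, because $x^2 e^{-\lambda x/2}$ is bounded. Take $\lambda_1$ small enough that the quadratic term is at most $\lambda_1\Delta/4$. Then take $M' \geq M$ large enough that $\mathbb{E}\|X\|^2/M' \leq \Delta/4$. This yields $\mathbb{E}[e^{\lambda_1\|\Lambda_1\|}] \leq (1-\lambda_1\Delta/2)\, e^{\lambda_1\|\Lambda\|}$, so we set $\beta = 1-\lambda_1\Delta/2$. Every constant depends only on $M,\Delta,\lambda,C$, since $\mathbb{E}\|X\|^2$ is bounded by a function of $\lambda$ and $C$.

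\textbf{Small $\|\Lambda\|$.} When $\|\Lambda\| < M'$, simply bound
\[
e^{\lambda_1\|\Lambda_1\|} \leq e^{\lambda_1 M'}e^{\lambda_1\|X\|}.
\]
Its expectation is at most $e^{\lambda_1M'}C$ by Jensen's inequality, since $\lambda_1 \leq \lambda$ gives $\mathbb{E}e^{\lambda_1\|X\|} \leq C^{\lambda_1/\lambda} \leq \max(C,1)$. So $b = e^{\lambda_1 M'}\max(C,1)$ works, and the two cases combine to the first claim.

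\textbf{Fractional powers.} For $V^\alpha$, apply Jensen's inequality, using that $t\mapsto t^\alpha$ is concave: $P_\theta V^\alpha \leq (P_\theta V)^\alpha$. On the large region this gives $P_\theta V^\alpha \leq \beta^\alpha V^\alpha$. On the small region it gives $P_\theta V^\alpha \leq b^\alpha \leq b$, assuming $b \geq 1$ (enlarge $b$ if needed). Hence $\beta_\alpha = \beta^\alpha$ and $b_\alpha = b$. This $\beta_\alpha$ is continuous in $\alpha$, lies strictly below $1$ for $\alpha > 0$, and tends to $1$ as $\alpha\to0$. Alternatively, the first part can be rerun with $\alpha\lambda_1$ in place of $\lambda_1$, but Jensen is cleaner.

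\textbf{Main obstacle.} The delicate step is the Taylor control of $e^{\lambda_1\delta}$ on the large region. The second-order term has to be dominated uniformly using only sub-exponential moments. This forces $\lambda_1 < \lambda$, and it requires the $\|D\|^2/\|\Lambda\|$ curvature term to be absorbed by taking $M'$ large.
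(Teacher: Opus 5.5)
Your proof is correct, and its skeleton matches the paper's. Both split according to the size of $\|\Lambda\|$, linearize $\|\Lambda_1\|-\|\Lambda_0\|$ along $\Lambda_0/\|\Lambda_0\|$ to invoke the negative-feedback bound $-\Delta$, control second-order terms with the sub-exponential moment, and use a crude bound on the bounded region. The execution differs in two places.

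\textbf{The large-$\|\Lambda\|$ region.} The paper truncates the increment at a level $M_2$. On $\{\|\Lambda_1-\Lambda_0\|\le M_2\}$ it uses $|e^x-1-x|\le x^2$ together with a two-sided linearization error $2M_2^2/\|\Lambda_0\|$, and it pays a tail term $Ce^{(\lambda_1-\lambda)M_2}$. Because the truncated drift is no longer bounded by $-\Delta$, it must also absorb $\mathbb{E}[\|X\|\mathbb{I}(\|X\|>M_1)]\le\Delta/2$. It then couples $\lambda_1=M_2^{-3}$ to the threshold $\|\Lambda\|\ge M_2^3$, which yields $b=Ce$.

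You avoid truncation entirely. You use the one-sided bound $\sqrt{a^2+x}\le a+x/(2a)$ and the Lagrange remainder $\tfrac12y^2e^{|y|}$, which is integrable because $\mathbb{E}[\|X\|^2e^{\lambda_1\|X\|}]<\infty$ for $\lambda_1\le\lambda/2$. This decouples $\lambda_1$ from the threshold $M'$ and is shorter. The cost is $b=e^{\lambda_1M'}\max(C,1)$, which depends on $M$; the statement allows this. Since your $M'$ does not depend on $\lambda_1$, additionally imposing $\lambda_1\le 1/M'$ would recover $b\le eC$.

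\textbf{The powers $V^\alpha$.} The paper reruns the whole argument with $\alpha\lambda_1$ in place of $\lambda_1$, taking $M_2=M_2^*\alpha^{-1/3}$. This produces an explicit but unwieldy $\beta_\alpha$. Your Jensen step $P_\theta V^\alpha\le(P_\theta V)^\alpha$, combined with $b\ge1$, gives $\beta_\alpha=\beta^\alpha$ and $b_\alpha=b$ at once. This makes continuity in $\alpha$ and the limit $\beta_\alpha\to1$ as $\alpha\to0$ immediate.
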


\begin{lemma}
    \label{lemma_spread_out_on_sphere}
    Under Assumption \ref{assumption_x_spread_out}, there exist three constants $d_{\mathrm{min}} \in \mathbb{N}^*$, $a_{d,P}>0$ and $b_{d,P} \in \mathbb{R}$ such that, for any $d \geq d_{\mathrm{min}}$, there exists a positive number $\delta_{d,P}$ satisfying, for any $\theta$,
    \begin{equation*}
        P_\theta^{d}(\Lambda, \cdot) \geq \delta_{d,P} \mu_{\mathrm{leb}}(\cdot \cap B(\Lambda,a_{d,P} d + b_{d,P})) ,
    \end{equation*}
    where $B(\Lambda,a_{d,P} d + b_{d,P})$ is the ball centered at $\Lambda$ with radius $a_{d,P} d + b_{d,P}$, and the set $\cdot \cap B(\Lambda,a_{d,P} d + b_{d,P})$ is the intersection of an arbitrary set $\cdot$ and $B(\Lambda,a_{d,P} d + b_{d,P})$.
    Consequently, the linear subspace $W_\Gamma$ spanned by the support of the distribution $\Gamma$ is equal to the entire space $\mathbb{R}^d$. In addition, the Markov chain corresponding to $P_\theta$ is aperiodic and irreducible.
    Moreover, the Lebesgue measure on $\mathbb{R}^d$, $\mu_{\mathrm{leb}}$, is the maximal irreducibility measure.
\end{lemma}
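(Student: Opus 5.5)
The plan is to obtain a uniform minorization of the $d$-step kernel by writing $P_\theta^d(\Lambda,\cdot)$ as a mixture over assignment sequences and comparing it with the convolution $\Gamma_\rho^{d*}$. Throughout, $d$ denotes the number of steps, not the dimension. Each step moves $\Lambda$ by $(1-\rho)X$ with probability $g_\theta(\Lambda,X)$ and by $-\rho X$ otherwise. Assumption \ref{assumption_sampling} gives $g_\theta\in[\iota,1-\iota]$ for every $\theta$, so the one-step increment law, conditional on the current state, dominates $\iota'\big[(1-\rho)\Gamma + (-\rho\Gamma)\big]$ with $\iota'=\iota\min\{1/\rho,1/(1-\rho)\}$. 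The bracket is in turn at least $\Gamma_\rho$, up to the constant $\min(\rho,1-\rho)^{-1}$. Hence, uniformly in $\theta$ and $\Lambda$,
\begin{equation*}
P_\theta(\Lambda,\cdot)\ \ge\ c\,\Gamma_\rho(\cdot-\Lambda)
\end{equation*}
for a constant $c>0$ depending only on $\iota$ and $\rho$. Chaining this bound over $k$ steps is legitimate because the lower bound is a translation kernel. This gives $P_\theta^{k}(\Lambda,\cdot)\ge c^{k}\,\Gamma_\rho^{k*}(\cdot-\Lambda)$ for every $k$.

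Next I would build a density from Assumption \ref{assumption_x_spread_out}. We have $\Gamma_\rho^{d_s*}\ge f_s\,\mu_{\mathrm{leb}}$ with $\int f_s>0$. Truncate $f_s$ to get $f_s\ge \eta\,\mathbb{I}_{B(z_0,r)}$ on some ball, for some $\eta,r>0$. Convolving $m$ copies yields $\Gamma_\rho^{md_s*}\ge \eta^m \mathbb{I}_{B(z_0,r)}^{*m}$. The $m$-fold convolution of a ball indicator is bounded below by a positive constant on $B(mz_0, mr/2)$, by the standard estimate that convolving a ball with itself contains a ball of radius roughly its sum of radii, shrunk by a fixed fraction.

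This is where I expect the main obstacle: the ball is centered at $mz_0$, not at $0$, while the statement wants balls centered at $\Lambda$. Two ways to fix the drift come to mind. The first is to note that $\Gamma_\rho$ has mean $\rho(1-\rho)\mathbb{E}X-(1-\rho)\rho\,\mathbb{E}X=0$, so the support of $\Gamma_\rho^{k*}$ surrounds the origin. More concretely, and this is what I would try first, use the fact that $\Gamma_\rho$ charges both $(1-\rho)\Gamma$ and $-\rho\Gamma$. Write $z_0=(1-\rho)a-\rho b$ where $a$ and $b$ are in the supports. One can then mix in extra steps that move in the opposite direction to produce a second spread-out piece centered near $-z_0$ scaled. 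Convolving pieces centered at $+z_0$ and $-\kappa z_0$ in proportions cancels the center exactly, up to a bounded offset. The offset is then absorbed by choosing $b_{d,P}$ negative.

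Sanity-check on what is actually needed: whatever the centering, the fixed offset is bounded. What grows linearly is the radius, like $mr/2$. A ball $B(mz_0,mr/2)$ contains $B(0,m(r/2-\|z_0\|))$ only if $\|z_0\|<r/2$. I would secure this by choosing $z_0$ near zero, using that symmetric combinations of increments give densities near the origin. Concretely, convolve the density piece with a reflected copy obtained from the $\pm$ mixture, centering at $0$ with positive radius, and then iterate. For intermediate $d$ not divisible by $d_s$, pad with a few steps and use the chaining bound together with the fact that $\Gamma_\rho^{j*}$ is a probability measure. Ball growth absorbs the bounded spread of the padding with probability bounded below, via tightness of $\Gamma_\rho^{j*}$, for $j<d_s$. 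This gives $a_{d,P},b_{d,P}$ and $d_{\min}$, with $\delta_{d,P}$ independent of $\theta$.

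The consequences then follow directly. Since $P_\theta^{d}(\Lambda,\cdot)$ charges every Lebesgue-positive subset of a ball around $\Lambda$ whose radius grows with $d$, and the chain moves along $W_\Gamma$ only, $W_\Gamma$ must have positive Lebesgue measure, so $W_\Gamma=\mathbb{R}^d$. Moreover, every Lebesgue-positive set $A$ is reached from any $\Lambda$ once $d$ is large enough that the ball covers part of $A$. This gives $\mu_{\mathrm{leb}}$-irreducibility. It is maximal because, conversely, $P_\theta^{n}(\Lambda,\cdot)$ is dominated by $\Gamma_\rho^{n*}$ translates, which are absolutely continuous on the relevant part; alternatively, any irreducibility measure is equivalent to the maximal one, and Lebesgue is shown maximal via the minorization on every compact set. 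Aperiodicity follows because the minorization holds for all $d\ge d_{\min}$, in particular for the two consecutive step counts $d_{\min}$ and $d_{\min}+1$, whose gcd is $1$.
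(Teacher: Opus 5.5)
Your architecture matches the paper's: minorize $P_\theta\ge\iota K_{\mathrm{rw}}$ by the random-walk kernel with increment law $\Gamma_\rho$, get a Lebesgue density on a ball containing the origin, grow it linearly by self-convolution, and pad for intermediate $d$. However, the step you flag as the main obstacle is never closed, and the mechanism you finally settle on does not exist.

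For $\rho\neq 1/2$ there is no ``reflected copy'' obtainable from the $\pm$ mixture, because $\Gamma_\rho$ is not symmetric. Let $S_j,S'_{k-j}$ be independent sums of $j$ and $k-j$ copies of $X$. Exchanging treatment and control moves turns $(1-\rho)S_j-\rho S'_{k-j}$ into $-\rho S_j+(1-\rho)S'_{k-j}$, not into its negative. The ``second spread-out piece centered near $-\kappa z_0$'' is never constructed, and constructing it is the original problem. Also, $z_0$ is a density center for a multi-step convolution, so it is not of the form $(1-\rho)a-\rho b$ in general.

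What works, and what the paper does, is a discrete lattice shift carried out once at a fixed scale:
\begin{enumerate}
\item The uncentered density ball already forces $W_\Gamma=\mathbb{R}^d$. So pick linearly independent $x_1,\dots,x_d$ in the support of $\Gamma$.
\item Nonnegative integer combinations of $(1-\rho)x_i$ and $-\rho x_i$ form a net of $\mathbb{R}^d$ with mesh at most $2\sum_i\|x_i\|$.
\item The $n$-fold convolved density is bounded below on a ball around $nz_0$ of radius about $nr$. For $n$ large there is therefore a net point $a$ with $\|nz_0+a\|$ smaller than that radius minus a fixed margin.
\item Realize $a$ using moves with $X\in B(x_i,r_0)$, which has positive probability. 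Choose $r_0$ only after $n$ and the integer counts are fixed, so the accumulated perturbation stays below, say, $r/8$.
\end{enumerate}
This gives a density bounded below on $B(0,r/8)$, and then your iteration goes through. If you instead cancel $mz_0$ afresh at each scale $m$, the number of correcting moves grows like $m$. Then $r_0$ must be small relative to $r$ uniformly in $m$, which is possible but messier.

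Two further steps are false as written.

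First, truncating $f_s$ does not give $f_s\ge\eta\,\mathbb{I}_{B(z_0,r)}$ on any ball; take $f_s$ the indicator of a fat Cantor set. You need one more convolution, as in the paper. For $c$ large, $f_{s,c}=f_s\mathbb{I}(f_s<c)$ is bounded and integrable with positive integral. So $f_{s,c}*f_{s,c}$ is continuous and not identically zero, hence bounded below on a ball, and $\Gamma_\rho^{2d_s*}\ge(f_{s,c}*f_{s,c})\,\mu_{\mathrm{leb}}$.

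Second, your maximality argument fails, for three reasons:
\begin{enumerate}
\item Spread-out allows $\Gamma$ to have atoms. So $\Gamma_\rho^{n*}$, and hence $P_\theta^n(\Lambda,\cdot)$ from a fixed $\Lambda$, can charge Lebesgue-null sets.
\item Minorizations only show $\mu_{\mathrm{leb}}\ll\psi$ for the maximal irreducibility measure $\psi$.
\item An irreducibility measure is absolutely continuous with respect to $\psi$, not equivalent to it.
\end{enumerate}
The missing step is to average over the starting point with respect to Lebesgue measure (or a finite measure equivalent to it). Use your domination $P_\theta\le C K_{\mathrm{rw}}$ with $C=1/\min(\rho,1-\rho)$, valid since $g_\theta\le 1$. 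Fubini and translation invariance give $\int\mu_{\mathrm{leb}}(d\Lambda)\,P_\theta^n(\Lambda,A)\le C^n\int\mu_{\mathrm{leb}}(A-y)\,\Gamma_\rho^{n*}(dy)=0$ for every Lebesgue-null $A$. Hence $\int\mu_{\mathrm{leb}}(d\Lambda)\sum_{n\ge 0}2^{-(n+1)}P_\theta^n(\Lambda,\cdot)\ll\mu_{\mathrm{leb}}$, and by Proposition 4.2.2 of Meyn and Tweedie that measure is a maximal irreducibility measure.

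The remaining consequences are fine. In particular, your aperiodicity argument, via minorizations at the consecutive times $d_{\mathrm{min}}$ and $d_{\mathrm{min}}+1$ on a common bounded ball, is cleaner than the paper's remark.
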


\begin{theorem}
    \label{theorem_simultaneous_geometric_ergodicity}
    Suppose $K_{M,\Delta}$ is not empty. If Assumptions \ref{assumption_x_sub_exponential_bound} and \ref{assumption_x_spread_out} hold, then for any $\theta \in K_{M,\Delta}$, $P_\theta$ is positive recurrent with the unique invariant probability $\pi_\theta$ and $\pi_\theta V \leq \frac{b}{1-\beta}$. Moreover, there exists a positive number $L>1$ such that for any $\theta \in K_{M,\Delta}$,
    \begin{equation*}
        \| P_\theta^{n}(\Lambda, \cdot) - \pi_\theta \|_V \leq L(1-L^{-1})^n V(\Lambda) .
    \end{equation*}
    
    If $V$ is replaced by $V^\alpha$ for some $\alpha \in (0,1]$, the corresponding constant is denoted by $L_\alpha$.
\end{theorem}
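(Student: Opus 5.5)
The plan is to follow the standard Meyn–Tweedie route: combine a uniform geometric drift condition (Theorem \ref{theorem_drift_condition_inequality}) with a uniform minorization on a large ball (Lemma \ref{lemma_spread_out_on_sphere}), and then invoke a quantitative geometric ergodicity theorem whose constants depend only on the drift and minorization constants.

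\textbf{Step 1: uniform drift.} For every $\theta\in K_{M,\Delta}$, Theorem \ref{theorem_drift_condition_inequality} gives
\[
P_\theta V\le \beta V+b ,
\]
with $\beta<1$, $b$ and $\lambda_1$ depending only on $M,\Delta,\lambda,C$. Note that $W_\Gamma=\mathbb{R}^d$ by Lemma \ref{lemma_spread_out_on_sphere}. Choose $R$ large enough that
\[
\beta+\frac{2b}{e^{\lambda_1 R}}<1 ,
\]
and set $C_R=\{\|\Lambda\|\le R\}$. Then $P_\theta V\le \beta' V+b\,\mathbb{I}_{C_R}$ for some $\beta'<1$, uniformly in $\theta$.

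\textbf{Step 2: uniform small set.} Pick a step count $m\ge d_{\min}$ so large that $a_{d,P}m+b_{d,P}\ge 2R$. Lemma \ref{lemma_spread_out_on_sphere} then gives, for all $\Lambda\in C_R$ and all $\theta$,
\[
P_\theta^{m}(\Lambda,\cdot)\ge \delta\,\mu_{\mathrm{leb}}(\cdot\cap B(\Lambda,2R))\ge \delta\,\mu_{\mathrm{leb}}(\cdot\cap C_R).
\]
So $C_R$ is an $(m,\epsilon\nu)$-small set, with $\nu$ the normalized Lebesgue measure on $C_R$ and $\epsilon=\delta\,\mu_{\mathrm{leb}}(C_R)$, independent of $\theta$. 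Irreducibility and aperiodicity also come from that lemma.

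\textbf{Step 3: existence, uniqueness and the moment bound.} The drift toward a petite set yields positive Harris recurrence, hence a unique invariant probability $\pi_\theta$. For the moment bound, integrate the drift inequality against $\pi_\theta$. To avoid assuming $\pi_\theta V<\infty$ a priori, first apply it to $\min(V,k)$, or use $P_\theta^n V\le \beta^nV+\frac{b}{1-\beta}$ together with Fatou's lemma and the convergence $P_\theta^n(\Lambda,\cdot)\to\pi_\theta$. This gives $\pi_\theta V\le \frac{b}{1-\beta}$.

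\textbf{Step 4: uniform geometric rate (main obstacle).} The rate must be uniform over $\theta\in K_{M,\Delta}$. I would apply a quantitative bound that uses only the drift constants and the minorization constants. One option is the Hairer–Mattingly weighted-norm contraction. The other is Douc–Moulines–Priouret–Soulier, Theorem 19.4.1, which gives $\|P^n(\Lambda,\cdot)-\pi\|_V\le L\rho^n V(\Lambda)$ with $L,\rho$ explicit in $(\beta,b,m,\epsilon,R)$.

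The minorization holds only at lag $m$, so I would first apply the theorem to the skeleton $P_\theta^m$. Its drift follows by iterating: $P_\theta^mV\le\beta^mV+b/(1-\beta)$, with the radius $R$ enlarged accordingly and Step 2 redone for that $R$. The resulting skeleton bound is then lifted to all $n$ using $\|P_\theta^r\|_V\le \beta+b$ for $r<m$. Since every constant depends only on $M,\Delta,\lambda,C$ and the spread-out data, a single $L>1$ works for all $\theta$. Enlarging $L$ lets us write the rate as $1-L^{-1}$.

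\textbf{Step 5: replacing $V$ by $V^\alpha$.} The same argument goes through using the second part of Theorem \ref{theorem_drift_condition_inequality}, namely $P_\theta V^\alpha\le\beta_\alpha V^\alpha+b$. Redoing Steps 1–4 with this drift gives the constant $L_\alpha$.
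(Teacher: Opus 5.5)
Your proposal is correct and follows essentially the paper's route. The common ingredients are the uniform drift from Theorem~\ref{theorem_drift_condition_inequality}, a $\theta$-uniform minorization on a sublevel ball of $V$ for a skeleton $P_\theta^d$ via Lemma~\ref{lemma_spread_out_on_sphere}, a quantitative geometric-ergodicity bound whose constants depend only on these data, and a final lift from the skeleton to all $n$; the paper's Lemma~\ref{lemma_simultaneous_geometric_ergodicity} plays the role of your Hairer--Mattingly/DMPS citation, and it avoids re-tuning the radius by noting that the $d$-step drift constants satisfy $b\frac{1-\beta^d}{1-\beta}/(1-\beta^d)=b/(1-\beta)$.

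The only slip is the lifting constant: iterating the drift gives $\|P_\theta^r\|_V\le \beta^r+b\frac{1-\beta^r}{1-\beta}\le 1+\frac{b}{1-\beta}$, not $\beta+b$. This is still uniform in $\theta$ and $r$, so nothing else changes.
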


\section{Proofs for General Framework of Randomization Procedure}

\subsection{Proof of Theorem \ref{theorem_drift_condition_inequality}}

\label{subsec_proof_theorem_drift_condition_inequality}

In this subsection, the primary objective is to establish an upper bound for the expression
\begin{equation}
    \mathbb{E}_\theta \left[ e^{\lambda_1 (\|\Lambda_1\| - \|\Lambda_0\|) } \mid \Lambda_0 = \Lambda \right] \label{eq_target_to_bound}
\end{equation}
when $\Lambda \in W_\Gamma$ is sufficiently large and $\lambda_1 > 0$ is a constant determined by the constants $M$, $\Delta$, $\lambda$ and $C$ in Assumption \ref{assumption_x_sub_exponential_bound}.
Thus, the bound is universal for $\theta \in K_{M,\Delta}$.

For any parameter $\theta \in K_{M,\Delta}$ and the initial state $\Lambda_0$, the following random variable $\Lambda_1 \sim P_\theta(\Lambda_0, \cdot)$.
It holds that $\mathbb{E} \|X_1\| < \infty$.
Thus, there exists some sufficiently large number $M_1>0$,
\begin{align*}
    & \quad \mathbb{E}_\theta \left[ \| \Lambda_1 - \Lambda_0 \| \mathbb{I} \left( \|\Lambda_1 - \Lambda_0\| > M_1 \right) \mid \Lambda_0=\Lambda \right] \\
    & \leq \mathbb{E}_\theta \left[ \|X_1\| \mathbb{I} \left( \|X_1\| > M_1 \right) \mid \Lambda_0=\Lambda \right]
    \leq \frac{\Delta}{2},
\end{align*}
where $\Delta$ is the constant in the expression $K_{M,\Delta}$.

From here, we begin the main part of the proof. First, we split the expression (\ref{eq_target_to_bound}) into
\begin{equation*}
    \mathbb{E}_\theta \left[ e^{\lambda_1 (\|\Lambda_1\| - \|\Lambda_0\|) } \mathbb{I} \left( \|\Lambda_1 - \Lambda_0\| > M_2 \right) \mid \Lambda_0 = \Lambda \right],
\end{equation*}
and
\begin{equation*}
    \mathbb{E}_\theta \left[ e^{\lambda_1 (\|\Lambda_1\| - \|\Lambda_0\|) } \mathbb{I} \left( \|\Lambda_1 - \Lambda_0\| \leq M_2 \right) \mid \Lambda_0 = \Lambda \right],
\end{equation*}
for some $M_2 > 0$.

For the former expectation, denote the constant $C=\mathbb{E} e^{\lambda \|X_1\|} < \infty$ for some constant $\lambda>0$, then we can show that for any positive number $M_2 > 0$ and $\lambda_1 \in (0,\lambda)$,
\begin{align*}
    & \quad \mathbb{E}_\theta \left[ e^{\lambda_1 (\|\Lambda_1\| - \|\Lambda_0\|) } \mathbb{I} \left( \|\Lambda_1 - \Lambda_0\| > M_2 \right) \mid \Lambda_0 = \Lambda \right] \\
    & \leq \mathbb{E}_\theta \left[ e^{\lambda_1 (\|\Lambda_1 - \Lambda_0\|) } \mathbb{I} \left( \|\Lambda_1 - \Lambda_0\| > M_2 \right) \mid \Lambda_0 = \Lambda \right] \\
    & \leq \mathbb{E}_\theta \left[ e^{\lambda (\|\Lambda_1 - \Lambda_0\|) } e^{(\lambda_1-\lambda) M_2} \mathbb{I} \left( \|\Lambda_1 - \Lambda_0\| > M_2 \right) \mid \Lambda_0 = \Lambda \right] \\
    & \leq \mathbb{E}_\theta \left[ e^{\lambda (\|\Lambda_1 - \Lambda_0\|) } e^{(\lambda_1-\lambda) M_2} \mid \Lambda_0 = \Lambda \right] \\
    & \leq \mathbb{E}_\theta \left[ e^{\lambda \|X_1\| } e^{(\lambda_1-\lambda) M_2} \mid \Lambda_0 = \Lambda \right] \\
    & \leq C e^{(\lambda_1-\lambda) M_2} .
\end{align*}

For the latter expectation, we try to linearize the term in it. We first control the error between the exponent and the linear term of $\|\Lambda_1\| - \|\Lambda_0 \|$ under the condition $\|\Lambda_1 - \Lambda_0\| \leq M_2$:
\begin{align*}
    \left| e^{\lambda_1 (\|\Lambda_1\| - \|\Lambda_0\|)}
    - [ 1 + \lambda_1 (\|\Lambda_1\| - \|\Lambda_0\|) ] \right|
    & \leq \left[ \lambda_1 (\|\Lambda_1\| - \|\Lambda_0\|) \right]^2
    \leq \left( \lambda_1 \|\Lambda_1 - \Lambda_0\| \right)^2 \\
    & \leq \left( \lambda_1 M_2 \right)^2, 
\end{align*}
under the condition $\lambda_1 M_2 \leq \inf_{x \in \mathbb{R}} \{|x| \mid |e^x-1-x|>x^2 \}$.

By treating $\| \Lambda_1 \| - \| \Lambda_0 \| = \|\Lambda_1 - \Lambda_0 + \Lambda_0\| - \| \Lambda_0 \|$ as a function of the perturbation $\Lambda_1 - \Lambda_0$ and the initial state $\Lambda_0$, we approximate it to first order by $(\Lambda_1 - \Lambda_0)^T \frac{\Lambda_0}{\|\Lambda_0\|}$ and provide a bound on the resulting approximation error under the condition $\|\Lambda_1 - \Lambda_0\| \leq M_2$:
\begin{align*}
    & \quad \left| \|\Lambda_1\| - \|\Lambda_0\| - (\Lambda_1 - \Lambda_0)^T \frac{\Lambda_0}{\|\Lambda_0\|} \right| \\
    &= \left| \frac{2 (\Lambda_1-\Lambda_0)^T \Lambda_0 + \|\Lambda_1-\Lambda_0\|^2}{\|\Lambda_1\| + \|\Lambda_0\|} - (\Lambda_1 - \Lambda_0)^T \frac{\Lambda_0}{\|\Lambda_0\|} \right| \\
    &= \left| (\Lambda_1 - \Lambda_0)^T \Lambda_0 \left( \frac{2}{\|\Lambda_1\| + \|\Lambda_0\|} - \frac{1}{\|\Lambda_0\|} \right) + \frac{\|\Lambda_1-\Lambda_0\|^2}{\|\Lambda_1\| + \|\Lambda_0\|} \right| \\
    & \leq \left| \| \Lambda_1 - \Lambda_0 \| \| \Lambda_0 \| \frac{\|\Lambda_0\| - \|\Lambda_1\|}{(\|\Lambda_1\| + \|\Lambda_0\|) \|\Lambda_0\|} \right| + \left| \frac{\|\Lambda_1-\Lambda_0\|^2}{\|\Lambda_1\| + \|\Lambda_0\|} \right| \\
    & \leq \frac{\| \Lambda_1 - \Lambda_0 \|^2}{\|\Lambda_1\| + \|\Lambda_0\|} + \frac{\|\Lambda_1-\Lambda_0\|^2}{\|\Lambda_1\| + \|\Lambda_0\|} \\
    & \leq \frac{2 M_2^2}{\|\Lambda_0\|} .
\end{align*}

The above two inequalities are summarized as
\begin{align*}
    e^{\lambda_1 (\|\Lambda_1\| - \|\Lambda_0\|)}
    & \leq 1 + \lambda_1 (\|\Lambda_1\| - \|\Lambda_0\|) + \left( \lambda_1 M_2 \right)^2 \\
    & \leq 1 + \lambda_1
    \left[ (\Lambda_1 - \Lambda_0)^T \frac{\Lambda_0}{\|\Lambda_0\|} + \frac{2 M_2^2}{\|\Lambda_0\|} \right]
    + \left( \lambda_1 M_2 \right)^2 ,
\end{align*}
when $\|\Lambda_1 - \Lambda_0\| \leq M_2$ and $\lambda_1 M_2 \leq \inf_{x \in \mathbb{R}} \{|x| \mid |e^x-1-x|>x^2 \}$.

Thus, if $\lambda_1 M_2 \leq \inf_{x \in \mathbb{R}} \{|x| \mid |e^x-1-x|>x^2 \}$,
\begin{align*}
    & \quad \mathbb{E}_\theta \left[ e^{\lambda_1 (\|\Lambda_1\| - \|\Lambda_0\|) } \mid \Lambda_0 = \Lambda \right] \\
    & \leq C e^{(\lambda_1-\lambda) M_2} + \mathbb{E}_\theta \left[ e^{\lambda_1 (\|\Lambda_1\| - \|\Lambda_0\|) } \mathbb{I} \left( \|\Lambda_1 - \Lambda_0\| \leq M_2 \right) \mid \Lambda_0 = \Lambda \right] \\
    & \leq C e^{(\lambda_1-\lambda) M_2} \\
    & \quad + \mathbb{E}_\theta \left[
    \left[ 1 + \lambda_1(\Lambda_1 - \Lambda_0)^T \frac{\Lambda_0}{\|\Lambda_0\|} + \lambda_1^2 M_2^2 + \frac{2 \lambda_1 M_2^2}{\|\Lambda_0\|} \right]
    \mathbb{I} \left( \|\Lambda_1 - \Lambda_0\| \leq M_2 \right)
    \mid \Lambda_0 = \Lambda \right] \\
    & \leq \lambda_1 \mathbb{E}_\theta \left[ (\Lambda_1 - \Lambda_0)^T \frac{\Lambda_0}{\|\Lambda_0\|} \mid \Lambda_0 = \Lambda \right] + \lambda_1 \mathbb{E}_\theta \left[ \|\Lambda_1 - \Lambda_0\| \mathbb{I} \left( \|\Lambda_1 - \Lambda_0\| > M_2 \right) \mid \Lambda_0 = \Lambda \right] \\
    & \quad + C e^{(\lambda_1-\lambda) M_2} + 1 + \lambda_1^2 M_2^2 + \frac{2 \lambda_1 M_2^2}{\|\Lambda\|} .
\end{align*}
When $\theta \in K_{M,\Delta}$, $\Lambda \in W_\Gamma$, $\|\Lambda\| \geq M$ and $M_2 \geq M_1$, it can be further bounded by
\begin{align}
    & \quad \lambda_1 \cdot (-\Delta) + \lambda_1 \frac{\Delta}{2} + C e^{(\lambda_1-\lambda) M_2} + 1 + \lambda_1^2 M_2^2 + \frac{2 \lambda_1 M_2^2}{\|\Lambda\|} \label{eq_target_bound} \\
    &= -\frac{\lambda_1 \Delta}{2} + C e^{(\lambda_1-\lambda) M_2} + 1 + \lambda_1^2 M_2^2 + \frac{2 \lambda_1 M_2^2}{\|\Lambda\|} \notag .
\end{align}

In summary, we have supposed
\begin{equation}
    \label{eq_inequalities_drift_condition_1}
    \begin{cases}
        \lambda_1 M_2 & \leq \inf_{x \in \mathbb{R}} \{|x| \mid |e^x-1-x|>x^2 \} , \\
        M_2 & \geq M_1 , \\
        \|\Lambda\| & \geq M .
\end{cases}
\end{equation}

Furthermore, we suppose
\begin{equation}
    \label{eq_inequalities_drift_condition_2}
    \begin{cases}
        \lambda_1 &= \frac{1}{M_2^3} , \\
        \lambda_1 & \leq \frac{\lambda}{2} , \\
        \|\Lambda\| & \geq M_2^3 , \\
        M_2^3 & \geq M .
    \end{cases}
\end{equation}

Under conditions (\ref{eq_inequalities_drift_condition_1}) and (\ref{eq_inequalities_drift_condition_2}), the equation (\ref{eq_target_bound}) has an upper bound that
\begin{equation}
    -\frac{\Delta}{2M_2^3} + C e^{-\frac{\lambda M_2}{2}} + 1 + \frac{1}{M_2^4} + \frac{2}{M_2^4}
    = 1 + \frac{1}{M_2^3} \left[ -\frac{\Delta}{2} + C M_2^3 e^{-\frac{\lambda M_2}{2}} + \frac{3}{M_2} \right] \label{eq_upperbound_drift_condition} .
\end{equation}

Moreover, conditions (\ref{eq_inequalities_drift_condition_1}) and (\ref{eq_inequalities_drift_condition_2}) are equivalent to
\begin{equation}
    M_2 \geq \max \left\{ M^{\frac{1}{3}}, \left(\frac{2}{\lambda}\right)^{\frac{1}{3}}, M_1,
    \left[ \inf_{x \in \mathbb{R}} \{|x| \mid |e^x-1-x|>x^2 \} \right]^{-\frac{1}{2}} \right\} \label{eq_M2_lower_bound}
\end{equation}
and $\lambda_1 = \frac{1}{M_2^3}$, $\|\Lambda\| \geq M_2^3$. Since $C M_2^3 e^{-\frac{\lambda M_2}{2}} + \frac{3}{M_2} \rightarrow 0$ as $M_2 \rightarrow +\infty$, there exists a sufficiently large $M_2^*$ such that $M_2^*$ satisfies (\ref{eq_M2_lower_bound}) and for any $M_2 \geq M_2^*$, the inequality holds that
\begin{equation*}
    -\frac{\Delta}{2} + C M_2^3 e^{-\frac{\lambda M_2}{2}} + \frac{3}{M_2}
    < -\frac{\Delta}{4} .
\end{equation*}

In summary, we have proved that if $M_2 \geq M_2^*$, then for any $\Lambda \in W_\Gamma$ satisfying $\|\Lambda\| \geq M_2^3$, it holds that
\begin{equation*}
    \mathbb{E}_\theta \left[ e^{\frac{1}{M_2^3} (\|\Lambda_1\| - \|\Lambda_0\|) } \mid \Lambda_0 = \Lambda \right]
    \leq 1 + \frac{1}{M_2^3} \left[ -\frac{\Delta}{2} + C M_2^3 e^{-\frac{\lambda M_2}{2}} + \frac{3}{M_2} \right]
    < 1 .
\end{equation*}
On the other hand, for any $\Lambda$ satisfying $\|\Lambda\| < M_2^3$, it holds that
\begin{equation*}
    \mathbb{E}_\theta \left[ e^{\frac{1}{M_2^3} \|\Lambda_1\| } \mid \Lambda_0 = \Lambda \right]
    \leq \mathbb{E}_\theta \left[ e^{\frac{1}{M_2^3} \|\Lambda_1 - \Lambda_0\| } \mid \Lambda_0 = \Lambda \right]
    e^{\frac{1}{M_2^3} \|\Lambda\|}
    \leq \mathbb{E} e^{\frac{1}{M_2^3} \|X_1\| } \cdot e
    \leq C e ,
\end{equation*}
because $\frac{1}{M_2^3} < \lambda$. These two inequalities lead to
\begin{equation}
    \label{eq_origin_drift_condition}
    \mathbb{E}_\theta \left[ e^{\frac{1}{M_2^3} \|\Lambda_1\|} \mid \Lambda_0 = \Lambda \right]
    \leq \left\{ 1 + \frac{1}{M_2^3} \left[ -\frac{\Delta}{2} + C M_2^3 e^{-\frac{\lambda M_2}{2}} + \frac{3}{M_2} \right] \right\}
    e^{\frac{1}{M_2^3} \|\Lambda\|} + C e .
\end{equation}

Therefore, taking $\lambda_1=\frac{1}{(M_2^*)^3}$, $b = C e$ and
\begin{equation*}
    \beta
    = 1 + \frac{1}{(M_2^*)^3} \left[ -\frac{\Delta}{2} + C (M_2^*)^3 e^{-\frac{\lambda M_2^*}{2}} + \frac{3}{M_2^*} \right]
    < 1-\frac{\Delta}{4(M_2^*)^3}
    < 1 ,
\end{equation*}
the inequality
\begin{equation*}
    \mathbb{E}_\theta \left[ e^{\lambda_1 \|\Lambda_1\|} \mid \Lambda_0 = \Lambda \right] \leq \beta e^{\lambda_1 \|\Lambda\|} + b
\end{equation*}
holds for any $\theta \in K_{M,\Delta}$. Denote $\exp(\lambda_1\|\Lambda\|)$ in this situation by $V(\Lambda)$.

Finally, for any $\alpha \in (0,1]$ and $M_2 = \frac{M_2^*}{\alpha^{1/3}}$, the left-hand side of the inequality (\ref{eq_origin_drift_condition}) is
\begin{equation*}
    \mathbb{E}_\theta \left[ e^{\frac{1}{M_2^3} \|\Lambda_1\|} \mid \Lambda_0 = \Lambda \right]
    = \mathbb{E}_\theta \left[ e^{\lambda_1 \alpha \|\Lambda_1\|} \mid \Lambda_0 = \Lambda \right]
    = \mathbb{E}_\theta \left[ V^\alpha(\Lambda_1) \mid \Lambda_0 = \Lambda \right]
    = (P_\theta V^\alpha)(\Lambda) .
\end{equation*}
Thus, the inequality (\ref{eq_origin_drift_condition}) with $M_2 = \frac{M_2^*}{\alpha^{1/3}}$ implies that the inequality $P_\theta V^{\alpha} \leq \beta_{\alpha} V^{\alpha} + b_{\alpha}$ will hold for some $\beta_{\alpha} \in (0,1)$ and $b_{\alpha}>0$. The constants $\beta_{\alpha}$ and $b_{\alpha}$ can be obtained by
\begin{align*}
    \beta_{\alpha}
    &= 1 + \frac{1}{M_2^3} \left[ -\frac{\Delta}{2} + C M_2^3 e^{-\frac{\lambda M_2}{2}} + \frac{3}{M_2} \right] \\
    &= 1 + \frac{\alpha}{(M_2^*)^3} \left[ -\frac{\Delta}{2} + \frac{C (M_2^*)^3 e^{-\frac{\lambda M_2^*}{2\alpha^{1/3}}}}{\alpha} + \frac{3\alpha^{1/3}}{M_2^*} \right] , \\
    b_{\alpha} &= C e .
\end{align*}
Thus, the inequality $P_\theta V^\alpha \leq \beta_\alpha V^\alpha + b_\alpha$ holds with a constant parameter $b_\alpha=b$ and a continuous parameter $\beta_\alpha<1$ whose limit is $1$ as $\alpha \rightarrow 0$.

\subsection{Proof of Theorem \ref{theorem_boundedness}}

\label{subsec_proof_theorem_boundedness}

The proof of Theorem \ref{theorem_boundedness} is based on Lemma \ref{lemma_convergence_V_as}.
Theorem \ref{theorem_drift_condition_inequality} implies that
\begin{equation*}
    (P_\theta V^\alpha)(\Lambda) \leq \beta_\alpha V^\alpha(\Lambda) + b_\alpha
\end{equation*}
for any $\theta \in K_{M,\Delta}$ and $\alpha \in (0,1]$ with $V(\Lambda) = \exp(\lambda_1 \|\Lambda\|)$.
Suppose $\alpha = \min(1,\frac{\lambda}{\lambda_1})$, where the constant $\lambda$ is defined in Assumption \ref{assumption_x_sub_exponential_bound}.
By Assumption \ref{assumption_x_sub_exponential_bound}, the definition $V(\Lambda) = \exp(\lambda_1 \|\Lambda\|)$, and Jensen's inequality, the inequality
\begin{align*}
    \mathbb{E} [V^\alpha(\Lambda_n)]
    & \leq \mathbb{E} \left[ V^\alpha(\Lambda_0 + \sum_{i=1}^n (T_i - \rho) X_i) \right]
    = \mathbb{E} \left[ \exp(\alpha \lambda_1 \|\Lambda_0 + \sum_{i=1}^n (T_i - \rho) X_i\|) \right] \\
    & \leq \mathbb{E} \left[ \exp(\alpha \lambda_1 ( \|\Lambda_0\| + \sum_{i=1}^n \|(T_i - \rho) X_i\| ) ) \right] \\
    & \leq \mathbb{E} \left[ \exp(\alpha \lambda_1 ( \|\Lambda_0\| + \sum_{i=1}^n \| X_i\| ) ) \right]
    = \mathbb{E} \left[ \exp(\alpha \lambda_1 \|\Lambda_0\|) \prod_{i=1}^n \exp(\alpha \lambda_1 \|X_i\|) \right] \\
    & \leq \mathbb{E} \left[ \exp(\alpha \lambda_1 \|\Lambda_0\|) \right]
    \prod_{i=1}^n \mathbb{E} \left[ \exp(\lambda \|X_i\|) \right]
    = C^n \mathbb{E} [V^\alpha(\Lambda_0)]
    < \infty
\end{align*}
holds for any $n \geq 1$, where the constant $C$ is also defined in Assumption \ref{assumption_x_sub_exponential_bound}.
Therefore, $\mathbb{E} [V^\alpha(\Lambda_n)]$ is finite for any $n$.

Taking $S_\Theta = K_{M,\Delta}$, we have $P(\theta_n \notin S_\Theta \text{ i.o.}) = 0$ because $\theta_n \in K_{M,\Delta}$ for sufficiently large $n$ almost surely.
Then based on Lemma \ref{lemma_convergence_V_as} with $S_\Theta = K_{M,\Delta}$, we have $V^\alpha(\Lambda_n) = O_P(1)$. It implies that $\{\Lambda_n\}$ is bounded in probability.

\subsection{Proof of Lemma \ref{lemma_spread_out_on_sphere}}

In the proof, we write $\mu \geq \nu$ if the signed measure $\mu-\nu$ is a positive measure.
Recall that in the standard Markov chain setting, a transition kernel $P$ is a transition probability kernel, meaning that $P(\Lambda,\cdot)$ is a probability measure for each state $\Lambda$, so that $P(\Lambda,\mathrm{X}) = 1$, where $\mathrm{X}$ denotes the entire state space. 
In this article, however, we adopt a more general notion of transition kernel, allowing $P(\Lambda,\cdot)$ to be an arbitrary measure.
Consequently, it may not always hold that $P(\Lambda,\mathrm{X}) = 1$.
Furthermore, for two transition kernels $P$ and $Q$, we write $P \geq Q$ if, for every state $\Lambda$, the inequality $P(\Lambda,\cdot) \geq Q(\Lambda,\cdot)$ holds in the sense of measures.

We denote by $K_{\mathrm{rw}}$ the transition kernel representing complete randomization for allocating units according to an unequal targeted allocation ratio $\rho$. This kernel is equivalent to the Markov chain of the random walk with increment $\Gamma_\rho$.

We separate the proof into three parts.

\textbf{Spread-Out Condition}

\begin{lemma}
    \label{lemma_spread_out}
    If Assumption \ref{assumption_x_spread_out} holds, there exists some integer $d_{\mathrm{rw}} \geq 1$, positive numbers $\delta$ and $R$ such that
    \begin{equation*}
        K_{\mathrm{rw}}^{d_{\mathrm{rw}}} (\Lambda, \cdot) \geq \delta \mu_{\mathrm{leb}}(\cdot \cap B(\Lambda,R))
    \end{equation*}
    for any $\Lambda$.
\end{lemma}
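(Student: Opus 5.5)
The approach is to exploit the fact that $K_{\mathrm{rw}}$ is the kernel of a random walk with increment law $\Gamma_\rho$, so that $K_{\mathrm{rw}}^{n}(\Lambda,\cdot)=\Gamma_\rho^{n*}(\cdot-\Lambda)$. It therefore suffices to prove a translation-invariant statement: there exist $d_{\mathrm{rw}}$, $\delta>0$, $R>0$ and a point $c$ with
\begin{equation*}
\Gamma_\rho^{d_{\mathrm{rw}}*}(A)\geq \delta\,\mu_{\mathrm{leb}}(A\cap B(0,R)) \quad \text{for every Borel } A .
\end{equation*}
A first version of this will be centered at some point $c$ rather than at $0$, and the main work is to move the center to the origin.

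\textbf{Step 1: a density bounded below on a ball.} By Assumption \ref{assumption_x_spread_out}, $\Gamma_\rho^{d_s*}\geq f_s\,\mu_{\mathrm{leb}}$ with $\int f_s>0$. Truncating, $\min(f_s,1)\mathbb{I}_{B(0,r)}$ is a bounded, integrable, nonzero function for some large $r$. Convolving this function with itself gives a continuous function that is not identically zero. Hence $\Gamma_\rho^{2d_s*}\geq \delta_0\,\mu_{\mathrm{leb}}(\cdot\cap B(c,R_0))$ for some $c\in\mathbb{R}^d$, $\delta_0>0$, $R_0>0$.

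\textbf{Step 2: recentering at the origin.} This is the main obstacle. The plan is to compensate the translation by $c$ using additional convolution factors of $\Gamma_\rho$ that carry mass near $-c$. The key fact is that $\Gamma_\rho$ has mean zero:
\begin{equation*}
\rho(1-\rho)\,\mathbb{E}X-(1-\rho)\rho\,\mathbb{E}X=0 .
\end{equation*}
I first try to write $-c$ approximately as a sum of finitely many points from the support of $\Gamma_\rho$. Mean zero implies that $0$ lies in the closed convex hull of the support of $\Gamma_\rho$. Combined with the spread-out property, this should make the closed additive semigroup generated by the support of $\Gamma_\rho^{2d_s*}$ a group containing a neighborhood of $0$, and hence all of $\mathbb{R}^d$. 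If the abstract semigroup argument is awkward, the fallback is to use the explicit structure: the support contains both $(1-\rho)x$ and $-\rho x$ for each $x$ in the support of $\Gamma$, so scaling and sums should produce points near $-c$ directly.

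\textbf{Step 3: combining the pieces.} Given $-c$ in the closure of the support of $\Gamma_\rho^{k*}$ for some $k$, the set $B(-c,R_0/2)$ has $\Gamma_\rho^{k*}$-mass $\eta>0$. Convolving with the lower bound from Step 1 then gives, for $\|y\|<R_0/2$,
\begin{equation*}
\Gamma_\rho^{(2d_s+k)*}\geq \eta\,\delta_0\,\mu_{\mathrm{leb}}\bigl(\cdot\cap B(0,R_0/2)\bigr),
\end{equation*}
since for each $u\in B(-c,R_0/2)$ the ball $B(u+c,R_0)$ contains $B(0,R_0/2)$. I then set $d_{\mathrm{rw}}=2d_s+k$, $\delta=\eta\delta_0$ and $R=R_0/2$, and translating by $\Lambda$ gives the stated bound for every $\Lambda$.
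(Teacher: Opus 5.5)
\textbf{There is a genuine gap in Step 2.} Steps 1 and 3 are fine. Step 1 is essentially the paper's use of the continuity of $f_{s,c}^{2*}$, and your convolution estimate in Step 3 is correct once some $\Gamma_\rho^{k*}$ charges $B(-c,R_0/2)$. But Step 2, which you rightly call the heart of the lemma, is asserted rather than proved, and the reasoning you give does not work as stated.

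The property you extract, that $0$ lies in the closed convex hull of the support, is too weak even combined with spread-out. Take $\frac12\delta_0$ plus $\frac12$ times the uniform law on $[1,2]\times[0,1]$. It is spread out and has $0$ in its support, yet every partial sum stays in $\{y_1\ge 0\}$, so the closed semigroup is not a group. What you actually need is that $\mathrm{supp}\,\Gamma_\rho$ lies in no closed half-space. Here that follows from the pairing of $(1-\rho)x$ with $-\rho x$ together with $W_\Gamma=\mathbb{R}^d$. The latter must itself be derived from the spread-out assumption, as the paper does by showing $\Gamma(W)<1$ for every proper subspace $W$. Mean zero would also give it, but only if $\mathbb{E}\|X\|<\infty$, which this lemma does not assume. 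Even granting the half-space property, you still need a further argument to pass from ``$-c$ is a nonnegative \emph{real} combination of support points'' to ``$-c$ is within $R_0/2$ of a nonnegative \emph{integer} combination'', and you do not supply one.

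Your fallback runs into a concrete obstruction: you cannot scale, only add. Nonnegative integer sums of $(1-\rho)x$ and $-\rho x$ give $\{(a(1-\rho)-b\rho)x : a,b\in\mathbb{N}\}$. For rational $\rho=p/q$ this is exactly the lattice $\frac1q\mathbb{Z}x$. So finitely many support points $x_1,\dots,x_d$ only produce a net of fixed mesh (the paper bounds it by $2\sum_i\|x_i\|$), which need not be finer than $R_0/2$. Such sums therefore cannot be guaranteed to land in $B(-c,R_0/2)$.

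The idea you are missing, and which the paper uses, is to let the radius grow. Convolving the $2d_s$-step bound $n$ times gives a Lebesgue lower bound on a ball around $nc$ of radius of order $nR_0$. Once this radius exceeds the mesh, one picks a net point $a_{\mathrm{off}}$ so that the shifted ball contains a fixed neighbourhood of the origin. That net point is reached with positive probability through small balls around the $x_i$.

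In your framework the same fix reads as follows. We have $B(nc,nR_0)\subset\mathrm{supp}\,\Gamma_\rho^{2d_s n*}$, and adding reachable net points covers $\mathbb{R}^d$ once $nR_0$ exceeds the mesh. Hence some $\Gamma_\rho^{k*}$ gives $B(-c,R_0/2)$ positive mass, and your Step 3 then goes through unchanged.
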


\begin{proof}[Proof of Lemma \ref{lemma_spread_out}]
    Owing to the translation invariance of the random walk, it suffices to prove that
    \begin{equation*}
        K_{\mathrm{rw}}^{d_{\mathrm{rw}}} (0, \cdot) \geq \delta \mu_{\mathrm{leb}}(\cdot \cap B(0,R)) .
    \end{equation*}

    Under Assumption \ref{assumption_x_spread_out}, Proposition 8.8 in (\cite{follandRealAnalysisModern1999}) implies that the function $f = f_{s,c}^{2*}$, i.e., the convolution of $f_{s,c}$ with itself, is bounded and uniformly continuous, where $f_{s,c}$ is defined as $f_{s,c}(x) = f_s(x)\mathbb{I}(f_s(x) < c)$, for some $c>0$ satisfying $\mu_{\mathrm{leb}}(\{x \mid f_s(x) < c\}) > 0$.

    Thus, there exists some point $\Lambda$ such that $f(\Lambda) > 0$. By the continuity of $f$, there exists $r>0$ and $\delta>0$ such that for any $x \in B(\Lambda,r)$, $f(x) > \delta$.

    Let $d_{\mathrm{sp}} = 2d_s$. Then it holds that
    \begin{align*}
        K_{\mathrm{rw}}^{d_{\mathrm{sp}}}(0, A)
        &= (\Gamma_\rho^{d_{s}*})^{2*}(A)
        \geq \int_A f_s^{2*}(\Lambda) \mu_{\mathrm{leb}}(d\Lambda) \\
        & \geq \int_A f(\Lambda) \mu_{\mathrm{leb}}(d\Lambda)
        \geq \delta \mu_{\mathrm{leb}}(A \cap B(\Lambda,r)) ,
    \end{align*}
    which is a slight variation of the conclusion of the lemma.

    If there exists a proper linear subspace $W \subset \mathbb{R}^d$ such that $\Gamma(W) = 1$, then it follows that $\Gamma_\rho(W) = 1$.
    Since $\Gamma_\rho(\cdot) = K_{\mathrm{rw}}(0, \cdot)$, the equality $\Gamma_\rho(W) = 1$ is equivalent to $K_{\mathrm{rw}}(0, W) = 1$.
    By the definition of $K_{\mathrm{rw}}$, we have that for any $\Lambda \in W$, $K_{\mathrm{rw}}(\Lambda, W) = 1$, because the transition kernel is translation-invariant.
    Thus, by recursion, $K_{\mathrm{rw}}(\Lambda, W) = 1$ implies that $K_{\mathrm{rw}}^{d_{\mathrm{sp}}}(\Lambda, W) = 1$ for any $\Lambda \in W$.
    In particular, we have $K_{\mathrm{rw}}^{d_{\mathrm{sp}}}(0, W) = 1$, which contradicts the lower bound
    \begin{equation*}
        K_{\mathrm{rw}}^{d_{\mathrm{sp}}}(0, W)
        = 1 - K_{\mathrm{rw}}^{d_{\mathrm{sp}}}(0, W^c)
        \leq 1 - \delta \mu_{\mathrm{leb}}(W^c \cap B(\Lambda,r)) < 1
    \end{equation*}
    as the Lebesgue measure of any proper linear subspace is zero.
    Therefore, we conclude that for any proper linear subspace $W \subset \mathbb{R}^d$, $\Gamma(W) < 1$.

    Denote the support of the distribution of $\Gamma$ by $A_{\mathrm{Supp},\Gamma}$. By $\Gamma(W) < 1$ for any proper linear subspace $W \subset \mathbb{R}^d$, it can be shown that there are $d$ points $x_1$, $\dots$, $x_d$ which are linearly independent and belong to $A_{\mathrm{Supp},\Gamma}$.

    Let the point set $A_d$ and $A_c$ denote
    \begin{equation*}
        \left\{ \sum_{i=1}^d ( z_i\rho + z_i^\prime(\rho-1))x_i \mid z_i,z_i^\prime\in\mathbb{N} \right\}
    \end{equation*}
    and
    \begin{equation*}
        \left\{ \sum_{i=1}^d ( w_i\rho + w_i^\prime(\rho-1))x_i \mid w_i,w_i^\prime\in\mathbb{R}^+ \right\} ,
    \end{equation*}
    respectively.
    Because $x_1$, $\dots$, $x_d$ are linearly independent, the set $A_c$ is $\mathbb{R}^d$ exactly.
    By taking $z_i = \lfloor w_i \rfloor$ and $z_i^\prime = \lfloor w_i^\prime \rfloor$, we have that
    \begin{align*}
        & \quad d \left( \sum_{i=1}^d ( w_i\rho + w_i^\prime(\rho-1))x_i, \sum_{i=1}^d ( z_i\rho + z_i^\prime(\rho-1))x_i \right) \\
        & \leq \sum_{i=1}^d \left[ |w_i-z_i|\rho+|w_i^\prime-z_i^\prime|(1-\rho) \right] \|x_i\|
        \leq 2 \sum_{i=1}^d \|x_i\|
        < \infty .
    \end{align*}
    Thus,
    \begin{align*}
        \sup_{x \in \mathbb{R}^d} \inf_{y \in A_d} d(x,y)
        &= \sup_{w_i,w_i^\prime\in\mathbb{R}^+} \inf_{z_i,z_i^\prime\in\mathbb{N}} d
        \left( \sum_{i=1}^d ( w_i\rho + w_i^\prime(\rho-1))x_i,
        \sum_{i=1}^d ( z_i\rho + z_i^\prime(\rho-1))x_i \right) \\
        & \leq \sup_{w_i,w_i^\prime\in\mathbb{R}^+} d
        \left(\sum_{i=1}^d ( w_i\rho + w_i^\prime(\rho-1))x_i,
        \sum_{i=1}^d ( \lfloor w_i \rfloor \rho + \lfloor w_i^\prime \rfloor (\rho-1))x_i \right) \\
        & \leq 2 \sum_{i=1}^d \|x_i\|
        < \infty .
    \end{align*}
    It implies that when the integer $n$ is large enough, $nr-r/2>2 \sum_{i=1}^d \|x_i\|$. Thus, $B(n\Lambda,nr-r/2) \cap A_d$ is nonempty.
    Furthermore, it implies that the set $B(n\Lambda+\sum_{i=1}^d (-z_i\rho + z_i^\prime(1-\rho))x_i,nr-r/2)$ contains the origin point for some positive integer $n$ and some nonnegative integers $z_i$ and $z_i^\prime$.
    We fix the nonnegative integers $\left\{z_i, z_i^\prime \mid i \in \{1,\dots,d\} \right\}$ and the positive integer $n$.

    Because
    \begin{equation*}
        K_{\mathrm{rw}}^{d_{\mathrm{sp}}}(0, A)
        \geq \delta \mu_{\mathrm{leb}}(A \cap B(\Lambda,r)) ,
    \end{equation*}
    it can be shown by convolution that when $d_{\mathrm{sp}}^\prime=n d_{\mathrm{sp}}$,
    \begin{equation}
        K_{\mathrm{rw}}^{d_{\mathrm{sp}}^\prime}(0, A)
        \geq \delta^\prime \mu_{\mathrm{leb}}(A \cap B(n\Lambda,nr-r/4)) \label{eq_convolution_dsp_large} ,
    \end{equation}
    for some positive number $\delta^\prime$ depending on $d_{\mathrm{sp}}^\prime$.

    Since the points $x_1, \dots, x_d \in A_{\mathrm{Supp},\Gamma}$, we have that $\Gamma(B(x_i,r_o))>0$ for any $r_o>0$ and $i \in \{1,\dots,d\}$.
    Thus,
    \begin{align*}
        & \quad K_{\mathrm{rw}}^{\sum_{i=1}^d [|z_i|+|z_i^\prime|]}
        \left( 0, B \left( \sum_{i=1}^d (-z_i\rho + z_i^\prime(1-\rho))x_i,
        \sum_{i=1}^d [|z_i|+|z_i^\prime|] r_0 \right) \right) \\
        & \geq \prod_{i=1}^d \Gamma(B(x_i,r_o))^{|z_i|+|z_i^\prime|}
        \min(\rho,1-\rho)^{\sum_{i=1}^d [|z_i|+|z_i^\prime|]} .
    \end{align*}
    Denote the value of the right side by $\delta^{\prime\prime} > 0$, the ball
    \begin{equation*}
        B^{\prime} = B \left( \sum_{i=1}^d (-z_i\rho + z_i^\prime(1-\rho))x_i, \sum_{i=1}^d [|z_i|+|z_i^\prime|] r_0 \right)
    \end{equation*}
    and the ball centered at the origin point with the same radius
    \begin{equation*}
        B^{\prime\prime} = B \left( 0, \sum_{i=1}^d [|z_i|+|z_i^\prime|] r_0 \right) .
    \end{equation*}
    Let the offset of the center $a_{\mathrm{off}} = \sum_{i=1}^d (-z_i\rho + z_i^\prime(1-\rho))x_i$.

    By the inequality (\ref{eq_convolution_dsp_large}), it holds that
    \begin{align*}
        K_{\mathrm{rw}}^{d_{\mathrm{sp}}^\prime + \sum_{i=1}^d [|z_i|+|z_i^\prime|]}(0, A)
        &= \int
        K_{\mathrm{rw}}^{d_{\mathrm{sp}}^\prime}(x, A)
        K_{\mathrm{rw}}^{\sum_{i=1}^d [|z_i|+|z_i^\prime|]}(0,dx) \\
        & \geq \int
        \delta^\prime \mu_{\mathrm{leb}}(A \cap B(n\Lambda+x,nr-r/4))
        K_{\mathrm{rw}}^{\sum_{i=1}^d [|z_i|+|z_i^\prime|]}(0,dx) \\
        & \geq \delta^\prime \int_{x \in B^{\prime}}
        \mu_{\mathrm{leb}}(A \cap B(n\Lambda+x,nr-r/4))
        K_{\mathrm{rw}}^{\sum_{i=1}^d [|z_i|+|z_i^\prime|]}(0,dx) \\
        &= \delta^\prime \int_{y \in B^{\prime\prime}}
        \mu_{\mathrm{leb}} \left( A \cap B \left( n\Lambda+a_{\mathrm{off}}+y,nr-r/4 \right) \right) \\
        & \quad \times K_{\mathrm{rw}}^{\sum_{i=1}^d [|z_i|+|z_i^\prime|]} \left( 0, a_{\mathrm{off}} + dy \right) ,
    \end{align*}
    where $y = x - a_{\mathrm{off}}$ is a variable substitution.

    Because the ball $B(n\Lambda+\sum_{i=1}^d (-z_i\rho + z_i^\prime(1-\rho))x_i,nr-r/2)$ contains the origin point, we have
    \begin{equation*}
        B \left( y,r/4 \right)
        \subset
        B \left( n\Lambda+a_{\mathrm{off}}+y,nr-r/4 \right) .
    \end{equation*}
    Thus, if $r_0$ is sufficiently small so that $B^{\prime\prime} \subset B(0,r/8)$, it follows that
    \begin{equation*}
        B \left( 0,r/8 \right)
        \subset
        B \left( y,r/4 \right)
    \end{equation*}
    for any $y \in B^{\prime\prime}$, and
    \begin{align*}
        K_{\mathrm{rw}}^{d_{\mathrm{sp}}^\prime + \sum_{i=1}^d [|z_i|+|z_i^\prime|]}(0, A)
        & \geq \delta^\prime \int_{y \in B^{\prime\prime}}
        \mu_{\mathrm{leb}} \left( A \cap B \left( y,r/4 \right) \right)
        K_{\mathrm{rw}}^{\sum_{i=1}^d [|z_i|+|z_i^\prime|]} \left( 0, a_{\mathrm{off}} + dy \right) \\
        & \geq \delta^\prime \int_{y \in B^{\prime\prime}}
        \mu_{\mathrm{leb}} \left( A \cap B \left( 0,r/8 \right) \right)
        K_{\mathrm{rw}}^{\sum_{i=1}^d [|z_i|+|z_i^\prime|]} \left( 0, a_{\mathrm{off}} + dy \right) \\
        &= \delta^\prime
        \mu_{\mathrm{leb}} \left( A \cap B \left( 0,r/8 \right) \right)
        K_{\mathrm{rw}}^{\sum_{i=1}^d [|z_i|+|z_i^\prime|]} \left( 0, a_{\mathrm{off}} + B^{\prime\prime} \right) \\
        &= \delta^\prime
        \mu_{\mathrm{leb}} \left( A \cap B \left( 0,r/8 \right) \right)
        K_{\mathrm{rw}}^{\sum_{i=1}^d [|z_i|+|z_i^\prime|]} \left( 0, B^\prime \right) \\
        &= \delta^\prime \delta^{\prime\prime}
        \mu_{\mathrm{leb}} \left( A \cap B \left( 0,r/8 \right) \right) .
    \end{align*}

    In summary, when $d_{\mathrm{rw}} = d_{\mathrm{sp}}^\prime + \sum_{i=1}^d [|z_i|+|z_i^\prime|]$, $R = r/8 > 0$ and $\delta = \delta^\prime \delta^{\prime\prime} > 0$, the lemma holds.
\end{proof}

\textbf{Aperiodicity, Irreducibility and Small Set Condition}

Denote $\epsilon_{\mathrm{rw}} = \iota$, where $\iota$ is the constant defined in Assumption \ref{assumption_sampling}.
Hence, for any $\theta$, it holds that $P_\theta \geq \epsilon_{\mathrm{rw}} K_{\mathrm{rw}}$.
Based on the spread-out condition in Assumption \ref{assumption_x_spread_out} and Lemma \ref{lemma_spread_out}, we have shown that there exist a positive integer $d_{\mathrm{rw}}$, positive numbers $\delta$ and $R$ such that
\begin{equation*}
    K_{\mathrm{rw}}^{d_{\mathrm{rw}}} (\Lambda, \cdot) \geq \delta \mu_{\mathrm{leb}}(\cdot \cap B(\Lambda,R)).
\end{equation*}

Define the transition kernel $K_{\mathrm{leb},m}(\Lambda, \cdot) = \mu_{\mathrm{leb}}(\cdot \cap B(\Lambda,(m+1)R/2))$, which can be equivalently written as
\begin{equation*}
    K_{\mathrm{leb},m}(\Lambda, A) = \int_A \chi_m(x-\Lambda) \mu_{\mathrm{leb}}(dx) ,
\end{equation*}
where the indicator function $\chi_m(x) = \mathbb{I} \left( x \in B(0,(m+1)R/2) \right)$. Thus, $K_{\mathrm{rw}}^{d_{\mathrm{rw}}} \geq \delta K_{\mathrm{leb},1}$.

Then, it follows that
\begin{align*}
    (K_{\mathrm{leb},m_1}K_{\mathrm{leb},m_2})(\Lambda, A)
    &= \int \chi_{m_2}(y-\Lambda) \int_A \chi_{m_1}(x-y) \mu_{\mathrm{leb}}(dx) \mu_{\mathrm{leb}}(dy) \\
    &= \int_A \int \chi_{m_1}(x-y) \chi_{m_2}(y-\Lambda) \mu_{\mathrm{leb}}(dy) \mu_{\mathrm{leb}}(dx) \\
    &= \int_A (\chi_{m_1} * \chi_{m_2})(x-\Lambda) \mu_{\mathrm{leb}}(dx) .
\end{align*}
When $\|x-\Lambda\| \leq \frac{m_1+m_2+1}{2}R$, it has a lower bound that
\begin{align}
    & \quad (\chi_{m_1} * \chi_{m_2})(x-\Lambda) \label{eq_indicator_convolution_lower_bound} \\
    & \geq \int_{y \in B \left( \frac{2m_1+1}{2m_1+2m_2+2} \Lambda + \frac{2m_2+1}{2m_1+2m_2+2} x,
    \frac{R}{4} \right)}
    \chi_{m_1}(x-y) \chi_{m_2}(y-\Lambda) \mu_{\mathrm{leb}}(dy) \notag \\
    &= \int_{y \in B \left( \frac{2m_1+1}{2m_1+2m_2+2} \Lambda + \frac{2m_2+1}{2m_1+2m_2+2} x,
    \frac{R}{4} \right)} \mu_{\mathrm{leb}}(dy) \notag \\
    &= \mu_{\mathrm{leb}}(B(0,R/4)) , \notag
\end{align}
because when $y \in B \left( \frac{2m_1+1}{2m_1+2m_2+2} \Lambda + \frac{2m_2+1}{2m_1+2m_2+2} x, \frac{R}{4} \right)$, it holds that
\begin{equation*}
    x-y \in B \left( \frac{2m_1+1}{2m_1+2m_2+2} (x-\Lambda), \frac{R}{4} \right)
    \subset B \left( 0, \frac{m_1+1}{2}R \right)
\end{equation*}
and
\begin{equation*}
    y-\Lambda \in B \left( \frac{2m_2+1}{2m_1+2m_2+2} (x-\Lambda), \frac{R}{4} \right)
    \subset B \left( 0, \frac{m_2+1}{2}R \right) .
\end{equation*}
The inequality (\ref{eq_indicator_convolution_lower_bound}) implies that
\begin{align*}
    (K_{\mathrm{leb},m_1}K_{\mathrm{leb},m_2})(\Lambda, A)
    &= \int_A (\chi_{m_1} * \chi_{m_2})(x-\Lambda) \mu_{\mathrm{leb}}(dx) \\
    & \geq \int_{A \cap B \left( \Lambda,\frac{m_1+m_2+1}{2} \right) }
    \mu_{\mathrm{leb}}(B(0,R/4)) \mu_{\mathrm{leb}}(dx) \\
    & \geq \mu_{\mathrm{leb}}(B(0,R/4)) \mu_{\mathrm{leb}}( A \cap B \left( \Lambda,\frac{m_1+m_2+1}{2}R \right) ) \\
    & \geq \mu_{\mathrm{leb}}(B(0,R/4)) K_{\mathrm{leb},m_1+m_2}(\Lambda,A) .
\end{align*}

Thus, by recursion, we can prove that
\begin{align*}
    K_{\mathrm{leb},1}^m (\Lambda, \cdot)
    & \geq \mu_{\mathrm{leb}}(B(0,R/4))^{m-1} \mu_{\mathrm{leb}}(\cdot \cap B(\Lambda, (m+1)R/2)) \\
    &= \mu_{\mathrm{leb}}(B(0,R/4))^{m-1} K_{\mathrm{leb},m} (\Lambda, \cdot) .
\end{align*}
Furthermore,
\begin{align*}
    P_\theta^{m d_{\mathrm{rw}}}(\Lambda, \cdot)
    & \geq (\epsilon_{\mathrm{rw}} K_{\mathrm{rw}})^{m d_{\mathrm{rw}}}(\Lambda, \cdot) \geq \epsilon_{\mathrm{rw}}^{m d_{\mathrm{rw}}} (\delta K_{\mathrm{leb},1})^m (\Lambda, \cdot) \\
    & \geq \epsilon_{\mathrm{rw}}^{m d_{\mathrm{rw}}} \delta^m \mu_{\mathrm{leb}}(B(0,R/4))^{m-1} K_{\mathrm{leb},m} (\Lambda, \cdot) ,
\end{align*}
and
\begin{align*}
    P_\theta^{m d_{\mathrm{rw}} + l}
    & \geq \epsilon_{\mathrm{rw}}^{m d_{\mathrm{rw}}} \delta^m \mu_{\mathrm{leb}}(B(0,R/4))^{m-1} K_{\mathrm{leb},m}P_\theta^l \\
    & \geq \epsilon_{\mathrm{rw}}^{m d_{\mathrm{rw}} + l} \delta^m \mu_{\mathrm{leb}}(B(0,R/4))^{m-1} K_{\mathrm{leb},m}K_{\mathrm{rw}}^l .
\end{align*}
Because there exists a sufficiently large number $r_{\mathrm{rw}} > 0$ such that $\inf_{1 \leq l \leq d_{\mathrm{rw}}} K_{\mathrm{rw}}^l(0,B(0,r_{\mathrm{rw}}))>0$, we define
\begin{equation*}
    \delta_m = \epsilon_{\mathrm{rw}}^{m d_{\mathrm{rw}} + l} \delta^m \mu_{\mathrm{leb}}(B(0,R/4))^{m-1} .
\end{equation*}
Then we have that when $(m+1)R/2>r_{\mathrm{rw}}$, the inequality holds that
\begin{align*}
    P_\theta^{m d_{\mathrm{rw}} + l}(\Lambda, A)
    & \geq \delta_m \int K_{\mathrm{leb},m}(x, A)
    K_{\mathrm{rw}}^l(\Lambda, dx) \\
    & \geq \delta_m \int_{B(\Lambda,r_{\mathrm{rw}})}
    \left[ \int_A \chi_m(y-x) \mu_{\mathrm{leb}}(dy) \right] K_{\mathrm{rw}}^l(\Lambda, dx) \\
    &= \delta_m \int_{B(\Lambda,r_{\mathrm{rw}})}
    \mu_{\mathrm{leb}}((A-x) \cap B(0,(m+1)R/2))
    K_{\mathrm{rw}}^l(\Lambda, dx) \\
    &= \delta_m \int_{B(\Lambda,r_{\mathrm{rw}})}
    \mu_{\mathrm{leb}}((A-\Lambda) \cap B(x-\Lambda,(m+1)R/2))
    K_{\mathrm{rw}}^l(\Lambda, dx) \\
    & \geq \delta_m \int_{B(\Lambda,r_{\mathrm{rw}})} \mu_{\mathrm{leb}}((A-\Lambda) \cap B(0,(m+1)R/2-r_{\mathrm{rw}}))
    K_{\mathrm{rw}}^l(\Lambda, dx) \\
    &= \delta_m K_{\mathrm{rw}}^l(0,B(0,r_{\mathrm{rw}})) \mu_{\mathrm{leb}}(A \cap B(\Lambda,(m+1)R/2-r_{\mathrm{rw}})) ,
\end{align*}
where the set $A - \Lambda := \{ x_A - \Lambda \mid x_A \in A \}$.

Thus, we can conclude that
\begin{equation}
    P_\theta^{m d_{\mathrm{rw}} + l}(\Lambda, \cdot) \geq \delta_m \inf_{1 \leq l \leq d_{\mathrm{rw}}} K_{\mathrm{rw}}^l(0,B(0,r_{\mathrm{rw}})) \mu_{\mathrm{leb}}(\cdot \cap B(\Lambda,(m+1)R/2-r_{\mathrm{rw}}))
    \label{eq_irreducible_key_inequality} .
\end{equation}

The inequality implies that for any bounded measurable set $A$ with positive Lebesgue measure, if $n$ is sufficiently large, $A$ is a $\nu_n$-small set and therefore a petite set, where $\nu_n$ is a scalar multiple of the Lebesgue measure restricted to any preassigned fixed bounded set.
Therefore, we conclude that $\mu_{\mathrm{leb}}$ is an irreducible measure for $P_\theta$.

Moreover, due to the absolute continuity of the irreducible measure with respect to the maximal irreducible measure, the Markov chain $\{\Lambda_n\}$ is aperiodic according to the definition in (\cite{meynMarkovChainsStochastic2009}), even though the maximal irreducibility measure has not been explicitly specified.
In summary, $P_\theta$ is aperiodic and irreducible for any $\theta$.
In addition, the inequality in Lemma \ref{lemma_spread_out_on_sphere} is a variant of (\ref{eq_irreducible_key_inequality}), with $d = m d_{\mathrm{rw}} + l$.
Besides, the lower bound for $d = m d_{\mathrm{rw}} + l$ can be derived from the lower bound for $m$, which follows from the inequality $(m+1)R/2>r_{\mathrm{rw}}$.

\textbf{Maximal Irreducibility Measure: Lebesgue Measure}

At the beginning, we note the equivalence between $P_\theta$ and $K_{\mathrm{rw}}$:
\begin{equation*}
    \frac{1}{\epsilon_{\mathrm{rw}}}K_{\mathrm{rw}}(\Lambda, \cdot) \geq P_\theta \geq \epsilon_{\mathrm{rw}} K_{\mathrm{rw}}(\Lambda, \cdot) ,
\end{equation*}

Denote $\Gamma_n = \Gamma_\rho^{n*}$. Then $K_{\mathrm{rw}}^n(\Lambda, \cdot) = \Gamma_n(\cdot-\Lambda)$. So for any Lebesgue null set $A$, we can show that for any positive integer $n$,
\begin{align*}
    \int \mu_{\mathrm{leb}}(d\Lambda) K_{\mathrm{rw}}^n(\Lambda, A)
    &= \int \mu_{\mathrm{leb}}(d\Lambda) \Gamma_n(A-\Lambda)
    = \int \int \mathbb{I}(\Lambda+y \in A) \mu_{\mathrm{leb}}(d\Lambda) \Gamma_n(dy) \\
    &= \int \mu_{\mathrm{leb}}(A-y) \Gamma_n(dy)
    = 0 .
\end{align*}
This equality implies that
\begin{align*}
    \int \mu_{\mathrm{leb}}(d\Lambda) K_{a_{\frac{1}{2}}}(\Lambda, A)
    &= \int \mu_{\mathrm{leb}}(d\Lambda) \sum_{n=0}^{\infty} 2^{-(n+1)} P_\theta^n(\Lambda, A) \\
    & \leq \int \mu_{\mathrm{leb}}(d\Lambda) \sum_{n=0}^{\infty} 2^{-(n+1)}\epsilon_{\mathrm{rw}}^{-n} K_{\mathrm{rw}}^n(\Lambda, A) \\
    &= \sum_{n=0}^{\infty} 2^{-(n+1)}\epsilon_{\mathrm{rw}}^{-n} \int \mu_{\mathrm{leb}}(d\Lambda) K_{\mathrm{rw}}^n(\Lambda, A) \\
    &= 0 .
\end{align*}
This shows that $\int \mu_{\mathrm{leb}}(d\Lambda) K_{a_{\frac{1}{2}}}(\Lambda, \cdot)$ is absolutely continuous with respect to $\mu_{\mathrm{leb}}$.
Because $\mu_{\mathrm{leb}}$ is a $\sigma$-finite irreducible measure for $P_\theta$, from Proposition 4.2.2 and the definition of the maximal irreducibility measure in (\cite{meynMarkovChainsStochastic2009}), we can show that $\int \mu_{\mathrm{leb}}(d\Lambda) K_{a_{\frac{1}{2}}}(\Lambda, \cdot)$ is the maximal irreducibility measure.
Moreover, Proposition 4.2.2 also implies that $\mu_{\mathrm{leb}}$ is absolutely continuous with respect to $\int \mu_{\mathrm{leb}}(d\Lambda) K_{a_{\frac{1}{2}}}(\Lambda, \cdot)$.
In summary, $\mu_{\mathrm{leb}}$ is equivalent to $\int \mu_{\mathrm{leb}}(d\Lambda) K_{a_{\frac{1}{2}}}(\Lambda, \cdot)$, hence it is the maximal irreducibility measure.

\subsection{Proof of Theorem \ref{theorem_simultaneous_geometric_ergodicity}}

\textbf{Positive Recurrence}

We have assumed the parameter $\theta \in K_{M,\Delta}$ for some positive numbers $M$ and $\Delta$ to bound the variation of $\Lambda$ in the Markov chain. With Assumption \ref{assumption_x_sub_exponential_bound}, we can derive the following inequality from Theorem \ref{theorem_drift_condition_inequality}:
\begin{equation*}
    \mathbb{E}_\theta \left[ e^{\lambda_1 \|\Lambda_1\|} \mid \Lambda_0 = \Lambda \right] \leq \beta e^{\lambda_1 \|\Lambda_0\|} + b ,
\end{equation*}
for some positive numbers $\beta<1$, $b$ and $\lambda_1$.

For positive recurrence, we need Assumption \ref{assumption_x_spread_out} to ensure the validity of Lemma \ref{lemma_spread_out_on_sphere}.
Based on Lemma \ref{lemma_spread_out_on_sphere} and the inequality above, we can prove $\{\Lambda_n\}$ is positive recurrent by Theorem 11.3.4 in (\cite{meynMarkovChainsStochastic2009}).
Theorem 10.4.9 in (\cite{meynMarkovChainsStochastic2009}) implies that the invariant probability measure $\pi_\theta$ for $P_\theta$ is unique and equivalent to the maximal irreducibility measure $\mu_{\mathrm{leb}}$.
This result implies that $P_\theta$ is $\pi_\theta$-irreducible.

\textbf{Simultaneous Geometric Ergodicity for $\theta$}

Theorem \ref{theorem_drift_condition_inequality} implies that there exist positive constants $\lambda_1$, $\beta<1$ and $b$ such that for any $\theta \in K_{M,\Delta}$,
\begin{equation*}
    P_\theta V \leq \beta V + b ,
\end{equation*}
where the Lyapunov function $V(\Lambda) = \exp(\lambda_1 \|\Lambda\|)$.

Since $V(\Lambda) \leq 2b(1-\beta)^{-1}$ when $\|\Lambda\| \leq \ln\left[2b(1-\beta)^{-1}\right]/\lambda_1$, denote the constant
\begin{equation*}
    c_\Lambda = \ln\left[2b(1-\beta)^{-1}\right]/\lambda_1 > 0 .
\end{equation*}
Let $d_{\mathrm{small}}$ denote the smallest positive integer greater than $\max(\frac{2c_\Lambda-b_{d,P}}{a_{d,P}}, d_{\mathrm{min}})$, where $d_{\mathrm{min}}$ is the constant in Lemma \ref{lemma_spread_out_on_sphere}.
By Lemma \ref{lemma_spread_out_on_sphere} with $d = d_{\mathrm{small}}$, it holds that for any $\theta \in K_{M,\Delta}$,
\begin{equation}
    P_\theta^{d}(\Lambda, \cdot) \geq \delta_{d,P} \mu_{\mathrm{leb}}(\cdot \cap B(\Lambda,a_{d,P} d + b_{d,P})) \geq \delta_{d, P} \mu_{\mathrm{leb}}(\cdot \cap B(\Lambda, 2c_\Lambda))
    \label{eq_simultaneous_ergodicity_small_set_condition} ,
\end{equation}
and
\begin{equation}
    P_\theta^d V \leq \beta^d V + b\frac{1-\beta^d}{1-\beta}
    \label{eq_simultaneous_ergodicity_inequality_condition} .
\end{equation}

Combining the identity
\begin{equation*}
    \ln\left[2b\frac{1-\beta^d}{1-\beta}(1-\beta^d)^{-1}\right]/\lambda_1 = \ln\left[2b(1-\beta)^{-1}\right]/\lambda_1 = c_\Lambda ,
\end{equation*}
with the two inequalities (\ref{eq_simultaneous_ergodicity_small_set_condition}) and (\ref{eq_simultaneous_ergodicity_inequality_condition}), the condition in Lemma \ref{lemma_simultaneous_geometric_ergodicity} is satisfied with the $d$-step transition probability kernel $P_\theta^d$ for $\theta \in S_\Theta = K_{M,\Delta}$.
Therefore, we can conclude that there exists some constant $L_{\mathrm{small}}>1$, as defined in Lemma \ref{lemma_simultaneous_geometric_ergodicity}, such that for any $\theta \in K_{M,\Delta}$,
\begin{equation*}
    \| P_\theta^{d n}(\Lambda, \cdot) - \pi_\theta \|_V \leq L_{\mathrm{small}}(1-L_{\mathrm{small}}^{-1})^n V(\Lambda) .
\end{equation*}

Furthermore, we extend the inequality for $P_\theta^{d n}$ to arbitrary powers $P_\theta^n$.
For any nonnegative integer $l$, the following inequality holds:
\begin{align*}
\| P_\theta^{d n + l}(\Lambda, \cdot) - \pi_\theta \|_V
&= \| (P_\theta^{d n}(\Lambda, \cdot) - \pi_\theta ) P_\theta^{l} \|_V \\
& \leq \| P_\theta^{d n}(\Lambda, \cdot) - \pi_\theta \|_{\beta^l V + b\frac{1-\beta^l}{1-\beta}} \\
& \leq \| P_\theta^{d n}(\Lambda, \cdot) - \pi_\theta \|_{ \left( \beta^l + b\frac{1-\beta^l}{1-\beta} \right) V } \\
& \leq \left( \beta^l + b\frac{1-\beta^l}{1-\beta} \right) \| P_\theta^{d n}(\Lambda, \cdot) - \pi_\theta \|_V \\
& \leq (1+\frac{b}{1-\beta}) \| P_\theta^{d n}(\Lambda, \cdot) - \pi_\theta \|_V ,
\end{align*}
where the first inequality can be derived by
\begin{equation*}
    P_\theta^l V
    \leq P_\theta^{l-1} (\beta V + b)
    \leq \dots
    \leq \beta^l V + b\frac{1-\beta^l}{1-\beta} .
\end{equation*}

Let $n=n^\prime d+l$ with $l \in \{0,\dots,d-1\}$. Using the inequality $n^\prime \geq \frac{n}{d}-1$, it holds that
\begin{equation*}
    \| P_\theta^{n}(\Lambda, \cdot) - \pi_\theta \|_V \leq (1+\frac{b}{1-\beta})L_{\mathrm{small}}(1-L_{\mathrm{small}}^{-1})^{\frac{n}{d}-1} V(\Lambda) .
\end{equation*}

Thus, for the constant $L$ in Theorem \ref{theorem_simultaneous_geometric_ergodicity}, we set $L = \max\{C,(1-\rho)^{-1}\}$, where $C = (1+\frac{b}{1-\beta})L_{\mathrm{small}}(1-L_{\mathrm{small}}^{-1})^{-1}$ and $\rho = (1-L_{\mathrm{small}}^{-1})^{\frac{1}{d}}$.

\textbf{Boundedness of $\pi_\theta V$}

For any $\theta \in K_{M,\Delta}$,
\begin{equation*}
    P_\theta V \leq \beta V + b
\end{equation*}
implies
\begin{equation*}
    \mathbb{E} V(\Lambda_n)
    \leq \beta \mathbb{E} V(\Lambda_{n-1}) + b
    \leq \dots
    \leq \beta^n \mathbb{E} V(\Lambda_0) + b\frac{1-\beta^l}{1-\beta}
\end{equation*}
and
\begin{equation*}
    \pi_\theta V
    = (\pi_\theta P_\theta) V
    = \pi_\theta (P_\theta V)
    \leq \beta \pi_\theta V + b .
\end{equation*}
Thus, $\mathbb{E} V(\Lambda_n) \leq \max\{\frac{b}{1-\beta}, \mathbb{E} V(\Lambda_0)\}$ and $\pi_\theta V \leq \frac{b}{1-\beta}$.
Denote the constant $C_V = \frac{b}{1-\beta}$.

\subsection{Proof of Theorem \ref{theorem_boundedness_shift}}

\label{subsec_proof_theorem_boundedness_shift}

Consider the shift on the additional covariate $Y$ defined by $\mathbb{E}_\theta \left[ \sum_{i=1}^n (T_i-\rho)Y_i \right]$.
This satisfies the inequality
\begin{align*}
    \left| \mathbb{E}_\theta \left[ \sum_{i=1}^n (T_i-\rho)Y_i \mid \Lambda_0 = \Lambda \right] - n \pi_\theta h_\theta \right|
    &= \left| \sum_{i=1}^n P_\theta^{i-1}(\Lambda, h_\theta) - n \pi_\theta h_\theta \right| \\
    & \leq \sum_{i=1}^n \left|P_\theta^{i-1}(\Lambda, \cdot) - \pi_\theta \right|(|h_\theta|) ,
\end{align*}
where $h_\theta(\Lambda) = \mathbb{E}_\theta \left[ (T_1-\rho)Y_1 \mid \Lambda_0 = \Lambda \right] = \mathbb{E}_{X \sim \Gamma} \left[ [g_\theta(\Lambda, X) - \rho] f(X) \right]$.

Moreover, $|h_\theta(\Lambda)| \leq \mathbb{E}_\theta \left[ |Y_1| \mid \Lambda_0 = \Lambda \right] \leq \mathbb{E}|Y|$, so $h_\theta(\Lambda)$ is bounded.

Under Assumptions \ref{assumption_x_sub_exponential_bound} and \ref{assumption_x_spread_out}, Theorem \ref{theorem_simultaneous_geometric_ergodicity} implies that, for any $\theta \in K_{M,\Delta}$, we have the geometric ergodicity bound
\begin{equation*}
    \| P_\theta^{n}(\Lambda, \cdot) - \pi_\theta \|_V \leq L(1-L^{-1})^n V(\Lambda) .
\end{equation*}
It follows that
\begin{equation*}
    |P^{i-1}(\Lambda, \cdot) - \pi_\theta |(|h_\theta|) \leq \left[\mathbb{E}|Y|\right] \| P_\theta^{i-1}(\Lambda, \cdot) - \pi_\theta \|_V \leq \left[\mathbb{E}|Y|\right] L(1-L^{-1})^{i-1} V(\Lambda) .
\end{equation*}

Thus, for any $n$ and any $\Lambda$,
\begin{equation*}
    \sum_{i=1}^n \left|P_\theta^{i-1}(\Lambda, \cdot) - \pi_\theta \right|(|h_\theta|)
    \leq \sum_{i=1}^n \left[\mathbb{E}|Y|\right] L(1-L^{-1})^{i-1} V(\Lambda)
    \leq L^2 \left[\mathbb{E}|Y|\right] V(\Lambda) < \infty .
\end{equation*}

It implies that $\mathbb{E}_\theta \left[ \sum_{i=1}^n (T_i-\rho)Y_i \mid \Lambda_0 = \Lambda \right] - n \pi_\theta h_\theta$ is $O(1)$, and it also converges to a finite limit due to the bound of each term.

\section{Proofs for Existing Covariate Adaptive Randomization Procedures}

\subsection{Proof of Theorem \ref{theorem_shift_liu_xiao}}

\label{subsec_proof_theorem_shift_liu_xiao}

We view the procedure as starting from the allocation of the second unit, with the initial imbalance vector $\Lambda = (T_1 - \rho) X_1$.
It is straightforward to verify that this shift does not affect the asymptotic properties.

We first seek constants $M > 0$ and $\Delta > 0$ such that for any $\Lambda \in W_\Gamma$ with $\|\Lambda\| \geq M$, the inequality holds that
\begin{equation*}
    \mathbb{E}_\theta \left[ (\Lambda_1 - \Lambda_0)^T \frac{\Lambda_0}{\|\Lambda_0\|} \mid \Lambda_0 = \Lambda \right]
    \leq -\Delta .
\end{equation*}
For the allocation function of the randomization procedure minimizing the imbalance measure, the left-hand side of the inequality can be written as
\begin{align}
    & \quad \mathbb{E}_{X \sim \Gamma} \left[ (g_\theta(\Lambda, X)-\rho)\frac{X^T \Lambda}{\|\Lambda\|} \right]
    = \mathbb{E}_{X \sim \Gamma} \left[ (g_\theta(\Lambda, X)-\rho)\frac{X^T \Lambda}{\|\Lambda\|} \right] \label{eq_liu_negative_feedback_condition} \\
    &= \mathbb{E}_{X \sim \Gamma} \left[ \left[ \mathbb{I}\left( D(\Lambda, X)<0 \right)(\rho_1-\rho)
    + \mathbb{I}\left( D(\Lambda, X)>0 \right)(1-\rho_1-\rho) \right]
    \frac{X^T \Lambda}{\|\Lambda\|} \right] , \notag
\end{align}
where the function $D(\Lambda, X)$ is defined as $D(\Lambda, X) = 2X^T\Lambda+(1-2\rho)X^T X$.
The difference between the right-hand side of (\ref{eq_liu_simple}) and a simpler expression
\begin{equation}
    \mathbb{E}_{X \sim \Gamma} \left[ \left[ \mathbb{I}\left(2X^T\Lambda<0\right)(\rho_1-\rho)
    + \mathbb{I}\left(2X^T\Lambda>0\right)(1-\rho_1-\rho) \right]
    \frac{X^T \Lambda}{\|\Lambda\|} \right] \label{eq_liu_simple}
\end{equation}
can be bounded by
\begin{align*}
    & \quad \mathbb{E}_{X \sim \Gamma} \left[ \mathbb{I}\left(2|X^T\Lambda| \leq |(1-2\rho)X^T X|\right)
    \frac{|X^T\Lambda|}{\|\Lambda\|} \right] \\
    & \leq \mathbb{E}_{X \sim \Gamma} \left[ \mathbb{I}\left(2|X^T\Lambda| \leq |(1-2\rho)X^T X|\right)
    \frac{|(1-2\rho)X^T X|}{2\|\Lambda\|} \right] \\
    & \leq \mathbb{E}_{X \sim \Gamma} \left[
    \frac{|(1/2-\rho)X^T X|}{\|\Lambda\|} \right]
    = \frac{|\rho-1/2|\mathbb{E}\|X\|^2}{\|\Lambda\|} .
\end{align*}
This holds because, whenever $2|X^T\Lambda| > |(1-2\rho)X^T X|$, the two quantities $2X^T\Lambda$ and $D(\Lambda, X) = 2X^T\Lambda+(1-2\rho)X^T X$ share the same sign.

Because $\rho_1>\rho$ and $1-\rho_1<\rho$, (\ref{eq_liu_simple}) can be bounded by
\begin{equation}
    \mathbb{E}_{X \sim \Gamma} \left[ -\min\{|\rho_1-\rho|,|1-\rho_1-\rho|\}
    \frac{|X^T\Lambda|}{\|\Lambda\|} \right]
    \label{eq_inner_product_X_Lambda} ,
\end{equation}
which can be further bounded above by a negative constant.
The proof is provided at the end of this subsection.
Denote this negative constant by $-\Delta^\prime$ with $\Delta^\prime > 0$.
\begin{align*}
    \mathbb{E}_\theta \left[ (\Lambda_1 - \Lambda_0)^T \frac{\Lambda_0}{\|\Lambda_0\|} \mid \Lambda_0 = \Lambda \right]
    & \leq \frac{|\rho-1/2|\mathbb{E}\|X\|^2}{\|\Lambda\|} - \Delta^\prime
    \leq \frac{|\rho-1/2|\mathbb{E}\|X\|^2}{M} - \Delta^\prime \\
    & < \frac{\Delta^\prime}{2} - \Delta^\prime = -\Delta .
\end{align*}
Thus, $\theta \in K_{M,\Delta}$ with $M = \frac{|2\rho-1|\mathbb{E}\|X\|^2}{\Delta^\prime}+1>0$ and $\Delta=\Delta^\prime/2>0$.

Since $\theta \in K_{M,\Delta}$, by Theorem \ref{theorem_boundedness_shift}, the average shift on $Y$ converges to $\pi_\theta h_\theta$.
Therefore, the average shift on the additional covariate $Y_n = f(X_n) + \epsilon_n$ converges to $\pi_\theta h_\theta = \mathbb{E}_{X \sim \Gamma} \left[ [\pi_\theta \left[ g_\theta(\cdot,X) \right] - \rho] f(X) \right]$.
If $\rho>\frac{1}{2}$, then
\begin{equation*}
    \pi_\theta(\{ \Lambda \mid 2x^T\Lambda+(1-2\rho)x^Tx < 0 \})
    \geq \pi_\theta(\{ \Lambda \mid \|\Lambda\| < (\rho-1/2)\|x\| \})
    \rightarrow 1 ,
\end{equation*}
as $\|x\| \rightarrow \infty$.

Moreover, when $x \neq 0$, $\pi_\theta(\{ \Lambda \mid 2x^T\Lambda+(1-2\rho)x^Tx = 0 \}) = 0$.
A technical justification is provided below for completeness.
Lemma \ref{lemma_spread_out_on_sphere} implies that $\mu_{\mathrm{leb}}$ is the maximal irreducibility measure.
Theorem 10.4.9 in (\cite{meynMarkovChainsStochastic2009}) then ensures that the invariant probability measure $\pi_\theta$ for $P_\theta$ is unique and equivalent to the maximal irreducibility measure $\mu_{\mathrm{leb}}$.
Consequently, when $x \neq 0$, the set $\{ \Lambda \mid 2x^T\Lambda+(1-2\rho)x^Tx = 0 \}$ is a Lebesgue null set, and hence $\pi_\theta(\{ \Lambda \mid 2x^T\Lambda+(1-2\rho)x^Tx = 0 \}) = 0$.

Thus, by $\rho_1>\rho$ and $1-\rho_1<\rho$, there exists $M_L>0$ such that the infimum below is strictly positive:
\begin{align*}
    & \quad \inf_{\|x\| \geq M_L} \left\{ \pi_\theta \left[ g_\theta(\cdot, x) \right] - \rho \right\} \\
    &= \inf_{\|x\| \geq M_L} \left[ \rho_1 \pi_\theta(\{ \Lambda \mid 2x^T\Lambda+(1-2\rho)x^Tx < 0 \}) \right. \\
    & \quad \left. + (1-\rho_1) \pi_\theta(\{ \Lambda \mid 2x^T\Lambda+(1-2\rho)x^Tx > 0 \}) -\rho \right] \\
    &= (\rho_1-\rho) \inf_{\|x\| \geq M_L}\pi_\theta(\{ \Lambda \mid 2x^T\Lambda+(1-2\rho)x^Tx < 0 \}) \\
    & \quad + (1-\rho_1-\rho) [1-\inf_{\|x\| \geq M_L}\pi_\theta(\{ \Lambda \mid 2x^T\Lambda+(1-2\rho)x^Tx < 0 \})] .
\end{align*}
Let the function $f(x) = \mathbb{I}(\|x\| \geq M_L)$.
Then the limit of the average shift, $\pi_\theta h_\theta$, satisfies the following inequality:
\begin{align*}
    \pi_\theta h_\theta
    &= \mathbb{E}_{X \sim \Gamma} \left[ [\pi_\theta \left[ g_\theta(\cdot,X) \right] - \rho] f(X) \right] \\
    &= \mathbb{E}_{X \sim \Gamma} \left[ \left\{ \rho_1 \pi_\theta(\{ \Lambda \mid 2X^T\Lambda+(1-2\rho)X^TX < 0 \}) \right. \right. \\
    & \quad \left. \left. + (1-\rho_1) \pi_\theta(\{ \Lambda \mid 2X^T\Lambda+(1-2\rho)X^TX > 0 \})
    - \rho \right\} \mathbb{I}(\|X\| \geq M_L)
    \right] \\
    &= \mathbb{E}_{X \sim \Gamma} \left[ (\rho_1-\rho) \mathbb{I}(\|X\| \geq M_L) \pi_\theta(\{ \Lambda \mid 2X^T\Lambda+(1-2\rho)X^TX < 0 \}) \right. \\
    & \quad + \left.
        (1-\rho_1-\rho) \mathbb{I}(\|X\| \geq M_L) [1-\pi_\theta(\{ \Lambda \mid 2X^T\Lambda+(1-2\rho)X^TX < 0 \})]
    \right] \\
    &= \mathbb{E}_{X \sim \Gamma} \left[ \mathbb{I}(\|X\| \geq M_L)\left[
        (\rho_1-\rho) \pi_\theta(\{ \Lambda \mid 2X^T\Lambda+(1-2\rho)X^TX < 0 \}) \right. \right. \\
    & \quad + \left. \left.
        (1-\rho_1-\rho) [1-\pi_\theta(\{ \Lambda \mid 2X^T\Lambda+(1-2\rho)X^TX < 0 \})]
    \right] \right] \\
    &> 0 .
\end{align*}

The case $\rho<\frac{1}{2}$ follows analogously.

Finally, we prove \eqref{eq_inner_product_X_Lambda}.
Denote the support of the distribution of $\Gamma$ by $A_{\mathrm{Supp},\Gamma}$ and the linear subspace spanned by $A_{\mathrm{Supp},\Gamma}$ by $W_\Gamma$.
Suppose the dimension of $W_\Gamma$ is $d_W \leq d$.
Then there exist $x_1, \dots, x_{d_W} \in A_{\mathrm{Supp},\Gamma}$ that are linearly independent.
Thus, as in the proof of Condition \ref{condition_2_epsilon} for $\epsilon(\theta)$ (see Subsection \ref{subsec_proof_lemma_instance_established}), we can show that there exists a positive number $\epsilon$ such that
\begin{equation*}
    \bigcup_{i=1}^{d_W} \left[ R_{x_i}^{\epsilon} \cup -R_{x_i}^{\epsilon} \right] = W_\Gamma \backslash \{0\} .
\end{equation*}
This equality is equivalent to the statement that for any $\Lambda \in W_\Gamma$, the inequality
\begin{equation*}
    |x_{i_\Lambda}^T\Lambda| \geq \epsilon \|x_{i_\Lambda}\|\|\Lambda\|
\end{equation*}
holds for some $i_\Lambda \in \{1,\dots,d_W\}$.

Let $C_i = \{X \mid \|X-x_i\| < \epsilon\min_i\{\|x_i\|\}/2 \}$ be the open ball around $x_i$.
Then $x_i \in A_{\mathrm{Supp},\Gamma}$ implies $\Gamma(C_i)>0$. For any $\Lambda$, for any $y \in C_{i_\Lambda}$, we have
\begin{equation*}
    |y^T \Lambda|
    \geq |x_{i_\Lambda}^T\Lambda| - |(y-x_{i_\Lambda})^T\Lambda|
    \geq \epsilon \|x_{i_\Lambda}\|\|\Lambda\| - \epsilon\min_i\{\|x_i\|\}\|\Lambda\|/2
    \geq \epsilon\min_i\{\|x_i\|\}\|\Lambda\|/2
\end{equation*}

Thus,
\begin{align*}
    \mathbb{E}_\theta \left[ (\Lambda_1 - \Lambda_0)^T \frac{\Lambda_0}{\|\Lambda_0\|} \mid \Lambda_0 = \Lambda \right]
    &= \mathbb{E}_{X \sim \Gamma} \left[ -p\frac{|X^T \Lambda|}{\|\Lambda\|} \right]
    \leq \mathbb{E}_{X \sim \Gamma} \left[ -p\frac{|X^T \Lambda|}{\|\Lambda\|} \mathbb{I}(X \in C_{i_\Lambda}) \right] \\
    & \leq \mathbb{E}_{X \sim \Gamma} \left[ -\frac{p\epsilon\min_i\{\|x_i\|\}}{2} \mathbb{I}(X \in C_{i_\Lambda}) \right] \\
    & \leq -p\epsilon\min_i\{\Gamma(C_i)\}\min_i\{\|x_i\|\}/2
    < 0 .
\end{align*}
It implies that $\theta \in K_{M,p\epsilon\min_i\{\Gamma(C_i)\}\min_i\{\|x_i\|\}/2}$ for any $M>0$.

\section{Proofs for Oracle Randomization Procedure}

\subsection{Negative Feedback Condition}

The following lemma establishes the negative feedback condition, which corresponds to the first part of the conclusion in Theorem \ref{theorem_shift_simultaneous_geometric_finite_adjusted}.
\begin{lemma}
    \label{lemma_drift_condition}
    Suppose Assumption \ref{assumption_theta_basic} for components of the allocation function holds. If Assumption \ref{assumption_x_sub_exponential_bound} holds, then there exist positive numbers $M$, $\Delta$ and $r_d$ such that for any $\theta \in \overline{B(\theta^*,r_d)}$,
    \begin{equation*}
        \mathbb{E}_\theta \left[(\Lambda_1 - \Lambda_0)^T \frac{\Lambda_0}{\|\Lambda_0\|} \mid \Lambda_0 = \Lambda \right] \leq - \Delta,
    \end{equation*}
    for any $\Lambda \in W_\Gamma = \mathbb{R}^d$ with $\|\Lambda\| \geq M$.
    Consequently, $\overline{B(\theta^*,r_d)} \subset K_{M,\Delta}$.
\end{lemma}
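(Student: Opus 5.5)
The plan is to compute the one-step drift explicitly and bound it term by term. Since $\mathbb{E}_{X\sim\Gamma}[\alpha_i(X)X]=\xi_i^*$, the allocation function (\ref{eq_oracle_allocation}) gives, for any $\theta=(\xi_1,\dots,\xi_d)$ and $\Lambda\neq 0$,
\begin{equation*}
    \mathbb{E}_\theta \left[(\Lambda_1 - \Lambda_0)^T \frac{\Lambda_0}{\|\Lambda_0\|} \mid \Lambda_0 = \Lambda \right]
    = \frac{p}{d}\sum_{i=1}^d \beta_{\xi_i}^{\epsilon(\theta)}(\Lambda)\, \frac{{\xi_i^*}^T\Lambda}{\|\Lambda\|} .
\end{equation*}
The difficulty compared with Example \ref{example_oracle_value} is that $\beta$ is evaluated at the estimate $\xi_i$, while the inner product involves the true $\xi_i^*$. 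I will write ${\xi_i^*}^T\Lambda = \xi_i^T\Lambda + (\xi_i^*-\xi_i)^T\Lambda$, and use $|(\xi_i^*-\xi_i)^T\Lambda|\le \|\theta-\theta^*\|_F\|\Lambda\|\le r_d\|\Lambda\|$. By item 1 of Assumption \ref{assumption_theta_basic}, $\theta^*$ is nonsingular. So I first choose $r_d$ small enough that every $\theta\in\overline{B(\theta^*,r_d)}$ is nonsingular and satisfies $\|\xi_i\|\ge \frac12\min_j\|\xi_j^*\|=:c_0>0$ for all $i$. By item 6 of Assumption \ref{assumption_theta_basic}, this gives $\epsilon(\theta)>0$ together with the covering $\bigcup_i [R_{\xi_i}^{\epsilon(\theta)}\cup -R_{\xi_i}^{\epsilon(\theta)}]=\mathbb{R}^d\setminus\{0\}$.

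\textbf{Good index.} For each $\Lambda\ne0$, the covering gives an index $i$ and a sign $s\in\{\pm1\}$ with $\cossimilarity(s\Lambda,\xi_i)>\epsilon(\theta)$. By the odd symmetry in item 3, I may take $s=1$ for the purpose of evaluating the term $\beta\cdot\xi_i^{*T}\Lambda$, since that product is invariant under $\Lambda\mapsto-\Lambda$. Item 4 with $\epsilon=\epsilon'=\epsilon(\theta)$ then gives $\beta_{\xi_i}^{\epsilon(\theta)}(\Lambda)=-1$ once $\|\Lambda\|\ge M$. For this term,
\begin{equation*}
    \beta_{\xi_i}^{\epsilon(\theta)}(\Lambda)\,\frac{{\xi_i^*}^T\Lambda}{\|\Lambda\|} \le -\epsilon(\theta) c_0 + r_d .
\end{equation*}

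\textbf{Remaining indices.} For every other $j$, items 2 and 3 say that $\beta_{\xi_j}^{\epsilon(\theta)}(\Lambda)$ has sign opposite to $\xi_j^T\Lambda$, so $\beta\,\xi_j^T\Lambda\le 0$. Since $|\beta|\le1$, the perturbation part contributes at most $r_d$. Each such term is therefore at most $r_d$. Summing the good term and the $d-1$ remaining terms, the drift is at most
\begin{equation*}
    \frac{p}{d}\bigl(-\epsilon(\theta)c_0 + d\,r_d\bigr),
\end{equation*}
uniformly in $\|\Lambda\|\ge M$. Finally, Lemma A.1 (under Assumption \ref{assumption_x_spread_out}, or directly from item 1 of Assumption \ref{assumption_theta_basic}) gives $W_\Gamma=\mathbb{R}^d$, so the bound covers the whole set required in the definition of $K_{M,\Delta}$.

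\textbf{Main obstacle.} The remaining step is to make $M$ and the gap uniform over the ball, which requires a uniform lower bound $\epsilon(\theta)\ge \underline{\epsilon}>0$ on $\overline{B(\theta^*,r_d)}$. Given such a bound, I take $d_{\mathrm{uni}}=\underline{\epsilon}$ in item 4 to obtain a single $M$. I then shrink $r_d$ so that $d\,r_d\le \underline{\epsilon}c_0/2$, which yields $\Delta=\frac{p}{2d}\underline{\epsilon}c_0$.

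Assumption \ref{assumption_theta_basic} alone does not obviously force $\epsilon(\cdot)$ to stay away from $0$ near $\theta^*$. My first attempt will be to avoid this by working with $\epsilon^*:=\epsilon(\theta^*)$. Use the covering at $\theta^*$ to get $|\cossimilarity(\Lambda,\xi_i^*)|>\epsilon^*$. Then use continuity of angles to get $|\cossimilarity(\Lambda,\xi_i)|>\epsilon^*/2$ for $r_d$ small. Finally invoke item 4 with $\epsilon=\epsilon^*/2$ and $\epsilon'=\epsilon(\theta)$; this needs $\epsilon(\theta)\le\epsilon^*/2$, or else the covering at $\theta$ itself handles the case $\epsilon(\theta)>\epsilon^*/2$.

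Splitting into these two cases appears to give the uniform bound with $\underline{\epsilon}=\epsilon^*/2$. In the second case, though, the covering at $\theta$ yields a good term of only $-\epsilon(\theta)c_0\le-\frac{\epsilon^*}{2}c_0$, which is good enough. If the case split fails, I will fall back on the continuity of $\epsilon(\theta)$ near nonsingular $\theta^*$, which holds for the concrete choice (\ref{eq_allocation_epsilon}) and under item 1 of Assumption \ref{assumption_theta_lipschitz}.
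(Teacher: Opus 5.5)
Your argument is correct, and its skeleton matches the paper's: the same drift formula, and the same split ${\xi_i^*}^T\Lambda=\xi_i^T\Lambda+(\xi_i^*-\xi_i)^T\Lambda$, which gives the perturbation term $p\|\theta-\theta^*\|$. As in the paper, every $\beta_{\xi_i}^{\epsilon(\theta)}(\Lambda)\,\xi_i^T\Lambda$ is nonpositive by Items 2--3, and at one good index Item 4 forces $|\beta|=1$.

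You differ from the paper in the uniformity step, which you correctly identified as the crux.

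\emph{The paper's route.} It borrows the concrete function (\ref{eq_allocation_epsilon}) as an auxiliary $\epsilon^*(\cdot)$. It uses the continuity of $\epsilon^*$ (Condition \ref{condition_1_epsilon}) and its covering property at every nonsingular $\theta$ (Condition \ref{condition_2_epsilon}). It then picks a cone at threshold $\max\{\epsilon(\theta),\epsilon^*(\theta)\}\ge\epsilon^*(\theta^*)/2$, so that Item 4 applies with $\epsilon'=\epsilon(\theta)$ and a single $M$.

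\emph{Your route.} Your case split on $\epsilon(\theta)$ versus $\epsilon(\theta^*)/2$ is the same ``max'' device. The difference is the case $\epsilon(\theta)\le\epsilon(\theta^*)/2$: there you build the robust covering from the covering at $\theta^*$ alone, via
$|\cossimilarity(\Lambda,\xi_i)-\cossimilarity(\Lambda,\xi_i^*)|\le\bigl\|\xi_i/\|\xi_i\|-\xi_i^*/\|\xi_i^*\|\bigr\|$, which is uniform in $\Lambda$.

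This keeps your proof entirely within Assumption \ref{assumption_theta_basic}, with no dependence on a particular construction proved later. The paper's version instead states the gap directly as $\epsilon^*(\theta)\min_i\|\xi_i\|$, without the extra angle estimate.

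Two small cleanups:
\begin{enumerate}
\item Present the case split as the proof and drop the fallback. Continuity of $\epsilon(\cdot)$ (Assumption \ref{assumption_theta_lipschitz}) is not a hypothesis of this lemma, and you do not need it.
\item Justify $W_\Gamma=\mathbb{R}^d$ only via Item 1: each $\xi_i^*$ lies in $W_\Gamma$, and they are $d$ linearly independent vectors. Do not cite Lemma \ref{lemma_spread_out_on_sphere}, because Assumption \ref{assumption_x_spread_out} is not assumed here.
\end{enumerate}
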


\begin{proof}[Proof of Lemma \ref{lemma_drift_condition}]
    Even without Lemma \ref{lemma_spread_out_on_sphere}, using Item \ref{item_assumption_theta_basic_1} of Assumption \ref{assumption_theta_basic}, one can prove by contradiction that $W_\Gamma = \mathbb{R}^d$. We omit the details, as they are not essential to the main argument.

    We begin the main argument by noting the following inequality:
    \begin{align}
        & \quad \mathbb{E}_\theta \left[(\Lambda_1 - \Lambda_0)^T \frac{\Lambda_0}{\|\Lambda_0\|} \mid \Lambda_0 = \Lambda \right] \label{eq_oracle_negative_feedback_1} \\
        &= \mathbb{E}_\theta \left[(T_1-\rho)X_1^T \frac{\Lambda_0}{\|\Lambda_0\|} \mid \Lambda_0 = \Lambda \right]
        = \int {(g_\theta(\Lambda, X)-\rho)X^T \frac{\Lambda}{\|\Lambda\|} \Gamma(dX)} \notag \\
        &= \frac{p}{d} \sum_{i=1}^d \beta_{\xi_i}^{\epsilon(\theta)}(\Lambda) \left[ \int {\alpha_i(X) X \Gamma(dX)} \right]^T \frac{\Lambda}{\|\Lambda\|}
        = \frac{p}{d} \sum_{i=1}^d \beta_{\xi_i}^{\epsilon(\theta)}(\Lambda) \frac{{\xi_i^*}^T \Lambda}{\|\Lambda\|} \notag \\
        &= \frac{p}{d} \sum_{i=1}^d \beta_{\xi_i}^{\epsilon(\theta)}(\Lambda) \frac{{\xi_i}^T \Lambda}{\|\Lambda\|}
        + \frac{p}{d} \sum_{i=1}^d \beta_{\xi_i}^{\epsilon(\theta)}(\Lambda) \frac{(\xi_i^* - \xi_i)^T \Lambda}{\|\Lambda\|} \notag \\
        & \leq \frac{p}{d} \sum_{i=1}^d \beta_{\xi_i}^{\epsilon(\theta)}(\Lambda) \frac{\xi_i^T \Lambda}{\|\Lambda\|}
        + \frac{p}{d} \sum_{i=1}^d \left| \beta_{\xi_i}^{\epsilon(\theta)}(\Lambda) \right| \|\xi_i^* - \xi_i\| \notag \\
        & \leq \frac{p}{d} \sum_{i=1}^d \beta_{\xi_i}^{\epsilon(\theta)}(\Lambda) \frac{\xi_i^T \Lambda}{\|\Lambda\|} + p \|\theta-\theta^*\| \notag .
    \end{align}
    The last inequality holds because $\left| \beta_{\xi_i}^{\epsilon(\theta)}(\Lambda) \right| \leq 1$ and $\|\xi_i^* - \xi_i\| \leq \|\theta-\theta^*\|$.

    To bound the inner product of $\xi_i$ and $\Lambda$, we denote by $\epsilon^*(\theta)$ the function $\epsilon(\theta)$ defined in Subsection \ref{subsec_proof_lemma_instance_established}.
    Its existence and properties, including satisfaction of Assumptions \ref{assumption_theta_basic} and \ref{assumption_theta_lipschitz}, are established in Subsection \ref{subsec_proof_lemma_instance_established}.
    Item \ref{item_assumption_theta_basic_2} of Assumption \ref{assumption_theta_basic} implies that $\beta_{\xi}^{\epsilon}(\Lambda) \leq 0$ when $\xi^T \Lambda \geq 0$, and Item \ref{item_assumption_theta_basic_3} implies that $\beta_{\xi}^{\epsilon}(\Lambda) = -\beta_{-\xi}^{\epsilon}(\Lambda) = -\beta_{\xi}^{\epsilon}(-\Lambda)$.
    Hence, $\beta_{\xi}^{\epsilon}(\Lambda) \geq 0$ when $\xi^T \Lambda \leq 0$.
    Therefore,
    \begin{equation*}
        \beta_{\xi_i}^{\epsilon(\theta)}(\Lambda) \frac{{\xi_i}^T \Lambda}{\|\Lambda\|} \leq 0 .
    \end{equation*}

    Suppose $d_{\mathrm{uni}} = \epsilon^*(\theta^*)/2$.
    Then there exists $M > 0$ such that for any $\epsilon \geq d_{\mathrm{uni}}$, $\epsilon^\prime \in (0,\epsilon]$ and $\Lambda \in R_\xi^\epsilon$ with $\|\Lambda\| \geq M$, $\beta_{\xi}^{\epsilon^\prime}(\Lambda) = -1$.
    Due to the continuity of the function $\epsilon^*$, there exists a radius $r$ such that for any $\theta \in B(\theta^*,r)$, we have $\epsilon^*(\theta) \geq d_{\mathrm{uni}}$.

    Moreover, $R_{\xi}^{\epsilon} \subset R_{\xi}^{\epsilon^\prime}$ for any $\epsilon > \epsilon^\prime$ and $\xi \neq 0$.
    Consider any $\Lambda \in W_\Gamma$ and $\theta \in B(\theta^*,r)$.
    Because $\bigcup_{i=1}^{d} \left[ R_{\xi_i}^{\epsilon} \cup -R_{\xi_i}^{\epsilon} \right]  = \mathbb{R}^d \backslash \{0\}$ when $\epsilon = \epsilon(\theta)$ or $\epsilon^*(\theta)$, there exists $i_0$ such that
    \begin{equation*}
        \Lambda \in \pm R_{\xi_{i_0}}^{\max\{\epsilon(\theta), \epsilon^*(\theta)\}} = \pm R_{\xi_{i_0}}^{\epsilon(\theta)} \cap R_{\xi_{i_0}}^{\epsilon^*(\theta)} .
    \end{equation*}
    Then for some $M>0$, for any $\Lambda \in W_\Gamma$ with $\|\Lambda\| \geq M$, there exists some $i_0$ such that the inequality
    \begin{equation}
        \beta_{\xi_{i_0}}^{\epsilon(\theta)}(\Lambda) \frac{{\xi_{i_0}}^T \Lambda}{\|\Lambda\|}
        = -\frac{\left| {\xi_{i_0}}^T \Lambda \right|}{\|\Lambda\|}
        \leq -\frac{\epsilon^*(\theta) \|\xi_{i_0}\|\|\Lambda\|}{\|\Lambda\|}
        = -\epsilon^*(\theta) \|\xi_{i_0}\| \label{eq_oracle_negative_feedback_2}
    \end{equation}
    holds.
    This is because $\Lambda \in \pm R_{\xi_{i_0}}^{\max\{\epsilon(\theta), \epsilon^*(\theta)\}} \subset \pm R_{\xi_{i_0}}^{\epsilon^*(\theta)}$ implies $\left| {\xi_{i_0}}^T \Lambda \right| \geq \epsilon^*(\theta) \|\xi_{i_0}\|\|\Lambda\|$, and $\Lambda \in \pm R_{\xi_{i_0}}^{\max\{\epsilon(\theta), \epsilon^*(\theta)\}}$ implies $|\beta_{\xi_{i_0}}^{\epsilon(\theta)}(\Lambda)|=1$.

    Combining (\ref{eq_oracle_negative_feedback_1}) and (\ref{eq_oracle_negative_feedback_2}), it follows that
    \begin{align*}
        \mathbb{E}_\theta \left[(\Lambda_1 - \Lambda_0)^T \frac{\Lambda_0}{\|\Lambda_0\|} \mid \Lambda_0 = \Lambda \right]
        & \leq \frac{p}{d} \sum_{i=1}^d \beta_{\xi_i}^{\epsilon(\theta)}(\Lambda) \frac{\xi_i^T \Lambda}{\|\Lambda\|} + p \|\theta-\theta^*\| \\
        & \leq -\frac{\epsilon^*(\theta)p}{d} \min_i \|\xi_i\| + p \|\theta-\theta^*\| ,
    \end{align*}
    for $\Lambda$ with $\|\Lambda\| \geq M$.

    The continuity of $\epsilon^*$ implies that the right side of the inequality is continuous at $\theta^*$. It is negative when $\theta=\theta^*$. Thus, there exists a radius $r_d < r$ such that the right side of the inequality is bounded by a negative number for any $\theta \in \overline{B(\theta^*, r_d)}$.
\end{proof}

\subsection{Proof of Theorem \ref{theorem_shift_simultaneous_geometric_finite_adjusted}}

\label{subsec_proof_theorem_shift_simultaneous_geometric_finite_adjusted}

Based on Lemma \ref{lemma_drift_condition}, we have $\overline{B(\theta^*,r_d)} \subset K_{M,\Delta}$ for some $r_d>0$, $M>0$ and $\Delta>0$.
Therefore, the main part of the theorem's conclusion follows as a corollary from Theorems \ref{theorem_simultaneous_geometric_ergodicity} and \ref{theorem_boundedness_shift}.
The only thing we need to prove is $\pi_\theta h_\theta = 0$.

If $\tilde{\rho}_\theta(x) = \pi_\theta \left[ g_\theta(\cdot, x) \right] = \rho$ for all $x$, then
\begin{align*}
    \pi_\theta h_\theta
    &= \mathbb{E}_{\Lambda \sim \pi_\theta, X \sim \Gamma} \left[ [g_\theta(\Lambda, X) - \rho] f(X) \right]
    = \mathbb{E}_{X \sim \Gamma} \left[ [\pi_\theta \left[ g_\theta(\cdot,X) \right] - \rho] f(X) \right] \\
    &= \mathbb{E}_{X \sim \Gamma} \left[ (\tilde{\rho}_\theta(X) - \rho) f(X) \right]
    = 0 .
\end{align*}
Therefore, it suffices to prove that $\tilde{\rho}_\theta(x) = \rho$ for all $x$.
According to Lemma \ref{lemma_Phi}, we only need to verify:
    \begin{enumerate}[label=(\Alph*)]
        \item $\Phi$ is injective,
        \item the function $\tilde{\rho}$, defined by $\tilde{\rho}(x) \equiv \rho$, belongs to the set $\mathcal{C}$,
    \end{enumerate}
where the mapping $\Phi: \mathcal{C} \to \mathbb{R}^d$ is defined by
\begin{equation*}
    \Phi\left(\tilde{\rho}(\cdot)\right)
    = \mathbb{E}_{X \sim \Gamma} \left[ (\tilde{\rho}(X) - \rho) X \right] ,
\end{equation*}
and $\mathcal{C} = \{ \mu \left[ g_\theta(\cdot, x) \right] \mid \mu \text{ is a probability measure on } \mathbb{R}^d \}$.
Because
\begin{equation*}
    \mu \left[ g_\theta(\cdot, x) \right]
    = \rho + \frac{p}{d} \sum_{i=1}^d { \alpha_i(x) \left[ \mu \beta_{\xi_i}^{\epsilon(\theta)} \right] } ,
\end{equation*}
it follows that
\begin{equation*}
    \mathcal{C} \subset
    \left\{ \rho + \sum_{i=1}^d c_i \alpha_i(\cdot) \mid c_i \in \mathbb{R} \right\} .
\end{equation*}
Denote the latter set by $\tilde{\mathcal{C}}$. We then naturally extend the mapping $\Phi$ to $\tilde{\mathcal{C}}$ by
\begin{equation*}
    \Phi\left(\rho + \sum_{i=1}^d c_i \alpha_i(\cdot)\right)
    = \sum_{i=1}^d c_i \mathbb{E}_{X \sim \Gamma} [\alpha_i(X)X] .
\end{equation*}
Since the vectors $\left\{ \mathbb{E}_{X \sim \Gamma} [\alpha_i(X)X] \mid 1 \leq i \leq d \right\}$ are linearly independent, it follows that $\Phi$ is injective on $\tilde{\mathcal{C}}$.
This confirms the condition (A) in Lemma \ref{lemma_Phi}.
For the condition (B), because Item \ref{item_assumption_theta_basic_3} of Assumption \ref{assumption_theta_basic} implies that $\beta_{\xi}^{\epsilon}(0) = -\beta_{\xi}^{\epsilon}(0)$, we have $\beta_{\xi}^{\epsilon}(0) = 0$.
Thus, for any $x$, it holds that
\begin{equation*}
    \delta_0 \left[ g_\theta(\cdot, x) \right] = g_\theta(0,x) = \rho + \frac{p}{d} \sum_{i=1}^d { \alpha_i(x) \beta_{\xi_i}^{\epsilon(\theta)}(0) } = \rho ,
\end{equation*}
where $\delta_0$ is the Dirac measure at $0$.
Thus, the function $\tilde{\rho}$, defined by $\tilde{\rho}(x) \equiv \rho$, belongs to the set $\mathcal{C}$.

\section{Proofs for Feasible Randomization Procedure}

\subsection{Proof of Lemma \ref{lemma_g_lipschitz}}

\label{subsec_proof_lemma_g_lipschitz}

For some $r_g > 0$, denote the Lipschitz constant of $\epsilon$ and $\beta$ for $\theta \in \overline{B(\theta^*, r_g)}$ by $L_f$.
We have
\begin{align*}
    |g_\theta(\Lambda, X) - g_{\theta^{\prime}}(\Lambda, X)|
    & \leq \frac{p}{d} \sum_{i=1}^d { |\beta_{\xi_i}^{\epsilon(\theta)}(\Lambda) - \beta_{\xi_i^{\prime}}^{\epsilon(\theta^{\prime})}(\Lambda)| } \\
    & \leq \frac{p}{d} \sum_{i=1}^d \left[ L_f\|\xi_i-\xi_i^{\prime}\| + L_f|\epsilon(\theta)-\epsilon(\theta^{\prime})| \right] \\
    & \leq \frac{p}{d} \sum_{i=1}^d \left[ L_f\|\theta - \theta^{\prime}\| + L_f^2\|\theta - \theta^{\prime}\| \right] \\
    &= p(L_f+L_f^2) \|\theta - \theta^{\prime}\|
\end{align*}
Let $L_g = p(L_f+L_f^2)$ be the Lipschitz constant for $g_\theta(\Lambda, X)$ with respect to $\theta$.

\subsection{Proof of Lemma \ref{lemma_instance_established}}

\label{subsec_proof_lemma_instance_established}

\textbf{Proof of the Properties of $\epsilon(\theta)$}

We have set the function $\epsilon(\theta)$ as follows:
\begin{equation*}
    \epsilon(\theta) =
    \begin{cases}
        \frac{1}{\sqrt{d+1} \|A(\theta)^{-1}\|_2}, & \text{if the vectors } \{\xi_1, \dots, \xi_d\} \text{ are linearly independent,} \\
        0, & \text{otherwise,}
    \end{cases}
\end{equation*}
where $\|A(\theta)^{-1}\|_2$ is the operator norm of the matrix $A(\theta)^{-1}$ and the matrix $A(\theta) = \left( \frac{\xi_1}{\|\xi_1\|}, \dots, \frac{\xi_d}{\|\xi_d\|} \right)$.
We will prove the properties of the $\epsilon(\theta)$ function in the following two parts. The remaining properties are easy to check.
\begin{condition}
    \label{condition_1_epsilon}
    The function $\epsilon(\theta)$ is Lipschitz continuous on $\left\{ \theta \mid \forall i, \|\xi_i\| \geq d_{\mathrm{uni}} \right\}$ for any $d_{\mathrm{uni}}>0$.
\end{condition}
\begin{condition}
    \label{condition_2_epsilon}
    If the vectors $\{\xi_1, \dots, \xi_d\}$ in $\theta = (\xi_1, \dots, \xi_d)$ are linearly independent, then for $\epsilon = \epsilon(\theta)$, the following holds:
    \begin{equation*}
        \bigcup_{i=1}^{d} \left[ R_{\xi_i}^{\epsilon} \cup -R_{\xi_i}^{\epsilon} \right] = \mathbb{R}^d \backslash \{0\}.
    \end{equation*}
\end{condition}

\begin{proof}[Proof of Condition \ref{condition_1_epsilon}]
    We first transform the function $\epsilon(\theta)$ into the function related to the singular value of $A(\theta)$:
    \begin{align*}
        \frac{1}{\sqrt{d+1} \|A(\theta)^{-1}\|_2} &= \frac{1}{\sqrt{d+1} \max\{\sigma \mid \sigma \text{ is the singular value of } A(\theta)^{-1}\}} \\
        &= \frac{1}{\sqrt{d+1} \max\{\sigma^{-1} \mid \sigma \text{ is the singular value of } A(\theta) \}} \\
        &= \frac{\min\{\sigma \mid \sigma \text{ is the singular value of } A(\theta) \}}{\sqrt{d+1}} .
    \end{align*}

    Then, denote $\min\{\sigma \mid \sigma \text{ is the singular value of } A \}$ as $\sigma_d(A)$.
    The norm $|\cdot|$ of the matrix $A$ coincides with the norm of the matrix $\theta$, which is the Frobenius norm.
    From Problem III.6.13 in (\cite{bhatiaMatrixAnalysis1997}), we have
    \begin{equation*}
        |\sigma_d(A) - \sigma_d(A^{\prime})| \leq \|A - A^{\prime}\|_2 \leq \|A - A^{\prime}\| ,
    \end{equation*}
    which implies that $\sigma_d(A)$ is Lipschitz continuous.

    On the set $\left\{ \theta \mid \forall i, \|\xi_i\| \geq d_{\mathrm{uni}} \right\}$, The function mapping $\theta$ to $A(\theta)$ is Lipschitz continuous.
    Since $\epsilon(\theta)$ is proportional to $\sigma_d(A(\theta))$, which is a composition of two Lipschitz continuous functions $\sigma_d(A)$ and $A(\theta)$, $\epsilon(\theta)$ is also Lipschitz continuous on this set.
    Therefore, we can finish the proof of Condition \ref{condition_1_epsilon}.
\end{proof}

\begin{proof}[Proof of Condition \ref{condition_2_epsilon}]
    The expression
    \begin{equation*}
        \bigcup_{i=1}^{d} \left[ R_{\xi_i}^{\epsilon} \cup -R_{\xi_i}^{\epsilon} \right] \neq \mathbb{R}^d \backslash \{0\}
    \end{equation*}
    is equivalent to that there exists a vector $\xi \neq 0$ such that, for any index $i \in \{1, \dots, d\}$, $\cossimilarity(\xi,\xi_i) \leq \epsilon$, which is equivalent to that there exists a vector $\xi_i^{\prime}$ with $\|\xi_i^{\prime}\| \leq \epsilon$ such that
    \begin{equation*}
        \cossimilarity(\xi,\frac{\xi_i}{\|\xi_i\|}+\xi_i^{\prime}) = 0
    \end{equation*}
    From this observation, we can derive the equivalence between $\bigcup_{i=1}^{d} \left[ R_{\xi_i}^{\epsilon} \cup -R_{\xi_i}^{\epsilon} \right] \neq \mathbb{R}^d \backslash \{0\}$ and the proposition that there exists $d$ vectors $\{\xi_i^{\prime}\}$ with $\|\xi_i^{\prime}\| \leq \epsilon$ such that
    \begin{equation*}
        \left( \frac{\xi_1}{\|\xi_1\|}+\xi_1^{\prime} , \dots, \frac{\xi_d}{\|\xi_d\|}+\xi_d^{\prime} \right)
    \end{equation*}
    is a singular matrix.

    Moreover, when the vectors $\{\xi_1, \dots, \xi_d\}$ in $\theta = (\xi_1, \dots, \xi_d)$ are linearly independent, the matrix
    \begin{equation*}
        A(\theta) = \left( \frac{\xi_1}{\|\xi_1\|}, \dots, \frac{\xi_d}{\|\xi_d\|} \right)
    \end{equation*}
    is nonsingular.
    Because $\epsilon = \frac{1}{\sqrt{d+1} \|A(\theta)^{-1}\|_2}$, we have $\|\xi_i^{\prime}\| \leq \frac{1}{\sqrt{d+1} \|A(\theta)^{-1}\|_2} < \frac{1}{\sqrt{d} \|A(\theta)^{-1}\|_2}$ and for every index $i$, $\|A(\theta)^{-1} \xi_i^{\prime}\|$ is less than $\frac{1}{\sqrt{d}}$.
    Thus, for every index $i$, $\|A(\theta)^{-1} \xi_i^{\prime}\|_1 \leq \sqrt{d} \|A(\theta)^{-1} \xi_i^{\prime}\| < 1$.
    Therefore, $I + A(\theta)^{-1} (\xi_1^{\prime} , \dots, \xi_d^{\prime})$ is a strictly diagonally dominant matrix and thus its determinant is nonzero.

    In summary, for any $\{\xi_i^{\prime}\}$ with $\|\xi_i^{\prime}\| \leq \epsilon$,
    \begin{align*}
        \left| \left( \frac{\xi_1}{\|\xi_1\|}+\xi_1^{\prime} , \dots, \frac{\xi_d}{\|\xi_d\|}+\xi_d^{\prime} \right) \right|
        &= \left| A(\theta) + (\xi_1^{\prime} , \dots, \xi_d^{\prime}) \right| \\
        &= |A(\theta)| \left| I + A(\theta)^{-1} (\xi_1^{\prime} , \dots, \xi_d^{\prime}) \right|
        \neq 0 .
    \end{align*}
    This statement is equivalent to $\bigcup_{i=1}^{d} \left[ R_{\xi_i}^{\epsilon} \cup -R_{\xi_i}^{\epsilon} \right] = \mathbb{R}^d \backslash \{0\}$.
\end{proof}

\textbf{Proof of the Properties of $\tau_{\xi}^{\epsilon}(\Lambda)$}

The construction of $\tau_{\xi}^{\epsilon}(\Lambda)$ is a little complicated to analyze:
\begin{equation*}
    \tau_{\xi}^{\epsilon}(\Lambda) = \frac{\sqrt{1+\epsilon^2}\frac{\xi^T}{\|\xi\|}\Lambda} {\sqrt{1+\epsilon^2\|\Lambda\|^2}} .
\end{equation*}

For ease of analyzing Lipschitz continuity, we split the domain
\begin{equation*}
    \left\{ (\xi,\epsilon,\Lambda) \mid \|\xi\| \geq d_{\mathrm{uni}}, \epsilon \geq d_{\mathrm{uni}} \right\}
\end{equation*}
into two domains and define two corresponding functions.

First, we set $B(\epsilon, \zeta, r)$ on $\left\{ (\epsilon, \zeta, r) \mid \epsilon \in [d_{\mathrm{uni}},1], \zeta \in [-1,1], r \geq 0 \right\}$ for any $d_{\mathrm{uni}}>0$:
\begin{equation*}
    B(\epsilon, \zeta, r)
    = \frac{\sqrt{1+\epsilon^2} r \zeta}{\sqrt{1+\epsilon^2 r^2}} .
\end{equation*}
This function is continuously differentiable and has bounded derivatives on this domain because its partial derivatives are bounded and continuous on this domain.
The connection between $B$ and $\tau$ is the equation $\tau_{\xi}^{\epsilon}(\Lambda) = B(\epsilon, \cossimilarity(\Lambda,\xi), \|\Lambda\|)$.

Thus, the function $\tau_{\xi}^{\epsilon}(\Lambda) = B(\epsilon, \cossimilarity(\Lambda,\xi), \|\Lambda\|)$ is continuously differentiable and has bounded derivatives on the domain $\left\{ (\xi,\epsilon,\Lambda) \mid \|\xi\| \geq d_{\mathrm{uni}}, \epsilon \geq d_{\mathrm{uni}}, \|\Lambda\| \geq d_{\mathrm{uni}} \right\}$.
This is because its component functions $\cossimilarity(\Lambda,\xi)$ and $\|\Lambda\|$ have bounded and continuous partial derivatives on this domain.

Second, we set $C(\epsilon, x, r)$ on $\left\{ (\epsilon, x, r) \mid \epsilon \in [d_{\mathrm{uni}},1], |x| \leq r, r \geq 0 \right\}$ for any $d_{\mathrm{uni}}>0$:
\begin{equation*}
    C(\epsilon, x, r)
    = \frac{\sqrt{1+\epsilon^2} x}{\sqrt{1+\epsilon^2 r^2}} .
\end{equation*}
This function is continuously differentiable and has bounded derivatives on this domain because its partial derivatives are bounded and continuous on this domain.
The connection between $C$ and $\tau$ is the equation $\tau_{\xi}^{\epsilon}(\Lambda) = C(\epsilon, \frac{\xi^T}{\|\xi\|}\Lambda, \|\Lambda\|)$.

Thus, $\tau_{\xi}^{\epsilon}(\Lambda) = C(\epsilon, \frac{\xi^T}{\|\xi\|}\Lambda, \|\Lambda\|)$ is continuously differentiable and has bounded derivatives on the domain $\left\{ (\xi,\epsilon,\Lambda) \mid \|\xi\| \geq d_{\mathrm{uni}}, \epsilon \geq d_{\mathrm{uni}}, \|\Lambda\| \leq 2d_{\mathrm{uni}} \right\}$.
This is because its component functions $\frac{\xi^T}{\|\xi\|}\Lambda$ and $\|\Lambda\|$ have bounded and continuous partial derivatives on this domain.

In summary, the function $\tau_{\xi}^{\epsilon}(\Lambda)$ is continuously differentiable and has bounded derivatives on the domain
\begin{equation*}
    \left\{ (\xi,\epsilon,\Lambda) \mid \|\xi\| \geq d_{\mathrm{uni}}, \epsilon \geq d_{\mathrm{uni}} \right\}
\end{equation*}
for any $d_{\mathrm{uni}}>0$.

\textbf{Proof of the Properties of $\beta_{\xi}^{\epsilon}(\Lambda)$}

From the proof above and the similar property for the function $\cutsin$, we can show that $\beta_{\xi}^{\epsilon}(\Lambda)$ is continuously differentiable and has bounded derivatives on $\left\{ (\xi,\epsilon,\Lambda) \mid \|\xi\| \geq d_{\mathrm{uni}}, \epsilon \geq d_{\mathrm{uni}} \right\}$ for any $d_{\mathrm{uni}}>0$.

Furthermore, the expression
\begin{equation*}
    \tau_{\xi}^{\epsilon}(\Lambda) = \frac{\sqrt{1+\epsilon^2}\frac{\xi^T}{\|\xi\|}\Lambda} {\sqrt{1+\epsilon^2\|\Lambda\|^2}}
\end{equation*}
implies that for any $d_{\mathrm{uni}} > 0$, if we let $M = \max\{\frac{1}{d_{\mathrm{uni}}^2},1\}$, then for any $\Lambda \in R_\xi^\epsilon$ with $\|\Lambda\| \geq M$, $\epsilon \geq d_{\mathrm{uni}}$ and $\epsilon^\prime \in (0,\epsilon]$, we have
\begin{equation*}
    \tau_{\xi}^{\epsilon^\prime}(\Lambda)
    \geq \frac{\sqrt{1+{\epsilon^\prime}^2} \epsilon\|\Lambda\|} {\sqrt{1+{\epsilon^\prime}^2\|\Lambda\|^2}}
    \geq \frac{\sqrt{1+{\epsilon}^2} \epsilon\|\Lambda\|} {\sqrt{1+{\epsilon}^2\|\Lambda\|^2}}
    \geq \frac{\sqrt{1+{\epsilon}^2}} {\sqrt{\frac{1}{{\epsilon}^2\|\Lambda\|^2} + 1}}
    \geq \frac{\sqrt{1+{\epsilon}^2}} {\sqrt{\frac{1}{{\epsilon}^2M^2} + 1}}
    \geq 1 ,
\end{equation*}
since $\frac{1}{\epsilon M} \leq \epsilon$. Then $\beta_{\xi}^{\epsilon^\prime}(\Lambda) = -1$ holds.

Consequently, we have verified Item \ref{item_assumption_theta_basic_4} of Assumption \ref{assumption_theta_basic}, as well as Item \ref{item_assumption_theta_lipschitz_2} of Assumption \ref{assumption_theta_lipschitz}.
The remaining Items \ref{item_assumption_theta_basic_2} and \ref{item_assumption_theta_basic_3} of Assumption \ref{assumption_theta_basic} are straightforward to check and thus omitted.

\subsection{Proof of Lemma \ref{lemma_average_theta_check}}

\label{subsec_proof_lemma_average_theta_check}

In this subsection, it suffices to analyze the properties of $\xi_i$. For notational simplicity, let $U_j := \alpha_i(X_j)X_j$. The following analysis will be based on the properties of $U_j$.

The almost sure convergence can be easily verified. For the difference between consecutive estimates, we have
\begin{equation*}
    \|\xi_{n+1,i} - \xi_{n,i}\|
    = \left\|\frac{U_{n+1}}{n+1} - \frac{\overline{U}_n}{n+1}\right\|
    \leq \frac{\|U_{n+1}\|}{n+1} + \frac{\|\overline{U}_n\|}{n+1} .
\end{equation*}
From Assumption \ref{assumption_x_sub_exponential_bound} on $X_j$, we have that for any $\epsilon>0$, $\mathbb{E} \|U_j\|^{1/\epsilon}$ is finite. This implies that
\begin{equation*}
    \limsup_{n \rightarrow +\infty} n^{-2\epsilon} \|U_{n+1}\| = 0 \quad a.s. ,
\end{equation*}
by the Borel-Cantelli lemma and
\begin{equation*}
    \sum_n P(\|U_{n+1}\| > a n^{2\epsilon})
    \leq \sum_n \frac{\mathbb{E} \|U_j\|^{1/\epsilon}}{a^{1/\epsilon} n^2}
    < \infty ,
\end{equation*}
for any $a>0$.
Moreover, for $\epsilon < 1$, we have
\begin{equation*}
    \sum_n P(\|\overline{U}_n\| > a n^{2\epsilon})
    \leq \sum_n \frac{\mathbb{E} \|\overline{U}_n\|^{1/\epsilon}}{a^{1/\epsilon} n^2}
    \leq \sum_n \frac{\frac{1}{n} \sum_{j=1}^n \mathbb{E} \|U_j\|^{1/\epsilon}}{a^{1/\epsilon} n^2}
    = \sum_n \frac{\mathbb{E} \|U_j\|^{1/\epsilon}}{a^{1/\epsilon} n^2}
    < \infty ,
\end{equation*}
where the second inequality follows from Jensen's inequality, noting that the mapping $x \mapsto \|x\|^{1/\epsilon}$ is convex on $\mathbb{R}^d$ with respect to the $\ell_2$ norm $\|\cdot\|$.
Therefore,
\begin{equation*}
    \limsup_{n \rightarrow +\infty} n^{-2\epsilon} \|\overline{U}_n\| = 0 \quad a.s.
\end{equation*}
Consequently, we have that for any $\epsilon>0$,
\begin{equation*}
    \sup_n n^{1-\epsilon} \|\xi_{n+1,i} - \xi_{n,i}\| < \infty ,
\end{equation*}
almost surely. This further implies that $\sup_n n^\epsilon \|\theta_n-\theta_{n+1}\| < \infty$ almost surely for any $\epsilon \in (0,1)$, which implies Assumption \ref{assumption_theta_convergence}.

\subsection{Proof of Corollary \ref{corollary_boundedness_convergent}}

\label{subsec_proof_corollary_boundedness_convergent}

The proof of Corollary \ref{corollary_boundedness_convergent} is based on Theorem \ref{theorem_boundedness}. 
Lemma \ref{lemma_instance_established} leads to Assumption \ref{assumption_theta_basic}.
Combining with Assumption \ref{assumption_x_sub_exponential_bound}, Lemma \ref{lemma_drift_condition} implies $\overline{B(\theta^*,r_d)} \subset K_{M,\Delta}$ for some positive numbers $M$, $\Delta$ and $r_d$.
Lemma \ref{lemma_average_theta_check} leads to Assumption \ref{assumption_theta_convergence}.
Thus, the conditions of Theorem \ref{theorem_boundedness} are satisfied, and the result follows.

\subsection{Proof of Corollary \ref{corollary_asymptotic_normality}}

\label{subsec_proof_corollary_asymptotic_normality}

Under the assumptions and settings in the corollary, Lemmas \ref{lemma_g_lipschitz} and \ref{lemma_instance_established} lead to Assumption \ref{assumption_g_lipschitz}.
Lemma \ref{lemma_average_theta_check} leads to Assumption \ref{assumption_theta_convergence}.
Theorem \ref{theorem_shift_simultaneous_geometric_finite_adjusted} leads to a condition that there exists $r_d>0$ such that $\overline{B(\theta^*,r_d)} \subset K_{M,\Delta}$ for some $M>0$ and $\Delta>0$.
Therefore, the assumptions of Theorem \ref{theorem_CLT_complex_center} are satisfied.

By Theorem \ref{theorem_shift_simultaneous_geometric_finite_adjusted}, the centering term, $\pi_\theta h_\theta$, in the asymptotic distribution of $\sum_{i=1}^n (T_i-\rho)Y_i$ is zero.

\section{Central Limit Theorem}
\label{sec_universal_CLT}

\subsection{Proof of Theorem \ref{theorem_CLT_complex_center}}
\label{subsec_proof_theorem_CLT_complex_center}

The assumption below follows from Theorems \ref{theorem_drift_condition_inequality} and \ref{theorem_simultaneous_geometric_ergodicity}.
It is key properties in our proof.
\begin{assumption}
    \label{assumption_simultaneous_geometric_ergodicity}
    There exists a constant $r_d>0$, for any $\theta \in \overline{B(\theta^*,r_d)}$, $P_\theta$ is positive recurrent with invariant probability $\pi_\theta$. Moreover, there exists a Lyapunov function $V(\Lambda) = \exp(\lambda_1 \|\Lambda\|) \geq 1$ with $\lambda_1 > 0$ such that for any $\alpha \in (0,1]$, there exist some $L_\alpha>1$, $\beta_\alpha \in (0,1)$, $b_\alpha>0$, the inequalities $\pi_\theta V^\alpha \leq \frac{b_\alpha}{1-\beta_\alpha}$,
    \begin{equation*}
        \| P_\theta^{n}(\Lambda, \cdot) - \pi_\theta \|_{V^\alpha} \leq L_\alpha(1-L_\alpha^{-1})^n V^\alpha(\Lambda)
    \end{equation*}
    and
    \begin{equation*}
        P_\theta V^\alpha \leq \beta_\alpha V^\alpha + b_\alpha
    \end{equation*}
    hold for any $\theta \in \overline{B(\theta^*,r_d)}$.
\end{assumption}

Let $S_\Theta=\overline{B(\theta^*,\min\{r_d,r_g\})}$.

Before the main part of the proof, we first provide some basic notation and present the decomposition of the left-hand side of (\ref{eq_CLT}).
In addition, the proof relies on prerequisite properties established in Lemmas \ref{lemma_lipschitz_continuity_P_pi}--\ref{lemma_lipschitz_boundedness_FGH}.
These include the Lipschitz continuity and uniform bounds for the transition kernel $P_\theta$ and the functions $F_\theta$, $G_\theta$, and $H_\theta$.

Recall that
\begin{equation*}
    h_\theta(\Lambda) = \mathbb{E} \left[ [g_\theta(\Lambda, X) - \rho] f(X) \right],
\end{equation*}
where $f(x) = E[Y \mid X=x]$.

Given the function $h_\theta$, the transition kernel $P_\theta$ and the invariant probability $\pi_\theta$, consider the following Poisson equation, which is common in the theory of Markov chain (\cite{meynMarkovChainsStochastic2009}):
\begin{equation*}
    \hat{h}_\theta - P_\theta \hat{h}_\theta = h_\theta - \pi_\theta h_\theta ,
\end{equation*}
where $\hat{h}_\theta$ denotes the solution to the Poisson equation.
If $P_\theta$ is geometrically ergodic and $h_\theta$ is bounded, the Poisson equation admits the following solution:
\begin{equation*}
    \hat{h}_\theta = \sum_{n=0}^\infty (P_\theta^n - \pi_\theta) (h_\theta) .
\end{equation*}

To establish the asymptotic normality result stated in (\ref{eq_CLT}), we utilize the Poisson equation and decompose the left-hand side of (\ref{eq_CLT}) into the following seven components:
\begin{enumerate}
    \item \label{item_proof_CLT_1} $\frac{1}{\sqrt{N}} \sum_{n=0}^{N-1} { \left[ (T_{n+1}-\rho) Y_{n+1} - h_{\theta_n}(\Lambda_n) + Z_{n+1}-\mathbb{E}Z_{n+1} \right] \mathbb{I}(\theta_n \in S_\Theta) }$.
    \item \label{item_proof_CLT_2} $\frac{1}{\sqrt{N}} \sum_{n=0}^{N-1} { \left[ \hat{h}_{\theta_n}(\Lambda_{n+1}) - (P_{\theta_n}\hat{h}_{\theta_n})(\Lambda_n) \right] \mathbb{I}(\theta_n \in S_\Theta) }$.
    \item \label{item_proof_CLT_3} $\frac{1}{\sqrt{N}} \sum_{n=0}^{N-1} { \left[ \hat{h}_{\theta_{n+1}}(\Lambda_{n+1}) - \hat{h}_{\theta_n}(\Lambda_{n+1}) \right] \mathbb{I}(\theta_n \in S_\Theta, \theta_{n+1} \in S_\Theta) }$.
    \item \label{item_proof_CLT_4} $\frac{1}{\sqrt{N}} \sum_{n=0}^{N-1} -\hat{h}_{\theta_n}(\Lambda_{n+1})
    \mathbb{I}(\theta_n \in S_\Theta, \theta_{n+1} \notin S_\Theta)$.
    \item \label{item_proof_CLT_5} $\frac{1}{\sqrt{N}} \sum_{n=1}^N \hat{h}_{\theta_n}(\Lambda_n)
    \mathbb{I}(\theta_n \in S_\Theta, \theta_{n-1} \notin S_\Theta)$.
    \item \label{item_proof_CLT_6} $\frac{1}{\sqrt{N}} \left[ \hat{h}_{\theta_0}(\Lambda_0)\mathbb{I}(\theta_0 \in S_\Theta) - \hat{h}_{\theta_N}(\Lambda_N)\mathbb{I}(\theta_N \in S_\Theta) \right]$.
    \item \label{item_proof_CLT_7} $\frac{1}{\sqrt{N}} \sum_{n=0}^{N-1} { \left[ (T_{n+1}-\rho) Y_{n+1} - \pi_{\theta^*}(h_{\theta^*}) + Z_{n+1}-\mathbb{E}Z_{n+1} \right] \mathbb{I}(\theta_n \notin S_\Theta) }$.
\end{enumerate}

We will prove that the sum of Items \ref{item_proof_CLT_1} and \ref{item_proof_CLT_2} is asymptotically normal, while the other five terms are $o_P(1)$.

\textbf{Item \ref{item_proof_CLT_1} and Item \ref{item_proof_CLT_2}}

The sum of Item \ref{item_proof_CLT_1} and Item \ref{item_proof_CLT_2} forms a martingale sequence. The corresponding martingale difference sequence, denoted by $\{\Delta M_n\}$, is
\begin{align*}
    \Delta M_n &= \left[ (T_{n+1}-\rho) Y_{n+1} - h_{\theta_n}(\Lambda_n) + Z_{n+1}-\mathbb{E}Z_{n+1}
    + \hat{h}_{\theta_n}(\Lambda_{n+1}) - (P_{\theta_n}\hat{h}_{\theta_n})(\Lambda_n) \right] \\
    & \quad \mathbb{I}(\theta_n \in S_\Theta)
\end{align*}

Moreover, the conditional variance is
\begin{align*}
    & \quad \mathbb{E} \left[ \left\{ \Delta M_n \right\}^2
    \mid \mathcal{F}_n \right] \\
&= \mathbb{I}(\theta_n \in S_\Theta) \left\{
        \mathbb{E} \left[ \left[ (T_{n+1}-\rho) Y_{n+1} - h_{\theta_n}(\Lambda_n) + Z_{n+1}-\mathbb{E}Z_{n+1} \right]^2 \mid \mathcal{F}_n \right]
    \right. \\
    & \quad + \mathbb{E} \left[ P_{\theta_n}(\hat{h}_{\theta_n}^2)(\Lambda_n) - (P_{\theta_n}\hat{h}_{\theta_n})^2(\Lambda_n) \mid \mathcal{F}_n \right] \\
    & \quad + 2 \mathbb{E} \left[ \left[ (T_{n+1}-\rho) Y_{n+1} - h_{\theta_n}(\Lambda_n) + Z_{n+1}-\mathbb{E}Z_{n+1} \right] \right. \\
    & \quad\quad \left. \left. \left[ \hat{h}_{\theta_n}(\Lambda_{n+1}) - (P_{\theta_n}\hat{h}_{\theta_n})(\Lambda_n) \right] \mid \mathcal{F}_n \right] \right\} \\
    &= G_{\theta_n}(\Lambda_n)\mathbb{I}(\theta_n \in S_\Theta) + F_{\theta_n}(\Lambda_n)\mathbb{I}(\theta_n \in S_\Theta) + 2 H_{\theta_n}(\Lambda_n)\mathbb{I}(\theta_n \in S_\Theta) ,
\end{align*}
where the functions $G_\theta$, $F_\theta$, and $H_\theta$ correspond to the first, second, and third terms in the decomposition of the conditional variance, respectively.

From the subsequent proof, we can show that
\begin{equation}
    \frac{1}{N} \sum_{n=0}^{N-1} \left[ G_{\theta_n}(\Lambda_n) + F_{\theta_n}(\Lambda_n) + 2H_{\theta_n}(\Lambda_n) \right]\mathbb{I}(\theta_n \in S_\Theta)
    \xrightarrow{\mathbb{P}} \pi_{\theta^*}[G_{\theta^*}+F_{\theta^*}+2H_{\theta^*}]
    \label{eq_CLT_average_convergence}
\end{equation}

Next, we need to check the Lindeberg condition. For any $\epsilon>0$,
\begin{align*}
    & \quad \frac{1}{N} \sum_{n=0}^{N-1} \mathbb{E} \left[ \Delta M_n^2 \mathbb{I}(|\Delta M_n| \geq \epsilon \sqrt{N}) \right] \\
    & \leq \frac{25}{N} \sum_{n=0}^{N-1} \mathbb{E} \left[ ((T_{n+1}-\rho)Y_{n+1})^2 \mathbb{I}(|(T_{n+1}-\rho)Y_{n+1}| \geq \epsilon \sqrt{N}/5) \right] \\
    & \quad + \frac{25}{N} \sum_{n=0}^{N-1} \mathbb{E} \left[ (Z_{n+1}-\mathbb{E}Z_{n+1})^2 \mathbb{I}(|Z_{n+1}-\mathbb{E}Z_{n+1}| \geq \epsilon \sqrt{N}/5) \right] \\
    & \quad + \frac{25}{N} \sum_{n=0}^{N-1} \mathbb{E} \left[ (h_{\theta_n}(\Lambda_n))^2 \mathbb{I}(|h_{\theta_n}(\Lambda_n)| \geq \epsilon \sqrt{N}/5) \right] \\
    & \quad + \frac{25}{N} \sum_{n=0}^{N-1} \mathbb{E} \left[ (\hat{h}_{\theta_n}(\Lambda_{n+1}))^2 \mathbb{I}(\theta_n \in S_\Theta) \mathbb{I}(|\hat{h}_{\theta_n}(\Lambda_{n+1})| \geq \epsilon \sqrt{N}/5) \right] \\
    & \quad + \frac{25}{N} \sum_{n=0}^{N-1} \mathbb{E} \left[ ((P_{\theta_n}\hat{h}_{\theta_n})(\Lambda_n))^2 \mathbb{I}(\theta_n \in S_\Theta) \mathbb{I}(|(P_{\theta_n}\hat{h}_{\theta_n})(\Lambda_n)| \geq \epsilon \sqrt{N}/5) \right] \\
    & \leq \frac{25}{N} \sum_{n=0}^{N-1} \mathbb{E} \left[ Y_{n+1}^2 \mathbb{I}(|Y_{n+1}| \geq \epsilon \sqrt{N}/5) \right]
    + \frac{25}{N} \sum_{n=0}^{N-1} \mathbb{E} \left[ \mathbb{E}|Y| \mathbb{I}((\mathbb{E}|Y|)^2 \geq \epsilon \sqrt{N}/5) \right] \\
    & \quad + \frac{25}{N} \sum_{n=0}^{N-1} \mathbb{E} \left[ (Z_{n+1}-\mathbb{E}Z_{n+1})^2 \mathbb{I}(|Z_{n+1}-\mathbb{E}Z_{n+1}| \geq \epsilon \sqrt{N}/5) \right] \\
    & \quad + \frac{25}{N} \sum_{n=0}^{N-1} \mathbb{E} \left[ (C_{h,\alpha} V^{\alpha}(\Lambda_{n+1}))^2 \mathbb{I}(|C_{h,\alpha} V^{\alpha}(\Lambda_{n+1})| \geq \epsilon \sqrt{N}/5) \right] \\
    & \quad + \frac{25}{N} \sum_{n=0}^{N-1} \mathbb{E} \left[ (C_{Ph,\alpha} V^{\alpha}(\Lambda_n))^2 \mathbb{I}(|C_{Ph,\alpha} V^{\alpha}(\Lambda_n)| \geq \epsilon \sqrt{N}/5) \right] .
\end{align*}

Since the sequence $\{(Y_{n+1},Z_{n+1})\}$ is i.i.d. with finite second moments, the first three terms on the right-hand side converge to zero.
The fourth and fifth terms can be derived from the inequality that
\begin{align*}
    & \quad \frac{1}{N} \sum_{n=0}^{N-1} \mathbb{E} \left[ V^{2\alpha}(\Lambda_n) \mathbb{I}(|V^{\alpha}(\Lambda_n)| \geq \epsilon \sqrt{N}/5) \right]
    \leq \frac{1}{N} \sum_{n=0}^{N-1} \mathbb{E} \left[ \frac{5}{\epsilon \sqrt{N}} V^{3\alpha}(\Lambda_n) \right] \\
    & \leq \frac{1}{N} \sum_{n=0}^{N-1} \frac{5}{\epsilon \sqrt{N}} \max\{\frac{b_{3\alpha}}{1-\beta_{3\alpha}},\mathbb{E} V(\Lambda_0)\}
    \leq \frac{5}{\epsilon \sqrt{N}} \max\{\frac{b_{3\alpha}}{1-\beta_{3\alpha}},\mathbb{E} V(\Lambda_0)\}
    \rightarrow 0 ,
\end{align*}
as $N \rightarrow \infty$, where $\alpha \in (0,1/3]$.

Thus, we have successively proved the average convergence of variances and the Lindeberg condition. Therefore, we can conclude from Corollary 3.1 in (\cite{hallMartingaleLimitTheory1980}) that
\begin{equation*}
    \frac{1}{\sqrt{N}} \sum_{n=0}^{N-1} \Delta M_n \xrightarrow{d} \mathcal{N}(0, {\sigma^*_{Y,Z}}^2) .
\end{equation*}

\textbf{Proof of Average Convergence}

Before the proof of \eqref{eq_CLT_average_convergence}, we show that the transition kernels and several functions satisfy boundedness in the sense of $|\cdot|_V$ and Lipschitz continuity, as stated in the following lemmas.
\begin{lemma}
    \label{lemma_lipschitz_continuity_P_pi}

    Under Assumptions \ref{assumption_g_lipschitz} and \ref{assumption_simultaneous_geometric_ergodicity}, there exists a constant $L_P>0$ such that for any $\theta$, $\theta^\prime \in \overline{B(\theta^*, r_g)}$,
    \begin{equation*}
        D_V\left(\theta, \theta^{\prime}\right) \leq L_P \|\theta - \theta^{\prime}\| .
    \end{equation*}
    Moreover, the constant $L_P$ depends only on $L_g$, $\beta$, $b$ and $\iota$.

    Furthermore, the family of invariant probability measures $\{\pi_\theta\}$ satisfies, for any $\theta$, $\theta^\prime \in \overline{B(\theta^*, \min\{r_d,r_g\})}$,
    \begin{equation*}
        \|\pi_\theta-\pi_{\theta^{\prime}}\|_V \leq L_{d\pi} \|\theta - \theta^{\prime}\| ,
    \end{equation*}
    where the constant $L_{d\pi}$ depends only on $L$, $L_P$ and $C_V = \frac{b}{1-\beta}$.

    If $V$ is replaced by $V^\alpha$ for some $\alpha \in (0,1]$, the corresponding Lipschitz constants are denoted by $L_{P,\alpha}$ and $L_{d\pi,\alpha}$, respectively.
\end{lemma}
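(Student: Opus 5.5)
For the first claim, I would work directly from the definition of the kernel in (\ref{eq_transition_kernel}). For $\theta,\theta'\in\overline{B(\theta^*,r_g)}$ and any $f$ with $|f|_V\le 1$, we have
\begin{equation*}
(P_\theta-P_{\theta'})f(\Lambda)=\int \big[g_\theta(\Lambda,X)-g_{\theta'}(\Lambda,X)\big]\big[f(\Lambda+(1-\rho)X)-f(\Lambda-\rho X)\big]\Gamma(dX).
\end{equation*}
Assumption \ref{assumption_g_lipschitz} bounds the first bracket by $L_g\|\theta-\theta'\|$. The second bracket is at most $V(\Lambda+(1-\rho)X)+V(\Lambda-\rho X)$. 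The key observation is that the same integrals appear in the lower-bounded kernel. Since $g_\theta\in[\iota,1-\iota]$ by Assumption \ref{assumption_sampling},
\begin{equation*}
\int [V(\Lambda+(1-\rho)X)+V(\Lambda-\rho X)]\Gamma(dX)\le \iota^{-1}(P_\theta V)(\Lambda)\le \iota^{-1}(\beta V(\Lambda)+b)\le \iota^{-1}(\beta+b)V(\Lambda),
\end{equation*}
where the last two steps use the drift inequality of Assumption \ref{assumption_simultaneous_geometric_ergodicity} and $V\ge1$. Dividing by $V(\Lambda)$ and taking the supremum gives $D_V(\theta,\theta')\le L_g\iota^{-1}(\beta+b)\|\theta-\theta'\|$, so $L_P=L_g(\beta+b)/\iota$. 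One caveat: the drift inequality is only available on $\overline{B(\theta^*,r_d)}$. To cover $r_g>r_d$, I would instead bound the integral by $V(\Lambda)\,\mathbb{E}e^{\lambda_1\|X\|}$ using $\|\Lambda+cX\|\le\|\Lambda\|+\|X\|$. This needs $\lambda_1\le\lambda$, which holds by the construction in Theorem \ref{theorem_drift_condition_inequality}, and it gives a constant depending on $C$ instead. I would note either route.

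For the second claim, I would use the standard perturbation identity for invariant measures:
\begin{equation*}
\pi_\theta-\pi_{\theta'}=\pi_\theta P_\theta^n-\pi_{\theta'}P_{\theta'}^n=(\pi_\theta-\pi_{\theta'})P_\theta^n+\sum_{k=0}^{n-1}\pi_{\theta'}P_{\theta'}^{k}(P_\theta-P_{\theta'})P_\theta^{n-1-k}.
\end{equation*}
The main obstacle is handling the first term without circularity. The cleaner route is to write $\pi_{\theta'}-\pi_\theta$ via the telescoping and then send $n\to\infty$:
\begin{equation*}
\pi_{\theta'}P_\theta^n-\pi_\theta=\sum_{k=0}^{n-1}\pi_{\theta'}(P_{\theta'}-P_\theta)P_\theta^{k}+\ldots
\end{equation*}
Concretely, I would use $\pi_{\theta'}-\pi_{\theta'}P_\theta^n=\sum_{k=0}^{n-1}\pi_{\theta'}(P_{\theta'}-P_\theta)P_\theta^k$. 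The left side tends to $\pi_{\theta'}-\pi_\theta$ in $V$-norm by the uniform geometric ergodicity and $\pi_{\theta'}V\le C_V$.

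For each summand, the signed measure $\nu=\pi_{\theta'}(P_{\theta'}-P_\theta)$ satisfies $\nu(\mathrm{X})=0$, and $\|\nu\|_V\le D_V(\theta,\theta')\,\pi_{\theta'}V\le L_P C_V\|\theta-\theta'\|$. Because $\nu$ has total mass zero, $\nu P_\theta^k=\nu(P_\theta^k-\pi_\theta)$. Theorem \ref{theorem_simultaneous_geometric_ergodicity} gives $\|\nu P_\theta^k\|_V\le L(1-L^{-1})^k\|\nu\|_V$. Summing the geometric series yields $\|\pi_\theta-\pi_{\theta'}\|_V\le L^2L_PC_V\|\theta-\theta'\|$, so $L_{d\pi}=L^2L_PC_V$. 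Finally, the $V^\alpha$ versions follow verbatim from the corresponding $\alpha$-constants in Assumption \ref{assumption_simultaneous_geometric_ergodicity}.
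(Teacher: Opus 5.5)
Your proposal is correct. The first part matches the paper's argument exactly: the same $\iota^{-1}(P_\theta V)$ bound, leading to $L_P=L_g(\beta+b)/\iota$.

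Your caveat about $r_g>r_d$ is well taken. It applies equally to the paper's proof, which uses the drift inequality for $\theta$ that may lie outside $\overline{B(\theta^*,r_d)}$. There is a simpler repair than the one you suggest. The integral $\int[V(\Lambda+(1-\rho)X)+V(\Lambda-\rho X)]\Gamma(dX)$ does not involve $\theta$ at all, so you can bound it by $\iota^{-1}(P_{\theta^*}V)(\Lambda)\le\iota^{-1}(\beta+b)V(\Lambda)$ for every pair $\theta,\theta'\in\overline{B(\theta^*,r_g)}$. This keeps $L_P$ depending only on $L_g,\beta,b,\iota$, as the statement claims. Your moment route instead makes the constant depend on $C$ and imports Assumption \ref{assumption_x_sub_exponential_bound}, which is not among the lemma's hypotheses.

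For the second part you take a genuinely different route. The paper cites Lemma \ref{lemma_solution_continuity}, its adaptation of Lemma 4.2 of Fort et al., which gives $\|\pi_\theta-\pi_{\theta'}\|_V\le L^2\{\pi_\theta(V)+LV(x)\}D_V(\theta,\theta')$ at an auxiliary point $x$. It then turns this into $L_{d\pi}=2L^3L_PC_V$, implicitly using a point with $V(x)\le C_V$.

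You instead prove the perturbation bound directly, in three steps:
\begin{itemize}
\item the telescoping identity $\pi_{\theta'}-\pi_{\theta'}P_\theta^n=\sum_{k<n}\pi_{\theta'}(P_{\theta'}-P_\theta)P_\theta^k$;
\item the zero total mass of $\nu=\pi_{\theta'}(P_{\theta'}-P_\theta)$, which lets you replace $P_\theta^k$ by $P_\theta^k-\pi_\theta$;
\item the estimate $\|\nu\|_V=|\nu|(V)\le D_V(\theta,\theta')\,\pi_{\theta'}V$.
\end{itemize}
Each step checks out, including the passage $n\to\infty$, which uses $\pi_{\theta'}V\le C_V$. You get the sharper constant $L^2L_PC_V$, with no auxiliary point.

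What the paper's route buys is economy. The same cited lemma also supplies the bound on $|P_\theta\hat F_\theta-P_{\theta'}\hat F_{\theta'}|_V$ for Poisson solutions, which is needed later in Lemmas \ref{lemma_lipschitz_continuity_poisson} and \ref{lemma_convergence_average_enter}. Your argument is self-contained and more transparent, but it covers only the invariant measures, so that second estimate would still have to be proved separately.
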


\begin{lemma}
    \label{lemma_lipschitz_continuity_poisson}

    Let $\hat{h}_\theta$ be the solution to the Poisson equation $\hat{h}_\theta - P_\theta \hat{h}_\theta = h_\theta - \pi_\theta h_\theta$.
    Under the same assumptions as in Lemma \ref{lemma_lipschitz_continuity_P_pi}, the function $\hat{h}_\theta$ is Lipschitz continuous on the set $\overline{B(\theta^*,\min\{r_d,r_g\})}$ in the sense that for any $\theta$, $\theta^\prime \in \overline{B(\theta^*,\min\{r_d,r_g\})}$,
    \begin{equation*}
        |\hat{h}_\theta(\Lambda) - \hat{h}_{\theta^{\prime}}(\Lambda)|
        \leq L_{dh} \|\theta - \theta^{\prime}\| V(\Lambda) ,
    \end{equation*}
    where the constant $L_{dh}$ depends only on $\mathbb{E}|Y|$, $L$, $L_P$, $L_{d\pi}$ and $L_h$.
    If $V$ is replaced by $V^\alpha$ for some $\alpha \in (0,1]$, the corresponding Lipschitz constant is denoted by $L_{dh,\alpha}$.

    Moreover, for any $\alpha \in (0,1]$, the function $\hat{h}_\theta$ satisfies the bound,
    \begin{equation*}
        |\hat{h}_\theta(\Lambda)|
        \leq C_{h,\alpha} V^{\alpha}(\Lambda) ,
    \end{equation*}
    where the constant $C_{h,\alpha} = \mathbb{E}|Y| L_\alpha^2$.
\end{lemma}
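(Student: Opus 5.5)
We must prove Lemma \ref{lemma_lipschitz_continuity_poisson}: Lipschitz continuity of $\hat h_\theta$ in $\theta$ (weighted by $V$) and the bound $|\hat h_\theta|\le C_{h,\alpha}V^\alpha$.

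We start with the bound, which is the easier part. Since $|h_\theta(\Lambda)|\le \mathbb{E}|Y|$, we have $|h_\theta|_{V^\alpha}\le \mathbb{E}|Y|$. Using the series representation $\hat h_\theta=\sum_{n\ge0}(P_\theta^n-\pi_\theta)h_\theta$ together with Assumption \ref{assumption_simultaneous_geometric_ergodicity} for $V^\alpha$, each term is bounded by
\begin{equation*}
\mathbb{E}|Y|\,\|P_\theta^n(\Lambda,\cdot)-\pi_\theta\|_{V^\alpha}\le \mathbb{E}|Y| L_\alpha(1-L_\alpha^{-1})^nV^\alpha(\Lambda).
\end{equation*}
Summing the geometric series gives $\mathbb{E}|Y|L_\alpha^2V^\alpha(\Lambda)$. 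This also shows the series converges, so $\hat h_\theta$ indeed solves the Poisson equation.

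For Lipschitz continuity, we first need $|h_\theta-h_{\theta'}|\le L_h\|\theta-\theta'\|$. By Assumption \ref{assumption_g_lipschitz},
\begin{equation*}
|h_\theta(\Lambda)-h_{\theta'}(\Lambda)|\le \mathbb{E}\big[|g_\theta-g_{\theta'}||f(X)|\big]\le L_g\mathbb{E}|Y|\,\|\theta-\theta'\|,
\end{equation*}
so $L_h=L_g\mathbb{E}|Y|$. Next we write
\begin{equation*}
\hat h_\theta-\hat h_{\theta'}=\sum_n\big[(P_\theta^n-\pi_\theta)(h_\theta-h_{\theta'})\big]+\sum_n\big[(P_\theta^n-\pi_\theta)-(P_{\theta'}^n-\pi_{\theta'})\big]h_{\theta'}.
\end{equation*}
The first sum is bounded by $L^2L_h\|\theta-\theta'\|V$, exactly as above.

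For the second sum, we use the telescoping identity
\begin{equation*}
P_\theta^n-P_{\theta'}^n=\sum_{j=0}^{n-1}P_\theta^j(P_\theta-P_{\theta'})P_{\theta'}^{n-1-j}.
\end{equation*}
Since $(P_\theta-P_{\theta'})$ annihilates constants, we may replace $P_{\theta'}^{n-1-j}h_{\theta'}$ by $(P_{\theta'}^{n-1-j}-\pi_{\theta'})h_{\theta'}$. This has $V$-norm at most $\mathbb{E}|Y|L(1-L^{-1})^{n-1-j}$. Then:
\begin{itemize}
\item $P_\theta-P_{\theta'}$ contributes $D_V(\theta,\theta')\le L_P\|\theta-\theta'\|$ by Lemma \ref{lemma_lipschitz_continuity_P_pi}.
\item $P_\theta^j$ applied to $V$ is bounded by $(1+C_V)V$ by the drift inequality.
\end{itemize}
This gives a term of order $n(1-L^{-1})^{n-1}$, which is summable.

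The subtlety is the centering. We must subtract $\pi_\theta$ and $\pi_{\theta'}$ consistently, so we handle $(\pi_\theta-\pi_{\theta'})h_{\theta'}$ separately. By Lemma \ref{lemma_lipschitz_continuity_P_pi} it is bounded by $L_{d\pi}\mathbb{E}|Y|\|\theta-\theta'\|$ per term. However, it cannot simply be summed over $n$.

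To fix this, we rearrange as
\begin{equation*}
(P_\theta^n-\pi_\theta)-(P_{\theta'}^n-\pi_{\theta'})=\sum_j(P_\theta^j-\pi_\theta)(P_\theta-P_{\theta'})(P_{\theta'}^{n-1-j}-\pi_{\theta'})+\text{remainder}.
\end{equation*}
The remainder involves $\pi_\theta(P_\theta-P_{\theta'})(\cdots)$ and $(\pi_{\theta'}-\pi_\theta)P_{\theta'}^n$. Both factors in the main sum are then geometrically decaying, giving a convolution sum bounded by $L^2\cdot L^2$-type constants.

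I expect this centering bookkeeping to be the main obstacle. The plan is to verify the identity $\pi_\theta-\pi_{\theta'}=\pi_\theta(P_\theta-P_{\theta'})\sum_k(P_{\theta'}^k-\pi_{\theta'})$ and use it to absorb the non-summable remainder. Collecting constants then yields $L_{dh}$ depending only on $\mathbb{E}|Y|,L,L_P,L_{d\pi},L_h$. The same argument with $V^\alpha$ gives $L_{dh,\alpha}$.
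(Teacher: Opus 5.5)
Your bound $|\hat h_\theta|\le \mathbb{E}|Y|L_\alpha^2V^\alpha$ and your constant $L_h=L_g\mathbb{E}|Y|$ match the paper. For the Lipschitz part, however, you take a genuinely different route.

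The paper never expands the series. It writes $\hat h_\theta=P_\theta\hat h_\theta+h_\theta-\pi_\theta h_\theta$ and bounds $|P_\theta\hat h_\theta-P_{\theta'}\hat h_{\theta'}|_{V^\alpha}$ with the cited perturbation estimate of Lemma \ref{lemma_solution_continuity}, adapted from Fort, Moulines and Priouret. It then controls $|\pi_\theta h_\theta-\pi_{\theta'}h_{\theta'}|$ through $L_h$ and $L_{d\pi}$. Your telescoping argument is in effect a proof of that cited estimate. This makes your proof self-contained with explicit constants, whereas the paper's is shorter but rests on an imported lemma.

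Your plan does close. With $h=h_{\theta'}$, after double centering each of the $n$ summands is at most $L^2(1-L^{-1})^{n-1}D_V(\theta,\theta')\mathbb{E}|Y|V$. Since $\sum_{n\ge1}n(1-L^{-1})^{n-1}=L^2$, the double sum is at most $L^4D_V(\theta,\theta')\mathbb{E}|Y|V$.

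Two points in your write-up need correcting.

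First, the singly centered telescoped sum cannot be of order $n(1-L^{-1})^{n-1}$. That sum equals $(P_\theta^n-P_{\theta'}^n)h_{\theta'}$, which tends to $(\pi_\theta-\pi_{\theta'})h_{\theta'}$ rather than to zero. The bound $P_\theta^jV\le(1+C_V)V$ gives only an $n$-uniform $O(1)$ estimate. The decay rate appears only after you also center $P_\theta^j$.

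Second, the remainder is not really an obstacle. Because $\pi_\theta P_\theta=\pi_\theta$, the sum $\sum_{j=0}^{n-1}\pi_\theta(P_\theta-P_{\theta'})P_{\theta'}^{n-1-j}h=\pi_\theta h-\pi_\theta P_{\theta'}^nh$ telescopes. The whole remainder therefore collapses to $-(\pi_\theta-\pi_{\theta'})(P_{\theta'}^n-\pi_{\theta'})h$, using that a difference of probability measures kills constants. This is at most $L_{d\pi}\|\theta-\theta'\|\,\mathbb{E}|Y|\,L(1-L^{-1})^n$, which is summable in $n$.

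Your identity $\pi_\theta-\pi_{\theta'}=\pi_\theta(P_\theta-P_{\theta'})\sum_{k\ge0}(P_{\theta'}^k-\pi_{\theta'})$ is correct and also works. It turns the remainder into a tail of size $\pi_\theta(V)D_V(\theta,\theta')\mathbb{E}|Y|L^2(1-L^{-1})^n$. The direct bound, though, uses $L_{d\pi}$ as the statement's constant list suggests, and gives $L_{dh}\le L^2L_h+\mathbb{E}|Y|(L^4L_P+L^2L_{d\pi})$.
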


\begin{lemma}
    \label{lemma_lipschitz_boundedness_FGH}

    Under Assumptions \ref{assumption_g_lipschitz} and \ref{assumption_simultaneous_geometric_ergodicity}, for any $\alpha \in (0,1]$, there exist positive constants $L_{dF,\alpha}$ and $C_{F,\alpha}$ such that for any $\theta$, $\theta^\prime \in S_\Theta$,
    \begin{align*}
        |F_\theta(\Lambda) - F_{\theta^{\prime}}(\Lambda)|
        & \leq L_{dF,\alpha} V^{\alpha}(\Lambda) \|\theta - \theta^{\prime}\| , \\
        |F_\theta(\Lambda)|
        & \leq C_{F,\alpha} V^{\alpha}(\Lambda) .
    \end{align*}

    Analogous bounds also hold for $G_\theta$ and $H_\theta$ with corresponding constants.
\end{lemma}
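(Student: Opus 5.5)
The plan is to handle $G_\theta$, $F_\theta$ and $H_\theta$ separately, in each case writing the function as an explicit integral against $\Gamma$ and $P_\theta$. Its only $\theta$-dependence will then come through $g_\theta$, $h_\theta$, $\hat h_\theta$ and $P_\theta$, whose Lipschitz and growth controls are already available:
\begin{itemize}
\item $|g_\theta-g_{\theta'}|\le L_g\|\theta-\theta'\|$ (Assumption \ref{assumption_g_lipschitz});
\item $|h_\theta|\le \mathbb{E}|Y|$ and $|h_\theta-h_{\theta'}|\le L_g\mathbb{E}|Y|\,\|\theta-\theta'\|$, which follows directly from (\ref{eq_h_definition});
\item the kernel bound $D_{V^\alpha}(\theta,\theta')\le L_{P,\alpha}\|\theta-\theta'\|$ (Lemma \ref{lemma_lipschitz_continuity_P_pi});
\item $|\hat h_\theta|\le C_{h,\alpha}V^\alpha$ and $|\hat h_\theta-\hat h_{\theta'}|\le L_{dh,\alpha}\|\theta-\theta'\|V^\alpha$ (Lemma \ref{lemma_lipschitz_continuity_poisson});
\item the drift inequality $P_\theta V^\alpha\le \beta_\alpha V^\alpha+b_\alpha\le (1+b_\alpha)V^\alpha$ (Assumption \ref{assumption_simultaneous_geometric_ergodicity}).
\end{itemize}
Since $V\ge 1$, a bound with $V^{\alpha'}$ for $\alpha'\le\alpha$ implies the bound with $V^\alpha$. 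It therefore suffices to prove each claim for all sufficiently small $\alpha$.

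\emph{Bounds for $G_\theta$.} Write $W=(T-\rho)Y+Z-\mathbb{E}Z$. Conditioning on $X$ gives
\[
G_\theta(\Lambda)=\int\Bigl[g_\theta(\Lambda,X)\,m_1(X,\Lambda)+(1-g_\theta(\Lambda,X))\,m_0(X,\Lambda)\Bigr]\Gamma(dX),
\]
where $m_1=\mathbb{E}[((1-\rho)Y-h_\theta(\Lambda)+Z-\mathbb{E}Z)^2\mid X]$ and $m_0$ is the analogous quantity with $-\rho Y$ in place of $(1-\rho)Y$. Each $m_t$ is a quadratic polynomial in $h_\theta(\Lambda)$ whose coefficients have finite $\Gamma$-integral, because $Y$ and $Z$ have finite second moments. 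Hence $G_\theta$ is bounded by a constant, which is $\le$ a constant times $V^\alpha$. Its $\theta$-increment is at most a constant times $|g_\theta-g_{\theta'}|+|h_\theta-h_{\theta'}|$, which is $O(\|\theta-\theta'\|)$ uniformly in $\Lambda$.

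\emph{Bounds for $F_\theta$.} Here $F_\theta=P_\theta(\hat h_\theta^2)-(P_\theta\hat h_\theta)^2$. I will use Lemma \ref{lemma_lipschitz_continuity_poisson} with exponent $\alpha/2$, giving $\hat h_\theta^2\le C_{h,\alpha/2}^2V^\alpha$. The drift bound then yields $|F_\theta|\le 2C_{h,\alpha/2}^2(1+b_\alpha)V^\alpha$, where Jensen handles the square term. For the Lipschitz part I decompose
\[
P_\theta\hat h_\theta^2-P_{\theta'}\hat h_{\theta'}^2=(P_\theta-P_{\theta'})\hat h_\theta^2+P_{\theta'}\bigl(\hat h_\theta^2-\hat h_{\theta'}^2\bigr).
\]
The first term is bounded by $C_{h,\alpha/2}^2L_{P,\alpha}\|\theta-\theta'\|V^\alpha(\Lambda)$. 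For the second, $|\hat h_\theta^2-\hat h_{\theta'}^2|\le L_{dh,\alpha/2}\|\theta-\theta'\|V^{\alpha/2}\cdot 2C_{h,\alpha/2}V^{\alpha/2}$, and I then apply $P_{\theta'}V^\alpha\le(1+b_\alpha)V^\alpha$. The term $(P\hat h)^2$ is handled the same way via $a^2-b^2=(a-b)(a+b)$, with $|P_\theta\hat h_\theta|\le C V^{\alpha/2}$.

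\emph{Bounds for $H_\theta$ (main obstacle).} Since $\mathbb{E}[W-h_\theta\mid\mathcal F_n]=0$, we have
\[
H_\theta(\Lambda)=\mathbb{E}\bigl[(W-h_\theta(\Lambda))\,\hat h_\theta(\Lambda+(T-\rho)X)\bigr]
=\int\Bigl[g_\theta\,\mathbb{E}\bigl[(W_1-h_\theta)\mid X\bigr]\hat h_\theta(\Lambda+(1-\rho)X)+(1-g_\theta)\,\mathbb{E}\bigl[(W_0-h_\theta)\mid X\bigr]\hat h_\theta(\Lambda-\rho X)\Bigr]\Gamma(dX).
\]
The difficulty is that $Y$ and $Z$ are correlated with $X$, which enters $\hat h$. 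To deal with this I use $V^{\alpha}(\Lambda+cX)\le V^\alpha(\Lambda)e^{\alpha\lambda_1\|X\|}$ and Cauchy--Schwarz:
\[
\mathbb{E}\bigl[(|Y|+|Z|+c)\,e^{\alpha\lambda_1\|X\|}\bigr]\le \bigl(\mathbb{E}(|Y|+|Z|+c)^2\bigr)^{1/2}\bigl(\mathbb{E}e^{2\alpha\lambda_1\|X\|}\bigr)^{1/2}.
\]
This is finite by Assumption \ref{assumption_x_sub_exponential_bound} once $2\alpha\lambda_1\le\lambda$, which gives $|H_\theta|\le C_{H,\alpha}V^\alpha$ for small $\alpha$. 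For the Lipschitz property I telescope over the three $\theta$-dependent factors $g_\theta$, $h_\theta$ and $\hat h_\theta$. Each difference is bounded by $\|\theta-\theta'\|$ times either a constant or $L_{dh,\alpha}V^\alpha(\Lambda+cX)$, and the same Hölder/exponential-moment step as above integrates out $X$. Finally, the restriction to small $\alpha$ is removed by the monotonicity remark at the start.
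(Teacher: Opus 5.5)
Your proposal is correct. For $G_\theta$ and $F_\theta$ it follows the paper's argument: you condition on $X$ for $G_\theta$, and for $F_\theta$ you split into a kernel-difference term and an $\hat h$-difference term at half the exponent. The one small change is that you bound $P_\theta\hat h_\theta-P_{\theta'}\hat h_{\theta'}$ by a direct split, where the paper re-invokes Lemma~\ref{lemma_solution_continuity}.

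The genuine difference is in $H_\theta$, in two respects.
\begin{itemize}
\item \emph{Representation.} You use $\mathbb{E}[W-h_\theta(\Lambda)\mid\mathcal{F}_n]=0$ to discard the $(P_\theta\hat h_\theta)(\Lambda_0)$ factor, so only $g_\theta$, $h_\theta$ and $\hat h_\theta$ need telescoping. The paper keeps the term $-[h_\theta(\Lambda)+\mathbb{E}Z](P_\theta\hat h_\theta)(\Lambda)$, which is why it needs the extra constant $L_{dPh,\alpha}$.
\item \emph{Integrating out $X$.} You use $V^\alpha(\Lambda+cX)\le V^\alpha(\Lambda)e^{\alpha\lambda_1\|X\|}$ together with an exponential moment of $\|X\|$. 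The paper applies Cauchy--Schwarz and then bounds $\int[V^{2\alpha}(\Lambda+(1-\rho)X)+V^{2\alpha}(\Lambda-\rho X)]\Gamma(dX)$ by $\iota^{-1}(P_\theta V^{2\alpha})(\Lambda)$, which is at most $\iota^{-1}(\beta_{2\alpha}+b_{2\alpha})V^{2\alpha}(\Lambda)$. This uses only $g_\theta\in[\iota,1-\iota]$ and the drift inequality. It is the same device the paper uses for Lemma~\ref{lemma_lipschitz_continuity_P_pi}, on which you already rely.
\end{itemize}

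The paper's device is indifferent to the form of $V$. Yours is more transparent, but it uses the exponential form of $V$ and Assumption~\ref{assumption_x_sub_exponential_bound}, which is not among the lemma's stated hypotheses. This is harmless for two reasons. First, that assumption is in force in Theorem~\ref{theorem_CLT_complex_center}, where the lemma is used. Second, the needed moment follows from the listed hypotheses: for $c\in\{1-\rho,-\rho\}$,
\[
\iota\int V^{2\alpha}(cX)\,\Gamma(dX)\le (P_\theta V^{2\alpha})(0)\le \beta_{2\alpha}+b_{2\alpha}.
\]
This is exactly the moment you need if you use the sharper bound $V^\alpha(\Lambda+cX)\le V^\alpha(\Lambda)e^{\alpha\lambda_1|c|\|X\|}$.
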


The verification that the average of $F$ converges can be easily adapted to $G$ and $H$, as they share similar bounds and Lipschitz continuity properties.
Thus, we only need to focus on the function $F$ here.

We check the condition in Lemma \ref{lemma_convergence_average_enter} for any $a \in [1/2,1)$. The condition in the beginning of the lemma can be derived from Assumption \ref{assumption_simultaneous_geometric_ergodicity}.
The other conditions hold as follows:
\begin{enumerate}[label=\roman*.]
    \item The condition holds due to Assumption \ref{assumption_theta_convergence}.
    \item The condition holds due to Assumption \ref{assumption_simultaneous_geometric_ergodicity}.
    \item The condition holds due to the boundedness of $F_\theta$ in the sense of $|\cdot|_{V^a}$.
    Specifically, Lemma \ref{lemma_lipschitz_boundedness_FGH} establishes the bound $|F_\theta|_{V^a} \leq C_{F,\alpha}$.
    \item The condition holds due to Lemma \ref{lemma_convergence_V_as}. The requirement of Lemma \ref{lemma_convergence_V_as} can be satisfied with $S_\Theta = B(\theta^*,\min\{r_d,r_g\})$, and the proof of finiteness is similar to that of Corollary \ref{corollary_boundedness_convergent}.
\item The condition holds because
    \begin{align*}
        & \quad \frac{1}{N} \sum_{n=1}^{N-1} D_V\left(\theta_{n+1}, \theta_n\right) V(\Lambda_{n+1}) \mathbb{I}(\theta_n \in S_\Theta, \theta_{n+1} \in S_\Theta) \\
        & \leq \frac{L_P}{N} \sum_{n=1}^{N-1} \|\theta_{n+1}-\theta_n\| V(\Lambda_{n+1})
        \xrightarrow{\mathbb{P}} 0
    \end{align*}
    by Lemma \ref{lemma_lipschitz_continuity_P_pi}, Assumption \ref{assumption_theta_convergence} and Lemma \ref{lemma_convergence_V_as}.
    \item The condition holds because
    \begin{align*}
        & \quad \frac{1}{N} \sum_{n=1}^{N-1} \left|F_{\theta_{n+1}}-F_{\theta_n}\right|_V V(\Lambda_{n+1}) \mathbb{I}(\theta_n \in S_\Theta, \theta_{n+1} \in S_\Theta) \\
        & \leq \frac{L_{dF}}{N} \sum_{n=1}^{N-1} \|\theta_{n+1}-\theta_n\| V(\Lambda_{n+1}) \xrightarrow{\mathbb{P}} 0
    \end{align*}
    by Assumption \ref{assumption_theta_convergence}, Lemmas \ref{lemma_lipschitz_boundedness_FGH} and \ref{lemma_convergence_V_as}.
\end{enumerate}
The inequalities in the last two items follow from Lemma \ref{lemma_lipschitz_continuity_P_pi}.
The limit is because
\begin{equation}
    \sum_{n=1}^{N-1} \frac{1}{n}\|\theta_{n+1}-\theta_n\| V(\Lambda_{n+1}) \label{eq_convergence_parameter_diff_V_n_sum}
\end{equation}
can be bounded by a multiple of
\begin{equation*}
    \sum_{n=1}^{N-1} \frac{1}{n^{1+\epsilon}} V(\Lambda_{n+1})
\end{equation*}
due to Assumption \ref{assumption_theta_convergence}.
The latter expression converges almost surely as $N \rightarrow \infty$, which is a result of Lemma \ref{lemma_convergence_V_as}.
From the Kronecker Lemma, (\ref{eq_convergence_parameter_diff_V_n_sum}) implies that
\begin{equation*}
    \frac{1}{N} \sum_{n=1}^{N-1} \|\theta_{n+1}-\theta_n\| V(\Lambda_{n+1}) \rightarrow 0
\end{equation*}
almost surely.

The result of Lemma \ref{lemma_convergence_average_enter} is
\begin{equation*}
    \frac{1}{N} \sum_{n=0}^{N-1} F_{\theta_n}\left(\Lambda_n\right)\mathbb{I}(\theta_n \in S_\Theta)
    -\frac{1}{N} \sum_{n=0}^{N-1} \pi_{\theta_n}F_{\theta_n}\mathbb{I}(\theta_n \in S_\Theta)
    \xrightarrow{\mathbb{P}} 0 .
\end{equation*}

For the expectation term:
\begin{align*}
    & \quad \left| \frac{1}{N} \sum_{n=0}^{N-1} \pi_{\theta_n}F_{\theta_n}\mathbb{I}(\theta_n \in S_\Theta)
    - \frac{1}{N} \sum_{n=0}^{N-1} \pi_{\theta^*}F_{\theta^*}\mathbb{I}(\theta_n \in S_\Theta) \right| \\
    & \leq \frac{1}{N} \sum_{n=0}^{N-1} \left[ \pi_{\theta_n}|F_{\theta_n} - F_{\theta^*}| + |\pi_{\theta_n} - \pi_{\theta^*}|F_{\theta^*} \right]\mathbb{I}(\theta_n \in S_\Theta) \\
    & \leq \frac{1}{N} \sum_{n=0}^{N-1} \left[ \pi_{\theta_n} L_{dF,\alpha} V^{\alpha} \|\theta_n - \theta^*\| + |\pi_{\theta_n} - \pi_{\theta^*}|(C_{F,\alpha} V^{\alpha}) \right]\mathbb{I}(\theta_n \in S_\Theta) \\
    & \leq \frac{1}{N} \sum_{n=0}^{N-1} \left[ L_{dF,\alpha} C_{V,\alpha} \|\theta_n - \theta^*\| + C_{F,\alpha} L_{d\pi,\alpha} \|\theta_n - \theta^*\| \right]\mathbb{I}(\theta_n \in S_\Theta) \\
    & \leq (L_{dF,\alpha} C_{V,\alpha} + C_{F,\alpha} L_{d\pi,\alpha}) \frac{1}{N} \sum_{n=0}^{N-1} \|\theta_n - \theta^*\|
\end{align*}
for $\alpha \in (0,1]$.
The last term converges to $0$ by Assumption \ref{assumption_theta_convergence}.

Since $\frac{1}{N} \sum_{n=0}^{N-1} \pi_{\theta^*}F_{\theta^*}\mathbb{I}(\theta_n \in S_\Theta)$ converges to $\pi_{\theta^*}F_{\theta^*}$ almost surely, combining these results shows that $\frac{1}{N} \sum_{n=0}^{N-1} F_{\theta_n}\left(\Lambda_n\right)\mathbb{I}(\theta_n \in S_\Theta)$ converges to $\pi_{\theta^*}F_{\theta^*}$ in probability.

Similarly, $\frac{1}{N} \sum_{n=0}^{N-1} G_{\theta_n}\left(\Lambda_n\right)\mathbb{I}(\theta_n \in S_\Theta)$ and $\frac{1}{N} \sum_{n=0}^{N-1} H_{\theta_n}\left(\Lambda_n\right)\mathbb{I}(\theta_n \in S_\Theta)$ converge to $\pi_{\theta^*}G_{\theta^*}$ and $\pi_{\theta^*}H_{\theta^*}$ in probability, respectively.

\textbf{Other Items}

Item \ref{item_proof_CLT_3} in the decomposition (\ref{eq_CLT}) is bounded by
\begin{equation*}
    L_{dh} \frac{1}{\sqrt{N}} \sum_{n=0}^{N-1} {\|\theta_{n+1} - \theta_n\|}V(\Lambda_{n+1}) .
\end{equation*}
From Assumption \ref{assumption_theta_convergence}, we have the limit that
\begin{equation*}
    L_{dh} \frac{1}{\sqrt{N}} \sum_{n=0}^{N-1} {\|\theta_{n+1} - \theta_n\|}V(\Lambda_{n+1}) \rightarrow 0 ,
\end{equation*}
almost surely. This limit can be derived from
\begin{equation*}
    \frac{1}{\sqrt{N}} \sum_{n=0}^{N-1} \frac{1}{n^\epsilon} V(\Lambda_{n+1})
    \leq \sum_{n=0}^{N-1} \frac{1}{n^{1/2+\epsilon}} V(\Lambda_{n+1})
    \rightarrow 0 ,
\end{equation*}
almost surely, where the limit is derived from Assumption \ref{assumption_theta_convergence} and Lemma \ref{lemma_convergence_V_as}.

For Item \ref{item_proof_CLT_6}, we first observe that
\begin{align*}
    & \quad \frac{1}{\sqrt{N}} \left| \hat{h}_{\theta_0}(\Lambda_0)\mathbb{I}(\theta_0 \in S_\Theta) - \hat{h}_{\theta_N}(\Lambda_N)\mathbb{I}(\theta_N \in S_\Theta) \right| \\
    & \leq \frac{\left| \hat{h}_{\theta_0}(\Lambda_0) \right|}{\sqrt{N}}
    + \frac{\left| \hat{h}_{\theta_N}(\Lambda_N) \right|}{\sqrt{N}}
    \leq \frac{\left| \hat{h}_{\theta_0}(\Lambda_0) \right|}{\sqrt{N}}
    + \frac{C_h V(\Lambda_N)}{\sqrt{N}} .
\end{align*}
Moreover, Theorem \ref{theorem_boundedness} implies that $\Lambda_N = O_P(1)$, and thus $V(\Lambda_N) = O_P(1)$.
Combining these two facts, each term on the right-hand side converges to zero in probability.
Consequently, Item \ref{item_proof_CLT_6} also converges to zero in probability.

For the remaining Items \ref{item_proof_CLT_4}, \ref{item_proof_CLT_5}, and \ref{item_proof_CLT_7}, the number of non-zero terms in each summation is finite almost surely.
This is because Assumption \ref{assumption_theta_convergence} implies that the event $\theta_n \notin S_\Theta$ occurs only a finite number of times.
It follows that a finite sum of bounded terms, when divided by $\sqrt{N}$, converges to zero almost surely.

\subsection{Proof of Lemma \ref{lemma_lipschitz_continuity_P_pi}}

Viewing $P_\theta$ as a mapping on the function space, we can show that
\begin{equation*}
    P_\theta(\Lambda, h) = \int \left[ g_\theta(\Lambda, X) h(\Lambda + (1-\rho) X) \right] \Gamma(dX) + \int \left[ (1-g_\theta(\Lambda, X)) h(\Lambda - \rho X) \right] \Gamma(dX).
\end{equation*}
Then $\| P_\theta(\Lambda, \cdot) - P_{\theta^{\prime}}(\Lambda, \cdot) \|_V$ is equal to the supremum of
\begin{equation*}
    V(\Lambda)^{-1} \left| P_\theta(\Lambda, h) - P_{\theta^{\prime}}(\Lambda, h) \right|,
\end{equation*}
where the supremum is taken over all measurable functions $h$ with $|h|_V \le 1$.
In addition, $\left| P_\theta(\Lambda, h) - P_{\theta^{\prime}}(\Lambda, h) \right|$ is bounded by
\begin{equation*}
    \int \left[ |g_\theta(\Lambda, X) - g_{\theta^{\prime}}(\Lambda, X)| (|h|(\Lambda + (1-\rho) X) + |h|(\Lambda - \rho X)) \right] \Gamma(dX) .
\end{equation*}
Therefore,
\begin{align*}
    \left| P_\theta(\Lambda, h) - P_{\theta^{\prime}}(\Lambda, h) \right|
    & \leq L_g \|\theta - \theta^{\prime}\| \int \left[ (|h|(\Lambda + (1-\rho) X) + |h|(\Lambda - \rho X)) \right] \Gamma(dX) \\
    & \leq L_g \|\theta - \theta^{\prime}\| \int \left[ (V(\Lambda + (1-\rho) X) + V(\Lambda - \rho X)) \right] \Gamma(dX) \\
    & \leq L_g \|\theta - \theta^{\prime}\| \frac{1}{\iota} (P_\theta V)(\Lambda) \\
    & \leq L_g \|\theta - \theta^{\prime}\| \frac{\beta+b}{\iota} V(\Lambda)
\end{align*}

Thus, by setting $L_P = L_g \frac{\beta+b}{\iota}$, we have $\| P_\theta - P_{\theta^{\prime}} \|_V \leq L_P \|\theta - \theta^{\prime}\|$, which in turn implies that $D_V\left(\theta, \theta^{\prime}\right) \leq L_P \|\theta - \theta^{\prime}\|$.
If $V$ is replaced by $V^\alpha$ for some $\alpha \in (0,1]$, the corresponding Lipschitz constant is denoted by $L_{P,\alpha} = L_g \frac{\beta_\alpha+b_\alpha}{\iota}$.

By Assumptions \ref{assumption_g_lipschitz} and \ref{assumption_simultaneous_geometric_ergodicity}, the requirements of Lemma \ref{lemma_solution_continuity} are satisfied on the set
\begin{equation*}
    S_\Theta = \overline{B(\theta^*, \min\{r_d,r_g\})} .
\end{equation*}
Because $D_V$ can be bounded by a constant multiple of $\|\theta - \theta^{\prime}\|$, $\pi_\theta$ is Lipschitz continuous with respect to $\theta \in \overline{B(\theta^*, \min\{r_d,r_g\})}$ in the sense of $\|\cdot\|_V$ norm.
Specifically,
\begin{equation*}
    \left\|\pi_\theta-\pi_{\theta^{\prime}}\right\|_V
    \leq L^2\left\{\pi_\theta(V)+L V(x)\right\} D_V\left(\theta, \theta^{\prime}\right)
    \leq 2 L^3 L_P C_V \|\theta - \theta^{\prime}\| .
\end{equation*}
We denote the Lipschitz constant by $L_{d\pi} = 2 L^3 L_P C_V$ from Lemma \ref{lemma_solution_continuity}.
If $V$ is replaced by $V^\alpha$ for some $\alpha \in (0,1]$, the corresponding Lipschitz constant is denoted by $L_{d\pi,\alpha} = 2 L_\alpha^3 L_{P,\alpha} C_V$.

\subsection{Proof of Lemma \ref{lemma_lipschitz_continuity_poisson}}

The function
\begin{equation*}
    h_\theta(\Lambda) = \mathbb{E}_{X \sim \Gamma} \left[ [g_\theta(\Lambda, X) - \rho] f(X) \right],
\end{equation*}
where $f(x) = \mathbb{E} [Y \mid X=x]$.
Since $g_\theta(\Lambda, X)$ is Lipschitz continuous with respect to $\theta$ with a Lipschitz constant that is uniform in $\Lambda$ and $X$, $h_\theta(\Lambda)$ is also Lipschitz continuous.
The Lipschitz constant for $h_\theta$, denoted by $L_h$, can be derived as follows.
For any $\theta$, $\theta^\prime \in \overline{B(\theta^*, r_g)}$,
\begin{align*}
    |h_\theta(\Lambda) - h_{\theta'}(\Lambda)|
    &\leq \mathbb{E}_{X \sim \Gamma} [|g_\theta(\Lambda, X) - g_{\theta'}(\Lambda, X)| |f(X)|] \\
    & \leq L_g \|\theta-\theta'\| \mathbb{E}|f(X)|
    \leq L_g \|\theta-\theta'\| \mathbb{E}|Y| .
\end{align*}
Thus, $L_h = L_g \mathbb{E}|Y|$.

Under Assumption \ref{assumption_simultaneous_geometric_ergodicity}, for any $\theta \in \overline{B(\theta^*,r_d)}$, the solution to the Poisson equation, $\hat{h}_\theta$, equals $\sum_{n=0}^\infty (P_\theta^n - \pi_\theta) (h_\theta)$.
With $C_\alpha = L_\alpha$ and $\rho_\alpha = 1-L_\alpha^{-1}$, the inequality
\begin{equation*}
    \| P^n_\theta(\Lambda, \cdot) - \pi_\theta \|_{V^{\alpha}}
    \leq C_{\alpha} \rho_{\alpha}^n V^{\alpha}(\Lambda)
\end{equation*}
implies that for any $\alpha \in (0,1]$, there exists a positive number $C_{h,\alpha}$ such that
\begin{align*}
    |\hat{h}_\theta(\Lambda)|
    &= |\sum_{n=0}^{\infty} (P^n_\theta(\Lambda, \cdot) - \pi_\theta )(h_\theta)|
    \leq \sum_{n=0}^{\infty} |(P^n_\theta(\Lambda, \cdot) - \pi_\theta )(h_\theta)| \\
    & \leq \sum_{n=0}^{\infty}
    \| P^n_\theta(\Lambda, \cdot) - \pi_\theta \|_{V^{\alpha}} 
    |h|_{V^{\alpha}}
    \leq \sum_{n=0}^{\infty}
    \left[ \mathbb{E}|Y| C_{\alpha} \rho_{\alpha}^n V^{\alpha}(\Lambda) \right] \\
    &= C_{h,\alpha} V^{\alpha}(\Lambda)
\end{align*}
for any $\theta \in \overline{B(\theta^*,r_d)}$, where $C_{h,\alpha} = \mathbb{E}|Y| \frac{C_\alpha}{1-\rho_\alpha} = \mathbb{E}|Y| L_\alpha^2$.

From Lemma \ref{lemma_solution_continuity} with $S_\Theta = \overline{B(\theta^*, \min\{r_d,r_g\})}$, we can show that for any $\theta$, $\theta^\prime \in S_\Theta$, $\hat{h}_\theta$ is Lipschitz continuous as follows:
\begin{align*}
    & \quad |\hat{h}_\theta(\Lambda) - \hat{h}_{\theta^{\prime}}(\Lambda)| \\
    & \leq |P_\theta\hat{h}_\theta - P_{\theta^{\prime}}\hat{h}_{\theta^{\prime}}|_{V^{\alpha}} V^{\alpha}(\Lambda)
    + |h_\theta(\Lambda) - h_{\theta^{\prime}}(\Lambda)|
    + |\pi_\theta h_\theta - \pi_{\theta^{\prime}} h_{\theta^{\prime}}| \\
    & \leq |P_\theta\hat{h}_\theta - P_{\theta^{\prime}}\hat{h}_{\theta^{\prime}}|_{V^{\alpha}} V^{\alpha}(\Lambda)
    + |h_\theta(\Lambda) - h_{\theta^{\prime}}(\Lambda)|
    + \pi_\theta |h_\theta - h_{\theta^{\prime}}|
    + |\pi_\theta - \pi_{\theta^{\prime}}| (h_{\theta^{\prime}}) \\
    & \leq \sup_{\theta \in S_\Theta} |h_\theta|_{V^{\alpha}} L_{\alpha}^2(L_{\alpha}^2 D_{V^{\alpha}}(\theta, \theta^{\prime}) + \|\pi_\theta - \pi_{\theta^{\prime}}\|_{V^{\alpha}}) V^{\alpha}(\Lambda) \\
    & \quad + L_{\alpha}^2 |h_\theta-h_{\theta^{\prime}}|_{V^{\alpha}} V^{\alpha}(\Lambda)
    + 2L_h \|\theta - \theta^{\prime}\|
    + \sup_{\theta \in S_\Theta} |h_\theta|_{V^{\alpha}} \|\pi_\theta-\pi_{\theta^{\prime}}\|_{V^{\alpha}} V^{\alpha}(\Lambda) \\
    & \leq \mathbb{E}|Y| L_{\alpha}^2(L_{\alpha}^2 L_{P,\alpha} \|\theta - \theta^{\prime}\| + L_{d\pi,\alpha} \|\theta - \theta^{\prime}\|) V^{\alpha}(\Lambda) \\
    & \quad + L_{\alpha}^2 L_h \|\theta - \theta^{\prime}\| V^{\alpha}(\Lambda)
    + 2L_h \|\theta - \theta^{\prime}\|
    + \mathbb{E}|Y| L_{d\pi,\alpha} \|\theta - \theta^{\prime}\| V^{\alpha}(\Lambda) \\
    & \leq \left[ \mathbb{E}|Y| L_{\alpha}^2(L_{\alpha}^2 L_{P,\alpha} + 2 L_{d\pi,\alpha}) + 3 L_{\alpha}^2 L_h \right]
    \|\theta - \theta^{\prime}\| V^{\alpha}(\Lambda)
\end{align*}
for any $\alpha \in (0,1]$.
Denote the constant $\mathbb{E}|Y| L_{\alpha}^2(L_{\alpha}^2 L_{P,\alpha} + 2 L_{d\pi,\alpha}) + 3 L_{\alpha}^2 L_h$ by $L_{dh,\alpha}$.

\subsection{Proof of Lemma \ref{lemma_lipschitz_boundedness_FGH}}

\label{subsec_properties_LLN}

In this subsection, we will frequently use the inequality $P_\theta V^\alpha \leq \beta_\alpha V^\alpha + b_\alpha \leq (\beta_\alpha + b_\alpha) V^\alpha$, which is a direct consequence of Assumption \ref{assumption_simultaneous_geometric_ergodicity}.
Moreover, it follows from Assumption \ref{assumption_sampling} that
\begin{equation*}
    \int \left[ V^{\alpha}(\Lambda+(1-\rho)X)+V^{\alpha}(\Lambda-\rho X) \right] \Gamma(dX)
    \leq \frac{(P_\theta V^{\alpha})(\Lambda)}{\iota} .
\end{equation*}
Finally, we will repeatedly use the results from Lemmas \ref{lemma_lipschitz_continuity_P_pi} and \ref{lemma_lipschitz_continuity_poisson}, namely the Lipschitz continuity of the transition kernel $P_\theta$, the invariant measure $\pi_\theta$ and the solution to the Poisson equation $\hat{h}_\theta$.

\textbf{Properties of $F_\theta$}

Suppose the function
\begin{equation*}
    F_\theta = P_\theta(\hat{h}_\theta^2) - (P_\theta \hat{h}_\theta)^2 .
\end{equation*}

By Lemmas \ref{lemma_lipschitz_continuity_P_pi} and \ref{lemma_lipschitz_continuity_poisson}, the Lipschitz continuity of $P_\theta(\hat{h}_\theta^2)$ with respect to $\theta$ follows from the inequality below:
\begin{align*}
    & \quad |[ P_\theta (\hat{h}_\theta^2) ](\Lambda) - [ P_{\theta^{\prime}} (\hat{h}_{\theta^{\prime}}^2) ](\Lambda)| \\
    & \leq |[ P_\theta (\hat{h}_{\theta^{\prime}}^2) ](\Lambda) - [ P_{\theta^{\prime}} (\hat{h}_{\theta^{\prime}}^2) ](\Lambda)|
    + | [ P_\theta (\hat{h}_\theta^2) ](\Lambda) - [ P_\theta (\hat{h}_{\theta^{\prime}}^2) ](\Lambda) | \\
    & \leq C_{h,\alpha}^2 V^{2\alpha}(\Lambda) \|P_\theta(\Lambda,\cdot) - P_{\theta^{\prime}}(\Lambda,\cdot)\|_{V^{2\alpha}}
    + | [ P_\theta(\hat{h}_\theta^2-\hat{h}_{\theta^{\prime}}^2) ] (\Lambda) | \\
    & \leq C_{h,\alpha}^2 V^{2\alpha}(\Lambda) \|P_\theta - P_{\theta^{\prime}}\|_{V^{2\alpha}}
    + | [ P_\theta(2C_{h, \alpha} V^{\alpha}|\hat{h}_\theta-\hat{h}_{\theta^{\prime}}| ) ] (\Lambda) | \\
    & \leq C_{h,\alpha}^2 L_{P,2\alpha} V^{2\alpha}(\Lambda) \|\theta - \theta^{\prime}\|
    + | [ P_\theta(2C_{h, \alpha} L_{dh,\alpha} \|\theta-\theta^{\prime}\| V^{2\alpha}) ] (\Lambda) | \\
    &= C_{h,\alpha}^2 L_{P,2\alpha} V^{2\alpha}(\Lambda) \|\theta - \theta^{\prime}\|
    + 2C_{h, \alpha} L_{dh,\alpha} \|\theta-\theta^{\prime}\|
    (P_\theta V^{2\alpha})(\Lambda) \\
    & \leq C_{h,\alpha}^2 L_{P,2\alpha} V^{2\alpha}(\Lambda) \|\theta - \theta^{\prime}\|
    + 2C_{h, \alpha} L_{dh,\alpha}
    (\beta_{2\alpha} + b_{2\alpha}) V^{2\alpha}(\Lambda)
    \|\theta-\theta^{\prime}\| \\
    &= [ C_{h,\alpha}^2 L_{P,2\alpha} + 2C_{h, \alpha} L_{dh,\alpha} (\beta_{2\alpha} + b_{2\alpha}) ] V^{2\alpha}(\Lambda) \|\theta - \theta^{\prime}\|
\end{align*}
implies that $[ P_\theta (\hat{h}_\theta^2) ](\Lambda)$ is Lipschitz continuous with respect to $\theta$.

Assumption \ref{assumption_simultaneous_geometric_ergodicity} provides the inequality $P_\theta V^\alpha \leq \beta_\alpha V^\alpha + b_\alpha \leq (\beta_\alpha + b_\alpha) V^\alpha$.
Thus, given the bound on $|\hat{h}_\theta|_{V^{\alpha}}$, we can similarly establish a bound for $P_\theta \hat{h}_\theta$:
\begin{equation*}
    \left| \left( P_\theta \hat{h}_\theta \right) (\Lambda) \right|
    \leq P_\theta \left( C_{h,\alpha} V^{\alpha} \right) (\Lambda)
    \leq C_{Ph,\alpha} V^{\alpha}(\Lambda)
\end{equation*}
for $C_{Ph,\alpha} = (\beta_\alpha+b_\alpha)C_{h,\alpha} > 0$.

Consequently, for $(P_\theta \hat{h}_\theta)^2$, due to $|P_\theta \hat{h}_\theta|(\Lambda) \leq C_{h,\alpha}(\beta_{\alpha} + b_{\alpha}) V^{\alpha}(\Lambda)$, we have
\begin{align*}
    |(P_\theta \hat{h}_\theta)^2(\Lambda) - (P_{\theta^{\prime}} \hat{h}_{\theta^{\prime}})^2(\Lambda)|
    \leq 2C_{h,\alpha}(\beta_{\alpha} + b_{\alpha}) V^{\alpha}(\Lambda) |(P_\theta \hat{h}_\theta)(\Lambda) - (P_{\theta^{\prime}} \hat{h}_{\theta^{\prime}})(\Lambda)| .
\end{align*}
From Lemmas \ref{lemma_lipschitz_continuity_P_pi} and \ref{lemma_solution_continuity}, we can show that
\begin{align*}
    & \quad |(P_\theta \hat{h}_\theta)(\Lambda) - (P_{\theta^{\prime}} \hat{h}_{\theta^{\prime}})(\Lambda)| \\
    & \leq \left[ \sup_{\theta \in S_\Theta} \left|h_\theta\right|_{V^\alpha}
    L_\alpha^2 \left( L_\alpha^2 D_{V^\alpha}\left(\theta, \theta^{\prime}\right)
    + \left\| \pi_\theta-\pi_{\theta^{\prime}} \right\|_{V^\alpha} \right)
    + L_\alpha^2 \left| h_\theta-h_{\theta^{\prime}} \right|_{V^\alpha} \right] V^{\alpha}(\Lambda) \\
    & \leq \left[ \mathbb{E}|Y|
    L_\alpha^2 \left( L_\alpha^2 L_{P,\alpha} \|\theta-\theta^{\prime}\|
    + L_{d\pi,\alpha} \|\theta-\theta^{\prime}\| \right)
    + L_\alpha^2 L_h \right] V^{\alpha}(\Lambda) \\
    &= L_{dPh,\alpha} V^{\alpha}(\Lambda) \|\theta-\theta^{\prime}\| ,
\end{align*}
where $L_{dPh,\alpha} = \mathbb{E}|Y| L_{\alpha}^2(L_{\alpha}^2 L_{P,\alpha} + L_{d\pi,\alpha}) + L_{\alpha}^2 L_h$.

Thus, for any $\alpha \in (0,1/2]$, we define the constant
\begin{equation*}
    L_{dF,2\alpha} = 2C_{h,\alpha}(\beta_{\alpha} + b_{\alpha}) L_{dPh,\alpha}
    + \left[ C_{h,\alpha}^2 L_{P,2\alpha} + 2C_{h, \alpha} L_{dh,\alpha} (\beta_{2\alpha} + b_{2\alpha}) \right] .
\end{equation*}
With this, we have
\begin{align*}
    & \quad |F_\theta(\Lambda) - F_{\theta^{\prime}}(\Lambda)| \\
    & \leq |(P_\theta \hat{h}_\theta)(\Lambda) - (P_{\theta^{\prime}} \hat{h}_{\theta^{\prime}})(\Lambda)|
    \left[ |(P_\theta \hat{h}_\theta)(\Lambda)| + |(P_{\theta^{\prime}} \hat{h}_{\theta^{\prime}})(\Lambda)| \right]
    + |[ P_\theta (\hat{h}_\theta^2) ](\Lambda) - [ P_{\theta^{\prime}} (\hat{h}_{\theta^{\prime}}^2) ](\Lambda)| \\
    & \leq \left\{ 2C_{h,\alpha}(\beta_{\alpha} + b_{\alpha})
    L_{dPh,\alpha}
    + \left[ C_{h,\alpha}^2 L_{P,2\alpha} + 2C_{h, \alpha} L_{dh,\alpha} (\beta_{2\alpha} + b_{2\alpha}) \right] \right\}
    V^{2\alpha}(\Lambda) \|\theta - \theta^{\prime}\| \\
    & \leq L_{dF,2\alpha} V^{2\alpha}(\Lambda) \|\theta - \theta^{\prime}\| .
\end{align*}

We can now establish a bound on $|F_\theta(\Lambda)|$. From Lemma \ref{lemma_lipschitz_continuity_poisson}, we have the bound on $\hat{h}_\theta$.
Furthermore, Assumption \ref{assumption_simultaneous_geometric_ergodicity} provides the inequality $P_\theta V^\alpha \leq \beta_\alpha V^\alpha + b_\alpha \leq (\beta_\alpha + b_\alpha) V^\alpha$.
Combining these results yields
\begin{align*}
    |F_\theta(\Lambda)|
    & \leq \left| [ P_\theta (\hat{h}_\theta^2) ](\Lambda) \right|
    + \left| (P_\theta \hat{h}_\theta)^2(\Lambda) \right|
    \leq P_\theta \left( C_{h,\alpha}^2 V^{2\alpha} \right) (\Lambda)
    + \left[ P_\theta \left( C_{h,\alpha} V^\alpha \right) (\Lambda) \right]^2 \\
    & \leq C_{h,\alpha}^2 (\beta_{2\alpha} + b_{2\alpha}) V^{2\alpha}(\Lambda)
    + C_{h,\alpha}^2 (\beta_{\alpha} + b_{\alpha})^2 V^{2\alpha}(\Lambda) .
\end{align*}
Thus, for any $\alpha \in (0,1/2]$, we define the constant
\begin{equation*}
    C_{F,2\alpha} = C_{h,\alpha}^2 (\beta_{2\alpha} + b_{2\alpha})
    + C_{h,\alpha}^2 (\beta_{\alpha} + b_{\alpha})^2 .
\end{equation*}
With this, we have $|F_\theta| \leq C_{F,2\alpha} V^{2\alpha}$.
In summary, a simple change of variable shows that the constants $L_{dF,\alpha}$ and $C_{F,\alpha}$ are established for all $\alpha \in (0,1]$.

\textbf{Properties of $G_\theta$}

Let $f_2(x) = \mathbb{E} [Y^2 \mid X=x]$. Since $(Y_1,Z_1)$ and $T_1$ are conditionally independent given $X_1$, the function $G_\theta(\Lambda)$ can be expressed as:
\begin{align*}
    G_\theta(\Lambda) &= \mathbb{E}_\theta \left[ (T_1-\rho)^2Y_1^2 - h_\theta^2(\Lambda_0) \mid \Lambda_0=\Lambda \right]
    + \var(Z) \\
    & \quad + 2 \mathbb{E}_\theta \left[ [(T_1-\rho)Y_1-h_\theta(\Lambda)][Z_1-\mathbb{E}Z_1] \mid \Lambda_0=\Lambda \right] \\
    &= \int \left[ \rho^2 + (1-2\rho)g_\theta(\Lambda, X) \right] f_2(X) \Gamma(dX) - h_\theta^2(\Lambda)
    + \var(Z) \\
    & \quad + 2\mathbb{E}_{X \sim \Gamma} \left[ (g_\theta(\Lambda, X)-\rho) \mathbb{E}\left[ (Z-\mathbb{E}Z)Y \mid X \right] \right] .
\end{align*}

Since the function $h_\theta$ is Lipschitz continuous with respect to $\theta$ and bounded by $\mathbb{E}|Y|$, the function $h_\theta^2$ is also Lipschitz continuous.
Given that $g_\theta$ is Lipschitz continuous with a uniform constant $L_g$ for all $X$ and $\Lambda$, it follows that $\int g_\theta(\Lambda, X) f_2(X) \Gamma(dX)$ is Lipschitz continuous with respect to $\theta$, and
\begin{align*}
    |G_\theta(\Lambda) - G_{\theta^{\prime}}(\Lambda)|
    & \leq |1-2\rho| \int |g_\theta(\Lambda, X)-g_{\theta^{\prime}}(\Lambda, X)| f_2(X) \Gamma(dX)
    + |h_\theta^2(\Lambda)-h_{\theta^{\prime}}^2(\Lambda)| \\
    & \quad + 2\mathbb{E}_{X \sim \Gamma} \left[ |g_\theta(\Lambda, X)-g_{\theta^\prime}(\Lambda, X)| \mathbb{E}\left[ |Z-\mathbb{E}Z||Y| \mid X \right] \right] \\
    & \leq \left\{ [ |1-2\rho|\mathbb{E}Y^2 + 2\mathbb{E}\left[ |Z-\mathbb{E}Z||Y| \right] ] L_g + 2 L_h \mathbb{E}|Y| \right\} \|\theta - \theta^{\prime}\| .
\end{align*}
Thus, $G_\theta$ is Lipschitz continuous with respect to $\theta$, and we denote its Lipschitz constant by
\begin{equation*}
    L_{dG}
    = [ |1-2\rho|\mathbb{E}Y^2 + 2\mathbb{E}\left[ |Z-\mathbb{E}Z||Y| \right] ] L_g + 2 L_h \mathbb{E}|Y| .
\end{equation*}
Furthermore, we show that $G_\theta(\Lambda)$ is uniformly bounded. The finiteness of the second moments of $Y$ and $Z$ ensures that each term in the following expression is finite:
\begin{align*}
    \left| G_\theta(\Lambda) \right|
    & \leq \int \left| \rho^2 + (1-2\rho)g_\theta(\Lambda, X) \right| f_2(X) \Gamma(dX) + \{\mathbb{E}|Y|\}^2
    + \var(Z) \\
    & \quad + 2\mathbb{E}_{X \sim \Gamma} \left| (g_\theta(\Lambda, X)-\rho) \mathbb{E}\left[ (Z-\mathbb{E}Z)Y \mid X \right] \right| \\
    & \leq \int f_2(X) \Gamma(dX) + \{\mathbb{E}|Y|\}^2 + \var(Z) + 2\mathbb{E}_{X \sim \Gamma} \left[ \left| \mathbb{E}\left[ (Z-\mathbb{E}Z)Y \mid X \right] \right| \right] \\
    & \leq 2\mathbb{E}Y^2 + \var(Z) + 2 \mathbb{E}|(Z-\mathbb{E}Z)Y| .
\end{align*}
Thus, $G_\theta$ is uniformly bounded, and we denote this bound by the constant $C_G = 2\mathbb{E}Y^2 + \var(Z) + 2 \mathbb{E}|(Z-\mathbb{E}Z)Y|$.

\textbf{Properties of $H_\theta$}

Denote the conditional expectation $\mathbb{E}[Z \mid X=x]$ by $f_Z(x)$. Define
\begin{align*}
    & \quad H_\theta(\Lambda) \\
    &= \mathbb{E}_\theta \left[ \left[ (T_1-\rho) f(X_1) - h_\theta(\Lambda_0) + f_Z(X_1)-\mathbb{E}Z \right]
    \left[ \hat{h}_\theta(\Lambda_1) - (P_\theta\hat{h}_\theta)(\Lambda_0) \right] \mid \Lambda_0=\Lambda \right] \\
    &= \int \left[
    g_\theta(\Lambda, X)[(1-\rho) f(X)+f_Z(X)]
    \hat{h}_\theta(\Lambda+(1-\rho)X) \right. \\
    & \quad \left. + (1-g_\theta(\Lambda, X))[-\rho f(X)+f_Z(X)]
    \hat{h}_\theta(\Lambda-\rho X) \right] \Gamma(dX)
    - [h_\theta(\Lambda)+\mathbb{E}Z](P_\theta\hat{h}_\theta)(\Lambda)
\end{align*}

Then
\begin{align*}
    & \quad |H_\theta(\Lambda)| \\
    & \leq \int C_{h,\alpha} \left[ V^\alpha(\Lambda+(1-\rho)X)+V^\alpha(\Lambda-\rho X) \right] [|f(X)|+|f_Z(X)|] \Gamma(dX) \\
    & \quad + C_{Ph,\alpha} [\mathbb{E}|Y| + \mathbb{E}|Z|] V^{\alpha}(\Lambda) \\
    & \leq C_{h,\alpha} \sqrt{\int \left[ V^{\alpha}(\Lambda+(1-\rho)X)+V^{\alpha}(\Lambda-\rho X) \right]^2 \Gamma(dX)} \\
    & \quad \cdot \sqrt{\int \left[ |f(X)|+|f_Z(X)| \right]^2 \Gamma(dX)} + C_{Ph,\alpha} [\mathbb{E}|Y| + \mathbb{E}|Z|] V^{\alpha}(\Lambda) \\
    & \leq 2C_{h,\alpha} \sqrt{\int \left[ V^{2\alpha}(\Lambda+(1-\rho)X)+V^{2\alpha}(\Lambda-\rho X) \right] \Gamma(dX)} \\
    & \quad \cdot \sqrt{\int f^2(X)\Gamma(dX) + \int f_Z^2(X)\Gamma(dX)} + C_{Ph,\alpha} [\mathbb{E}|Y| + \mathbb{E}|Z|] V^{\alpha}(\Lambda) \\
    & \leq 2C_{h,\alpha}\sqrt{\mathbb{E}Y^2+\mathbb{E}Z^2}
    \sqrt{\frac{\beta_{2\alpha}+b_{2\alpha}}{\iota} V^{2\alpha}(\Lambda)}
    + C_{Ph,\alpha} [\mathbb{E}|Y| + \mathbb{E}|Z|] V^{\alpha}(\Lambda) \\
    &= \left[ 2C_{h,\alpha}\sqrt{\mathbb{E}Y^2+\mathbb{E}Z^2}
    \sqrt{\frac{\beta_{2\alpha}+b_{2\alpha}}{\iota}}
    + C_{Ph,\alpha} [\mathbb{E}|Y| + \mathbb{E}|Z|] \right] V^{\alpha}(\Lambda) ,
\end{align*}
where the second inequality follows from the Cauchy-Schwarz inequality and the fourth inequality follows from the inequality $P_\theta V^{2\alpha} \leq \beta_{2\alpha} V^{2\alpha} + b_{2\alpha}$ when $\alpha \in (0,1/2]$.
Thus, we have shown that for any $\alpha \in (0,1/2]$, $|H_\theta(\Lambda)| \leq C_{H,\alpha} V^{\alpha}(\Lambda)$, where the constant
\begin{equation*}
    C_{H,\alpha}
    = 2C_{h,\alpha}\sqrt{\mathbb{E}Y^2+\mathbb{E}Z^2}
    \sqrt{\frac{\beta_{2\alpha}+b_{2\alpha}}{\iota}}
    + C_{Ph,\alpha} [\mathbb{E}|Y| + \mathbb{E}|Z|] .
\end{equation*}

For the Lipschitz continuity,
\begin{align*}
    & \quad |H_\theta(\Lambda) - H_{\theta^{\prime}}(\Lambda)| \\
    & \leq \int \left[ |\hat{h}_\theta(\Lambda+(1-\rho)X)-\hat{h}_{\theta^{\prime}}(\Lambda+(1-\rho)X)| + |\hat{h}_\theta(\Lambda-\rho X)-\hat{h}_{\theta^{\prime}}(\Lambda-\rho X)| \right] \\
    & \quad \quad [|f(X)|+|f_Z(X)|] \Gamma(dX) \\
    & \quad + \int |g_\theta(\Lambda, X)-g_{\theta^{\prime}}(\Lambda, X)|(|\hat{h}_{\theta^{\prime}}(\Lambda+(1-\rho)X)|+|\hat{h}_{\theta^{\prime}}(\Lambda-\rho X)|) \\
    & \quad \quad [|f(X)|+|f_Z(X)|] \Gamma(dX) \\
    & \quad + [\mathbb{E}{|Y|}+\mathbb{E}{|Z|}] |(P_\theta\hat{h}_\theta)(\Lambda) - (P_{\theta^{\prime}}\hat{h}_{\theta^{\prime}})(\Lambda)| \\
    & \quad + |h_\theta(\Lambda)-h_{\theta^{\prime}}(\Lambda)| (|(P_\theta\hat{h}_\theta)(\Lambda)|+|(P_{\theta^{\prime}}\hat{h}_{\theta^{\prime}})(\Lambda)|) \\
    & \leq \int \left[ (L_{dh,\alpha}+C_{h,\alpha}L_g) \|\theta - \theta^{\prime}\| (V^{\alpha}(\Lambda+(1-\rho)X)+V^{\alpha}(\Lambda-\rho X)) \right] \\
    & \quad \quad [|f(X)|+|f_Z(X)|] \Gamma(dX) \\
    & \quad + [\mathbb{E}{|Y|}+\mathbb{E}{|Z|}] L_{dPh,\alpha} V^{\alpha}(\Lambda) \|\theta-\theta^{\prime}\| + 2 L_h C_{Ph,\alpha} V^{\alpha}(\Lambda) \|\theta-\theta^{\prime}\| \\
    & \leq 2(L_{dh,\alpha}+C_{h,\alpha}L_g) \|\theta - \theta^{\prime}\|
    \sqrt{\int \left[ V^{2\alpha}(\Lambda+(1-\rho)X)+V^{2\alpha}(\Lambda-\rho X) \right] \Gamma(dX)} \\
    & \quad \sqrt{\int f^2(X) \Gamma(dX) + \int f_Z^2(X) \Gamma(dX)} \\
    & \quad + [\mathbb{E}{|Y|}+\mathbb{E}{|Z|}] L_{dPh,\alpha} V^{\alpha}(\Lambda) \|\theta-\theta^{\prime}\| + 2 L_h C_{Ph,\alpha} V^{\alpha}(\Lambda) \|\theta-\theta^{\prime}\| \\
    & \leq 2(L_{dh,\alpha}+C_{h,\alpha}L_g) \sqrt{\mathbb{E}Y^2+\mathbb{E}Z^2}
    \sqrt{\frac{\beta_{2\alpha}+b_{2\alpha}}{\iota}}
    V^{\alpha}(\Lambda) \|\theta - \theta^{\prime}\| \\
    & \quad + [\mathbb{E}{|Y|}+\mathbb{E}{|Z|}] L_{dPh,\alpha} V^{\alpha}(\Lambda) \|\theta-\theta^{\prime}\|
    + 2 L_h C_{Ph,\alpha} V^{\alpha}(\Lambda) \|\theta-\theta^{\prime}\| \\
    & \leq \left[ 2(L_{dh,\alpha}+C_{h,\alpha}L_g)
    \sqrt{\mathbb{E}Y^2+\mathbb{E}Z^2}
    \sqrt{\frac{\beta_{2\alpha}+b_{2\alpha}}{\iota}}
    + [\mathbb{E}{|Y|}+\mathbb{E}{|Z|}] L_{dPh,\alpha} + 2 L_h C_{Ph,\alpha} \right] \\
    & \quad V^{\alpha}(\Lambda) \|\theta-\theta^{\prime}\| ,
\end{align*}
where the third inequality follows from the Cauchy-Schwarz inequality and the fourth inequality follows from the inequality $P_\theta V^{2\alpha} \leq \beta_{2\alpha} V^{2\alpha} + b_{2\alpha}$ when $\alpha \in (0,1/2]$.

Thus, for any $\alpha \in (0,1/2]$, we define the constant
\begin{equation*}
    L_{H,\alpha} = 2(L_{dh,\alpha}+C_{h,\alpha}L_g)
    \sqrt{\mathbb{E}Y^2+\mathbb{E}Z^2}
    \sqrt{\frac{\beta_{2\alpha}+b_{2\alpha}}{\iota}}
    + [\mathbb{E}{|Y|}+\mathbb{E}{|Z|}] L_{dPh,\alpha} + 2 L_h C_{Ph,\alpha} .
\end{equation*}
With this, we have
\begin{equation*}
    |H_\theta(\Lambda) - H_{\theta^{\prime}}(\Lambda)| \leq L_{H,\alpha} \|\theta - \theta^{\prime}\| V^{\alpha}(\Lambda) .
\end{equation*}

While the proofs above require $\alpha \in (0,1/2]$, the resulting inequalities can be extended to all $\alpha \in (0,1]$. For any $\alpha \in (1/2,1]$, we can use the established bounds for $\alpha=1/2$. Since $V(\Lambda) \geq 1$, we have $V^{1/2}(\Lambda) \leq V^{\alpha}(\Lambda)$, which allows us to set $C_{H,\alpha} = C_{H,1/2}$ and $L_{H,\alpha} = L_{H,1/2}$ for this range.

\section{Other Lemmas}
\label{sec_lemma}

The following lemma is based on Lemma 2.3 in (\cite{fortConvergenceAdaptiveInteracting2011}).
\begin{lemma}
    \label{lemma_simultaneous_geometric_ergodicity}
    Assume that for all $\theta \in S_\Theta$, $P_\theta$ is $\pi$-irreducible and aperiodic.
Moreover, there exist some constants $b<\infty, \delta \in (0,1)$, $\beta \in (0,1)$, a probability measure $\nu$ on X and a function $V: \mathrm{X} \to [1,+\infty)$, such that for any $\theta \in S_\Theta$,
    \begin{align*}
        P_\theta V & \leq \beta V+b, \\
        P_\theta(x, \cdot) & \geq \delta \nu(\cdot) \mathbb{I}_{\left\{V \leq c\right\}}(x), \quad
        c := 2 b\left(1-\beta\right)^{-1} .
    \end{align*}
    Then there are some universal constants $C$ and $\gamma$ such that for any $\theta \in S_\Theta$, there exists a probability distribution $\pi_\theta$ such that $\pi_\theta P_\theta=\pi_\theta$, $\pi_\theta(V) \leq b\left(1-\beta\right)^{-1}$ and the inequality
    \begin{equation*}
        \left\|P_\theta^n(x, \cdot)-\pi_\theta\right\|_V \leq L(1-L^{-1})^n V(x)
    \end{equation*}
    holds with $L = C\left\{b \vee \delta^{-1} \vee\left(1-\beta\right)^{-1} \vee c\right\}^\gamma$.
\end{lemma}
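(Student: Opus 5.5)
The plan is to follow the Hairer--Mattingly version of Harris' theorem. It gives a contraction argument whose constants visibly depend only on $(b,\delta,\beta,c)$, and uniformity over $\theta\in S_\Theta$ then comes for free. Irreducibility and aperiodicity are not really needed for the rate. They only guarantee that the invariant measure obtained below coincides with the one in the usual Markov chain theory of \cite{meynMarkovChainsStochastic2009}.

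\textbf{Contraction.} Fix $\theta\in S_\Theta$ and a parameter $a>0$ to be chosen. Define the weighted metric
\begin{equation*}
d_a(x,y)=\bigl(2+aV(x)+aV(y)\bigr)\mathbb{I}(x\neq y),
\end{equation*}
and its transport distance $\rho_a(\mu,\nu)=\sup\{|\mu f-\nu f| : \text{$f$ is $1$-Lipschitz for } d_a\}$. Then $\rho_a(\mu,\nu)=\int(2+aV)\,d|\mu-\nu|$ for probability measures. I will show that for every pair $x\neq y$,
\begin{equation*}
\rho_a\bigl(P_\theta(x,\cdot),P_\theta(y,\cdot)\bigr)\le \bar\alpha\, d_a(x,y),
\end{equation*}
with $\bar\alpha<1$ depending only on $(b,\delta,\beta)$. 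The proof splits into two cases.

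\emph{Case $V(x)+V(y)\ge c_0$,} where $c_0:=2c=4b(1-\beta)^{-1}$. Here I use only the drift. The left side is at most $2+a(P_\theta V(x)+P_\theta V(y))\le 2+2ab+a\beta(V(x)+V(y))$. Since $V(x)+V(y)\ge c_0$, a fixed fraction of $2+aV(x)+aV(y)$ is absorbed, which produces a factor of order $\max\{\beta, 1-\tfrac{1-\beta}{4}\}$ after choosing $a$ small.

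\emph{Case $V(x)+V(y)< c_0$.} Then $V(x)\le c_0$ and $V(y)\le c_0$. First I will modify the hypothesis so that the minorization holds on $\{V\le c_0\}$. Either the same $\delta$ works on the larger sublevel set, or it holds after passing to a fixed iterate of $P_\theta$. I expect this to be the main technical obstacle: if only $\{V\le c\}$ is available, I will instead split the case according to whether both points lie in $\{V\le c\}$, and absorb the leftover case through the drift, since then $\max(V(x),V(y))>c$. With the minorization in hand, write $P_\theta(z,\cdot)=\delta\nu+(1-\delta)R(z,\cdot)$ for $z=x,y$. The common part cancels, so $|P_\theta(x,\cdot)-P_\theta(y,\cdot)|$ has total mass at most $2(1-\delta)$ and $V$-mass at most $P_\theta V(x)+P_\theta V(y)$. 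Hence the left side is at most $2(1-\delta)+a(2b+\beta c_0)$. Choosing $a=\delta/(2b+\beta c_0)$, say $a\asymp \delta(1-\beta)/b$, gives a factor $1-\delta/2$ relative to $d_a\ge 2$.

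\textbf{Invariant measure and conclusion.} The contraction extends from Dirac masses to all probability measures with $\mu V<\infty$, by the convexity of $\rho_a$ and the Kantorovich--Rubinstein duality for $d_a$. Since $\rho_a$ is complete on that space, Banach's fixed point theorem yields a unique $\pi_\theta$ with $\pi_\theta P_\theta=\pi_\theta$. The moment bound comes from the drift: iterating gives $\pi_\theta(V\wedge k)\le \beta\,\pi_\theta V+b$ up to truncation, and letting $k\to\infty$ gives $\pi_\theta V\le b(1-\beta)^{-1}$. Finally, $\|\mu-\nu\|_V\le a^{-1}\rho_a(\mu,\nu)$, and $\rho_a(\delta_x,\pi_\theta)\le 2+aV(x)+a\pi_\theta V\le C' V(x)$. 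Together these give
\begin{equation*}
\|P_\theta^n(x,\cdot)-\pi_\theta\|_V\le a^{-1}C'\bar\alpha^{\,n}V(x).
\end{equation*}
Tracking the constants, $a^{-1}$, $C'$ and $(1-\bar\alpha)^{-1}$ are all polynomial in $b$, $\delta^{-1}$, $(1-\beta)^{-1}$ and $c$. Setting $L$ to be the maximum of $a^{-1}C'$ and $(1-\bar\alpha)^{-1}$ gives $\bar\alpha\le 1-L^{-1}$, and hence the claimed form $L=C\{b\vee\delta^{-1}\vee(1-\beta)^{-1}\vee c\}^\gamma$ with universal $C$ and $\gamma$.
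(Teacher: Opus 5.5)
Your route is genuinely different from the paper's. The paper gives no self-contained argument for this lemma. It imports Lemma 2.3 of \cite{fortConvergenceAdaptiveInteracting2011}, which rests on the renewal/coupling computable bounds of \cite{meynComputableBoundsGeometric1994}. It then appeals to \cite{saksmanErgodicityAdaptiveMetropolis2010} for the claim that the resulting constant has the form $C\{b\vee\delta^{-1}\vee(1-\beta)^{-1}\vee c\}^\gamma$. The irreducibility and aperiodicity hypotheses are present because that machinery uses them.

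Your Hairer--Mattingly argument (one-step contraction in a weighted total-variation distance, then Banach's fixed point theorem) buys several things:
\begin{itemize}
\item it is self-contained;
\item it produces $\pi_\theta$ without recurrence theory;
\item it makes irreducibility and aperiodicity superfluous, since the one-step minorization on a sublevel set does all the work;
\item it gives fully explicit constants, e.g.\ $L\le 14m^4$ with $m=b\vee\delta^{-1}\vee(1-\beta)^{-1}\vee c$.
\end{itemize}
The paper's route is shorter on the page, but its constants are only as explicit as the cited references.

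Three points need repair.

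\textbf{The small-set case.} Neither of your primary options is available: nothing in the hypotheses gives a minorization on $\{V\le 2c\}$, for $P_\theta$ or for an iterate. Drop the split at $c_0=2c$ and use your fallback directly:
\begin{itemize}
\item case A: $V(x)\le c$ and $V(y)\le c$, handled by the minorization;
\item case B: otherwise, handled by the drift.
\end{itemize}
Note, however, that knowing $\max(V(x),V(y))>c$ does not by itself make case B contract. With $s=V(x)+V(y)$, one has $2b+\beta s=s$ exactly when $s=c$. So the drift bound $2+a(2b+\beta s)$ equals $d_a(x,y)=2+as$ at the threshold. This is why Hairer and Mattingly require a strict margin $R>2K/(1-\gamma)$.

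Here the margin comes from $V\ge 1$: in case B, $s>c+1$. Then $2b+\beta s\le\gamma_0 s$ with $\gamma_0=\beta+2b/(c+1)<1$. Since $(1-\gamma_0)(c+1)=1-\beta$, the contraction factor is at most $1-a(1-\beta)/(2+a(c+1))$.

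\textbf{The drift factor.} It is not of order $\max\{\beta,1-(1-\beta)/4\}$ ``after choosing $a$ small''. The constant $2$ in $d_a$ does not contract, so the factor tends to $1$ as $a\to0$. Take the value $a=\delta(1-\beta)/(2b(1+\beta))$ that case A forces (your choice, written out). Also use $b\ge 1-\beta$, which follows from $1\le\inf V\le b/(1-\beta)$ (take the infimum in $P_\theta V\le\beta V+b$). Then $a(c+1)\le 3/2$ and $1-\bar\alpha\ge\delta(1-\beta)^2/(7b(1+\beta))$. This is still polynomial, so your conclusion survives.

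\textbf{The formula for $\rho_a$.} The correct identity is $\rho_a(\mu,\nu)=\int(1+aV)\,d|\mu-\nu|$, not $\int(2+aV)\,d|\mu-\nu|$; check it on $\mu=\delta_x$, $\nu=\delta_y$. Your case bounds are in fact computed with the correct formula, and the comparison $\|\mu-\nu\|_V\le a^{-1}\rho_a(\mu,\nu)$ is unaffected.

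The remaining steps are fine: extension to measures via the dual characterization, completeness, the moment bound, and the conversion to the form $L(1-L^{-1})^n$.
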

\begin{remark}
    There are little differences between this lemma and Lemma 2.3 in (\cite{fortConvergenceAdaptiveInteracting2011}). The computable bound proposed in the first bound expression of Theorem 2.3 in (\cite{meynComputableBoundsGeometric1994}) explains why the constants $C$, $\gamma$ are universal and why the setting of $c$, $L$ can be different from Lemma 2.3 in (\cite{fortConvergenceAdaptiveInteracting2011}).
    The transformation from the computable bound to the expression of $L$ can also refer to the proof of the Lemma 3 in (\cite{saksmanErgodicityAdaptiveMetropolis2010}), which indicates that we can bound the computable bound by a constant multiple of the product of $b$, $\delta^{-1}$, $(1-\beta)^{-1}$ and $c$.
    
    Thus, no matter how we change $V$, as long as $b$, $\beta$ and $\delta$ remains unchanged, the constant $L$ also remains unchanged. Another point need to mention is that the small condition is on a specific small set which is determined by $b$ and $\beta$. The expression is delicate that if we change $P_\theta$ to $P_\theta^d$, then the specific small set remains unchanged for $P_\theta^d$.
\end{remark}

The lemma below is a variant version of Lemma 4.2 in (\cite{fortConvergenceAdaptiveInteracting2011}).
From the proof, we can show that the condition substantially used in the proof is simultaneous geometrically ergodicity. Thus, we only provide this ergodicity as the condition in Lemma \ref{lemma_solution_continuity}.
\begin{lemma}
    \label{lemma_solution_continuity}
    Assume that there exists a positive constant $L>1$, for all $\theta \in S_\Theta$, $P_\theta$ is positive recurrent and simultaneously geometrically ergodic that 
    \begin{equation*}
        \| P_\theta^{n}(x, \cdot) - \pi_\theta \|_V \leq L(1-L^{-1})^n V(x) ,
    \end{equation*}
    for some function $V: \mathrm{X} \to [1,+\infty)$.
    For any $\theta \in S_\Theta$, let $F_\theta: \mathrm{X} \to \mathbb{R}^{+}$ be a measurable function such that $\sup_\theta\left|F_\theta\right|_V < \infty$ and define $\hat{F}_\theta := \sum_{n \geq 0} P_\theta^n\left\{F_\theta-\pi_\theta\left(F_\theta\right)\right\}$.
    For any $\theta, \theta^{\prime} \in S_\Theta$,
    \begin{equation*}
        \left\|\pi_\theta-\pi_{\theta^{\prime}}\right\|_V
        \leq L^2 \left\{\pi_\theta(V)+L V(x)\right\} D_V\left(\theta, \theta^{\prime}\right)
    \end{equation*}
    and
    \begin{equation*}
        \left|P_\theta \hat{F}_\theta-P_{\theta^{\prime}} \hat{F}_{\theta^{\prime}}\right|_V
        \leq \sup_{\theta \in S_\Theta} \left|F_\theta\right|_V
        L^2 \left( L^2 D_V\left(\theta, \theta^{\prime}\right)
        + \left\| \pi_\theta-\pi_{\theta^{\prime}} \right\|_V \right)
        + L^2 \left| F_\theta-F_{\theta^{\prime}} \right|_V .
    \end{equation*}
\end{lemma}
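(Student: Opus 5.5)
The plan follows the proof of Lemma 4.2 in Fort et al., using only the simultaneous geometric ergodicity hypothesis and the Poisson-equation representation. For the first inequality, I would write the telescoping identity
\begin{equation*}
\pi_\theta - \pi_{\theta'} = \pi_\theta P_\theta^n - \delta_x P_{\theta'}^n + \delta_x P_{\theta'}^n - \pi_{\theta'}
= (\pi_\theta - \delta_x P_\theta^n) + \sum_{j=0}^{n-1} \delta_x P_\theta^{j}(P_\theta - P_{\theta'})P_{\theta'}^{n-1-j} + (\delta_x P_{\theta'}^n - \pi_{\theta'}) ,
\end{equation*}
and let $n \to \infty$. The two boundary terms vanish in $\|\cdot\|_V$ by the ergodicity bound $L(1-L^{-1})^n V(x)$. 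For the middle sum, $(P_\theta-P_{\theta'})(y,\cdot)$ has total mass zero, so I may replace $P_{\theta'}^{n-1-j}$ by $P_{\theta'}^{n-1-j}-\pi_{\theta'}$. This gives a factor $L(1-L^{-1})^{n-1-j}$ times $\|P_\theta-P_{\theta'}\|_V = D_V(\theta,\theta')$, times $\delta_xP_\theta^j V$.

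The step I expect to be the main obstacle is bounding $\delta_x P_\theta^j V$ without a drift condition, since only ergodicity is assumed. I would write $P_\theta^j V(x) \le \pi_\theta V + \|P_\theta^j(x,\cdot)-\pi_\theta\|_V \le \pi_\theta V + L V(x)$. Summing the geometric series over $j$ then gives the factor $L \cdot L = L^2$, which yields $L^2\{\pi_\theta(V)+LV(x)\}D_V$. Some bookkeeping is needed to keep the composition of signed kernels inside the $V$-norm: I would use $\|\mu Q\|_V \le \|\mu\|_V \|Q\|_V$ together with the pointwise version of $D_V$.

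For the second inequality, I would use $\hat F_\theta = \sum_n (P_\theta^n - \pi_\theta)F_\theta$, which converges in $|\cdot|_V$ with $|\hat F_\theta|_V \le L^2 \sup|F|_V$. Then $P_\theta\hat F_\theta = \sum_{n\ge1}(P_\theta^n-\pi_\theta)F_\theta$, and I split
\begin{equation*}
P_\theta\hat F_\theta - P_{\theta'}\hat F_{\theta'} = \sum_{n\ge1}(P_\theta^n-\pi_\theta)(F_\theta - F_{\theta'}) + \sum_{n\ge1}\big[(P_\theta^n-\pi_\theta)-(P_{\theta'}^n-\pi_{\theta'})\big]F_{\theta'} .
\end{equation*}
The first sum is bounded by $L^2|F_\theta-F_{\theta'}|_V V$ via the geometric series. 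For the second sum, I write $P_\theta^n - P_{\theta'}^n = \sum_j P_\theta^j(P_\theta-P_{\theta'})P_{\theta'}^{n-1-j}$, center each factor as before, and add the term $\pi_{\theta'}-\pi_\theta$ contributed $n$ times.

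This direct route would produce a non-summable factor $n$. I would therefore regroup via the Poisson equation instead: $\hat F_{\theta'}$ solves $\hat F - P_{\theta'}\hat F = F_{\theta'} - \pi_{\theta'}F_{\theta'}$, so that
\begin{equation*}
P_\theta\hat F_{\theta'}-P_{\theta'}\hat F_{\theta'} = \sum_{n\ge 0}(P_\theta^n-\pi_\theta)(P_\theta - P_{\theta'})\hat F_{\theta'} + \pi_\theta(P_\theta-P_{\theta'})\hat F_{\theta'} .
\end{equation*}
This is bounded using $|(P_\theta-P_{\theta'})\hat F_{\theta'}|_V \le D_V L^2 \sup|F|_V$ and one more geometric series, giving the factor $L^2\cdot L^2 D_V$. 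The remaining $\|\pi_\theta-\pi_{\theta'}\|_V$ term comes from the centering difference $\pi_\theta F - \pi_{\theta'}F$ propagated through $L^2$. Collecting constants gives the stated bound.
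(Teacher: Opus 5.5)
Your argument for the first inequality is fine: starting from $\delta_x$ and bounding $P_\theta^jV(x)\le\pi_\theta(V)+LV(x)$ gives exactly the stated $L^2\{\pi_\theta(V)+LV(x)\}D_V(\theta,\theta')$.

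The gap is in the second inequality: the displayed Poisson identity is false as written. Put $g:=(P_\theta-P_{\theta'})\hat F_{\theta'}$. Your left side is $g$, but your right side equals $g+\sum_{n\ge1}(P_\theta^n-\pi_\theta)g$. For a two-state chain with second eigenvalue $\lambda\neq0$, the extra sum is $\frac{\lambda}{1-\lambda}(g-\pi_\theta g)$, which is nonzero whenever $g$ is not constant.

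Suppose instead you meant the left side to be the second sum of your split,
\[
E:=\sum_{n\ge1}\bigl[(P_\theta^n-\pi_\theta)-(P_{\theta'}^n-\pi_{\theta'})\bigr]F_{\theta'}=P_\theta\hat F^{(\theta)}-P_{\theta'}\hat F_{\theta'},
\]
where $\hat F^{(\theta)}$ is the $P_\theta$-Poisson solution for $F_{\theta'}$. Then the constant term is still wrong. One checks that $E-P_\theta E=g-(\pi_\theta-\pi_{\theta'})F_{\theta'}$. Applying $\pi_\theta$ to $\hat F_{\theta'}-P_{\theta'}\hat F_{\theta'}=F_{\theta'}-\pi_{\theta'}F_{\theta'}$ gives $\pi_\theta g=(\pi_\theta-\pi_{\theta'})F_{\theta'}$, so this right-hand side is $\pi_\theta$-centred. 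Uniqueness of $V$-bounded Poisson solutions up to constants, together with $\pi_\theta(E)=-\pi_\theta P_{\theta'}\hat F_{\theta'}$ and $\pi_{\theta'}P_{\theta'}\hat F_{\theta'}=0$, then yields
\[
E=\sum_{n\ge0}(P_\theta^n-\pi_\theta)g-(\pi_\theta-\pi_{\theta'})P_{\theta'}\hat F_{\theta'} .
\]
Your version has $+\pi_\theta g$ in place of the last term, so it is off by the constant $\pi_\theta\hat F_{\theta'}$. The correct last term, with $|P_{\theta'}\hat F_{\theta'}|_V\le L(L-1)\sup_\theta|F_\theta|_V$, is what produces $L^2\|\pi_\theta-\pi_{\theta'}\|_V\sup_\theta|F_\theta|_V$. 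Your closing sentence about the centring difference ``propagated through $L^2$'' is exactly the missing step. Bounding $\pi_\theta g$ with your $D_V$ estimate would instead introduce a factor $\pi_\theta(V)$, which does not appear in the target.

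The direct route you abandoned does not actually produce a non-summable factor. The factor $n$ only accumulates if $P_\theta^j$ is left uncentred. If you centre both factors, the $n$ terms $\pi_\theta(P_\theta-P_{\theta'})P_{\theta'}^{n-1-j}F$ telescope, and you get the exact identity
\[
(P_\theta^n-\pi_\theta)F-(P_{\theta'}^n-\pi_{\theta'})F=\sum_{j=0}^{n-1}(P_\theta^j-\pi_\theta)(P_\theta-P_{\theta'})(P_{\theta'}^{n-1-j}-\pi_{\theta'})F-(\pi_\theta-\pi_{\theta'})(P_{\theta'}^n-\pi_{\theta'})F .
\]
The double sum is at most $nL^2(1-L^{-1})^{n-1}D_V(\theta,\theta')|F|_V V$, and $\sum_{n\ge1}n(1-L^{-1})^{n-1}=L^2$ gives the $L^4D_V$ term. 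The last term, summed over $n\ge1$, gives the $L^2\|\pi_\theta-\pi_{\theta'}\|_V$ term. This is the argument of Lemma 4.2 of Fort et al., to which the paper defers without giving a proof of its own.
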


\begin{lemma}
    \label{lemma_convergence_V_as}
    Suppose $\mathbb{E} V(\Lambda_n)$ is finite for any $n \geq 1$ and the inequality $(P_\theta V)(\Lambda) \leq \beta V(\Lambda)+b$ holds for any $\theta \in S_\Theta$ and any $\Lambda \in W_\Gamma$.

    If $P(\theta_n \notin S_\Theta \text{ i.o.}) = 0$, then $V(\Lambda_n) = O_P(1)$ and the summation $\sum_{n=1}^\infty n^{-p} V(\Lambda_n) < \infty$ almost surely for any $p>1$.
\end{lemma}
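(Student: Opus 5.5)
We need to prove Lemma \ref{lemma_convergence_V_as}: if $\mathbb{E}V(\Lambda_n)<\infty$ for all $n$, the drift inequality $P_\theta V\le \beta V+b$ holds on $S_\Theta$, and $\theta_n\in S_\Theta$ eventually almost surely, then $V(\Lambda_n)=O_P(1)$ and $\sum_n n^{-p}V(\Lambda_n)<\infty$ a.s. for every $p>1$.

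The plan is to localize in time. For a fixed $k$, introduce the event $A_k=\{\theta_m\in S_\Theta \text{ for all } m\ge k\}\in\mathcal{F}_\infty$. Since we cannot condition on $A_k$ directly, we use the adapted surrogates $B_{k,n}=\{\theta_m\in S_\Theta,\ k\le m\le n-1\}\in\mathcal{F}_{n-1}$, which decrease to $A_k$. Define $W_n=V(\Lambda_n)\mathbb{I}(B_{k,n})$ for $n\ge k$.

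Because $\mathbb{E}[V(\Lambda_n)\mid\mathcal{F}_{n-1}]=P_{\theta_{n-1}}V(\Lambda_{n-1})$ and $B_{k,n}\subset B_{k,n-1}\cap\{\theta_{n-1}\in S_\Theta\}$, we obtain
\begin{equation*}
\mathbb{E}[W_n\mid\mathcal{F}_{n-1}]\le \mathbb{I}(B_{k,n-1})\bigl(\beta V(\Lambda_{n-1})+b\bigr)=\beta W_{n-1}+b\,\mathbb{I}(B_{k,n-1}),
\end{equation*}
where $\Lambda_{n-1}\in W_\Gamma$ always holds. Taking expectations and iterating from $n=k$ gives $\mathbb{E}W_n\le \beta^{n-k}\mathbb{E}V(\Lambda_k)+b/(1-\beta)=:C_k<\infty$, uniformly in $n\ge k$; this is exactly where finiteness of $\mathbb{E}V(\Lambda_k)$ is used.

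\textbf{Boundedness in probability.} For $\epsilon>0$, choose $k$ with $P(A_k^c)<\epsilon/2$, which is possible since $A_k\uparrow$ an event of probability one. Then for $n\ge k$,
\begin{equation*}
P(V(\Lambda_n)>K)\le P(A_k^c)+P(W_n>K)\le \epsilon/2+C_k/K,
\end{equation*}
using $A_k\subset B_{k,n}$. Choosing $K$ large handles all $n\ge k$, and the finitely many $n<k$ are handled by Markov's inequality with $\mathbb{E}V(\Lambda_n)<\infty$. Hence $V(\Lambda_n)=O_P(1)$.

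\textbf{Summability.} $\mathbb{E}\sum_{n\ge k}n^{-p}W_n\le C_k\sum_n n^{-p}<\infty$, so $\sum_{n\ge k}n^{-p}W_n<\infty$ a.s. On $A_k$ we have $W_n=V(\Lambda_n)$ for all $n\ge k$, and the first $k$ terms are finite a.s. So the series converges a.s. on each $A_k$, hence on $\bigcup_k A_k$, which has probability one.

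The main subtlety is measurability: the drift bound must be applied only through the $\mathcal{F}_{n-1}$-measurable indicator $\mathbb{I}(\theta_{n-1}\in S_\Theta)$, never through the non-adapted event $A_k$. The truncation by $B_{k,n}$ handles this, and no other obstacle is expected.
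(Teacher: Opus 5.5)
Your proposal is correct and follows essentially the same route as the paper. Your $W_n = V(\Lambda_n)\mathbb{I}(B_{k,n})$ is exactly the paper's truncated variable $V_{m,n}$, and the remaining steps coincide: the same drift recursion giving $\mathbb{E}W_n \le \beta^{n-k}\mathbb{E}V(\Lambda_k) + b/(1-\beta)$, Markov's inequality on the event that $\theta_n$ stays in $S_\Theta$ after time $k$, and monotone convergence for summability (you conclude via the union of the events $A_k$ rather than the paper's arbitrary-$\epsilon$ argument, which is equivalent).
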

\begin{proof}
    Denote the variable $V(\Lambda_n)\mathbb{I}(\theta_m, \dots, \theta_{n-1} \in S_\Theta)$ by $V_{m,n}$ with $m \leq n$.
    Thus, for $n>m$,
    \begin{align*}
        \mathbb{E} V_{m,n}
        &= \mathbb{E} \left[
            \mathbb{E} \left[
                V(\Lambda_n)\mathbb{I}(\theta_m, \dots, \theta_{n-1} \in S_\Theta)
                \mid \mathcal{F}_{n-1}
            \right]
        \right] \\
        &= \mathbb{E} \left[
            \mathbb{E} \left[
                V(\Lambda_n)
                \mid \mathcal{F}_{n-1}
            \right]
            \mathbb{I}(\theta_m, \dots, \theta_{n-1} \in S_\Theta)
        \right] \\
        &= \mathbb{E} \left[
            (P_{\theta_{n-1}}V) (\Lambda_{n-1})
            \mathbb{I}(\theta_m, \dots, \theta_{n-1} \in S_\Theta)
        \right] \\
        & \leq \mathbb{E} \left[
            (\beta V+b) (\Lambda_{n-1})
            \mathbb{I}(\theta_m, \dots, \theta_{n-1} \in S_\Theta)
        \right] \\
        & \leq \beta \mathbb{E} V_{m,n-1}+b
    \end{align*}

    By recursion of the above inequality, we can show that for any $n \geq m$,
    \begin{equation*}
        \mathbb{E} V_{m,n}
        \leq \beta^{n-m} \mathbb{E} V_{m,m} + b \sum_{k=0}^{n-m-1} \beta^k
        \leq \beta^{n-m} \mathbb{E} V(\Lambda_m) + \frac{b}{1-\beta} .
    \end{equation*}
    The assumption $\mathbb{E} V(\Lambda_m) < \infty$ implies that for a fixed $m$, $\mathbb{E} V_{m,n}$ is uniformly bounded for all $n \geq m$.

    For any $\epsilon > 0$, since the assumption $P(\theta_n \notin S_\Theta \text{ i.o.}) = 0$, there exists positive integer $N$ such that $P(\theta_n \in S_\Theta \text{ for all } n \geq N) > 1-\epsilon/2$.
    Equivalently, this means $P(V_{N,n} = V(\Lambda_n) \text{ for any } n \geq N) \geq 1-\epsilon/2$.
    We have shown the uniform boundedness of $\mathbb{E} V_{N,n}$ for fixed $N$. Then there exists a positive number $C$ such that $\mathbb{E} V_{N,n} < C$ for any $n \geq N$. From the bound of this expectation, we can show that $P(V_{N,n} \geq \frac{2C}{\epsilon}) < \frac{\epsilon}{2}$.

    Combining the inequality
    \begin{equation*}
        P(V_{N,n} = V(\Lambda_n) \text{ for any } n \geq N) \geq 1-\epsilon/2
    \end{equation*}
    and for any $n \geq N$,
    \begin{equation*}
        P(V_{N,n} \geq \frac{2C}{\epsilon}) < \frac{\epsilon}{2} ,
    \end{equation*}
    we can reach to a conclusion that
    \begin{equation*}
        P(V(\Lambda_n) \geq \frac{2C}{\epsilon}) < \epsilon
    \end{equation*}
    for any $n \geq N$. Combining with finiteness of $\mathbb{E} V(\Lambda_n)$ for $n \in \{1,\dots,N-1\}$, we can show that there exists a positive number $M$ such that $P(V(\Lambda_n) \geq M) < \epsilon$ for any $n \in \mathbb{N}^*$. Thus, $V(\Lambda_n) = O_P(1)$.
    
    Furthermore, the uniform boundedness of $\mathbb{E} V_{N,n}$ allows us to establish the almost sure convergence of the series $\sum_{n=N}^\infty n^{-p} V_{N,n}$.
    By the Monotone Convergence Theorem,
    \begin{equation*}
        \mathbb{E} \left[ \sum_{n=N}^\infty n^{-p} V_{N,n} \right] = \sum_{n=N}^\infty n^{-p} \mathbb{E} [V_{N,n}] < \infty ,
    \end{equation*}
    for $p>1$.
    This implies that the sum $\sum_{n=N}^\infty n^{-p} V_{N,n}$ must be finite almost surely.
    With
    \begin{equation*}
        P(V_{N,n} = V(\Lambda_n) \text{ for any } n \geq N) \geq 1-\epsilon/2 ,
    \end{equation*}
    we can show that
    \begin{align*}
        & \quad P(\sum_{n=N}^\infty n^{-p} V(\Lambda_n) = \infty)
        \leq P(\sum_{n=N}^\infty n^{-p} V(\Lambda_n) \neq \sum_{n=N}^\infty n^{-p} V_{N,n}) \\
        & \leq P(V_{N,n} \neq V(\Lambda_n) \text{ for some } n \geq N)
        \leq \epsilon/2 .
    \end{align*}
    Furthermore, combining with the finiteness of $V(\Lambda_n)$ for $n \in \{1,\dots,N-1\}$, it follows that $P(\sum_{n=1}^\infty n^{-p} V(\Lambda_n) = \infty) \leq \epsilon/2$. Due to the arbitrariness of $\epsilon$, the finiteness is also almost sure.
\end{proof}

The next lemma is adapted from Theorem B.1 in (\cite{fortCentralLimitTheorem2014}).
\begin{lemma}
    \label{lemma_convergence_average_enter}
    For some $a \in [1/2,1)$, let the set $S_\Theta$ satisfy that for any $\theta \in S_\Theta$,
    \begin{equation*}
        P_\theta V \leq \beta V+b ,
    \end{equation*}
    and
    \begin{equation*}
        \| P_\theta^{n}(\Lambda, \cdot) - \pi_\theta \|_{V^a} \leq L_a(1-L_a^{-1})^n V^a(\Lambda) ,
    \end{equation*}
    for some function $V: \mathrm{X} \to [1,+\infty)$.
    
    For any $\theta \in S_\Theta$, let $F_\theta: \mathrm{X} \to \mathbb{R}$ be a measurable function.
    If
    \begin{enumerate}[label=\roman*.]
        \item $P(\theta_n \notin S_\Theta \text{ i.o.}) = 0$.
        \item $\sup_{\theta \in S_\Theta} \pi_\theta(V) < \infty$.
        \item $\sup_{\theta \in S_\Theta} |F_\theta|_V \leq \sup_{\theta \in S_\Theta} |F_\theta|_{V^a} < \infty$.
        \item $V(\Lambda_n) = O_P(1)$ and the summation $\sum_{n=1}^\infty n^{-1/a} V(\Lambda_n) < \infty$ almost surely.
        \item $\frac{1}{N} \sum_{n=1}^{N-1} D_V\left(\theta_{n+1}, \theta_n\right) V(\Lambda_{n+1}) \mathbb{I}(\theta_n \in S_\Theta, \theta_{n+1} \in S_\Theta) \xrightarrow{\mathbb{P}} 0$.
        \item $\frac{1}{N} \sum_{n=1}^{N-1} \left|F_{\theta_{n+1}}-F_{\theta_n}\right|_V V(\Lambda_{n+1}) \mathbb{I}(\theta_n \in S_\Theta, \theta_{n+1} \in S_\Theta) \xrightarrow{\mathbb{P}} 0$.
    \end{enumerate}
    Then,
    \begin{equation*}
        \frac{1}{N} \sum_{n=1}^{N} F_{\theta_n}(\Lambda_n)\mathbb{I}(\theta_n \in S_\Theta)
        -\frac{1}{N} \sum_{n=1}^{N} \pi_{\theta_n}F_{\theta_n}\mathbb{I}(\theta_n \in S_\Theta)
        \xrightarrow{\mathbb{P}} 0 .
    \end{equation*}
\end{lemma}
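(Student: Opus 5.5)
The plan is a Poisson-equation/martingale decomposition adapted to the adaptive setting, following Fort et al.'s Theorem B.1. Throughout, write $\bar F_n := F_{\theta_n}(\Lambda_n)\mathbb{I}(\theta_n\in S_\Theta)$.

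\textbf{Step 1 (Poisson solutions).} For $\theta\in S_\Theta$, set $\hat F_\theta:=\sum_{k\ge0}(P_\theta^k-\pi_\theta)F_\theta$. The $V^a$-geometric ergodicity bound together with condition iii gives
\begin{equation*}
|\hat F_\theta|_{V^a}\le L_a^2\sup_{\theta\in S_\Theta}|F_\theta|_{V^a},
\end{equation*}
and $\hat F_\theta-P_\theta\hat F_\theta=F_\theta-\pi_\theta F_\theta$. The same bound holds for $P_\theta\hat F_\theta$, using $P_\theta V^a\le(\beta+b)V^a$, which follows from Jensen's inequality applied to $P_\theta V\le\beta V+b$.

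\textbf{Step 2 (decomposition).} Write
\begin{equation*}
\bar F_n-\pi_{\theta_n}F_{\theta_n}\mathbb{I}(\theta_n\in S_\Theta)=\big[\hat F_{\theta_n}(\Lambda_n)-P_{\theta_n}\hat F_{\theta_n}(\Lambda_n)\big]\mathbb{I}(\theta_n\in S_\Theta).
\end{equation*}
Then split the sum over $n$ into four parts:
\begin{itemize}
\item[(a)] the martingale increments $\big[\hat F_{\theta_{n-1}}(\Lambda_n)-P_{\theta_{n-1}}\hat F_{\theta_{n-1}}(\Lambda_{n-1})\big]\mathbb{I}(\theta_{n-1}\in S_\Theta)$;
\item[(b)] the parameter-change terms $\big[P_{\theta_{n}}\hat F_{\theta_{n}}(\Lambda_n)-P_{\theta_{n-1}}\hat F_{\theta_{n-1}}(\Lambda_n)\big]\mathbb{I}(\theta_{n-1},\theta_n\in S_\Theta)$;
\item[(c)] telescoping boundary terms of the form $P_{\theta_0}\hat F_{\theta_0}(\Lambda_0)$ and $P_{\theta_N}\hat F_{\theta_N}(\Lambda_N)$;
\item[(d)] correction terms at the indices where $\mathbb{I}(\theta_n\in S_\Theta)$ switches value.
\end{itemize}

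\textbf{Step 3 (negligibility of each piece, divided by $N$).}

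For (a), use a martingale SLLN of Chow type. The increments are bounded by $C\,V^a(\Lambda_n)+C\,V^a(\Lambda_{n-1})$. I would truncate and apply Kronecker's lemma with weights $n^{-1}$: the conditional $(1/a)$-th moments are controlled by $V$, and since $1/a\le 2$, Burkholder/von Bahr–Esseen inequalities give $\sum_n n^{-1/a}\mathbb{E}[|\Delta_n|^{1/a}\mid\cdot]\lesssim\sum_n n^{-1/a}V(\Lambda_{n-1})$. This is finite a.s. by condition iv, after localization to the event $\{\theta_m\in S_\Theta\ \forall m\ge N_0\}$. Hence $N^{-1}\sum\Delta_n\to0$ a.s. on that event, and that event has probability close to one by condition i.

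For (b), use Lemma \ref{lemma_solution_continuity}, which gives
\begin{equation*}
|P_\theta\hat F_\theta-P_{\theta'}\hat F_{\theta'}|_V\lesssim D_V(\theta,\theta')+\|\pi_\theta-\pi_{\theta'}\|_V+|F_\theta-F_{\theta'}|_V .
\end{equation*}
Its first inequality bounds $\|\pi_\theta-\pi_{\theta'}\|_V$ by $D_V$ times $\pi_\theta(V)+LV$, with $\sup_\theta\pi_\theta(V)<\infty$ by condition ii. Multiplying by $V(\Lambda_n)$ and averaging, (b) is controlled by conditions v and vi. The extra factor $V^2$ appearing in the $\pi$-term has to be absorbed; I would do this by working with $V^a$ throughout, so that only $V^{2a}\le V$ is needed when $a=1/2$. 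Otherwise condition iv must be used carefully.

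For (c), the terms are $O(V^a(\Lambda_N)/N)\to0$ in probability because $V(\Lambda_N)=O_P(1)$.

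For (d), condition i gives finitely many switches almost surely, and each switch contributes a finite term, so the contribution divided by $N$ vanishes.

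\textbf{Main obstacle.} The hard step is the martingale term (a): obtaining a law of large numbers with only $O_P$ control of $V(\Lambda_n)$ plus the weighted summability in condition iv. The localization by stopping at the last exit from $S_\Theta$ must be handled so that the martingale property is preserved. For this I would define the stopped indicators $\mathbb{I}(\theta_m,\dots,\theta_{n-1}\in S_\Theta)$ exactly as in Lemma \ref{lemma_convergence_V_as}.
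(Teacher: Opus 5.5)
Your architecture is the paper's. You use Poisson solutions with $|\hat F_\theta|_{V^a}\le L_a^2\sup_\theta|F_\theta|_{V^a}$, a martingale term, a parameter-change term, boundary terms killed by $V(\Lambda_N)=O_P(1)$, and switching terms killed by condition i. Your ``main obstacle'' (a) is in fact the easy part. Each increment carries $\mathbb{I}(\theta_{n-1}\in S_\Theta)$ with $\theta_{n-1}\in\mathcal{F}_{n-1}$, so it is already a martingale difference and no stopping is needed. The bound $|\hat F_\theta|^{1/a}\le C\,V$ together with the drift gives conditional $(1/a)$-th moments at most a constant times $V(\Lambda_{n-1})$. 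Theorem 2.18 of \cite{hallMartingaleLimitTheory1980}, Kronecker's lemma and condition iv then finish it, with no truncation or Burkholder-type inequality. This is exactly what the paper does.

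The real problems are in (b).

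First, your decomposition is not exact. Parts (a), (b) and (c) telescope to the lagged sum $\sum_n[F_{\theta_{n-1}}(\Lambda_n)-\pi_{\theta_{n-1}}F_{\theta_{n-1}}]$ of Fort et al., not to the target sum. The paper's parameter-change term is $\hat F_{\theta_n}(\Lambda_n)-\hat F_{\theta_{n-1}}(\Lambda_n)$ (up to an index shift). By the Poisson equation this equals your (b) plus $[F_{\theta_n}-F_{\theta_{n-1}}](\Lambda_n)-(\pi_{\theta_n}F_{\theta_n}-\pi_{\theta_{n-1}}F_{\theta_{n-1}})$. These extra pieces must be included, although the same conditions control them.

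Second, the ``extra factor $V^2$'' is not there, and the patch you propose would not work. In Lemma~\ref{lemma_solution_continuity}, the bound $\|\pi_\theta-\pi_{\theta'}\|_V\le L^2\{\pi_\theta(V)+LV(x)\}D_V(\theta,\theta')$ has a left-hand side free of $x$. So $x$ is an arbitrary fixed reference point, not the current state; the paper simply uses $V(x)<\infty$. Hence $\|\pi_{\theta_n}-\pi_{\theta_{n-1}}\|_V\le C\,D_V(\theta_{n-1},\theta_n)$ with $C$ finite by ii. Every piece of the parameter-change term is then at most a constant times $D_V(\theta_{n-1},\theta_n)V(\Lambda_n)$ or $|F_{\theta_n}-F_{\theta_{n-1}}|_V V(\Lambda_n)$, using $V\ge 1$ for the $\pi$-pieces. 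This is precisely what v and vi control, for every $a\in[1/2,1)$.

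Moving to $V^a$-norms would instead require $N^{-1}\sum_n D_{V^a}(\theta_{n+1},\theta_n)V^a(\Lambda_{n+1})\to0$ and the analogue for $|F_{\theta_{n+1}}-F_{\theta_n}|_{V^a}$. Conditions v and vi do not imply these, and for $a>1/2$ you leave the step open anyway.

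(Like the paper, you are using Lemma~\ref{lemma_solution_continuity} in the $V$-norm, i.e.\ a uniform $V$-geometric ergodicity constant $L$. The statement only lists the $V^a$ version, but the $V$ version is available where the lemma is applied, via Assumption~\ref{assumption_simultaneous_geometric_ergodicity} with $\alpha=1$.)
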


\begin{proof}
    For any $\theta \in S_\Theta$, define $\hat{F}_\theta := \sum_{n \geq 0} P_\theta^n\left\{F_\theta-\pi_\theta\left(F_\theta\right)\right\}$.

    The expression
    \begin{equation*}
        \frac{1}{N} \sum_{n=1}^{N} F_{\theta_n}(\Lambda_n)\mathbb{I}(\theta_n \in S_\Theta)
        -\frac{1}{N} \sum_{n=1}^{N} \pi_{\theta_n}F_{\theta_n}\mathbb{I}(\theta_n \in S_\Theta)
    \end{equation*}
    can be decomposed as follows:
    \begin{enumerate}
        \item \label{item_proof_lemma_convergence_average_enter_1} $\frac{1}{N} \sum_{n=1}^{N-1} \left[ \hat{F}_{\theta_n}(\Lambda_{n+1})
        -(P_{\theta_n}\hat{F}_{\theta_n})(\Lambda_n) \right]
        \mathbb{I}(\theta_n \in S_\Theta)$.
        \item \label{item_proof_lemma_convergence_average_enter_2} $\frac{1}{N} \sum_{n=1}^{N-1} \left[ \hat{F}_{\theta_{n+1}}(\Lambda_{n+1})
        -\hat{F}_{\theta_n}(\Lambda_{n+1}) \right]
        \mathbb{I}(\theta_n \in S_\Theta, \theta_{n+1} \in S_\Theta)$.
        \item \label{item_proof_lemma_convergence_average_enter_3} $\frac{1}{N} \sum_{n=1}^{N-1} -\hat{F}_{\theta_n}(\Lambda_{n+1})
        \mathbb{I}(\theta_n \in S_\Theta, \theta_{n+1} \notin S_\Theta)$.
        \item \label{item_proof_lemma_convergence_average_enter_4} $\frac{1}{N} \sum_{n=2}^{N} \hat{F}_{\theta_n}(\Lambda_n)
        \mathbb{I}(\theta_n \in S_\Theta, \theta_{n-1} \notin S_\Theta)$.
        \item \label{item_proof_lemma_convergence_average_enter_5} $\frac{1}{N} \left[
        \hat{F}_{\theta_1}(\Lambda_1)\mathbb{I}(\theta_1 \in S_\Theta)
        + \left\{ - \hat{F}_{\theta_N}(\Lambda_N)
        + F_{\theta_N}(\Lambda_N)
        - \pi_{\theta_N}F_{\theta_N} \right\}
        \mathbb{I}(\theta_N \in S_\Theta)
        \right]$.
    \end{enumerate}
    
    We first prove the strong law of large number for the Item \ref{item_proof_lemma_convergence_average_enter_1}.
    We use the Jensen's inequality and the inequality $P_\theta V \leq \beta V+b$ that
    \begin{align*}
        & \quad \mathbb{E} \left[ \left\{\left[ \hat{F}_{\theta_n}(\Lambda_{n+1})-(P_{\theta_n}\hat{F}_{\theta_n})(\Lambda_n) \right]
        \mathbb{I}(\theta_n \in S_\Theta) \right\}^{1/a} \mid \mathcal{F}_n \right] \\
        & \leq 2^{1/a-1} \mathbb{E} \left[ \left| \hat{F}_{\theta_n}(\Lambda_{n+1}) \right|^{1/a} \mathbb{I}(\theta_n \in S_\Theta) 
        + \left| (P_{\theta_n}\hat{F}_{\theta_n})(\Lambda_n) \right|^{1/a}
        \mathbb{I}(\theta_n \in S_\Theta) \mid \mathcal{F}_n \right] \\
        & \leq 2^{1/a-1} \mathbb{E} \left[ \left| \hat{F}_{\theta_n} \right|^{1/a}(\Lambda_{n+1}) \mathbb{I}(\theta_n \in S_\Theta) 
        + (P_{\theta_n}\left|\hat{F}_{\theta_n}\right|^{1/a})(\Lambda_n)
        \mathbb{I}(\theta_n \in S_\Theta) \mid \mathcal{F}_n \right] \\
        & \leq 2^{1/a} \left( L_a^2|F_{\theta_n}|_{V^a} \right)^{1/a} (P_{\theta_n}V)(\Lambda_n) \mathbb{I}(\theta_n \in S_\Theta) \\
        & \leq 2^{1/a} \left( L_a^2|F_{\theta_n}|_{V^a} \right)^{1/a} (\beta+b)V(\Lambda_n) \mathbb{I}(\theta_n \in S_\Theta) .
    \end{align*}

    By the assumption
    \begin{equation*}
        \sum_{n=1}^\infty n^{-1/a} V(\Lambda_n) < \infty
    \end{equation*}
    almost surely, we have
    \begin{equation*}
        \sum_{n=1}^\infty n^{-1/a} V(\Lambda_n)\mathbb{I}(\theta_n \in S_\Theta) < \infty
    \end{equation*}
    almost surely. Therefore,
    \begin{equation*}
        \sum_{n=1}^\infty n^{-1/a}
        \mathbb{E} \left[ \left\{\left[ \hat{F}_{\theta_n}(\Lambda_{n+1})-(P_{\theta_n}\hat{F}_{\theta_n})(\Lambda_n) \right]
        \mathbb{I}(\theta_n \in S_\Theta) \right\}^{1/a} \mid \mathcal{F}_n \right] < \infty .
    \end{equation*}
    By Theorem 2.18 in (\cite{hallMartingaleLimitTheory1980}), we have
    \begin{equation*}
        \frac{1}{N} \sum_{n=1}^{N-1} \left[ \hat{F}_{\theta_n}(\Lambda_{n+1})
        -(P_{\theta_n}\hat{F}_{\theta_n})(\Lambda_n) \right]
        \mathbb{I}(\theta_n \in S_\Theta) \rightarrow 0
    \end{equation*}
    almost surely.

    Next, we prove that Item \ref{item_proof_lemma_convergence_average_enter_2} converges to zero in probability.
    \begin{align*}
        & \quad \left| \frac{1}{N} \sum_{n=1}^{N-1} \left[ \hat{F}_{\theta_{n+1}}(\Lambda_{n+1})
        -\hat{F}_{\theta_n}(\Lambda_{n+1}) \right]
        \mathbb{I}(\theta_n \in S_\Theta, \theta_{n+1} \in S_\Theta) \right| \\
        & \leq \frac{1}{N} \sum_{n=1}^{N-1} \left|
        \hat{F}_{\theta_{n+1}}(\Lambda_{n+1})
        -\hat{F}_{\theta_n}(\Lambda_{n+1}) \right|
        \mathbb{I}(\theta_n \in S_\Theta, \theta_{n+1} \in S_\Theta) \\
        & \leq \frac{1}{N} \sum_{n=1}^{N-1}
        \left[
            \left| (P_{\theta_{n+1}}\hat{F}_{\theta_{n+1}})(\Lambda_{n+1})
            - (P_{\theta_n}\hat{F}_{\theta_n})(\Lambda_{n+1}) \right|
            + \left| F_{\theta_{n+1}}(\Lambda_{n+1}) - F_{\theta_n}(\Lambda_{n+1}) \right| \right. \\
            & \quad \left. + \left| \pi_{\theta_{n+1}} F_{\theta_{n+1}} - \pi_{\theta_n} F_{\theta_n} \right|
        \right]
        \mathbb{I}(\theta_n \in S_\Theta, \theta_{n+1} \in S_\Theta) \\
        & \leq \frac{1}{N} \sum_{n=1}^{N-1}
        \left[
            \left| (P_{\theta_{n+1}}\hat{F}_{\theta_{n+1}})(\Lambda_{n+1})
            - (P_{\theta_n}\hat{F}_{\theta_n})(\Lambda_{n+1}) \right|
            + \left| F_{\theta_{n+1}}(\Lambda_{n+1}) - F_{\theta_n}(\Lambda_{n+1}) \right| \right. \\
            & \quad \left. + \pi_{\theta_n} \left| F_{\theta_{n+1}} - F_{\theta_n} \right|
            + \left| \pi_{\theta_{n+1}} - \pi_{\theta_n} \right| F_{\theta_{n+1}}
        \right]
        \mathbb{I}(\theta_n \in S_\Theta, \theta_{n+1} \in S_\Theta) \\
        & \leq \frac{1}{N} \sum_{n=1}^{N-1}
        \left[
            \sup_{\theta \in S_\Theta} \left|F_\theta\right|_V
            \left(L^4 D_V\left(\theta_{n+1}, \theta_n\right)
            + (L^2+1)\left\|\pi_{\theta_{n+1}}-\pi_{\theta_n}\right\|_V\right) \right. \\
        & \quad \left.
            + \left( L^2+1+\sup_{\theta \in S_\Theta} \pi_\theta(V) \right)
            \left|F_{\theta_{n+1}}-F_{\theta_n}\right|_V
        \right]
        V(\Lambda_{n+1}) \mathbb{I}(\theta_n \in S_\Theta, \theta_{n+1} \in S_\Theta) \\
        & \leq \frac{1}{N} \sum_{n=1}^{N-1}
        \left[
            \sup_{\theta \in S_\Theta} \left|F_\theta\right|_V
            \left(L^4 + L^2(L^2+1)\sup_{\theta \in S_\Theta} \pi_\theta(V)+L^5 V(x)\right)
            D_V\left(\theta_{n+1}, \theta_n\right) \right. \\
        & \quad \left.
            + \left( L^2+1+\sup_{\theta \in S_\Theta} \pi_\theta(V) \right)
            \left|F_{\theta_{n+1}}-F_{\theta_n}\right|_V
        \right] V(\Lambda_{n+1}) \mathbb{I}(\theta_n \in S_\Theta, \theta_{n+1} \in S_\Theta)
    \end{align*}
    
    Because $V(x) < \infty$, $\sup_{\theta \in S_\Theta} \pi_\theta(V) < \infty$ and $\sup_{\theta \in S_\Theta} \left|F_\theta\right|_V < \infty$, we can show that the above formula converges to zero almost surely from
    \begin{equation*}
        \frac{1}{N} \sum_{n=1}^{N-1} D_V\left(\theta_{n+1}, \theta_n\right) V(\Lambda_{n+1}) \mathbb{I}(\theta_n \in S_\Theta, \theta_{n+1} \in S_\Theta) \xrightarrow{\mathbb{P}} 0 ,
    \end{equation*}
    and
    \begin{equation*}
        \frac{1}{N} \sum_{n=1}^{N-1} \left|F_{\theta_{n+1}}-F_{\theta_n}\right|_V V(\Lambda_{n+1}) \mathbb{I}(\theta_n \in S_\Theta, \theta_{n+1} \in S_\Theta) \xrightarrow{\mathbb{P}} 0 .
    \end{equation*}
    
    Next, consider Items \ref{item_proof_lemma_convergence_average_enter_3} and \ref{item_proof_lemma_convergence_average_enter_4}.
    The assumption $P(\theta_n \notin S_\Theta \text{ i.o.}) = 0$ implies that the number of non-zero terms in each summation is finite almost surely.
    Thus, the entire sum is a finite random variable.
    When divided by $N$, it converges to zero almost surely as $N \rightarrow \infty$.  
    
    Finally, for Item \ref{item_proof_lemma_convergence_average_enter_5}, we have
    \begin{equation*}
        \left| F_{\theta_N}(\Lambda_N) \right|
        \leq \sup_{\theta \in S_\Theta} |F_{\theta_n}|_{V^a} V^a(\Lambda_N) ,
    \end{equation*}
    \begin{equation*}
        \left| \hat{F}_{\theta_N}(\Lambda_N)\mathbb{I}(\theta_N \in S_\Theta) \right|
        \leq \sup_{\theta \in S_\Theta} L_a^2 |F_{\theta_n}|_{V^a} V^a(\Lambda_N)
    \end{equation*}
    and
    \begin{equation*}
        \left| \pi_{\theta_N}F_{\theta_N}\mathbb{I}(\theta_N \in S_\Theta) \right|
        \leq \sup_{\theta \in S_\Theta} |F_{\theta_n}|_{V^a} \sup_{\theta \in S_\Theta} \pi_\theta(V^a)
        \leq \sup_{\theta \in S_\Theta} |F_{\theta_n}|_{V^a} \sup_{\theta \in S_\Theta} \pi_\theta(V) .
    \end{equation*}
    Thus, we can show that Item \ref{item_proof_lemma_convergence_average_enter_5} converges to zero in probability from the assumption that $V(\Lambda_N)$ is $O_P(1)$.

    Combining the above convergence results, we obtain
    \begin{equation*}
        \frac{1}{N} \sum_{n=1}^{N} F_{\theta_n}(\Lambda_n)\mathbb{I}(\theta_n \in S_\Theta)
        -\frac{1}{N} \sum_{n=1}^{N} \pi_{\theta_n}F_{\theta_n}\mathbb{I}(\theta_n \in S_\Theta)
        \xrightarrow{\mathbb{P}} 0 .
    \end{equation*}
    This concludes the proof.
\end{proof}

\bibliographystyle{plain}
\bibliography{Library.bib}

\end{document}